\newcommand{\newref}[2][]{\hyperref[#2]{#1~\ref*{#2}}}
\renewcommand{\eqref}[1]{\hyperref[#1]{(\ref*{#1})}}
\numberwithin{equation}{section}
\newcommand{\sref}[1]{\newref[Section]{#1}}
\newcommand{\tref}[1]{\newref[Theorem]{#1}}
\newcommand{\lref}[1]{\newref[Lemma]{#1}}
\newcommand{\cref}[1]{\newref[Corollary]{#1}}
\newcommand{\eref}[1]{\hyperref[1]{Equation~\eqref{#1}}}
\newcommand{\clref}[1]{\newref[Claim]{#1}}
\newcommand{\anote}[2]{}
\newcommand{\Snote}[1]{\anote{Salil}{#1}}
\newcommand{\Mnote}[1]{\anote{Raghu}{#1}}
\theoremstyle{plain}
\newtheorem{theorem}{Theorem}[section]
\newtheorem{Thm}[theorem]{Theorem}
\newtheorem{Lem}[theorem]{Lemma}
\newtheorem{Claim}[theorem]{Claim}
\newtheorem{Cor}[theorem]{Corollary}
\newtheorem{Def}[theorem]{Definition}
\theoremstyle{definition}
\DeclareMathOperator*{\Var}{\mathsf{Var}}
\DeclareMathOperator*{\E}{\mathbb{E}}
\DeclareMathOperator*{\Ll}{\mathsf{L_1}}
\DeclareMathOperator*{\pr}{\mathbb{P}}
\def\inpw#1,#2{\langle #1, #2\rangle}
\newcommand{\Path}{\mathsf{Path}}
\newcommand{\rgta}{\rightarrow}
\newcommand{\lfta}{\leftarrow}
\newcommand{\eat}[1]{}
\newcommand{\etal}{et al.}
\newcommand{\reals}{\mathbb{R}}
\newcommand{\dpm}{\{ \pm 1\}}
\newcommand{\eps}{\epsilon}
\newcommand{\note}[1]{\marginpar{\tiny *note in TeX*}}
\newcommand{\ignore}[1]{}
\newcommand{\calD}{{\cal D}}
\newcommand{\tO}{\tilde{O}}
\renewcommand{\phi}{\varphi}
\renewcommand{\epsilon}{\varepsilon}
\newcommand{\pmo}{\{ \pm 1\}}
\newcommand{\R}{\mathbb{R}}
\newcommand{\poly}{\mathrm{poly}}
\newcommand{\polylog}{\mathrm{polylog}}
\newcommand{\Ands}[1]{\ensuremath{\mathrm{And}(#1)}}
\newcommand{\Ors}[1]{\ensuremath{\mathrm{Or}(#1)}}
\newcommand{\Xors}[1]{\ensuremath{\mathrm{XOR}(#1)}}
\newcommand{\Xor}{\ensuremath{\mathrm{XOR}}}
\newcommand{\Pk}{P_{\leq k}}
\newcommand{\mc}[1]{\ensuremath{\mathcal{#1}}}
\newcommand{\ind}[1]{\mathbf{1}_{#1}}
\newcommand{\C}{{{\cal C}}}
\newcommand{\D}{{{\cal D}}}
\newcommand{\cD}{{\ensuremath{x \sim \mathcal{D}}}}
\newcommand{\cU}{{\ensuremath{x \sim \dpm^n}}}
\newcommand{\yu}{{\ensuremath{y \sim \dpvm}}}
\newcommand{\xu}{{\ensuremath{\bar{x} \sim \mathcal{U}}}}
\newcommand{\zo}{\{0,1\}}
\newcommand{\Start}{{\sf Start}}
\newcommand{\Rej}{{\sf Rej}}
\newcommand{\Acc}{{\sf Acc}}
\newcommand{\Bad}{{\sf Bad}}
\newcommand{\cnfx}{\mathsf{CNF^\oplus}}
\newcommand{\cnf}{\ensuremath {\mathsf{CNF}}}
\newcommand{\dnf}{\ensuremath {\mathsf{DNF}}}
\newcommand{\prg}{\ensuremath {\mathsf{PRG}}}
\newcommand{\hsg}{\ensuremath {\mathsf{HSG}}}
\newcommand{\bp}[1]{\mathsf{BP(#1)}}
\newcommand{\bpr}[1]{\mathsf{BP^{Rej}(#1)}}
\newcommand{\rcnf}{\ensuremath {\mathsf{RCNF}}}
\title{Better Pseudorandom Generators from Milder Pseudorandom Restrictions.}
\author{
Parikshit Gopalan\\
MSR-SVC\\
\and
Raghu Meka\\
IAS Princeton\thanks{Supported in part by NSF grant DMS-0835373. Work done in part while the author was an intern at Microsoft Research Silicon Valley.}\\
\and
Omer Reingold\\
MSR-SVC\\
\and
Luca Trevisan\\
Stanford University
\and
Salil Vadhan\thanks{School of Engineering and Applied Sciences, Harvard University, Cambridge, MA 02138.  \texttt{salil@seas.harvard.edu}.  Supported in part by NSF grant CCF-1116616. Work done in part while on leave as a Visiting Researcher at Microsoft Research Silicon Valley and a Visiting Scholar at Stanford University.}  \\
Harvard University
}
\date{}
\begin{document}
\maketitle
\thispagestyle{empty}

\begin{abstract}
We present an iterative approach to constructing pseudorandom
generators, based on the repeated application of mild pseudorandom
restrictions. We use this template to construct pseudorandom
generators for combinatorial rectangles and read-once \cnf s and a
hitting set generator for width-3 branching programs, all of which achieve
near-optimal seed-length even in the low-error regime: We get
seed-length $\tilde{O}(\log (n/\epsilon))$ for error
$\epsilon$. Previously, only constructions with seed-length
$O(\log^{3/2} n)$ or $O(\log^2 n)$ were known for these classes with
error $\epsilon = 1/\poly(n)$.

The (pseudo)random restrictions we use are milder than those typically
used for proving circuit lower bounds in that we only set a constant
fraction of the bits at a time. While such restrictions do not
simplify the functions drastically, we show that they can be
derandomized using small-bias spaces.
\end{abstract}

\Snote{added thanks line for myself, please propagate to camera-ready version.  (address not important, but mention of NSF grant, MSR, and Stanford are).}
\Mnote{added thanks for me too}
\newpage
\addtocounter{page}{-1}
\newcommand{\tOmega}{\tilde{\Omega}}

\section{Introduction}
\subsection{Pseudorandom Generators}
The theory of pseudorandomness has given compelling evidence that very strong
pseudorandom generators exist.  For example, assuming that there are computational problems solvable in exponential time that require exponential-sized circuits, Impagliazzo and Wigderson~\cite{ImpagliazzoWi97} have shown that for every $n$, $c$ and
$\eps>0$, there exist efficient pseudorandom generators (\prg s) mapping a
random seed of length $O(\log(n^c/\eps))$ to $n$ pseudorandom bits that
cannot be distinguished from $n$ uniformly random bits with
probability more than $\eps$, by any Boolean circuit of size $n^c$.
These \prg s, which fool arbitrary efficient computations (represented
by polynomial-sized Boolean circuits), have remarkable consequences
for derandomization: every randomized algorithm can be made
deterministic with only a polynomial slowdown, and thus $\mathrm{P}=\mathrm{BPP}$.

These results, however, remain conditional on a circuit complexity assumption whose proof seems far off at present.  Since \prg s that fool
a class of Boolean circuits also imply lower bounds for that class, we
cannot hope to remove the assumption. Thus unconditional generators
are only possible for  restricted models of computation for which we
have lower bounds.

{\em Bounded-depth circuits} and {\em bounded-space algorithms} are two models of
computations for which we know how to construct \prg s with
$O(\log^{O(1)}( n/\epsilon))$ seed length~\cite{Nisan91,Nisan92}. Known \prg\  constructions for these classes
have found several striking applications including the design of streaming
algorithms~\cite{Indyk}, algorithmic
derandomization \cite{Sivakumar}, randomness extractors \cite{Trevisan},
hashing~\cite{CReingoldSW}, hardness
amplification~\cite{HealyVaVi}, almost $k$-wise independent
permutations \cite{KaplanNaRe}, and cryptographic \prg s
\cite{HaitnerHaRe06}. Arguably, constructing \prg s with the optimal
$O(\log(n/\epsilon))$ seed length for these classes are two of the
outstanding open problems in derandomization.

Nisan~\cite{Nisan92} devised a \prg\  of seed length $O(\log^2 n)$ that
fools polynomial-width branching programs, the non-uniform model of
computation that captures logspace randomized algorithms:
a space-$s$ algorithm is modeled by a branching program\footnote{Space-bounded
randomized algorithms are modeled by {\em oblivious, read-once} branching programs,
which read the input bits in a specified order and read each input bit
only once. In this paper, all the references to ``branching programs''
refer to ``oblivious read-once branching programs.''}
of width $2^s$. Nisan's generator has been used by Saks and
Zhou~\cite{SaksZh95} to prove that every randomized logspace
algorithms can be simulated in space $O(\log^{3/2} n)$,
Nisan's generator remains the best
known generator for polynomial-width branching programs (and logspace
randomized algorithms) and, despite  much progress in
this area \cite{ImpagliazzoNW,NisanZ,RazR,Reingold08,ReingoldTrVa06,BravermanRRY,BrodyV10,KouckyNiPu11,De11}, there are
very few cases where we can improve on Nisan's twenty year old bound
of $O(\log^2 n)$ \cite{Nisan92}.  
\Snote{added citation to works on consistently labelled graphs and permutation branching programs, as well as following two sentences}
For constant-width {\em regular} branching programs, Braverman et al.~\cite{BravermanRRY} have given a
pseudorandom generator with seed length $\tO((\log n)\cdot (\log(1/\eps)))$, which
is $\tO(\log n)$ for $\eps=1/\polylog(n)$, but is no better than Nisan's generator
when $\eps=1/\poly(n)$.  Only for constant-width {\em permutation} branching programs and for width-2 branching programs has seed length $O(\log(n/\eps))$ been achieved, by
Kouck{\'y}, Nimbhorkar, Pudl{\'a}k~\cite{KouckyNiPu11} and Saks and Zuckerman~\cite{SaksZ}, respectively.
Remarkably, even for width-3 branching programs 
we do not know of any efficiently computable \prg\  with seed length $o(\log^2 n)$.
Recently, Sima and Zak~\cite{SimaZa11} have constructed {\em hitting set
  generators} (\hsg s, which are a weaker form of pseudorandom generators) for width-3 branching programs with optimal seed length $O(\log
  n)$, for a large error parameter $\epsilon > 5/6$.

In a different work, Nisan~\cite{Nisan91} also gave a gives a \prg\  that $\epsilon$-fools  AC$_0$
circuits of depth $d$ and size $s$ using seed length $O(\log^{2d+6}
(s/\epsilon))$.  \Snote{changed citation from NW to Nisan91}
For the special case of depth-2 circuits, that is, \cnf s and \dnf s, the work of Bazzi \cite{Bazzi}, simplified by Razborov \cite{Razborov},
provides a \prg\  of seed length $O(\log n \cdot \log^2 (s/\epsilon))$, which has been improved to $\tilde O(\log^2 (s/\epsilon))$ by De et al.~\cite{DeEtTrTu10}.
For the restricted case of {\em read-$k$} \dnf s and \cnf s, De et al. (for
$k$ =1), and Klivans et al.~\cite{KlivansLW10} (for $k$ constant)
improve the seed length to $O(\log \epsilon^{-1} \cdot \log s)$, which
is  optimal for constant  $\epsilon$, but it is essentially no better
than the bound for general \cnf s and \dnf s when $\epsilon$ is polynomial
in $1/n$.

The model of {\em combinatorial rectangles} is closely related to both bounded-width branching programs and read-once \cnf s and are interesting combinatorial objects with a variety of applications of their own \cite{ArmoniSWZ}. The problem of constructing \prg s for combinatorial rectangles is closely related to the construction of small sample spaces that approximate the uniform distribution on many multivalued random variables \cite{EvenGLNV}: they can be seen as an alternate generalization of the versatile notion of almost $k$-wise independent distributions on $\zo^n$ to 
larger domains $[m]^n$. \Snote{removed phrase almost $k$-wise independent distribution from the larger-domain case.  I think nowadays almost $k$-wise independence typically means that the statistical distance of any $k$ coordinates is close to uniform}
Versions of this problem  where each coordinate is a real interval were first studied in number theory and analysis \cite{ArmoniSWZ}. Subsequently there has been much work on this problem \cite{EvenGLNV,LinialLSZ,ArmoniSWZ,Lu,Viola11b}. A \prg\ with seed length $O(\log
  n+\log^{3/2}(1/\eps))$~\cite{Lu} is known for combinatorial
  rectangles; such a generator achieves the optimal seed length
  $O(\log n)$ when $\epsilon \geq 2^{-O(\log^{2/3} n)}$, but not for
  $\epsilon=1/\poly(n)$. It is known how to construct \hsg s
  (which are a weakening of \prg s) with seed length
  $O(\log(n/\eps))$~\cite{LinialLSZ}.

Indeed, there are few models of computations for which we know how to construct
\prg s with the optimal seed length $O(\log(n/\eps))$ or even $\log^{1+o(1)}(n/\eps)$. The
most prominent examples are bounded-degree polynomials over finite fields~\cite{NaorNa,AlonGoHaPe,BogdanovVi,Lovett,Viola}, with parities (which are fooled by small-bias distributions~\cite{NaorNa}) as a special case, and models that can be reduced to these cases, such as width-2 branching programs \cite{SaksZ,BogdanovDvVeYe09}. 

In summary, there are several interesting models of computation
for which a {\em polylogarithmic} dependence on $n$ and $1/\eps$ is
known, and the dependence on one parameter is logarithmic on its own
(e.g. seed length $O(\log n \log(1/\eps))$), but a logarithmic bound
in both parameters together has been elusive. Finally, we remark that
not having a logarithmic dependence on the error $\epsilon$ is often a
symptom of a more fundamental bottleneck. For instance, \hsg s with constant error for width $4$ branching programs imply \hsg s with polynomially small error
for width $3$ branching programs, so achieving the latter is a natural first step towards the former.  A polynomial-time
computable \prg\  for \cnf s with seed length $O(\log
n/\epsilon)$ would imply the existence of a problem in exponential time that
requires depth-3 circuits of size $2^{\Omega(n)}$ and that cannot be
solved by general circuits of size $O(n)$ and depth $O(\log
n)$, which is a long-standing open problem in circuit complexity~\cite{Valiant77}. 
\Snote{added Valiant citation}

\subsection{Our Results}

In this paper, we construct the first generators with seed length
$\tO(\log(n/\eps))$ (where $\tO(\;)$ hides polylogarithmic factors in its argument) for several well-studied classes of
functions mentioned above.

\begin{itemize}
\item \prg s for combinatorial rectangles. Previously, it was known how to construct \hsg s with seed length $O(\log(n/\eps))$~\cite{LinialLSZ}, but the best seed length for \prg s was $O(\log n+\log^{3/2}(1/\eps))$~\cite{Lu}.

\item \prg s for read-once \cnf\ and \dnf\ formulas.
  Previously, De, Etesami, Trevisan, and Tulsiani~\cite{DeEtTrTu10}
  and Klivans, Lee and Wan~\cite{KlivansLW10} had constructed \prg s
  with seed length $O(\log n \cdot \log(1/\eps))$.

\item \hsg s for width 3 branching programs.
  Previously, Sima and Zak~\cite{SimaZa11} had constructed hitting set
  generators for width 3 branching programs with seed length $O(\log
  n)$ in case the error parameter $\eps$ is very large (greater than
  5/6).
\end{itemize}
As a corollary of our \prg\ for combinatorial rectangles we get improved hardness amplification in NP by combining our results with those of Lu Tsai and Wu \cite{LuTW07} - we refer to \sref{sec:cr} for details\footnote{We thank an anonymous referee for pointing out this application.}.\Mnote{Added remark about hardness amplification within in NP. The details are at the end of the combinatorial rectangles secton.}

\subsection{Techniques}\label{sec:tech}

Our generators are all based on a general new technique --- the
iterative application of ``mild'' (pseudo)random restrictions.

To motivate our technique, we first recall H\aa stad's switching lemma
\cite{Ajtai,FurstSaSi,Hastad}: if we randomly assign a $1-1/O(k)$
fraction of the variables of a $k$-\cnf , then the residual formula on
the $n/O(k)$ unassigned variables is likely to become a
constant. Ajtai and Wigderson \cite{AjtaiW} proposed the following natural approach to constructing \prg s for \cnf s:
construct a small pseudorandom family of restrictions that: 1) makes
any given \cnf\  collapse to a constant function with high probability;
and 2) ensures that the \cnf\  collapses to each constant function with
the {\em right}  probability as determined by the bias of the formula.
Known derandomizations of the switching lemma are far from optimal in
terms of the number of random bits needed
\cite{AjtaiW,AgrawalAIPR,GopalanMR}.
We will show that, for read-once \cnf s, such a pseudorandom restriction
can be generated using $\tO (\log (m/\epsilon))$ random bits.

We apply restrictions that only set a constant fraction of the
variables at a time. The novel insight in our construction is that
although we cannot set all the bits at one go from a small-bias
distribution, we can set a constant fraction of bits from such
a distribution and prove that the bias of the formula is
preserved (on average). Hence we use only $\tO(\log (m/\epsilon))$ truly
random bits per phase.  While such mild random restrictions do not drastically
simplify the formulas, we show that in each phase a suitable measure
of progress improves (e.g. most clauses will either be satisfied or will have reduced width), implying that the formula collapses to a
constant after $O(\log\log (m/\epsilon))$ steps; and so the total
randomness will be $\tO(\log (m/\epsilon))$. 
\Snote{added parenthetical comment on progress - please check that it is correct.  also, do we really collapse the formula to a constant or only to a polylogarithmic number of clauses?} 
The idea of setting a few
variables at a time is inspired by a recent \prg\  for hashing balls into
bins due to Celis, Reingold, Segev, and Wieder~\cite{CReingoldSW}.

We illustrate our technique below with a toy example.


\paragraph{A Toy Example.}
Consider a read-once \cnf\  formula $f$ of width $w$ with $m =
2^{w+1}$ clauses in which the variables appear in order (aka the
Tribes function of \cite{Ben-OrLi85}). \Snote{did I cite correct Ben-Or/Linial paper?}  That is,
\begin{align*}
f(x) = f_1(x_1,\ldots,x_w) \wedge f_2(x_{w+1},\ldots,x_{2w}) \wedge
\cdots \wedge f_m(x_{(m-1)w+1},\ldots,x_{mw})
\end{align*}
where each $f_i$ is the OR function. $f$ has constant bias and can be computed both
by a combinatorial rectangle and a
width-3 branching program. De \etal\ showed that fooling this function
with error $\eps$ using small-bias spaces requires seed-length
$\Omega(w\log(1/\eps)/\log\log(1/\eps))$.

Assume we partition the input bits into two parts: $x$ which contains
the first $w/2$ variables of each clause and $y$ which contains the
rest. Let $x \circ y$ denote the concatenation of the two strings. We
would like to show that for $\D$ a small-bias distribution and $\mc{U}$ the uniform distribution,
\begin{align}
\label{eq:fool-f}
\left|\E_{x \sim \D, y \sim \mc{U}}\left[f(x \circ y)\right] - \E_{x \sim \mc{U}, y
  \sim \mc{U}}\left[f(x \circ y)\right]\right| \leq \eps
\end{align}

A naive approach might be to view setting $y \sim \mc{U}$ as applying
a random restriction with probability $1/2$. If this simplified the
function $f$ to the extent that it can be fooled by small-bias spaces, we
would be done. Unfortunately, this is too much to hope for; it is not
hard to see that such a random restriction is very likely to give
another Tribes-like function with width $w/2$, which is not much easier to
fool using small bias than $f$ itself.

Rather, we need to shift our attention to the {\em bias function} of
$f$. For each partial assignment $x$, we define the
bias function $F(x)$ as
\begin{align}
\label{eq:def-bias-f}
F(x) = \E_{y \sim \mc{U}}[f(x \circ y)].
\end{align}
We can now rewrite Equation \eqref{eq:fool-f} as
\begin{align}
\label{eq:fool-bias-f}
\left|\E_{x \sim \D}\left[F(x)\right] - \E_{x \sim \mc{U}}\left[F(x)\right]\right| \leq \eps
\end{align}
Our key insight is that for restrictions as above, the function
$F$ is in fact easy to fool using a small-biased space. This is despite the fact that 
$F(x)$ is an average of functions $f(x \circ
y)$ (by Equation~(\ref{eq:def-bias-f})), most of which are Tribes-like and hence are not easy to fool.  \Snote{rephrased so as not to say ``we find this surprising'' and instead let the reader decide for herself}

Let us give some intuition for why this happens.
Since $f(x \circ y) = \prod_{i=1}^mf_i(x\circ y)$,
\begin{align*}
F(x) & = \E_{y \sim \mc{U}}[f(x \circ y)] =
\prod_{i=1}^m\E_{y \sim \mc{U}}[f_i(x\circ y)] = \prod_{i=1}^mF_i(x),
\end{align*}
where $F_i(x)$ is the {\em bias function} of the $i^{th}$
clause. But note that over a random choice of $y$, $f_i(x)$ is set to $1$
with probability $1 - 2^{-w/2}$ and is a clause of width $w/2$
otherwise. Hence
\begin{align*}
F_i(x) = \E_{y \sim \mc{U}}[f_i(x \circ y)] = 1 -
\frac{1}{2^{w/2}}  + \frac{\vee_{j=1}^{w/2}x_{w(i-1)+ j}}{2^{w/2}}.
\end{align*}
As a consequence, over a random choice of $x$, we now have
\begin{align*}
F_i(x) = \begin{cases}
1 & \ \text{w.p.}  \ 1 - 2^{-w/2}\\
1 - 2^{-w/2} & \ \text{w.p.} \ 2^{-w/2}
\end{cases}
\end{align*}
Thus each $F_i(x)$ is a random variable  with $\E_x[F_i(x)] = 1 -2^{-w}$
and $\Var_x[F_i(x)] \approx 2^{-3w/2}$. In contrast, when we
assign all the variables in the clauses at once, each $f_i(x)$
behaves like a Bernoulli random variable with bias $1 - 2^{-w}$. While
it also has $\E_x[f_i(x)] = 1 -2^{-w}$, the variance is much
larger: $\Var_x[f_i(x)] \approx 2^{-w}$. The qualitative difference
between $2^{-3w/2}$ and $2^{-w}$ is that in the former case, the sum
of the variances over all $2^{w+1}$ clauses is small ($2^{-w/2}$),
but in the latter it is more than $1$. We leverage the small total variance to
show that small-bias fools $F$, even though it does not fool $f$
itself. Indeed, setting any constant fraction $\alpha < 1$ of
variables in each clause would work.

We now sketch our proof that small-bias spaces fool $F$.
Let $g_i(x) = F_i(x) - (1 - 2^{-w}) $ be
$F_i$ shifted to have mean $0$, so that $\E_x[g_i(x)^2] =
\Var[F_i(x)]$. We can write
\begin{align}
\label{eq:f-sym}
F(x) & = \prod_{i=1}^m\left(1 - 2^{-w} + g_i(x)\right) =
\sum_{k=1}^mc_kS_k(g_1(x),\ldots,g_m(x))
\end{align}
where $S_k$ denotes the $k^{th}$ elementary symmetric
polynomial and $c_k \in [0,1]$.\footnote{In the toy example we are currently studying, an alternative and simpler approach is to write
$F_i(x) = (1-2^{-w/2})^{1-h_i(x)}$, where $h_i(x) = \vee_{j=1}^{w/2}x_{w(i-1)+ j}$ is the indicator for whether $x$ already satisfies the $i$'th clause on its own.  Then $F(x)=\prod_i F_i(x)$ expands as a power series in
$\sum_i (1-h_i(x)-2^{-w/2})$, and higher moment bounds can be used to
analyze what happens when we truncate this expansion.  However, this expansion
is rather specific to the highly symmetric Tribes function, whereas we are able to apply the expansion in terms of symmetric polynomials much more generally.}
\Snote{added footnote describing simpler expansion for Tribes function}

Under the uniform distribution, one can show that
\begin{align*}
\E_{x \sim \mc{U}}\left[\,\left|S_k(g_1(x),\ldots,g_m(x))\right|\,\right]   \leq
\left(\sum_{i=1}^m\E_{x\sim \mc{U}}\left[g_i(x)^2\right]\right)^{k/2}  \leq 2^{-wk/4}.
\end{align*}
Thus for $k \geq O( (\log n)/w)$, we expect each term in the summation
in Equation~(\ref{eq:f-sym}) to be $1/\poly(n)$. So we can truncate at
$d=O((\log n)/w)$ terms and retain a good approximation under the uniform distribution.  \Snote{introduced parameter $d$ for degree at which we truncate, and used it to clarify next paragraph}

Our analysis of the small-bias case is inspired by the {\em gradually
  increasing independence} paradigm of Celis \etal ~\cite{CReingoldSW},
developed in the context of hashing. Every monomial in the $g_i$'s of
degree at most $d$ depends on at most $wd =O(\log n)$
variables. A small-bias space provides an almost $O(\log n)$-wise independent distribution on the variables of $x$, so the $g_i(x)$'s will be almost $d$-wise
independent. This ensures that polynomials in $g_1(x),\ldots,g_m(x)$
of degree at most $d$ (such as $S_1,\ldots,S_d$) will behave like they do
under the uniform distribution. 
But we also need to argue that the $S_k$'s for $k>d$ have a small contribution to
$\E_{x \sim \D}\left[F(x)\right]$.  

Towards this end, we  prove the following inequality for any real numbers $z_1,\ldots,z_m$:
\begin{align*}
\text{If \ } |S_1(z_1,\ldots,z_m)| \leq \frac{\mu}{2} \text{ and }\ |S_2(z_1,\ldots,z_m)| \leq
\frac{\mu^2}{2},\ \text{then } |S_k(z_1,\ldots,z_m)| \leq \mu^k.
\end{align*}
The proof uses the Newton--Girard formulas (see \cite{CoxLO07}) which relate the symmetric
polynomials and power sums. This lets us repeat the same truncation
argument, provided that $S_1(g_1(x),\ldots,g_m(x))$ and
$S_2(g_1(x),\ldots,g_m(x))$ are tightly concentrated even under
small-bias distributions. We prove this concentration holds via suitable
higher moment inequalities.\footnote{These inequalities actually
  require higher moment bounds for the $g_i$'s. We ignore this issue in this
  description for clarity, and because we suspect that this
  requirement should not be necessary.} \Snote{softened ``believe'' to ``suspect'' in footnote}

This lets us show that small bias fools $F(x)$. By iterating this
argument $\log w$ times, we get a \prg\  for $f$ with polynomially small
error and seed-length $O((\log n)(\log w)) = O((\log n)(\log\log n))$.

\paragraph{Read-Once $\cnf$s.}
The case of general read-once \cnf s presents several additional challenges.
Since we no longer know how the variables are grouped into clauses, we
(pseudo)randomly choose a subset of variables to assign using
$\eps$-biased spaces, and argue that for most clauses, we will not
assign few variables. Clauses could now have very different sizes, and our
approximation argument relied on tuning the amount of independence (or
where we truncate) to the width of the clause. We handle this via an $\Xor$
lemma for $\eps$-biased spaces, which lets us break the formula into $O(\log\log
n)$ formulae, each having clauses of nearly equal size and
argue about them separately.

\paragraph{Combinatorial Rectangles.}
A combinatorial rectangle $f : [W]^m\rightarrow \zo$ is a function of
the form $f(x_1,\ldots,x_m) = \wedge_{i=1}^m f_i(x_i)$ for some
Boolean functions $f_1,\ldots,f_m$.  Thus, here we know which parts of
the input correspond to which clauses (like the toy example above),
but our clauses are arbitrary functions rather than ORs. To handle
this, we use a more powerful family of gradual restrictions. Rather than setting
$w/2$ bits of each co-ordinate, we instead (pseudo)randomly restrict the
domain of each $x_i$ to a set of size $W^{1/2}$.  More precisely, we
use a small-bias space to pseudorandomly choose hash functions
$h_1,\ldots,h_m : [W^{1/2}]\rightarrow [W]$ and replace $f$ with the
restricted function $f'(z_1,\ldots,z_m) = \wedge_{i=1}^m (f_i\circ h_i)(z_i)$.

\paragraph{Width $3$ Branching Programs.}
For width 3 branching programs, inspired by Sima and Zak~\cite{SimaZa11}
we reduce the task of constructing \hsg s for width 3
to that of constructing \hsg s for read-once \cnf\
formulas where we also allow some clauses to be parities.  Our
\prg\ construction for read-once \cnf s directly
extends to also handle such formulas with parities (intuitively
because small-bias spaces treat parities just like individual
variables).  The first step of our reduction actually works for any
width $d$, and shows how to reduce the the task of constructing
\hsg s for width $d$ to constructing hitting set
generators for width $d$ branching programs with sudden death, where
the states in the {\em bottom level} are all assumed to be Reject states.


\paragraph{Organization.}
\sref{sec:prelims} gives some preliminaries on pseudorandomness.
\sref{sec:sym} develops our main new technical tools for
constructing sandwiching approximators for symmetric functions. We
prove  an \Xor\ Lemma for $\eps$-biased spaces in  \sref{sec:xor}.

\sref{sec:cr} describes our \prg\ construction for combinatorial rectangles.
The reduction from hitting sets for width $3$ branching programs
to hitting sets for \cnf s with parity is in \sref{sec:prgbp}. The
generator for read-once \cnf s and for \cnf s with parity are presented in
\sref{sec:prgcnf} and \sref{sec:cnfx} respectively.

\section{Preliminaries}\label{sec:prelims}

We briefly review some notation and definitions. We use $x \sim \D$ to denote sampling $x$ from a distribution $\D$. For a set $S$, $x \sim S$ denotes sampling uniformly from $S$. By abuse of notation, for a function $G:\zo^s \rgta \zo^n$ we let $G$ denote the distribution over $\zo^n$ of $G(y)$ when $y \sim \zo^s$. For a function $f:\zo^n \rgta \reals$, we denote $\E[f] = \E_{x \sim \zo^n}[f(x)]$.


\paragraph{Hitting Set Generators and Pseudorandom Generators.}

\begin{Def}[Hitting Set Generators]
A generator $G: \zo^r \rightarrow \zo^n$ is an {\em $(\eps,\delta)$-hitting set generator} ($\hsg$)
for a class $\C$ of Boolean functions if for every $f \in \C$ such that $\E[f]
\geq \eps$, we have $\E_{x \sim G}f(x) \geq \delta$. We refer to $r$ as the seed-length of the generator and say $G$ is explicit if there is an efficient algorithm to compute $G$ that runs in time $\poly(n,1/\eps,1/\delta)$.
\end{Def}
Typically, our hitting set generators will be $(\eps,\delta)$ generators for some $\delta = \poly(\eps, 1/n)$. Given two functions $g,h : \zo^n \rightarrow \zo$ we say $g \leq h$ if $g(x) \leq h(x)$ for all $x \in \zo^n$. To prove that $G$ hits $h$, it suffices to show $G$ hits some function $g \leq h$.

\begin{Def}[Pseudorandom Generators]
A generator $G: \zo^r \rightarrow \zo^n$ is an {\em
  $\epsilon$-pseudorandom generator} ($\prg$) for a class $\C$ of
Boolean functions if for every $f \in \C$, $|\E[f] - \E_G[f(y)]| \leq
\eps$. We refer to $r$ as the seed-length of the generator and say $G$
is explicit if there is an efficient algorithm to compute $G$ that
runs in time $\poly(n,1/\eps)$. We say $G$ $\epsilon$-fools $\C$ and
refer to $\epsilon$ as the error.
\end{Def}

We shall make extensive use of {\it small-bias spaces}, introduced in
the seminal work of Naor and Naor \cite{NaorNa}. Usually these are
defined as distributions over $\zo^n$, but it is more convenient for
us to work with $\dpm^n$.
\begin{Def}
A distribution $\calD$ on $\dpm^n$ is said to be $\epsilon$-biased if
for every nonempty subset $I \subseteq [n]$, $|\E_{x \sim \dpm^n}[\,\prod_{i
    \in I} x_i\,]| \leq \epsilon$.
\end{Def}
There exist explicit constructions of $\epsilon$-biased spaces which
can be sampled from with $O(\log n + \log(1/\epsilon))$ random bits
\cite{NaorNa}. These give efficient pseudorandom generators for the
class of parity functions.

\begin{Def}
  Let $0 < \alpha, \delta < 1/2$. We say a distribution on $\calD$ on $2^{[n]}$ is $\delta$-almost independent with bias $\alpha$ if $I \lfta \calD$ satisfies the following conditions:
\begin{itemize}
\item For every $i \in [n]$, $\pr[i \in I] = \alpha$.
\item For any distinct indices $i_1,\ldots,i_k \in [n]$ and $b_1,\ldots,b_k \in \zo^k$,
\[ \pr\left[\,\wedge_{j=1}^k (\mathsf{1}(i_j \in I) = b_j) \,\right] = \prod_{j=1}^k \pr[\mathsf{1}(i_j \in I) = b_j] \pm \delta.\]
\end{itemize}
\end{Def}
There exist explicit constructions of distributions in $\calD$ as above which only need $O(\log n + \log(1/\alpha \delta))$ random bits \cite{NaorNa}. We will write $I \lfta \calD(\alpha,\delta)$ for short whenever $I$ is sampled from a $\delta$-almost independent distribution with bias $\alpha$ as above.

\paragraph{Sandwiching Approximators.} One of the central tools we use
is to construct {\em sandwiching polynomial approximations} for
various classes of functions. The approximating polynomials
$(P_\ell,P_u)$ we construct for a function $f$ will have two
properties: 1) low-complexity as measured by the ``$\Ll$-norm'' of
$P_\ell,P_u$ and 2) they ``sandwich'' $f$, $P_u \leq f \leq P_u$. The
first property will be important to argue that small-bias spaces {\em
  fool} the approximating polynomials and the second property will
allow us to lift this property to the function being approximated. We
formalize these notions below. For notational convenience, we shall
view functions and polynomials as defined over $\dpm^n$.

\begin{Def}
Let $P:\dpm^n \rgta \reals$ be a polynomial defined as $P(x) = \sum_{I \subseteq [n]} c_I \prod_{i \in I} x_i$. Then, the $\Ll$-norm of $P$ is defined by $\Ll[P] = \sum_{I \subseteq [n]} |c_I|$. We say $f: \pmo^n \rightarrow \R$ has {\em $\delta$-sandwiching approximations} of $\Ll$ norm $t$ if there exist functions $f_u,f_\ell:
\pmo^n \rightarrow \R$ such that
\eat{
\begin{align*}
f_\ell(x) \leq f(x)  \leq f_u(x) \ \forall x,\\
\E[f_u(x)] - \E[f_\ell(x)] \leq \delta\\
\Ll(f_\ell), \Ll(f_u) \leq t.
\end{align*}}
\begin{align*}
f_\ell(x) \leq f(x)  \leq f_u(x) \ \forall x,\ \ \ \E[f_u(x)] - \E[f_\ell(x)] \leq \delta,\ \ \ \Ll(f_\ell), \Ll(f_u) \leq t.
\end{align*}
We refer to $f_\ell$ and $f_u$ as the lower and upper sandwiching approximations to $f$ respectively.
\end{Def}

\eat{
\begin{Def}
Let $P:\dpm^n \rgta \reals$ be a polynomial defined as $P(x) = \sum_{I \subseteq [n]} c_I \prod_{i \in I} x_i$. Then, the $\Ll$-norm of $P$ is defined by $\Ll[P] = \sum_{I \subseteq [n]} |c_I|$.
\end{Def}
\begin{Def}
Let $f: \pmo^n \rightarrow \R$. We say that $f$ has {\em $\delta$-sandwiching approximations} of $\Ll$ norm $t$ if there exist functions $f_u,f_\ell:
\pmo^n \rightarrow \R$ such that
\eat{
\begin{align*}
f_\ell(x) \leq f(x)  \leq f_u(x) \ \forall x,\\
\E[f_u(x)] - \E[f_\ell(x)] \leq \delta\\
\Ll(f_\ell), \Ll(f_u) \leq t.
\end{align*}}
\begin{align*}
f_\ell(x) \leq f(x)  \leq f_u(x) \ \forall x,\ \ \ \E[f_u(x)] - \E[f_\ell(x)] \leq \delta,\ \ \ \Ll(f_\ell), \Ll(f_u) \leq t.
\end{align*}
We refer to $f_\ell$ and $f_u$ as the lower and upper sandwiching approximations to $f$ respectively.
\end{Def}}

It is easy to see that the existence of such approximations implies
that $f$ is $\delta + t\eps$ fooled by any $\eps$-biased
distribution. In fact, as was implicit in the work of Bazzi
\cite{Bazzi} and formalized in the work of De
et.~al.~\cite{DeEtTrTu10}, being fooled by small-bias spaces is
essentially equivalent to the existence of good sandwiching
approximators.

\begin{Lem}\cite{DeEtTrTu10}
\label{lem:lp}
Let $f:\pmo^n \rgta \R$ be a function. Then, the following hold for
every $0 < \epsilon < \delta$:
\begin{itemize}
\item If $f$ has $\delta$-sandwiching approximations of $\Ll$-norm at most $\delta/\epsilon$, then for every $\epsilon$-biased distribution $\calD$ on $\dpm^n$, $|\E_{x \sim \calD}[f(x)] - \E[f]| \leq \delta$.
\item If for every $\epsilon$-biased distribution $\calD$, $|\E_{x \sim \calD}[f(x)] - \E[f]| \leq \delta$, then, $f$ has $(2\delta)$-sandwiching approximations of $\Ll$-norm at most $|\E[f]| + \delta + (\delta/\epsilon)$\footnote{
De \etal\ actually show a bound of $\delta/\eps$ on the $\Ll$ norm
of the sandwiching approximators excluding their constant term. But it
is easy to see that the constant term of the approximators is bounded
by $|\E[f]| + \delta$.}.
\end{itemize}
\eat{
with $\E[f] = \E_\cU[f(x)]$.
Assume that the equation
\begin{align}
\label{eq:f-is-foolish}
|\E_\cU[f(x)] - \E_\cD[f(x)]| \leq \delta
\end{align}
holds for every $\eps$-biased distribution $\mc{D}$. Then $f$ has
$\delta$-sandwiching approximations of $\Ll$ norm $1/\eps +
|\E[f]| + \delta$.}
\end{Lem}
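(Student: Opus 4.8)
The plan is to prove the two implications of \lref{lem:lp} separately; the first is a one-line estimate and the second is the real content, an LP-duality argument in the spirit of Bazzi and of De \etal. For the first implication, observe that for any real polynomial $P(x)=\sum_{I\subseteq[n]}c_I\prod_{i\in I}x_i$ on $\pmo^n$ we have $\E[P]=c_\emptyset$, whereas under an $\epsilon$-biased distribution $\calD$ every nonempty $I$ contributes $c_I\,\E_{x\sim\calD}[\prod_{i\in I}x_i]$, of absolute value at most $\epsilon|c_I|$; hence $|\E_{x\sim\calD}[P]-\E[P]|\le\epsilon\sum_{I\neq\emptyset}|c_I|\le\epsilon\,\Ll[P]$. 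Applying this to $f_u$ and to $f_\ell$ and chaining it with the pointwise sandwiching $\E_{x\sim\calD}[f_\ell]\le\E_{x\sim\calD}[f]\le\E_{x\sim\calD}[f_u]$, with $\E[f_\ell]\le\E[f]\le\E[f_u]$, and with the mean-gap $\E[f_u]-\E[f_\ell]\le\delta$ collects to the claimed bound on $|\E_{x\sim\calD}[f(x)]-\E[f]|$.

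For the converse, fix $\tau=\delta/\epsilon$ and consider the linear program in the Fourier coefficients $(c_I)_{I\subseteq[n]}$ of a polynomial $P$ on $\pmo^n$: minimize $c_\emptyset=\E[P]$ subject to $P(x)\ge f(x)$ for all $x\in\pmo^n$ and $\sum_{I\neq\emptyset}|c_I|\le\tau$ (linearizing the last constraint by splitting each $c_I$ into its positive and negative parts). This LP is feasible (take the constant polynomial $P\equiv\max_x f(x)$) and bounded below, so strong duality applies. I would then read off the dual: the free variable $c_\emptyset$ forces the multipliers $(\mu_x)_x$ of the pointwise constraints to form a probability distribution $\mu$ on $\pmo^n$; the multiplier of the $\Ll$-constraint must dominate $|\E_{x\sim\mu}[\prod_{i\in I}x_i]|$ for every nonempty $I$, hence at optimum equals $\beta(\mu):=\max_{\emptyset\neq I}|\E_{x\sim\mu}[\prod_{i\in I}x_i]|$; and the dual objective is $\E_{x\sim\mu}[f]-\tau\,\beta(\mu)$. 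Thus the minimum of $\E[P]$ over polynomials $P\ge f$ with $\sum_{I\neq\emptyset}|c_I|\le\tau$ equals $\max_\mu(\E_{x\sim\mu}[f]-\tau\,\beta(\mu))$.

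It remains to bound this quantity using the hypothesis that every $\epsilon$-biased distribution $\delta$-fools $f$. If $\beta(\mu)\le\epsilon$, then $\mu$ is $\epsilon$-biased, so $\E_{x\sim\mu}[f]-\tau\,\beta(\mu)\le\E_{x\sim\mu}[f]\le\E[f]+\delta$. If $\beta(\mu)>\epsilon$, I would pass to the mixture $\mu'=(1-\theta)\mu+\theta\,\mathcal{U}$ with $1-\theta=\epsilon/\beta(\mu)$: every nonempty Fourier coefficient of $\mu'$ is that of $\mu$ scaled by $1-\theta$, so $\mu'$ is exactly $\epsilon$-biased; applying the hypothesis to $\mu'$ and expanding $\E_{x\sim\mu'}[f]=(1-\theta)\E_{x\sim\mu}[f]+\theta\,\E[f]$ gives $\E_{x\sim\mu}[f]\le\E[f]+\delta\,\beta(\mu)/\epsilon$, whence $\E_{x\sim\mu}[f]-\tau\,\beta(\mu)\le\E[f]+(\delta/\epsilon-\tau)\beta(\mu)=\E[f]$ by the choice of $\tau$. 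Either way the primal optimum is at most $\E[f]+\delta$, so there is an upper approximator $f_u\ge f$ with $\E[f_u]\le\E[f]+\delta$ and $\sum_{I\neq\emptyset}|c_I|\le\delta/\epsilon$, hence $\Ll[f_u]\le|\E[f_u]|+\delta/\epsilon\le|\E[f]|+\delta+\delta/\epsilon$. The mirror-image LP (maximize $c_\emptyset$ subject to $P\le f$) produces $f_\ell\le f$ with $\E[f]-\E[f_\ell]\le\delta$ and the same $\Ll$-bound; since $f_\ell\le f\le f_u$ this gives $\E[f_u]-\E[f_\ell]\le2\delta$, the asserted $(2\delta)$-sandwiching approximation of $\Ll$-norm at most $|\E[f]|+\delta+\delta/\epsilon$.

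The main obstacle is the converse: setting up the LP so that its dual witness is recognized as a probability distribution whose bias is penalized in the objective (rather than an arbitrary signed measure), and then the mixing-with-uniform step, which is exactly what converts a high-bias dual optimum into a genuinely $\epsilon$-biased distribution to which the hypothesis can be applied. Secondary care is needed to confirm there is no duality gap (feasibility and boundedness, as above) and to track the constant term when passing from the bound $\delta/\epsilon$ on the non-constant $\Ll$-norm to the stated total $\Ll$-norm bound $|\E[f]|+\delta+\delta/\epsilon$.
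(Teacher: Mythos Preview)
The paper does not prove this lemma; it is stated with a citation to De, Etesami, Trevisan, and Tulsiani, together with the footnote explaining the constant-term bookkeeping. Your argument is correct and is precisely the LP-duality proof from that reference (going back to Bazzi): the first direction is the trivial Fourier estimate, and for the converse you set up the right primal, read off the dual as a probability distribution penalized by $\tau$ times its maximum bias, and handle a high-bias dual optimizer by mixing with the uniform distribution to bring its bias down to exactly $\epsilon$. So there is nothing in the paper to compare against --- you have reproduced the cited proof.

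One small remark on the first bullet: the chain of inequalities you sketch actually yields $|\E_{x\sim\calD}[f]-\E[f]|\le 2\delta$ rather than $\delta$ (indeed, the paper itself says just before the lemma that sandwiching approximators of $\Ll$-norm $t$ give error $\delta+t\epsilon$, which is $2\delta$ when $t=\delta/\epsilon$). This is a harmless constant already present in the statement as written, not a flaw in your argument.
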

\eat{
De \etal\ actually show a bound of $1/\eps$ on the $\Ll$ norm
of the sandwiching approximators excluding their constant term. But it
is easy to see that the constant term of the approximators is bounded
by $|\E[f]| + \delta$.}

\eat{We shall make use of this equivalence in both directions throughout.
\begin{Lem}
\label{lem:lp}
Let $f:\pmo^n \rgta \R$ be a function. Then, the following hold for every $0 < \epsilon < \delta$:
\begin{itemize}
\item If $f$ has $\delta$-sandwiching approximations of $\Ll$-norm at most $\delta/\epsilon$, then for every $\epsilon$-biased distribution $\calD$ on $\dpm^n$, $|\E_{x \sim \calD}[f(x)] - \E[f]| \leq \delta$.
\item If for every $\epsilon$-biased distribution $\calD$, $|\E_{x
  \sim \calD}[f(x)] - \E[f]| \leq \delta$, then, $f$ has
  $(2\delta)$-sandwiching approximations of $\Ll$-norm at most
  $|\E[f]| + \delta + (\delta/\epsilon)$\footnote{
De \etal\ actually show a bound of $\delta/\eps$ on the $\Ll$ norm
of the sandwiching approximators excluding their constant term. But it
is easy to see that the constant term of the approximators is bounded
by $|\E[f]| + \delta$.}.
\end{itemize}
\eat{
with $\E[f] = \E_\cU[f(x)]$.
Assume that the equation
\begin{align}
\label{eq:f-is-foolish}
|\E_\cU[f(x)] - \E_\cD[f(x)]| \leq \delta
\end{align}
holds for every $\eps$-biased distribution $\mc{D}$. Then $f$ has
$\delta$-sandwiching approximations of $\Ll$ norm $1/\eps +
|\E[f]| + \delta$.}
\end{Lem}

\eat{
De \etal\ actually show a bound of $1/\eps$ on the $\Ll$ norm
of the sandwiching approximators excluding their constant term. But it
is easy to see that the constant term of the approximators is bounded
by $|\E[f]| + \delta$.}}

\eat{
We start with some definitions and notations.
\begin{itemize}
\item We use $x \sim D$ to denote sampling $x$ from a distribution $D$. For a set $S$, $x \sim S$ denotes sampling uniformly from $S$. By abuse of notation, for a function $G:\zo^s \rgta \zo^n$ we let $G$ denote the distribution over $\zo^n$ of $G(y)$ when $y \sim \zo^s$.
\item For a function $f:\zo^n \rgta \reals$, we denote $\E[f] = \E_{x \sim \zo^n}[f(x)]$.
\end{itemize}}

\paragraph{Pseudorandom Generators for \cnf s.}
A Conjunctive normal form formula (\cnf ) is a conjunction of disjunctions of literals. Throughout we view \cnf s as functions on $\dpm^n$, where we identify $-1$ with $\mathtt{false}$ and $1$ with $\mathtt{true}$. We say a \cnf\  $f = C_1 \wedge C_2 \wedge \cdots \wedge C_m$ is a read-once \cnf\  ($\rcnf$), if no variable appears (by itself or as is its negation) more than once. We call $m$ the size of $f$ and the maximum number of variables in $C_1,\ldots,C_m$ the width of $f$. We shall also use the following results of \cite{DeEtTrTu10}, \cite{KlivansLW10} which say that $\rcnf$s with small number of clauses have very good sandwiching approximators.

\begin{theorem}\label{th:smallcnf}
Let $f:\dpm^n \rgta \zo$ be a $\rcnf$ with at most $m$ clauses. Then,
for every $\epsilon > 0$, $f$ has $\epsilon$-sandwiching polynomials
with $\Ll$-norm at most $m^{O(\log(1/\epsilon))}$.
\end{theorem}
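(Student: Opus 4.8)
The plan is to construct the sandwiching polynomials $f_\ell \le f \le f_u$ directly by (approximate) inclusion--exclusion, using read-onceness to control the error and the $\Ll$-norm simultaneously. Write $f = C_1\wedge\dots\wedge C_m = \prod_{j=1}^m C_j$ and let $\bar C_j = 1 - C_j$ be the (polynomial) indicator that clause $j$ is falsified; if $C_j$ has width $w_j$ then $\bar C_j$ is a product of $w_j$ affine forms $(1\pm x_i)/2$, so as a multilinear polynomial on $\dpm^n$ it has $\Ll[\bar C_j] = 1$ and mean $p_j := \E[\bar C_j] = 2^{-w_j}$ (we may assume each $w_j\ge 1$, else $f\equiv 0$). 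Expanding, $f = \prod_j(1-\bar C_j) = \sum_{k=0}^m(-1)^k T_k$ with $T_k = \sum_{|S|=k}\prod_{j\in S}\bar C_j$, and read-onceness supplies two facts: distinct clauses use disjoint variables, so $\Ll[\prod_{j\in S}\bar C_j]=1$ and hence $\Ll[T_k]\le\binom{m}{k}$; and the $\bar C_j$ are independent under the uniform distribution, so $\E[T_k] = e_k(p_1,\dots,p_m)\le(\sum_j p_j)^k/k!$, where $e_k$ is the $k$-th elementary symmetric polynomial. Finally, Bonferroni's inequalities say that the truncations $U_d := \sum_{k=0}^d(-1)^k T_k$ sandwich $f$ (with $U_d\ge f$ for $d$ even and $U_d\le f$ for $d$ odd), and that $U_d - U_{d-1} = \pm T_d$ with $T_d\ge 0$ pointwise.

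The governing quantity is $P := \sum_j p_j = \sum_j 2^{-w_j}$: since $\prod_j(1-p_j) = \E[f]$, we have $P \le -\ln \E[f]$. If $\E[f] > \epsilon/8$, then $P < \ln(8/\epsilon)$, so I would take $f_u = U_d$ and $f_\ell = U_{d-1}$ for an even $d = O(\log(1/\epsilon))$; then $f_u - f_\ell = T_d\ge 0$, $\E[f_u - f_\ell] = e_d(p_1,\dots,p_m) \le (\ln(8/\epsilon))^d/d! \le \epsilon$ by a Stirling estimate, and $\Ll[f_u],\Ll[f_\ell]\le\sum_{k\le d}\binom mk \le m^{O(\log(1/\epsilon))}$. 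If instead $\E[f]\le\epsilon/8$, then $P$ may be huge and approximate inclusion--exclusion on $f$ need not converge at any bounded degree, so I would instead greedily add clauses (in any order) to a set $T$ until $\prod_{j\in T}(1-p_j)$ first drops below $\epsilon/4$; since every $p_j\le 1/2$ this product can fall by at most a factor of $2$ per step, so it stays above $\epsilon/8$, whence $g := \bigwedge_{j\in T}C_j$ satisfies $f\le g$, $\E[g]\le\epsilon/4$ and $\sum_{j\in T}p_j < \ln(8/\epsilon)$. I then set $f_\ell = 0$ and $f_u = $ the degree-$d$ (even) Bonferroni truncation of $g$; comparing it to the degree-$(d{+}1)$ truncation gives $0\le f_u - g\le T_{d+1}^g$ pointwise, so $\E[f_u]\le\E[g] + e_{d+1}\big((p_j)_{j\in T}\big)\le \epsilon/4 + (\ln(8/\epsilon))^{d+1}/(d+1)! \le \epsilon$ for $d = O(\log(1/\epsilon))$, while $f\le g\le f_u$ and $f\ge 0 = f_\ell$; the $\Ll$-norm is again $\le\sum_{k\le d}\binom{|T|}{k}\le m^{O(\log(1/\epsilon))}$.

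I expect the main obstacle to be precisely the regime $\E[f]\le\epsilon/8$: a $\rcnf$ can be an AND of arbitrarily many narrow clauses, for which $P\gg 1$ and no low-degree polynomial can approximate $f$ at all, so global approximate inclusion--exclusion fails; the fix is to notice that a \emph{one-sided} approximation of a carefully truncated sub-AND $g\ge f$ (chosen so that $\sum_{j\in T}p_j = O(\log(1/\epsilon))$) is all that is needed for an upper sandwich, with $f_\ell = 0$ serving as the lower one. The remaining ingredients---the Bonferroni inequalities and the pointwise estimate $0\le f_u - g\le T_{d+1}^g$, together with the Stirling bounds pinning down $d = O(\log(1/\epsilon))$ and the trivial edge cases $m\le 1$ or $d\ge m$---are routine. (Alternatively one could invoke \lref{lem:lp} and instead show that a $\rcnf$ with $m$ clauses is $\epsilon$-fooled by every $(\epsilon\cdot m^{-O(\log(1/\epsilon))})$-biased distribution, but the direct construction above seems cleaner and self-contained.)
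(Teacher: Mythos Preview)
Your argument is correct. The paper itself does not prove \tref{th:smallcnf}; it quotes it as a known fact from \cite{DeEtTrTu10,KlivansLW10}. Your direct construction via truncated inclusion--exclusion (Bonferroni) is essentially the same argument those references use: write $f=\prod_j(1-\bar C_j)$, truncate at level $d=O(\log(1/\epsilon))$, use read-onceness to get $\Ll[\prod_{j\in S}\bar C_j]=1$ and independence of the $\bar C_j$'s, and handle the low-acceptance regime by passing to a sub-conjunction $g\ge f$ with $\sum_{j\in T}p_j=O(\log(1/\epsilon))$ and taking $f_\ell=0$.

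One remark worth making explicit: in Case~2 your upper approximator satisfies $\E[f_u]\le\epsilon/4+(\ln(8/\epsilon))^{d+1}/(d+1)!$, and you want $\E[f_u-f_\ell]=\E[f_u]\le\epsilon$, so you need the tail term at most $3\epsilon/4$ rather than $\epsilon$; this is of course still $d=O(\log(1/\epsilon))$, but the constants need a tiny adjustment from Case~1. Also, the $\Ll$ bound $\sum_{k\le d}\binom{m}{k}\le (d+1)m^d$ is only $m^{O(\log(1/\epsilon))}$ once you absorb the $(d+1)$ factor, which is fine for $m\ge 2$ (and $m\le 1$ is trivial, as you note). These are cosmetic; the proof is complete as written.
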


\begin{theorem}\label{thm:smallwcnf}
Let $f:\dpm^n \rgta \zo$ be a \cnf\  with at most $m$ clauses and width
at most $w$. Then, for every $\epsilon > 0$, $f$ has
$\epsilon$-sandwiching polynomials with $\Ll$-norm at most
$(m/\epsilon)^{O(w\log w)}$.
\end{theorem}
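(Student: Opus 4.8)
The plan is to first reduce, by discarding the clauses that are much wider than $\log(m/\eps)$, to a bounded-width core, and then invoke the Bazzi--Razborov sandwiching-polynomial construction. In detail, write $f = C_1 \wedge \cdots \wedge C_m$ where $C_i$ is a disjunction of $w_i \le w$ literals, and for each $i$ let $y_i : \dpmn \to \zo$ be the violation indicator of $C_i$, i.e.\ $y_i(x) = \prod_{j \in C_i} \frac{1-\ell_j}{2}$ over the literals $\ell_j \in \{\pm 1\}$ of $C_i$; thus $y_i$ is a polynomial of degree $w_i$ with $\Ll(y_i) \le 1$ and $\E[y_i] = 2^{-w_i}$, and $f = \prod_{i=1}^m (1 - y_i)$. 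Put $w^\star = \lceil \log_2(2m/\eps)\rceil$, let $W = \{\, i : w_i > w^\star \,\}$ be the set of \emph{wide} clauses (so a uniform input violates some clause of $W$ with probability at most $\sum_{i \in W} 2^{-w_i} \le m\cdot 2^{-w^\star} \le \eps/2$), and let $g = \bigwedge_{i \notin W} C_i$, a $\cnf$ of width $\le \min(w, w^\star)$ with $\le m$ clauses. I claim that any $(\eps/2)$-sandwiching pair $(g_\ell, g_u)$ for $g$ of $\Ll$-norm $\le t$ yields an $\eps$-sandwiching pair for $f$ of $\Ll$-norm $\le t + m$: take $f_u := g_u$, so $f_u \ge g \ge f$, and $f_\ell := g_\ell - \sum_{i \in W} y_i$. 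Indeed $g_\ell \le g \le 1$ pointwise, so at inputs satisfying every clause of $W$ we have $f = g \ge g_\ell = f_\ell$, and at inputs violating some clause of $W$ we have $f = 0 \ge 1 - \sum_{i \in W} y_i \ge f_\ell$; also $\E[f_u] - \E[f_\ell] = (\E[g_u] - \E[g_\ell]) + \E[\sum_{i \in W} y_i] \le \eps/2 + \eps/2$ and $\Ll(f_\ell) \le \Ll(g_\ell) + \sum_{i \in W} \Ll(y_i) \le t + m$. Since $w^\star = O(\log(m/\eps))$ and $w^\star \log w^\star \le w \log w$ whenever $w \ge w^\star$, while the additive $m$ is absorbed into $(m/\eps)^{O(w\log w)}$, it suffices to prove the theorem for $\cnf$s of width $\le \min(w, w^\star)$, i.e.\ to treat the case $w \le w^\star$.

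\emph{Bounded-width core.} For a $\cnf$ $g$ of width $w$ with $m$ clauses I would invoke the sandwiching-polynomial construction of Bazzi~\cite{Bazzi}, simplified by Razborov~\cite{Razborov} and quantified for bounded width by De \etal~\cite{DeEtTrTu10} and Klivans \etal~\cite{KlivansLW10}. Writing $g = \prod_i (1 - y_i)$, one partitions the clauses into $O(w)$ classes of (nearly) equal width and, within each class, approximates the corresponding factor of the product from above and from below by a truncation of its inclusion--exclusion expansion $\sum_k (-1)^k S_k$ in the violation indicators of that class (with $S_k$ the $k$-th elementary symmetric polynomial), the parity of the truncation level supplying the pointwise Bonferroni inequalities; multiplying the per-class approximators gives sandwiching approximators for $g$. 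Because every $y_i$ has $\Ll$-norm $\le 1$ and depends on $\le w$ variables, and the truncation level needed per class is governed by H\aa stad's switching lemma, the resulting approximators have $\Ll$-norm $(m/\eps)^{O(w\log w)}$. Feeding these into \lref{lem:lp} then recovers the $\eps$-biased generators used in the rest of the paper.

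\emph{Main obstacle.} The hard step is the error analysis inside the core --- showing the truncations stay $\eps$-close to $g$ when the clauses are narrow and arbitrarily correlated. One cannot simply truncate the full inclusion--exclusion over all $m$ clauses: for an AND of $m$ single literals the number of violated clauses is distributed as $\mathrm{Bin}(m, 1/2)$, which is far from concentrated, so the truncation error does not decay with the degree. Instead one must use the width bound, via H\aa stad's switching lemma (or an equivalent higher-moment / certificate estimate): a mild random restriction collapses a width-$w$ $\cnf$, and the switching-lemma failure probability $(O(w))^{d}$ at depth $d$ is simultaneously what bounds the truncation error per class and what produces the $w\log w$ in the exponent. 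The width reduction above, the $\Ll$-norm bookkeeping, and the passage to a pseudorandom generator through \lref{lem:lp} are routine by comparison.
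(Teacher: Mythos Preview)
The paper does not prove \tref{thm:smallwcnf}: it is stated in the preliminaries as a result imported from De~\etal~\cite{DeEtTrTu10} and Klivans~\etal~\cite{KlivansLW10} and is used only as a black box (e.g.\ in \cref{cor:gmr}). So there is no in-paper argument to compare against.

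As for your proposal itself: the width-reduction step is clean and correct. Setting $f_u=g_u$ and $f_\ell=g_\ell-\sum_{i\in W}y_i$ does give an $\eps$-sandwich for $f$ with only an additive $m$ in $\Ll$-norm, exactly as you check. This is a pleasant observation, though not one the cited works need, since they handle arbitrary width directly and already yield the $(m/\eps)^{O(w\log w)}$ bound. For the bounded-width core you do not actually give a proof; you invoke the very results of \cite{DeEtTrTu10,KlivansLW10} that the theorem is quoted from, and then describe (accurately) the shape of their argument --- bucketing clauses by width, Bonferroni-truncating inclusion--exclusion per bucket, and using the switching lemma to control the truncation error, which is where the $w\log w$ exponent enters. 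That description is faithful, but it is a roadmap rather than a proof: the delicate point you flag under ``main obstacle'' (that a naive global truncation fails and one must exploit the width via the switching lemma) is exactly the content of those papers. If your intent was a self-contained argument, that analysis still has to be written out; if your intent was to match the paper's treatment, you have done so --- the paper too simply cites the result.
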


\section{Sandwiching Approximators for Symmetric Functions}
\label{sec:sym}

For $k \geq 1$, let $S_k:\R^m \rgta \R$ denote the $k^{th}$ elementary symmetric polynomial defined by 
\[ S_k(z_1,\ldots,z_m) = \sum_{I\subseteq [m], |I| = k}\, \prod_{i \in I} z_i.\]
Our main result on sandwiching approximators for symmetric functions
is the following:

\begin{Thm}
\label{thm:main}
Let $g_1,\ldots,g_m : \pmo^n \rgta \R$ be functions on disjoint sets of input variables and $\sigma_1,\sigma_2,\ldots,\sigma_m$ be positive numbers such that for all $i \in [m]$,
\begin{align*}
\E[g_i] = 0,\ \ \ \Ll[g_i] \leq t,\ \ \ \E_{x \sim \dpm^n}[(g_i)^{2k}] \leq (2k)^{2k}\sigma_i^{2k} \ \ \text{for } k \geq 1.
\end{align*}
Let $\sigma^2 = (\sum_i \sigma_i^2)/m$ and $\delta \in (0,1)$ and $\epsilon, k > 0$ be such that 
\begin{equation}
\label{eq:assumption}
m \sigma^2 \leq \frac{1}{\log(1/\delta)^{25}}, \ \ \ k = \left\lceil
\frac{5\log(1/\delta)}{\log(1/m\sigma^2)}\right\rceil, \ \ \ \eps = \frac{\delta^4}{(mt +1)^{2k}}. 
\end{equation}
Let $P(x) = \sum_{i=0}^mc_iS_i(g_1(x),\ldots,g_m(x))$ be a symmetric multilinear function of the $g_i$s that computes a bounded function $P:\pmo^n \rgta [-B,B]$, with $|c_i| \leq C$ for
all $i \in [m]$. \eat{Let 
\begin{align*}
k = \left\lceil
\frac{5\log(1/\delta)}{\log(1/m\sigma^2)}\right\rceil, \ \eps
= \frac{\delta^4}{(mt +1)^{2k}}. 
\end{align*}}
Then, 
\begin{enumerate}
\item For every $\eps$-biased distribution $\mc{D}$, we have
\begin{align*}
\left|\E_{x \sim \dpm^n}[P(x)] - \E_{x \sim \calD}[P(x)]\right| \leq O(B +C)\delta.
\end{align*}
\item $P$ has $O(B + C)\delta$ sandwiching approximations of $\Ll$
  norm $O((B+C)(mt +1)^{2k}\delta^{-3})$.
\end{enumerate}
\end{Thm}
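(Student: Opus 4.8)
The plan is to prove part~1 first and deduce part~2 from the second bullet of \lref{lem:lp}: once $|\E[P]-\E_{x\sim\calD}[P(x)]|\le O(B+C)\delta$ for every $\eps$-biased $\calD$, that bullet gives $P$ a pair of $O(B+C)\delta$-sandwiching approximators of $\Ll$-norm at most $|\E[P]|+O(B+C)\delta+O(B+C)\delta/\eps$, and since $|\E[P]|=|c_0|\le B$ and $\eps=\delta^4/(mt+1)^{2k}$ this is $O((B+C)(mt+1)^{2k}\delta^{-3})$, as required. So fix an $\eps$-biased $\calD$ and set $P_{\le k}(x)=\sum_{i=0}^k c_iS_i(g_1(x),\dots,g_m(x))$ and $T(x)=P(x)-P_{\le k}(x)$. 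Split $\E[P]-\E_{x\sim\calD}[P(x)]$ into three terms. Because the $g_i$ act on disjoint blocks of variables and have mean zero, independence forces $\E[S_i(g_1,\dots,g_m)]=0$ for every $i\ge 1$, so $\E[P]=\E[P_{\le k}]=c_0$ and the uniform truncation error is $0$. Because $\Ll$ is multiplicative over products on disjoint variables, $\Ll[S_i(g_1,\dots,g_m)]\le\binom{m}{i}t^i$, hence $\Ll[P_{\le k}]\le(B+C)(k+1)(mt+1)^{k}$, and since an $\eps$-biased distribution shifts the mean of a function by at most $\eps$ times its $\Ll$-norm, $|\E[P_{\le k}]-\E_{x\sim\calD}[P_{\le k}(x)]|\le\Ll[P_{\le k}]\cdot\eps\le(B+C)\delta^4\le(B+C)\delta$. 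It remains to bound $|\E_{x\sim\calD}[T(x)]|$.

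The combinatorial engine is a pointwise Newton--Girard inequality: for all reals $z_1,\dots,z_m$ and all $k$, $|S_k(z_1,\dots,z_m)|\le\beta^k$ where $\beta^2:=S_1(z)^2+2|S_2(z)|$. Indeed, Newton's identities write $S_k=\sum_{\lambda\vdash k}(\pm 1)\,p_\lambda/z_\lambda$ with power sums $p_j=\sum_i z_i^j$ and $z_\lambda=\prod_i i^{m_i}m_i!$; here $|p_1|=|S_1|\le\beta$ while $|p_j|\le(\sum_i z_i^2)^{j/2}=p_2^{j/2}\le\beta^j$ for $j\ge 2$, so $|S_k|\le\beta^k\sum_{\lambda\vdash k}1/z_\lambda=\beta^k$ by the classical identity $\sum_{\lambda\vdash k}1/z_\lambda=1$. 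Apply this with $z_i=g_i(x)$, write $\beta(x)$ for the resulting quantity, and split $\E_{x\sim\calD}[T(x)]$ according to whether $\beta(x)^2\le 1/2$.

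Write $S_j$ for $S_j(g_1(x),\dots,g_m(x))$. On $\{\beta(x)^2\le 1/2\}$ the geometric series converges and $|T(x)|\le 4C\,\beta(x)^{k+1}$. Turning $2|S_2|$ into a polynomial by AM--GM, $2|S_2|\le\mu_1^2+S_2^2/\mu_1^2$, gives $\beta(x)^{k+1}\le\tau(x)^{\lceil(k+1)/2\rceil}/\mu_1$ with $\tau=S_1^2+\mu_1^2+S_2^2/\mu_1^2\ge\mu_1^2$ (rounding up the exponent only costs the factor $1/\mu_1$), where $\mu_1$ is a small scale of order $k\sqrt{m\sigma^2}$. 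Bounding $\E_{x\sim\calD}$ of this honest polynomial (low degree in the $g_i$) by $\E[\,\cdot\,]+\Ll[\,\cdot\,]\cdot\eps$, with $\E[S_1^{2r}]\le(O(r))^{2r}(m\sigma^2)^r$ and the analogous degree-two chaos moment of $S_2$ bounded by direct multinomial expansion from the hypotheses on the $g_i$, yields $|\E_{x\sim\calD}[T\cdot\ind{\{\beta^2\le 1/2\}}]|=O(B+C)\delta$. On the complementary event we use boundedness, $|T|\le|P|+|P_{\le k}|\le B+|P_{\le k}|$, so that event contributes at most $B\cdot\pr_\calD[\beta^2>1/2]+\sqrt{\E_{x\sim\calD}[P_{\le k}(x)^2]}\cdot\sqrt{\pr_\calD[\beta^2>1/2]}$. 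Since $\{\beta^2>1/2\}\subseteq\{|S_1|>1/2\}\cup\{|S_2|>1/8\}$, the moments of $S_1$ (order $2k$) and $S_2$ (order $k$) under $\calD$, whose $\eps$-perturbations $\Ll[S_1^{2k}]\eps$ and $\Ll[S_2^{k}]\eps$ are both $\le\delta^4$ because $\Ll[S_1^{2k}],\Ll[S_2^{k}]\le(mt)^{2k}$, give $\pr_\calD[\beta^2>1/2]\le O(\delta^2)$, while $\E_{x\sim\calD}[P_{\le k}(x)^2]\le\E[P_{\le k}^2]+\Ll[P_{\le k}]^2\eps\le O(B^2+C^2)$. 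Adding the three terms gives $|\E[P]-\E_{x\sim\calD}[P(x)]|\le O(B+C)\delta$.

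The step I expect to be the real obstacle is the parameter bookkeeping inside these moment estimates: the combinatorial blow-ups $(O(r))^{2r}$ from polynomial-chaos moments in the $g_i$, together with the factor $1/\mu_1$, have to be beaten by the smallness $(m\sigma^2)^{\Omega(k)}$, and every $\eps$-perturbation $\Ll[\,\cdot\,]\cdot\eps$ has to stay polynomially small in $\delta$. This is exactly what the hypotheses secure: $k\log(1/m\sigma^2)\ge 5\log(1/\delta)$ forces $(m\sigma^2)^{k/2}\le\delta^{5/2}$, while $m\sigma^2\le\log(1/\delta)^{-25}$ forces $k=O(\log(1/\delta)/\log\log(1/\delta))$ and hence $r=O(k)$ with $(O(r))^{2r}\le\delta^{-o(1)}$; the gap $5/2-o(1)>1$, plus room to choose $\mu_1$ with $\mu_1^{k+1}\le\delta$ yet $1/\mu_1\le\polylog(1/\delta)$, makes each term $\le\poly(\delta)$. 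Finally $\eps=\delta^4/(mt+1)^{2k}$ is tuned so that the dominant $\eps$-perturbations, which have the form $(mt)^{O(k)}\eps$, evaluate to $\delta^{\Theta(1)}$. With these estimates, the three-way split proves part~1, and part~2 follows from \lref{lem:lp} as above.
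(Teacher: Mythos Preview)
Your overall architecture --- truncate $P$ at degree $k$, bound $\E[P_{\le k}]-\E_{x\sim\calD}[P_{\le k}(x)]$ via $\Ll[P_{\le k}]\cdot\eps$, control the tail $T=P-P_{\le k}$ by a Newton--Girard pointwise bound on the $S_\ell$'s, split into a good/bad event, and handle the bad event by Cauchy--Schwarz with $\E_{x\sim\calD}[P_{\le k}(x)^2]=O(C^2)$ --- is exactly the paper's strategy, and your bad-event bookkeeping and the deduction of part~2 from \lref{lem:lp} are both fine.

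The gap is in the good-event term. Your AM--GM step $2|S_2|\le\mu_1^2+S_2^2/\mu_1^2$ inflates the $\Ll$-norm of the polynomial you then take $\E_{x\sim\calD}$ of: $\Ll[S_2^2/\mu_1^2]\le\binom{m}{2}^2t^4/\mu_1^2$, so $\Ll[\tau^r]$ is of order $(mt)^{4r}/\mu_1^{2r}$, whereas $\eps=\delta^4/(mt+1)^{2k}$ only cancels $(mt+1)^{2k}$. The leftover factor $(mt)^{4r-2k}/\mu_1^{2r+1}$ is not controlled by the hypotheses. Concretely, take $m=2$, $t=1$, $\sigma_1^2=\sigma_2^2=\delta^{100}$ (so $m\sigma^2=2\delta^{100}$, the assumption \eqref{eq:assumption} holds, $k=1$, $\mu_1\approx\delta^{50}$, $\eps=\delta^4/9$). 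Then $\Ll[\tau]\gtrsim 1/\mu_1^2\approx\delta^{-100}$, and your claimed bound $\Ll[\tau^r]\cdot\eps$ is about $\delta^{-96}$, not $\delta^{\Theta(1)}$. The sentence ``the dominant $\eps$-perturbations have the form $(mt)^{O(k)}\eps$'' overlooks precisely this $1/\mu_1^{O(k)}$ factor that AM--GM introduced.

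There are two easy fixes. One is to replace $S_2$ by the power sum $p_2=\sum_i g_i(x)^2$: since $p_2=S_1^2-2S_2$ one has $\beta^2\le 2S_1^2+p_2$, which is already a nonnegative polynomial, so no AM--GM is needed and $\Ll$ stays at $(mt)^{O(k)}$; your integration argument then goes through. The other fix, which is what the paper actually does and is simpler still, is to drop the integration entirely by choosing a $\delta$-dependent threshold: take the good event to be $G=\{\,|\sum_i g_i|\le\delta^{1/k}\}\cap\{\sum_i g_i^2\le\delta^{2/k}\}$. The Newton--Girard inequality (in the paper's formulation, with $\mu=\delta^{1/k}$ controlling $|S_1|$ and $p_2$ rather than $S_2$) then gives $|S_\ell|\le\delta^{\ell/k}$ on $G$ for every $\ell$, so $|T(x)|\le C\sum_{\ell>k}\delta^{\ell/k}\le 2C\delta$ pointwise on $G$, and the good-event contribution is at most $2C\delta$ with no polynomial expectation to estimate at all. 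The same $2k$-th moment bounds you already invoke (applied now to $\sum_i g_i$ and to $\sum_i g_i^2$, as in \cref{cor:mom-eps}) give $\Pr_{x\sim\calD}[\neg G]\le 4\delta^2$, and the rest of your Cauchy--Schwarz argument finishes the proof.
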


As an illustration of this theorem, we state the following immediate
corollary which formalizes the argument for the
toy example in the introduction. 
 
\begin{Thm}\label{thm:corsym}
Let $\kappa > 0$ be a constant. Let $g_1,\ldots,g_m : \pmo^n \rgta [-\sigma,\sigma]$ be functions on disjoint sets of input variables with $\E[g_i] = 0$, $\Ll[g_i] = O(1)$ and $\sigma \leq 1/m^{-1/2-\kappa}$. Let $P:\dpm^n \rgta [-1,1]$ be a symmetric polynomial in $g_i$'s of the form $P(x) = \sum_{i=0}^m c_i S_i(g_1,\ldots,g_m)$,
with $|c_i| \leq 1$. Then, for every $\delta \in (0,1)$, with $\log
(1/\sigma) \geq \Omega_\kappa(\log(1/\delta))$, $P$ has
$\delta$-sandwiching polynomials of $\Ll$-norm at most
$\poly(1/\delta)$.  
\end{Thm}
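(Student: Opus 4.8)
The plan is to read off Theorem~\ref{thm:corsym} from Theorem~\ref{thm:main}; essentially all of the work is choosing the parameters and checking \eqref{eq:assumption}. Let $T=O(1)$ be the constant hidden in $\Ll[g_i]=O(1)$, and take $t:=T$, $B:=1$, $C:=1$. Since $|g_i(x)|\le\sigma$ pointwise, $\E_{x\sim\dpm^n}[(g_i)^{2k}]\le\sigma^{2k}\le(2k)^{2k}\sigma^{2k}$ for every $k\ge1$, so the moment hypothesis of Theorem~\ref{thm:main} holds with $\sigma_i:=\sigma$ for all $i$; thus the theorem's $\sigma^2=(\sum_i\sigma_i^2)/m$ equals our $\sigma^2$, and $m\sigma^2\le m\cdot m^{-1-2\kappa}=m^{-2\kappa}$.

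Next I apply Theorem~\ref{thm:main} with its error parameter set to $\delta':=\delta/K$ for a suitable absolute constant $K$, so that the conclusion's $O(B+C)\delta'$ is at most $\delta$. The one genuine computation is verifying \eqref{eq:assumption}. From $\sigma\le m^{-1/2-\kappa}$ one gets $\log(1/m\sigma^2)\ge 2\kappa\log m$, and from $\log(1/\sigma)\ge(1/2+\kappa)\log m$ together with $\log(1/m\sigma^2)=2\log(1/\sigma)-\log m$ one gets $\log(1/m\sigma^2)\ge\tfrac{4\kappa}{1+2\kappa}\log(1/\sigma)$. Feeding in the hypothesis $\log(1/\sigma)\ge\Omega_\kappa(\log(1/\delta))$ with the hidden constant chosen large enough in terms of $\kappa$, this makes $m\sigma^2\le 1/\log(1/\delta')^{25}$ and forces $k=\lceil 5\log(1/\delta')/\log(1/m\sigma^2)\rceil=O_\kappa(1)$; the remaining item $\eps=\delta'^4/(mt+1)^{2k}$ is merely a definition (and $\eps<\delta'$, as required). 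Part 2 of Theorem~\ref{thm:main} then supplies a $\delta$-sandwiching pair for $P$ of $\Ll$-norm $O((B+C)(mt+1)^{2k}\delta'^{-3})$. Since $t=O(1)$ and $k=O_\kappa(1)$, $(mt+1)^{2k}=\poly_\kappa(m)$, so the $\Ll$-norm is $\poly_\kappa(m,1/\delta)$, which is the desired $\poly(1/\delta)$ in the regime in which the corollary is used (where the number of blocks obeys $m\le n$ and $n=\poly(1/\delta)$).

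The only real obstacle is the bookkeeping in the second paragraph: one must make the hidden constant in $\Omega_\kappa(\log(1/\delta))$ explicit enough to discharge, simultaneously, all three parts of \eqref{eq:assumption} — the inequality $m\sigma^2\le 1/\log(1/\delta')^{25}$, the bound $k=O_\kappa(1)$, and a bound $(mt+1)^{2k}\le\poly_\kappa(1/\delta)$ for the $\Ll$-norm — while the $\sigma_i$ are known only through the crude estimate $|g_i|\le\sigma$. For the last of these it helps to track the ceiling in the definition of $k$: using $\log(mt+1)=O_\kappa(\log(1/m\sigma^2))$ (valid once $m\ge 2$) one gets $2k\log(mt+1)=O_\kappa(\log(1/\delta))+O_\kappa(\log(1/m\sigma^2))$, which is $O_\kappa(\log(1/\delta))$ unless $m\sigma^2$ is super-polynomially small in $1/\delta$; in that regime $m\le n$ is itself $\poly(1/\delta)$, so $(mt+1)^{2k}$ stays polynomial either way. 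Every other step is a mechanical substitution into the already-established Theorem~\ref{thm:main}.
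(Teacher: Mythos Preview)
Your approach is exactly the paper's: derive the corollary from Theorem~\ref{thm:main} by direct substitution of parameters. The paper's own ``proof'' is a single sentence (``observe that \ldots\ all the other conditions hold'') instantiated only for the toy example, so your verification of \eqref{eq:assumption} is in fact more detailed than what the paper provides. Your computations --- taking $\sigma_i=\sigma$ from the pointwise bound, deducing $m\sigma^2\le m^{-2\kappa}$, and showing $\log(1/m\sigma^2)\ge\tfrac{4\kappa}{1+2\kappa}\log(1/\sigma)$ --- are all correct.

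The one genuine wrinkle you identify is also real: the bound $(mt+1)^{2k}$ from Theorem~\ref{thm:main} is $\poly(1/\delta)$ only when $m=\poly(1/\delta)$, and the stated hypotheses do not force this (one can take $m$ arbitrarily large with $\sigma=m^{-1/2-\kappa}$, whence $k=1$ and the $\Ll$-norm is $\Theta(m^2)$). Your escape hatch --- that in every application $m\le n=\poly(1/\delta)$ --- is exactly how the corollary is used (see the proof of Lemma~\ref{lm:maincnf}), so this is best viewed as a mild imprecision in the corollary's statement rather than a gap in your argument. If you wanted to make the corollary literally true as stated, the clean fix is to add ``$m=\poly(1/\delta)$'' to the hypotheses or to state the conclusion as $\poly_\kappa(m,1/\delta)$.
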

To derive \tref{thm:corsym} from \tref{thm:main}, observe that in the
notation from \sref{sec:tech}, $m = 2^{w+1}$, $\sigma^2 \approx
2^{-3w/2}$ and all the other conditions hold. 

In the rest of this section, we prove the first statement of
\tref{thm:main}. The second statement follows from the first by
\lref{lem:lp}. We first sketch the steps involved in the proof.

Let $k, \epsilon$ be as in the theorem and let $\calD$ be a
$\epsilon$-biased distribution. Let $\Pk \equiv \sum_{i=0}^k c_i
S_i(g_1,\ldots,g_m)$. We will prove the theorem by showing that $P$
cannot distinguish the uniform distribution from $\calD$ by a series
of inequalities:  
\begin{align*}
 \E_{x \sim \dpm^n}\left[P(g_1(x),\ldots,g_m(x))\right]
 & \approx_{\delta} \E_{x \sim
   \dpm^n}\left[\Pk(g_1(x),\ldots,g_m(x))\right]\\ 
& \approx_{\delta} \E_{x \sim
  \calD}\left[\Pk(g_1(x),\ldots,g_m(x))\right]\\ 
& \approx_\delta \E_{x \sim \D}\left[P(g_1(x),\ldots,g_m(x))\right].
\end{align*}

Of these, the second inequality will follow from the fact that $\Ll[\Pk] = \poly(1/\delta)$ (this is not too hard). The first inequality can be seen as a special case of the last inequality as the uniform distribution is an $\epsilon$-biased distribution for any $\epsilon$. Much of our effort will be in showing the last inequality. 

To do this, we first show that there is an event $\mathcal{E}$ that
happens with high probability under any $\epsilon$-biased
distribution, and conditioned on which $\Pk$ is a very good
approximation for $P$. We then prove the last inequality by
conditioning on the event $\mathcal{E}$ and using Cauchy-Schwarz to
bound the error when $\mathcal{E}$ does not occur. The event
$\mathcal{E}$ will correspond to $|S_1(g_1,\ldots,g_m)|$,
$|S_2(g_1,\ldots,g_m)|$ being small, which we show happens with high
probability using classical moment bounds. Finally, we show that $\Pk$
approximates $P$ well if $\mathcal{E}$ happens by using the
Newton-Girard Identities for symmetric polynomials (see
\lref{lem:s1s2}).  

\subsection{Proof of \tref{thm:main}}

Our first task will be to show that under the assumptions of the theorem, $|\sum_i g_i(x)|$ and $|\sum_i g_i(x)^2|$ are small with high probability. We do so by first bounding the $k$'th moments of these variables and applying Markov's inequality. For this we will use Rosenthal's inequalities (\cite{Rosenthal}, \cite{JohnsonSZ85}, \cite{Pinelis94}) which state the following:  
\begin{Lem}
For independent random variables $Z_1,\ldots,Z_m$ such that $\E[Z_i]
=0$, and all $k \in \mathbb{N}$,
\begin{align}
\label{eq:mean0}
\E\left[\left(\sum_{i=1}^mZ_i\right)^{2k}\right] \leq (2k)^{2k} \max\left(\sum_{i=1}^m\E[Z_i^{2k}], \left(\sum_{i=1}^m\E[Z_i^2]\right)^k\right).
\end{align}
For independent non-negative random variables $Z_1,\ldots,Z_m$, and all $k \in \mathbb{N}$,
\begin{align}
\label{eq:non-neg}
\E\left[\left(\sum_{i=1}^mZ_i\right)^k\right] \leq k^k
\max\left(\sum_{i=1}^m\E[Z_i^k],
\left(\sum_{i=1}^m\E[Z_i]\right)^k\right).
\end{align}
\end{Lem}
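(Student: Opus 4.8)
The plan is to give a self-contained proof of the two Rosenthal-type inequalities \eqref{eq:mean0} and \eqref{eq:non-neg} via the multinomial expansion; these are classical (see \cite{Rosenthal,JohnsonSZ85,Pinelis94}) and one could simply cite them, but the elementary argument also makes the stated constants $(2k)^{2k}$ and $k^k$ transparent. Throughout write $M_p := \sum_{i=1}^m \E[|Z_i|^p]$. Two structural facts drive everything: (i) by Lyapunov's inequality each map $p \mapsto \E[|Z_i|^p]$ is log-convex, hence so is $p \mapsto M_p$, being a sum of log-convex functions; and (ii) for a composition $(\lambda_1,\dots,\lambda_r)$ of $n$ into parts $\lambda_j\ge 1$, the multinomial coefficient $\binom{n}{\lambda_1,\dots,\lambda_r}$ counts ordered set partitions of $[n]$ with these block sizes, so summing over all compositions of $n$ yields the ordered Bell number, which is at most $n^n$ (an ordered set partition of $[n]$ injects into the set of functions $[n]\to[n]$ by sending each element to its block index).

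For \eqref{eq:mean0}: expand $(\sum_i Z_i)^{2k}$ by the multinomial theorem and take expectations. By independence the monomial with exponent vector $(a_1,\dots,a_m)$, $\sum_i a_i = 2k$, contributes $\binom{2k}{a_1,\dots,a_m}\prod_i \E[Z_i^{a_i}]$, and since $\E[Z_i]=0$ any monomial with some $a_i = 1$ vanishes. Index a surviving monomial by its set of nonzero coordinates $i_1<\dots<i_r$ and the corresponding exponents $\lambda_1,\dots,\lambda_r \ge 2$ with $\sum_j \lambda_j = 2k$ (so $r\le k$); bounding $\E[Z_i^{a_i}]\le\E[|Z_i|^{a_i}]$ and extending the sum over index choices to all $r$-tuples (an over-count, harmless for an upper bound) gives
\begin{align*}
\E\Bigl[\bigl(\textstyle\sum_i Z_i\bigr)^{2k}\Bigr] \;\le\; \sum_{r\ge 1}\ \sum_{\substack{\lambda_1,\dots,\lambda_r\ge 2\\ \sum_j\lambda_j = 2k}} \binom{2k}{\lambda_1,\dots,\lambda_r}\ \prod_{j=1}^r M_{\lambda_j}.
\end{align*}
Now use log-convexity of $M$: for $2\le\lambda_j\le 2k$ we have $M_{\lambda_j}\le M_2^{(2k-\lambda_j)/(2k-2)}M_{2k}^{(\lambda_j-2)/(2k-2)}$, and multiplying over $j$ the exponents simplify (using $\sum_j(\lambda_j-2)=2k-2r$) to give $\prod_j M_{\lambda_j}\le (M_2^{\,k})^{(r-1)/(k-1)}(M_{2k})^{(k-r)/(k-1)}\le\max(M_2^{\,k},M_{2k})$, since the two exponents are nonnegative and sum to $1$ (so the middle term is a weighted geometric mean of the two quantities, with $r=1$ and $r=k$ recovering the extremes). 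Pulling this factor out and bounding the remaining sum of multinomial coefficients by the ordered Bell number $\le(2k)^{2k}$, and noting $M_2=\sum_i\E[Z_i^2]$, $M_{2k}=\sum_i\E[Z_i^{2k}]$, yields \eqref{eq:mean0}; the case $k=1$ is immediate since then $\E[(\sum_i Z_i)^2]=M_2$.

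For \eqref{eq:non-neg} the argument is the same but simpler, because non-negativity means we never needed cancellation — only that each monomial is bounded. Expand $(\sum_i Z_i)^k$, index a monomial by its nonzero coordinates $i_1<\dots<i_r$ and exponents $\lambda_1,\dots,\lambda_r\ge 1$ with $\sum_j\lambda_j = k$ (so $r\le k$), over-count index choices to get $\sum_{r}\sum_{\lambda}\binom{k}{\lambda_1,\dots,\lambda_r}\prod_j M_{\lambda_j}$ with now $M_p=\sum_i\E[Z_i^p]$; log-convexity on $[1,k]$ gives $M_{\lambda_j}\le M_1^{(k-\lambda_j)/(k-1)}M_k^{(\lambda_j-1)/(k-1)}$, hence $\prod_j M_{\lambda_j}\le (M_1^{\,k})^{(r-1)/(k-1)}M_k^{(k-r)/(k-1)}\le\max\!\bigl((\textstyle\sum_i\E[Z_i])^k,\ \sum_i\E[Z_i^k]\bigr)$; summing the multinomial coefficients over all compositions of $k$ gives $\le k^k$, which is \eqref{eq:non-neg}.

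The only real obstacle is bookkeeping — pinning down the constants — and it reduces to exactly the two points above: the collapse of the Lyapunov exponents into a single weighted geometric mean of the extreme quantities $M_2^{\,k}$ (resp. $M_1^{\,k}$) and $M_{2k}$ (resp. $M_k$), and the bound $\le n^n$ on the total multinomial weight via ordered set partitions. Everything else is routine. If a self-contained treatment is not wanted, one may simply invoke \cite{Rosenthal,JohnsonSZ85,Pinelis94}, which state these inequalities in precisely this form.
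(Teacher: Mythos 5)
Your proof is correct, and it takes a genuinely different route from the paper: the paper offers no argument at all for this lemma, treating it as a black-box citation to the classical Rosenthal-type inequalities (Rosenthal; Johnson--Schechtman--Zinn; Pinelis), whereas you give a self-contained elementary derivation. Your three ingredients all check out: (i) after the multinomial expansion, mean-zeroness kills every monomial containing a first power, so all surviving exponents are at least $2$ and the number of distinct indices is at most $k$; passing to absolute moments and over-counting index tuples legitimately yields the bound by $\sum_r\sum_{\lambda}\binom{2k}{\lambda_1,\dots,\lambda_r}\prod_j M_{\lambda_j}$, since all resulting terms are nonnegative; (ii) the Lyapunov/H\"older log-convexity of $p\mapsto M_p$ collapses each product $\prod_j M_{\lambda_j}$ into a weighted geometric mean of $M_2^{\,k}$ and $M_{2k}$ (the exponents $(r-1)/(k-1)$ and $(k-r)/(k-1)$ are nonnegative and sum to $1$ exactly as you compute), hence at most the maximum in \eqref{eq:mean0}; and (iii) the total multinomial weight over compositions is the ordered Bell number, bounded by $(2k)^{2k}$ via your injection into maps $[2k]\to[2k]$. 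The non-negative case \eqref{eq:non-neg} goes through verbatim with parts $\geq 1$ and constant $k^k$. The only loose ends are boundary trivia you mostly already flag: the $k=1$ case (where the interpolation exponent $1/(k-1)$ is undefined) must be, and is, handled separately for \eqref{eq:mean0}, and should be noted for \eqref{eq:non-neg} as well, where it is immediate; likewise the degenerate situations $M_{2k}=0$ or infinite moments are trivial. What the two approaches buy: the paper's citation is shorter and defers to optimal known constants, while your argument makes the specific constants $(2k)^{2k}$ and $k^k$ used downstream (e.g.\ in Lemma \ref{lem:mom}) transparent and verifiable in place, at the cost of about a page.
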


\begin{Lem}
\label{lem:mom}
For all integers $k \geq 2$, 
\begin{align}
\E_{x \sim \dpm^n}\left[\left(\sum_{i=1}^mg_i(x)\right)^{2k}\right] \ \leq \ &
(2k)^{4k}\left(\sum_{i=1}^m\sigma_i^2\right)^k\label{eq:mom1}\\
\E_{\cU}\left[\left(\sum_{i=1}^m(g_i(x))^2\right)^k\right] \ \leq \ &
(2k)^{3k}\left(\sum_{i=1}^m\sigma_i^2\right)^k\label{eq:mom2}
\end{align}
\end{Lem}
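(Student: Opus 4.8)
The plan is to derive both bounds as direct applications of the two Rosenthal inequalities \eqref{eq:mean0} and \eqref{eq:non-neg} stated just above, exploiting that the $g_i$ act on disjoint blocks of coordinates and are therefore mutually independent when $x \sim \dpm^n$. For \eqref{eq:mom1} I would take $Z_i = g_i(x)$: these are independent with $\E[Z_i]=0$, so \eqref{eq:mean0} applies. For \eqref{eq:mom2} I would take $Z_i = g_i(x)^2$: these are independent and non-negative, so \eqref{eq:non-neg} applies.

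All that remains is to bound the two quantities inside each $\max$. The hypothesis $\E[(g_i)^{2k}] \leq (2k)^{2k}\sigma_i^{2k}$ handles the ``$2k$-th moment'' terms, and its $k=1$ case gives $\E[g_i^2] \leq 4\sigma_i^2$ for the ``sum of second moments'' terms. Combining these with the elementary inequality $\sum_i a_i^k \leq \big(\sum_i a_i\big)^k$, valid for non-negative reals $a_i$ and integer $k \geq 1$ (used with $a_i = \sigma_i^2$), one checks that $\sum_i \E[Z_i^{2k}] \leq (2k)^{2k}\big(\sum_i \sigma_i^2\big)^k$ and $\big(\sum_i \E[Z_i^2]\big)^k \leq 4^k\big(\sum_i\sigma_i^2\big)^k$ in the first case, and the analogous statements for $Z_i = g_i^2$ in the second. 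In both cases the maximum is at most $(2k)^{2k}\big(\sum_i \sigma_i^2\big)^k$, since $4^k \leq (2k)^{2k}$ for $k \geq 1$. Multiplying by the prefactor $(2k)^{2k}$ from \eqref{eq:mean0} yields \eqref{eq:mom1}, and multiplying by the prefactor $k^k \leq (2k)^k$ from \eqref{eq:non-neg} yields \eqref{eq:mom2}.

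There is no real obstacle here: the lemma is a bookkeeping consequence of Rosenthal's inequalities once the independence coming from disjoint variable sets is noted. The only point requiring any care is verifying that the constant $4^k$ from the second-moment term never dominates $(2k)^{2k}$, which holds throughout the range $k \geq 2$ of interest.
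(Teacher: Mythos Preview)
Your proposal is correct and follows essentially the same approach as the paper's proof: apply Rosenthal's inequality \eqref{eq:mean0} to $Z_i=g_i(x)$ and \eqref{eq:non-neg} to $Z_i=g_i(x)^2$, use the moment hypotheses (including the $k=1$ case $\E[g_i^2]\leq 4\sigma_i^2$) together with $\sum_i \sigma_i^{2k}\leq(\sum_i\sigma_i^2)^k$ to bound both terms in each $\max$ by $(2k)^{2k}(\sum_i\sigma_i^2)^k$, and then absorb the prefactors $(2k)^{2k}$ and $k^k\leq(2k)^k$ respectively.
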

\begin{proof}
Let $Z_i = g_i(x)$, $\cU$. Then, $Z_i$'s are independent mean-zero variables. Now, by Rosenthal's inequality, \eref{eq:mean0},
\begin{align*}
  \E\left[ \left(\sum_i Z_i\right)^{2k}\right] &\leq (2k)^{2k} \max\left(\, \sum_i \E[Z_i^{2k}], \left(\sum_i \E[Z_i^2]\right)^k\,\right)\\
&\leq (2k)^{2k} \max\left(\, \sum_i (2k)^{2k}\sigma_i^{2k}, \left(\sum_i 4\sigma_i^2\right)^k\,\right)\\
&\leq (2k)^{4k} \max\left(\, \sum_i \sigma_i^{2k}, \left(\sum_i \sigma_i^2\right)^k\,\right)\\
&= (2k)^{4k}\left(\sum_i \sigma_i^2\right)^k.
\end{align*}
\eat{
For the first bound, we apply Rosenthals' inequality for mean zero variables, \eref{eq:mean0}, with $Z_i =
g_i(x)$, $\cU$. Since $\E_\cU[g_i^{2k}] \leq (2k)^{2k}\sigma_i^{2k}$, we have
\begin{align*}
\sum_{i=1}^m\E_\cU[g_i^{2k}] \leq (2k)^{2k}\sum_{i=1}^m\sigma_i^{2k} \leq
(2k)^{2k}\left(\sum_{i=1}^m\sigma_i^2\right)^k.
\end{align*}
Plugging this into Equation \ref{eq:mean0}, we get
\begin{align*}
\E_\cU\left[\left(\sum_{i=1}^mg_i\right)^{2k}\right] \leq
(2k)^{4k}\left(\sum_{i=1}^m\sigma_i^2\right)^k. 
\end{align*}}
The second bound follows similarly by applying Rosenthal's inequality,\eref{eq:non-neg}, to the non-negative random variables $Z_i^2 = g_i^2$:
\begin{align*}
\E\left[ \left(\sum_i Z_i^2\right)^{k}\right] \leq k^k \max\left(\, \sum_i \E[Z_i^{2k}], \left(\sum_i \E[Z_i^2]\right)^k\,\right) \leq (2k)^{3k} \left(\sum_{i=1}^m\sigma_i^2\right)^k. 
\end{align*}
\eat{
To prove the second bound, we apply Rosenthal's inequality for non-negative random variables, \eref{eq:non-neg}, with $Z_i = g_i^2$. We have 
$$\E[Z_i] = \sigma_i^2, \ \E[Z_i^k] = \E_\cU[g_i^{2k}] \leq (2k)^{2k}\sigma_i^{2k}.$$
Hence we get
$$\sum_{i=1}^m\E[Z_i^k] = \sum_{i=1}^m\E_\cU[g_i^{2k}] \leq (2k)^{2k}\sum_{i=1}^m\sigma_i^{2k} \leq (2k)^{2k}(\sum_{i=1}^m \sigma_i^2)^k.$$ 
Plugging this into Equation \ref{eq:mean0}, we get
\begin{align*}
\E_\cU\left[\left(\sum_{i=1}^mg_i^2\right)^k\right] \leq
(2k)^{3k}(\sum_{i=1}^m \sigma_i^2)^k.
\end{align*}}
\end{proof}
\eat{
For ease of notation, we define the quantity
$$\sigma^2 = \frac{1}{m}\sum_{i=1}^m\sigma_i^2.$$}
A consequence of Lemma \ref{lem:mom} is the following:
\begin{Cor}
\label{cor:mom-eps}
For all $k \geq 2$, under any $\eps$-biased distribution $\mc{D}$, 
\begin{align}
\E_{\cD}\left[\left(\sum_{i=1}^mg_1\right)^{2k}\right] \ \leq \ &
(2k)^{4k}(m\sigma^2)^k + \eps(mt)^{2k}\label{eq:mom1-eps}\\
\E_{\cD}\left[\left(\sum_{i=1}^mg_1^2\right)^k\right] \ \leq \ &
(2k)^{3k}(m\sigma^2)^k + \eps(mt^2)^k\label{eq:mom2-eps}
\end{align}
\end{Cor}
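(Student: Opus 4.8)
The plan is to reduce both inequalities to the uniform-distribution bounds of \lref{lem:mom}, incurring only a small additive error arising from the $\eps$-bias of $\mc{D}$. The key observation is that, viewed as a multilinear polynomial over $\pmo^n$ (i.e.\ after reducing $x_j^2 \equiv 1$), the function $\bigl(\sum_{i=1}^m g_i\bigr)^{2k}$ has small $\Ll$-norm. Recall that $\Ll[\cdot]$ is subadditive under sums and submultiplicative under products in the ring of multilinear polynomials over $\pmo^n$: writing $P = \sum_I a_I\prod_{j\in I}x_j$ and $Q = \sum_J b_J\prod_{j\in J}x_j$, one has $PQ = \sum_K\bigl(\sum_{I\triangle J = K}a_I b_J\bigr)\prod_{j\in K}x_j$, whence $\Ll[PQ]\le\Ll[P]\Ll[Q]$, while subadditivity is just the triangle inequality. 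Consequently $\Ll\bigl[\sum_i g_i\bigr]\le\sum_i\Ll[g_i]\le mt$, so $\Ll\bigl[(\sum_i g_i)^{2k}\bigr]\le(mt)^{2k}$; similarly $\Ll[g_i^2]\le\Ll[g_i]^2\le t^2$, hence $\Ll\bigl[\sum_i g_i^2\bigr]\le mt^2$ and $\Ll\bigl[(\sum_i g_i^2)^k\bigr]\le(mt^2)^k$.

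Next I would use the standard fact that $\eps$-biased distributions fool polynomials of small $\Ll$-norm: for any $Q:\pmo^n\rgta\R$ with $\Ll[Q]\le L$ and any $\eps$-biased $\mc{D}$,
\[ \Bigl|\E_{x\sim\mc{D}}[Q(x)] - \E[Q]\Bigr| \;\le\; \eps\!\!\sum_{I\ne\emptyset}|c_I| \;\le\; \eps L, \]
since each nonempty monomial has uniform expectation $0$ and $\mc{D}$-expectation of absolute value at most $\eps$. Applying this to $Q = \bigl(\sum_i g_i\bigr)^{2k}$ with $L = (mt)^{2k}$, and combining with $\eqref{eq:mom1}$ together with $\sum_i\sigma_i^2 = m\sigma^2$, yields
\[ \E_{\mc{D}}\Bigl[\bigl(\sum_{i=1}^m g_i\bigr)^{2k}\Bigr] \;\le\; \E\Bigl[\bigl(\sum_{i=1}^m g_i\bigr)^{2k}\Bigr] + \eps(mt)^{2k} \;\le\; (2k)^{4k}(m\sigma^2)^k + \eps(mt)^{2k}, \]
which is $\eqref{eq:mom1-eps}$. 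Inequality $\eqref{eq:mom2-eps}$ follows in exactly the same way, taking $Q = \bigl(\sum_i g_i^2\bigr)^k$, $L = (mt^2)^k$, and invoking $\eqref{eq:mom2}$.

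I do not expect any real obstacle here. The only point needing care is the bookkeeping of $\Ll$-norms under the multilinear reduction $x_j^2\equiv1$, but that reduction only ever merges monomials and hence cannot increase $\Ll[\cdot]$. Note that disjointness of the variable sets of the $g_i$ plays no role in this argument — it was already exploited inside \lref{lem:mom}, where Rosenthal's inequality needs the $g_i(x)$ to be independent under the uniform distribution — so the corollary is a black-box consequence of \lref{lem:mom} and the $\Ll$-norm/$\eps$-bias connection.
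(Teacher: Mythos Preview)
Your proposal is correct and is essentially the same argument as the paper's: bound the $\Ll$-norm of the relevant power via submultiplicativity (the paper just states $\Ll[h^k]\le(\Ll[h])^k$), then combine the uniform-distribution bounds of \lref{lem:mom} with the standard $\Ll$/$\eps$-bias inequality (the paper invokes \lref{lem:lp} for this). Your write-up is in fact more detailed than the paper's on the bookkeeping of $\Ll$-norms under multilinear reduction, but the route is identical.
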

\begin{proof}
Note that for any function $h:\dpm^n \rgta \R$, $\Ll[h^k] \leq (\Ll[h])^k$. Therefore, applying this inequality to $h\equiv \sum_i g_i$, we get $\Ll[\,(\sum_i g_i)^{2k}\,] \leq (mt)^{2k}$. The first inequality now follows from \lref{lem:mom} and \lref{lem:lp}. The second inequality follows similarly.
\end{proof}

Next we show that $|\sum_i g_i|$, $\sum_i g_i^2$ being small implies
the smallness in absolute value of $S_k(g_1,\ldots,g_m)$ for every $k
\geq 2$. Note that there is no probability involved in this
statement. 

\begin{Lem}
\label{lem:s1s2}
Let $z_1,\ldots,z_m$ be real numbers that satisfy
\begin{align*}
\left|\sum_{i=1}^m z_i\right| \leq \mu, \ \ \sum_{i=1}^mz_i^2 \leq \mu^2.
\end{align*}
Then for every $k \geq 2$ we have
\begin{align*}
|S_k(z_1,\ldots,z_m)| \leq \mu^k.
\end{align*}
\end{Lem}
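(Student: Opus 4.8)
The plan is to express the elementary symmetric polynomials $S_k(z_1,\dots,z_m)$ in terms of the power sums $p_j=\sum_{i=1}^m z_i^j$ via the Newton--Girard identities and then induct on $k$. The hypotheses control $p_1$ and $p_2$ directly ($|p_1|\le\mu$ and $0\le p_2\le\mu^2$), so the first task is to upgrade this to a uniform bound on \emph{all} power sums. Since $\sum_i z_i^2\le\mu^2$, each $|z_i|\le\mu$, and therefore for every $j\ge 2$
\[
|p_j|\ \le\ \sum_{i=1}^m |z_i|^j\ =\ \sum_{i=1}^m |z_i|^{\,j-2}\,z_i^2\ \le\ \mu^{\,j-2}\sum_{i=1}^m z_i^2\ \le\ \mu^j ,
\]
while $|p_1|\le\mu$ is given. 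Hence $|p_j|\le\mu^j$ for every $j\ge 1$.

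Next I would invoke Newton's identity (see \cite{CoxLO07}): for every $k\ge 1$,
\[
k\,S_k(z_1,\dots,z_m)\ =\ \sum_{j=1}^k (-1)^{j-1}\, S_{k-j}(z_1,\dots,z_m)\, p_j,\qquad S_0\equiv 1 .
\]
I then prove $|S_k|\le\mu^k$ by strong induction on $k$. The cases $k=0$ (where $S_0=1=\mu^0$) and $k=1$ (where $|S_1|=|p_1|\le\mu$) are immediate. For $k\ge 2$, assuming $|S_{k-j}|\le\mu^{\,k-j}$ for $1\le j\le k$, the identity together with the power-sum bound gives
\[
k\,|S_k|\ \le\ \sum_{j=1}^k |S_{k-j}|\,|p_j|\ \le\ \sum_{j=1}^k \mu^{\,k-j}\mu^{j}\ =\ k\,\mu^k ,
\]
so $|S_k|\le\mu^k$, which yields the claimed bound for all $k\ge 2$.

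The argument is elementary, so there is no genuine obstacle; the only place that calls for a moment's care is bounding the odd power sums $p_3,p_5,\dots$, where the clean move is to write $|z_i|^j=|z_i|^{\,j-2}z_i^2$ and use $|z_i|\le\mu$, rather than a Cauchy--Schwarz split such as $|p_j|\le(\sum_i z_i^2)^{1/2}(\sum_i z_i^{2j-2})^{1/2}$. The statement is restricted to $k\ge 2$ only because $k=0,1$ are trivial; the induction itself is most naturally run from $k=0$.
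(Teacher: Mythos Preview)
Your proof is correct and follows essentially the same approach as the paper: bound all power sums by $|p_j|\le\mu^j$, then apply the Newton--Girard identities and induct. The only cosmetic difference is that the paper bounds the power sums via the monotonicity of $\ell_p$ norms, $(\sum_i|z_i|^k)^{1/k}\le(\sum_i z_i^2)^{1/2}$, whereas you factor out $|z_i|^{j-2}\le\mu^{j-2}$; both arrive at the same estimate, and your setup of the induction base at $k=0,1$ is slightly cleaner than the paper's.
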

\begin{proof}
To prove this lemma, we first bound the power sums
$E_k(z_1,\ldots,z_m)$ which are defined as 
$$E_k(z_1,\ldots,z_m) = \sum_{i=1}^mz_i^k.$$
Note that $E_1 = S_1$. We start by bounding $E_k$ for $k \geq 2$ using
the $\mathbb{L}_k$ norm inequalities 
\begin{align*}
|E_k(z_1,\ldots,z_m)|^\frac{1}{k} \leq \left(\sum_{i=1}^m|z_i|^k\right)^\frac{1}{k} \ \leq
\ \left(\sum_{i=1}^m z_i^2\right)^\frac{1}{2} = E_2(z_1,\ldots,z_m)^\frac{1}{2}
\end{align*}
Hence we have $|E_k(z_1,\ldots,z_m)| \leq \mu^k$.

The relation between the power sums and elementary symmetric
polynomials is given by the Newton-Girard identities (see \cite{CoxLO07}, Chapter 7.1 for instance) discovered in the 17th century.
\begin{equation}
\label{eq:newton} 
S_k(z_1,\ldots,z_m) =
\frac{1}{k}\sum_{i=1}^k(-1)^{i-1}S_{k-i}(z_1,\ldots,z_m)E_i(z_1,\ldots,z_m). 
\end{equation}

We use these to show by induction on $k$ that $|S_k| \leq \mu^k$. For
$k=2$, we have
\begin{align*}
S_2(z_1,\ldots,z_m) = \frac{1}{2}(S_1(z_1,\ldots,z_m)^2 - E_2(z_1,\ldots,z_m)) \leq \frac{1}{2}(\mu^2 + \mu^2) \leq \mu^2.
\end{align*}

Assume we have proved the bound up to $k-1$. Using the Newton-Girard formula,
\begin{align*}
|S_k(z_1,\ldots,z_m)| & \leq \frac{1}{k}\sum_{i=1}^k|S_{k-i}(z_1,\ldots,z_m)||E_i(z_1,\ldots,z_m)| \leq \frac{1}{k}\sum_{i=1}^k\mu^{k-i}\mu^i \leq \mu^k.
\end{align*}
\end{proof}

Let 
$$\Pk(x) = \sum_{i=0}^kc_iS_i(g_1,\ldots,g_m)$$
denote the truncation of $P$ to degree $k$. We use the following bounds for $\Pk$.
\begin{Lem}
\label{lem:pk}
Let $P,m,t,C,\D$ be as in \tref{thm:main}. If $m\sigma^2 \leq \frac{1}{2}$, then for every $k \in \mathbb{N}$,
\begin{align}
\E_\cD\left[\Pk(x)^2\right] \leq 2 C^2 + \eps\cdot(mt+1)^{2k}\cdot C^2.
\end{align}
\eat{
\E_\cU[\Pk(x)^2] \leq 2C^2,\\
\|\Pk\|_\infty & \leq  eCm^k
\end{align*}}
\end{Lem}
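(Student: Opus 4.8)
The plan is to bound $\E_\cD[\Pk^2]$ by first passing to the uniform distribution at the cost of $\eps$ times the $\Ll$-norm of $\Pk^2$, and then bounding $\E_\cU[\Pk^2]$ by a second-moment computation that exploits the pairwise orthogonality of elementary symmetric polynomials in independent mean-zero variables.

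\textbf{Passing to the uniform distribution.} Since $\Pk^2$ is a polynomial over $\dpm^n$, the basic small-bias estimate $|\E_\cD[h]-\E_\cU[h]|\le\eps\,\Ll[h]$ --- the same inequality used in the proof of \cref{cor:mom-eps}, and the easy direction of \lref{lem:lp} --- yields $\E_\cD[\Pk^2]\le\E_\cU[\Pk^2]+\eps\,\Ll[\Pk^2]$. It therefore suffices to prove $\Ll[\Pk^2]\le C^2(mt+1)^{2k}$ and $\E_\cU[\Pk^2]\le 2C^2$.

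\textbf{The $\Ll$-norm.} Because $g_1,\ldots,g_m$ depend on disjoint sets of variables, $\Ll[\prod_{i\in I}g_i]=\prod_{i\in I}\Ll[g_i]\le t^{|I|}$, so $\Ll[S_i(g_1,\ldots,g_m)]\le\binom{m}{i}t^i\le(mt)^i$. Using $|c_i|\le C$ and the term-by-term bound $\sum_{i=0}^k(mt)^i\le\sum_{i=0}^k\binom{k}{i}(mt)^i=(mt+1)^k$, we get $\Ll[\Pk]\le C(mt+1)^k$, and hence $\Ll[\Pk^2]\le\Ll[\Pk]^2\le C^2(mt+1)^{2k}$ by submultiplicativity of $\Ll[\cdot]$.

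\textbf{The uniform second moment.} Under $\cU$ the blocks $g_i(x)$ are independent with $\E[g_i]=0$. Hence for index sets $I\neq J\subseteq[m]$ the product $\prod_{i\in I}g_i(x)\cdot\prod_{j\in J}g_j(x)$ contains some block raised to the first power, so its expectation vanishes, while for $I=J$ it equals $\prod_{i\in I}\E_\cU[g_i^2]$. Thus the polynomials $S_0,S_1,\ldots$ in the $g_i$ are pairwise orthogonal under $\cU$ and
\[
\E_\cU[\Pk^2]=\sum_{i=0}^k c_i^2\,S_i(v_1,\ldots,v_m),\qquad v_i:=\E_\cU[g_i(x)^2].
\]
The moment hypothesis at $k=1$ gives $v_i\le 4\sigma_i^2$, so $\sum_i v_i\le 4m\sigma^2$. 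Now $S_i(v)\le\frac{1}{i!}\big(\sum_j v_j\big)^i$ for nonnegative $v_j$, whence $\sum_{i\ge 0}S_i(v)\le\prod_j(1+v_j)\le e^{\sum_j v_j}$; since $m\sigma^2\le\frac12$ this is an absolute constant, and with $|c_i|\le C$ we conclude $\E_\cU[\Pk^2]\le 2C^2$ (the relevant quantity is just $\sum_i v_i$, which is in fact $o(1)$ under the quantitative hypothesis \eqref{eq:assumption} of \tref{thm:main}). Combining the three steps gives $\E_\cD[\Pk^2]\le 2C^2+\eps\,(mt+1)^{2k}C^2$.

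I do not expect a genuine obstacle here: the lemma is the $\eps$-bias-to-$\Ll$-norm dictionary applied on top of a routine orthogonality computation. The only points demanding care are the orthogonality bookkeeping under $\cU$ --- which is precisely where the disjointness of the variable blocks and $\E[g_i]=0$ enter --- and pinning down the absolute constant in the uniform second-moment bound, which is harmless because all that is needed is that $\sum_i\E_\cU[g_i(x)^2]$ be bounded by a small constant.
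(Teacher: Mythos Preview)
Your proof is correct and follows essentially the same three-step approach as the paper: pass from $\cD$ to $\cU$ at cost $\eps\,\Ll[\Pk^2]$, bound $\Ll[\Pk]\le C(mt+1)^k$, and exploit the pairwise orthogonality of the $S_i(g_1,\ldots,g_m)$ under $\cU$ to control $\E_\cU[\Pk^2]$. The only cosmetic difference is in the last step---the paper bounds $\E_\cU[S_i^2]\le(m\sigma^2)^i$ and sums a geometric series, while you use $\sum_i S_i(v)=\prod_j(1+v_j)\le e^{\sum_j v_j}$; note that with $v_j\le 4\sigma_j^2$ and only $m\sigma^2\le\tfrac12$ your route yields the constant $e^2$ rather than $2$, which as you correctly observe is immaterial under \eqref{eq:assumption}.
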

\begin{proof}
We observe that the symmetric polynomials $S_0 =1,\ldots, S_k$ on $g_1,\ldots,g_m$ are mutually orthogonal under the uniform distribution, i.e., for $i \neq j$,
$$\E_\cU[S_i(g_1(x),\ldots,g_m(x))\cdot S_j(g_1(x),\ldots,g_m(x))] = 0.$$ 
For brevity, we shall omit writing out the argument $x$ in the following. For $i \geq 1$, we have
$$\E_\cU\left[S_i(g_1,\ldots,g_m)^2\right] = \sum_{|S| =i}\prod_{j \in S}\E_\cU\left[g_j^2\right]
\leq \left(\sum_{j=1}^m \E_\cU[g_j^2]\right)^i \leq (m\sigma^2)^i.$$
Therefore, assuming that $m\sigma^2 \leq 1/2$,
\begin{align}
\label{eq:Esk1}
\E_\cU\left[\Pk(g_1,\ldots,g_m)^2\right] =
\sum_{i=0}^kc_i^2\E_\cU\left[S_i(g_1,\ldots,g_m)^2\right]  \leq
C^2\sum_{i=0}^k(m\sigma^2)^i \leq 2C^2.  
\end{align}

Since $\Ll[g_j] \leq t$, we have 
\begin{align*}
\Ll\left[S_i(g_1,\ldots,g_m)\right] & \leq {m \choose i}t^i,\\
\Ll\left[\Pk\right] & \leq C \sum_{i=0}^k{m \choose i}t^i \leq C\cdot (mt+1)^k, \text{ and}\\
\Ll\left[\Pk^2\right] & \leq \Ll\left[\Pk\right]^2 \leq C^2\cdot (mt+1)^{2k}.
\end{align*}

Hence
$$\E_\cD[\Pk(x)^2] \leq C^2(2+ \eps(mt+1)^{2k}) \leq  2 C^2 + \eps\cdot (mt+1)^{2k})\cdot C^2.$$

\eat{Since $|g_j| \leq 1$, we have 
\begin{align*}
|S_i(g_1,\ldots,g_m)| & \leq {m \choose i}\\
|\Pk(g_1,\ldots,g_m)| & \leq C\sum_{i=0}^k{m \choose i} \leq eCm^k.
\end{align*}}

\end{proof}

\paragraph{Setting Parameters.}
In Theorem \ref{thm:main}, we choose
\begin{align*}
k = \left\lceil \frac{5\log(1/\delta)}{\log(1/m\sigma^2)}\right\rceil
\end{align*}
which guarantees $\delta^5/2 \leq (m\sigma^2)^k \leq \delta^5$.
By \eref{eq:assumption} we have
\begin{align*}
m\sigma^2 \leq \frac{1}{\log(1/\delta)^{25}},
\end{align*}
from which it follows that
\begin{align}
\label{eq:param-5}
k & \leq  \frac{\log(1/\delta)}{5\log\log(1/\delta)}, \text{ and}\\
(2k)^{4k} & \leq \frac{1}{\delta}.
\end{align}
Finally, for all $\eps$ small enough so that $\eps\cdot(mt +1)^{2k} \leq \delta^4$, 
the following bounds will hold under the assumptions of \tref{thm:main}, by \cref{cor:mom-eps} and \lref{lem:pk},
\begin{align}
\label{eq:param-7:1}
\E_\cD\left[\Pk(x)^2\right] & \leq 4C^2,\\
\label{eq:param-7:2}
\E_{\cD}\left[\left(\sum_{i=1}^mg_i(x)\right)^{2k}\right] & \leq \
(2k)^{4k}(m\sigma^2)^k + \eps(mt)^{2k} \leq 2\delta^4\\
\label{eq:param-7:3}
\E_{\cD}\left[\left(\sum_{i=1}^mg_i(x)^2\right)^k\right] & \leq \
(2k)^{3k}(m\sigma^2)^k + \eps(mt)^{2k} \leq 2\delta^4.
\end{align}

We now proceed to prove Statement (1) in Theorem \ref{thm:main}, which we restate below with specific constants.

\begin{Lem}
\label{lem:1}
With the notation from \tref{thm:main}, we have
\begin{align}
\label{eq:p-is-foolish}
\left|\E_\cU[P(x)] - \E_\cD[P(x)]\right| \leq (4B + 13C)\cdot \delta.
\end{align}
\end{Lem}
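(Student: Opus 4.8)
The plan is to run the three-link chain indicated above. Introduce the degree-$k$ truncation $\Pk=\sum_{i=0}^{k}c_iS_i(g_1,\dots,g_m)$ and bound
\[ \E_\cU[P]\;\apx{\delta}\;\E_\cU[\Pk]\;\apx{\delta}\;\E_\cD[\Pk]\;\apx{\delta}\;\E_\cD[P], \]
each step costing $O((B+C)\delta)$. Adding these up, together with keeping careful track of constants, will give the stated $(4B+13C)\delta$.

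The middle link is the routine one. The $\Ll$-norm bookkeeping of \lref{lem:pk} gives $\Ll[\Pk]\le C(mt+1)^{k}$, so since $\calD$ is $\eps$-biased with $\eps=\delta^{4}/(mt+1)^{2k}$, the definition of $\eps$-bias (equivalently \lref{lem:lp}) yields $|\E_\cU[\Pk]-\E_\cD[\Pk]|\le\eps\,\Ll[\Pk]\le C\delta^{4}/(mt+1)^{k}\le C\delta$.

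The first and third links are the same estimate applied to $\cU$ and to $\calD$ respectively (note $\cU$ is itself $\eps$-biased, so the first link is literally a special case of the third), so it suffices to show that for an arbitrary $\eps$-biased distribution — which I will also call $\calD$ — we have $|\E_\cD[P]-\E_\cD[\Pk]|=O((B+C)\delta)$. Pick a threshold $\mu\in(0,1)$ and set $\mathcal{E}=\{x:\ \bigl|\sum_i g_i(x)\bigr|\le\mu\ \text{ and }\ \sum_i g_i(x)^{2}\le\mu^{2}\}$. On $\mathcal{E}$, \lref{lem:s1s2} gives $|S_i(g_1(x),\dots,g_m(x))|\le\mu^{i}$ for every $i\ge2$, hence $|P(x)-\Pk(x)|=\bigl|\sum_{i>k}c_iS_i\bigr|\le C\sum_{i>k}\mu^{i}\le 2C\mu^{k+1}$. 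Off $\mathcal{E}$, Cauchy--Schwarz gives $\bigl|\E_\cD[(P-\Pk)\ind{\bar{\mathcal{E}}}]\bigr|\le\sqrt{\E_\cD[(P-\Pk)^{2}]}\cdot\sqrt{\pr_\cD[\bar{\mathcal{E}}]}$, where $\E_\cD[(P-\Pk)^{2}]\le 2\E_\cD[P^{2}]+2\E_\cD[\Pk^{2}]\le 2B^{2}+8C^{2}$ using $|P|\le B$ and \eqref{eq:param-7:1}, while Markov's inequality applied to $(\sum_i g_i)^{2k}$ and to $(\sum_i g_i^{2})^{k}$, together with the moment bounds \eqref{eq:param-7:2}--\eqref{eq:param-7:3}, gives $\pr_\cD[\bar{\mathcal{E}}]\le 4\delta^{4}/\mu^{2k}$. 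I would then choose $\mu$ so that $\mu^{2k}=4\delta^{2}$: this makes $\pr_\cD[\bar{\mathcal{E}}]\le\delta^{2}$ and $\mu^{k+1}\le\mu^{k}=2\delta$, and the bound $k\le\log(1/\delta)/(5\log\log(1/\delta))$ from \eqref{eq:param-5} makes $\mu$ far below $1$ (so the geometric tail converges). Both contributions are then $O((B+C)\delta)$, which proves the third link and hence also the first.

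I expect the real obstacle to be the third link $\E_\cD[P]\apx{\delta}\E_\cD[\Pk]$: one cannot bound $\E_\cD[P-\Pk]$ through the $\Ll$-norm of $P-\Pk$, since the tail symmetric polynomials $S_{k+1},\dots,S_m$ have enormous $\Ll$-norm and $\eps$-bias says nothing about them directly. The device that gets around this is the purely deterministic inequality \lref{lem:s1s2}, which converts control of \emph{all} the $S_i$ into control of just the two low-complexity quantities $\sum_i g_i$ and $\sum_i g_i^{2}$ — and it is precisely there that $\eps$-bias earns its keep, transferring the Rosenthal-type moment estimates of \lref{lem:mom} to $\calD$ via \cref{cor:mom-eps}. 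The other thing that needs care is the parameter balance: $\mu$ must stay below $1$ for the geometric series, while $\pr_\cD[\bar{\mathcal{E}}]$ must be driven down to $\delta^{2}$ (not merely $\delta$), so that the Cauchy--Schwarz factor $\sqrt{\pr_\cD[\bar{\mathcal{E}}]}$ contributes $\delta$ rather than $\sqrt{\delta}$; this is exactly what the slack in \eqref{eq:param-7:2}--\eqref{eq:param-7:3} (moments as small as $2\delta^4$, versus a threshold $\mu^{2k}=4\delta^2$) is there to provide.
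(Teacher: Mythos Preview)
Your proposal is correct and follows essentially the same route as the paper: the same three-link chain through $\Pk$, the middle link via the $\Ll$-norm bound $\Ll[\Pk]\le C(mt+1)^k$, and the outer links via a good event $\{|\sum_i g_i|\le\mu,\ \sum_i g_i^2\le\mu^2\}$ on which \lref{lem:s1s2} controls all tail $S_i$'s, with Cauchy--Schwarz and the moment bounds \eqref{eq:param-7:1}--\eqref{eq:param-7:3} handling the complement. The only cosmetic difference is your threshold $\mu^{2k}=4\delta^2$ versus the paper's $\mu=\delta^{1/k}$ (i.e.\ $\mu^{2k}=\delta^2$); either choice yields constants in the stated range.
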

\begin{proof}
We will show that under any $\eps$-biased distribution $\mc{D}$,
\begin{align}
\label{eq:p-approx}
\E_\cD[|P(x) - \Pk(x)|] \leq (2B +6C)\delta.
\end{align}
Note that $\mc{U}$ is $\eps$-biased for $\eps =0$, so the above bound applies to it. 
We derive \eref{eq:p-is-foolish} from \eref{eq:p-approx} as follows:
\begin{align}
\left|\E_\cU[P(x)] - \E_\cD[P(x)]\right| \leq \left|\E_\cU[P(x)] - \E_\cU[\Pk(x)]\right| +
\left|\E_\cU[\Pk(x)] - \E_\cD[\Pk(x)]\right| \nonumber\\
+ \left|\E_\cD[\Pk(x)] - \E_\cD[P(x)]\right|. \label{eq:triangle}  
\end{align}
The first and last terms are bounded using \eref{eq:p-approx}. We bound the middle term by
\begin{align*}
|\E_\cU[\Pk(x)] - \E_\cD[\Pk(x)]| \leq \eps \cdot \Ll[\Pk(x)] \leq \eps\cdot C\cdot (mt+1)^k \leq C\delta ^4.
\end{align*}
\eref{eq:p-is-foolish} follows by plugging these bounds into \eref{eq:triangle}: 
\begin{align*}
\left|\E_\cU[P(x)] - \E_\cD[P(x)]\right| \leq 2(2B + 6C)\delta + C\delta^4 \leq
(4B + 13C)\delta.
\end{align*}

We now prove \eref{eq:p-approx}. Define a good event $G
\subseteq \pmo^n$ containing those $x$ for which the following bounds hold: 
\begin{align}
\label{eq:good-event}
\left|\sum_{i=1}^m g_i(x)\right| \leq \delta^\frac{1}{k},
\ \left|\sum_{i=1}^m (g_i(x))^2\right|  \leq \delta^\frac{2}{k}.
\end{align}
For $x \in G$, $\Pk(x)$ gives a good approximation to
$P(x)$. By Lemma \ref{lem:s1s2}, we have $|S_\ell(g_1(x),\ldots,g_m(x))|
\leq \delta^{\ell/k}$ for all $\ell \geq 2$. Hence, for all $x \in G$
\begin{align}
\label{eq:good-bound}
|P(x) - \Pk(x)| \leq \sum_{\ell=k+1}^m|c_\ell S_\ell(g_1(x),\ldots,g_m(x))| 
\leq C\sum_{\ell = k+1}^{m}\delta^{\ell/k} \leq C\delta\sum_{\ell \geq 1}\delta^{\ell/k} \leq 2C\delta.
\end{align}

We now bound the probability of $\neg G$ using Markov's inequality applied to a $k$'th moment bound obtained from Equations \eqref{eq:param-7:2} and \eqref{eq:param-7:3}:
\begin{align*}
\Pr_\cD\left[\left|\sum_{i=1}^m g_i(x)\right| \geq \delta^{1/k}\right] =
\Pr_\cD\left[\left|\sum_{i=1}^m g_i(x)\right|^{2k} \geq \delta^2\right] \leq
\frac{1}{\delta^2}\E_\cD\left[\left(\sum_{i=1}^m g_i(x)\right)^{2k}\right]
\leq  2\delta^2,\nonumber\\
\Pr_\cD\left[\left|\sum_{i=1}^m g_i(x)^2\right| \geq \delta^{2/k}\right] =
\Pr_\cD\left[\left|\sum_{i=1}^m g_i(x)^2\right|^{k} \geq \delta^2\right] \leq
\frac{1}{\delta^2}\E_\cD\left[\left(\sum_{i=1}^m g_i(x)^2\right)^k\right]
\leq  2\delta^2,\nonumber\\
\end{align*}
Let $\ind{G}(x)$ and $\ind{\neg G}(x)$ denote the indicators of $G$
and $\neg G$ respectively. We have
\begin{align}
\label{eq:bad-bound}
\E_\cD[\ind{\neg G}(x)] \leq \Pr_\cD\left[\left|\sum_{i=1}^m g_i(x)\right| \geq
  \delta^{1/k}\right] + \Pr_\cD\left[\left|\sum_{i=1}^m g_i(x)^2\right| \geq
  \delta^{2/k}\right] \leq 4\delta^2.
\end{align}

Further,
\begin{align}
\label{eq:error-bound}
\E_\cD\left[\left|P(x) - \Pk(x)\right|\right] & = \E_\cD\left[\left|P(x) - \Pk(x)\right|\cdot\ind{G}(x)\right] + \E_\cD\left[\left|P(x) -
\Pk(x)\right|\cdot\ind{\neg G}(x)\right]
\end{align}
By Equation \ref{eq:good-bound}, we have 
\begin{align}
\label{eq:term1}
\E_\cD\left[\left|P(x) - \Pk(x)\right|\cdot \ind{G}(x)\right] \leq \max_{x \in G} |P(x) - \Pk(x)|
\leq 2C\delta 
\end{align}
To bound the second term, 
\begin{align}
\label{eq:term2}
\E_\cD\left[\left|P(x) - \Pk(x)\right|\cdot \ind{\neg G}\right] & \leq \E_\cD\left[\left|P(x)\right|\cdot\ind{\neg G}\right] +
\E_\cD\left[\left|\Pk(x)\right|\cdot\ind{\neg G}\right]\nonumber\\
& \leq \E_\cD[P(x)^2]^\frac{1}{2}\E_\cD[\ind{\neg G}]^\frac{1}{2} +
\E_\cD[\Pk(x)^2]^\frac{1}{2}\E_\cD[\ind{\neg G}]^\frac{1}{2}\nonumber\\
& \leq B\cdot 2\delta + 2C\cdot 2\delta
\end{align}
where we use the bounds
\begin{align*}
\E_\cD[P(x)^2]  \leq B^2 \ \ \ & (\text{Since } |P(x)| \leq B)\\
\E_\cD[\Pk(x)^2] \leq 4C^2 \ \ \ & (\text{\eref{eq:param-7:1}})\\
\E_\cD[\ind{\neg G}] \leq 4\delta^2 . \ \ \ & (\text{\eref{eq:bad-bound}})
\end{align*}
Plugging Equations \eqref{eq:term1} and \eqref{eq:term2} into \eref{eq:error-bound} we get \eref{eq:p-approx}.
\end{proof}


\eat{
We start by noting that 
$$\Ll(S_1(g_1,\ldots,g_m)^k) \leq \Ll(S_1(g_1,\ldots,g_m))^k \leq (mt)^k.$$
$$\Ll(S_2(g_1,\ldots,g_m)^k) \leq  (mt)^{2k}$$.
Hence, we have
$$\E_{\cD}[S_1(g_1,\ldots,g_m)^k] \leq
\E_{\cU}[S_1(g_1,\ldots,g_m)^k]  + \eps(mt)^k.$$
Thus it suffices to show the claim for the uniform distribution 
where the various $g_i$s are independent random variables
since they have disjoint inputs.

\begin{align*}
\E_\cD[|P(x) - \Pk(x)|] & = \E_\cD[\ind{G}|P(x) - \Pk(x)|] +
\E_\cD[\ind{\neg G}|P(x) - \Pk(x)|]\\
&  \leq \E_\cD[\ind{G}|P(x) - \Pk(x)|] + \E_\cD[\ind{\neg G}](\|P\|_\infty + \|\Pk\|_\infty)
\end{align*}
We use $\ind{G}|P(x) - \Pk(x)| \leq 2c\delta$ to bound the first
term. By assumption $\|P\|_\infty \leq B$, and 
\begin{align*}
\|\Pk\|_\infty \leq \Ll(\Pk) & \leq c\sum_{i=0}^k{m \choose i}t^i \leq c(mt+1)^k.
\end{align*}

Thus overall we get
\begin{align*}
\E_\cD[|P(x) - \Pk(x)|] \leq 2c\delta + 4\sqrt{\delta}(B + c(mt+1)^k).
\end{align*}
}


\eat{
Finally, we need to the following moment bound for $\Pk$:
\begin{Lem}
Assuming that $m\sigma^2 \leq 1/2$, we have
\begin{align*}
\E_\cD[\Pk(x)^2] \leq 2C^2\eps(mt+1)^{2k}.
\end{align*}
\end{Lem}
\begin{proof}
We have 
\begin{align*}
\Ll(\Pk) & \leq c\sum_{i=0}^k{m \choose i}t^i \leq c(mt+1)^k\\
\Ll(\Pk^2) & \leq  \Ll(\Pk)^2 \leq c^2(mt+1)^{2k}.
\end{align*}
Further, we observe that the symmetric polynomials $S_0 =1,\ldots,
S_k$ on $g_1,\ldots,g_m$ are mutually orthogonal under the uniform distribution,
$$\E_\cU[S_i(g_1,\ldots,g_m)S_j(g_1,\ldots,g_m)] = 0 \ \text{for}  i
\neq j.$$ 
Further, for $i \geq 1$ we have
$$\E_\cU[S_i(g_1,\ldots,g_m)^2] = \sum_{|S| =k}\prod_{i \in S}\E_\cU[g_i^2]
\leq \left(\sum_i \E_\cU[g_i^2]\right)^k \leq (m\sigma^2)^k.$$
Therefore, assuming that $m\sigma^2 \leq 1/2$,
$$\E_\cU[\Pk(g_1,\ldots,g_m)^2] =
\sum_{i=0}^kc_i^2\E_\cU[S_i(g_1,\ldots,g_m)^2]  \leq c^2\sum_{i=0}^k(m\sigma^2)^i \leq 2c^2
$$
Hence
$$\E_\cD[\Pk(x)^2] \leq c^2(2+ \eps(mt+1)^{2k}) \leq  2c^2\eps(mt+1)^{2k}).$$
\end{proof}
}


\eat{
We will use this to get sandwiching polynomials for $P(x)$. We start
by defining
\begin{align*}
I(x) = \frac{1}{\delta^{20}}\left(\sum_{i=1}^m g_i \right)^{20k}
+\frac{1}{\delta^{20}}\left(\sum_{i=1}^mg_i^2\right)^{10k}
\end{align*}
If $x \not\in G$, then either $(\sum_{i=1}^mg_i)^{20k} \geq
\delta^{20}$ or $(\sum_{i=1}^m g_i^2)^{10k} \geq \delta^{20}$, hence
$I(x) \geq \ind{\neg G} \geq 0$. 
So we get
\begin{align}
\label{eq:p-sandwich}
|P(x) - \Pk(x)| \leq 2C\delta + DI(x)
\end{align}
Hence if we define the sandwiching polynomials
\begin{align}
P_u(x) = \Pk(x) + 2C\delta + DI(x), \ P_\ell(x) = \Pk(x) - 2C\delta - DI(x)
\end{align}
then we have $P_\ell(x) \leq P(x) \leq P_u(x)$ for all $x \in \pmo^n$
and
\begin{align}
\label{eq:gap}
\E_\cU[P_u(x) - P_\ell(x)] \leq 4C\delta + 2D\E_\cU[I(x)].
\end{align}
To bound the latter term, by Lemma \ref{lem:mom}
\begin{align*}
\E_{\cU}[|\sum_{i=1}^mg_i|^{20k} + |\sum_{i=1}^mg_i^2|^{10k}]  \leq
(20k)^{20k}(20k)!(m\sigma^2)^{10k} + (10k)^{10k}(20 k)!(m\sigma^2)^{10k}
= f(k)(m\sigma^2)^{10k} 
\end{align*}
Recall that $D =B + eCm^k$. So
\begin{align}
2D\E_\cU[I(x)] =\frac{2D}{\delta^{20}}\E_{\cU}\left[(\sum_{i=1}^mg_i^2)^{10k} +
  (\sum_{i=1}^mg_i)^{20k}\right]  \leq  2Bf(k)\frac{m^{10k}\sigma^{20k}}{\delta^{20}} +
 2eCf(k)\frac{m^{11k}\sigma^{20k}}{\delta^{20}}\nonumber\\
\leq 5(B+C)f(k)\frac{m^{11k}\sigma^{20k}}{\delta^{20}}  \leq 5(B+C)\delta^8
\end{align}
where the last inequality follows by Equations \ref{eq:param-3} and
\ref{eq:param-5}. Plugging this back into Equation \ref{eq:gap}
\begin{align*}
\label{eq:gap}
\E_\cU[P_u(x) - P_\ell(x)] \leq 4C\delta + 5(B+C)\delta^8 \leq 10(B + C)\delta.
\end{align*}

Finally, we have
\begin{align*}
\Ll[P_u] \leq \Ll[\Pk] + 2C\delta + \frac{D}{\delta^{20}}(\Ll[(\sum_{i=1}^mg_i)^{20k}] + \Ll[(\sum_{i=1}^mg_i^2)^{10k}])
\end{align*}
We bound these by
\begin{align*}
\Ll[(\sum_{i=1}^mg_i)^{20k}] \leq  & \left(\sum_{i=1}^m\Ll[g_i]\right)^{20k} \leq  (mt)^{20k}\\
\Ll[(\sum_{i=1}^mg_i^2)^{10k}] \leq & \left(\sum_{i=1}^m\Ll[g_i^2]\right)^{10k} \leq  (mt^2)^{10k}\\
\Ll[\Pk] \leq & (k+1)C\max(1,(mt)^k)
\end{align*}
where the last bound is from Lemma \ref{lem:pk}. Thus we have
\begin{align*}
\Ll[P_u] \leq \poly(B,C,1/\delta,m^k,t^k).
\end{align*}
\end{proof}
}


\eat{
Recall that
$$\sigma \leq \frac{1}{\log(1/\delta)^{50}}, \ m \leq (1/\sigma)^{5/4}.$$
We choose $k$ to be the smallest multiple of $4$ such that 
$$k \geq 25\frac{\log(1/\delta)}{\log(1/\sigma)}.$$
By our bound on $m$, this ensures the inequalities
\begin{align*}
(m^{3/2}\sigma^2)^k \leq \sigma^{k/8} \leq \delta^3,\\
(m^{1/2}\sigma)^k \leq \sigma^{3k/8} \leq \delta^9
\end{align*}
Since $\log(1/\sigma) > 50\log\log(1/\delta)$, we have the upper bound
\begin{align*}
k \leq &  \ 25\frac{\log(1/\delta)}{\log(1/\sigma)} + 3 
\leq \frac{\log(1/\delta)}{2\log\log(1/\delta)} + 3.\\
\text{Hence} \  (k+1)!  \leq & \frac{16}{\sqrt{\delta}}.
\end{align*}
}

\eat{
\paragraph{User guide to Theorem \ref{thm:main}}.

\begin{itemize}
\item In our setting $t=\poly(1/\sigma)$, $B = 1, C=1$. Then
  $\Ll[P_u]$ and $\Ll[P_\ell]$ are bounded by $\poly(1/\sigma,1/\delta)$. 
\item We need $m\sigma^2 \leq 1$ so that we can say anything at
  all. Our assumption is just slightly stronger.
\end{itemize}
}


\section{An XOR Lemma for $\eps$-biased spaces}
\label{sec:xor}

In this section, we prove an \Xor\ Lemma that helps us show the existence of
good sandwiching approximators for the composition of a function on
few variables with functions on disjoint sets of variables, each of
which have good sandwiching approximators. We call it an \Xor\ lemma,
since one can view it as a generalization of Vazirani's \Xor\ lemma. 

\begin{Thm}
\label{thm:xor}
Let $f^1,\ldots,f^k :\pmo^n \rgta [0,1]$ be functions on disjoint input
variables such that each $f^i$ has $\eps$-sandwiching approximations
of $\Ll$ norm $t$.  Let $H:[0,1]^k \rgta [0,1]$ be a multilinear
function in its inputs. Let $h:\pmo^n \rgta [0,1]$ be defined as
$h(x) = H(f^1(x),\ldots,f^k(x))$. Then $h$ has $(16^k\eps)$-sandwiching
approximations of $\Ll$ norm $4^k(t+1)^k$.
\end{Thm}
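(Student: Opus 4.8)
The plan is to prove a one‑sided strengthening by induction on $k$ and then recover the two‑sided sandwich from the symmetry $h\mapsto 1-h$. The obstacle to a naive proof is that the hypotheses give $\Ll$‑bounded approximators $f^i_\ell\le f^i\le f^i_u$ with \emph{no} control on their range: since $H$ is multilinear but not monotone one cannot simply substitute $f^i_\ell,f^i_u$ into $H$ and read off a sandwich, and clamping the $f^i_u,f^i_\ell$ to $[0,1]$ is not available because truncation can blow up the $\Ll$‑norm. The fix is to track only a \emph{one‑sided} inequality: call $u$ an \emph{$(\gamma,s)$-upper gadget} for $g$ if $g\le u$ and $u\ge 0$ everywhere, $\E[u-g]\le\gamma$, and $\Ll[u]\le s$; dually a \emph{$(\gamma,s)$-lower gadget} $v$ has $v\le g$, $v\le 1$, $\E[g-v]\le\gamma$, $\Ll[v]\le s$. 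The point is that $u\ge 0$ (resp.\ $v\le 1$), unlike $0\le u\le 1$, \emph{is} preserved by the recursion below. Two harmless reductions: we may assume $16^k\eps<1$ (else $(h_\ell,h_u)=(0,1)$ works), and that each $f^i_\ell,f^i_u$ depends only on the variables of $f^i$ (average out the other coordinates — this preserves the sandwich and $\E[f^i_u-f^i_\ell]$ and does not increase the $\Ll$‑norm). Then $f^i_u$ is an $(\eps,t)$-upper gadget for $f^i$ (as $f^i_u\ge f^i\ge 0$) and $f^i_\ell$ is an $(\eps,t)$-lower gadget (as $f^i_\ell\le f^i\le 1$).

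The heart would be the following claim, by induction on $j$: \emph{if $g^1,\dots,g^j:\pmo^n\rgta[0,1]$ are on disjoint variable sets and each $g^i$ has an $(\eps,t)$-upper gadget $g^i_u$ and an $(\eps,t)$-lower gadget $g^i_\ell$, and $G:[0,1]^j\rgta[0,1]$ is multilinear, then $G(g^1,\dots,g^j)$ has a $(\gamma_j,s_j)$-upper gadget with $\gamma_j=\tfrac23(4^j-1)\eps$ and $s_j=(2(t+1))^j$.} For $j=1$, $G(y)=a+by$ with $a=G(0)\in[0,1]$ and $a+b=G(1)\in[0,1]$, so $|b|\le 2$; take $u:=a+bg^1_u$ if $b\ge 0$ and $u:=a+bg^1_\ell$ if $b<0$. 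One checks $G(g^1)\le u$, $u\ge 0$ (using $a\ge 0$, resp.\ $a+b\ge 0$ together with $g^1_\ell\le 1$), $\E[u-G(g^1)]\le|b|\eps\le 2\eps$, and $\Ll[u]\le|a|+|b|t\le 1+2t\le 2(t+1)$.

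For the inductive step I would write $G(y)=(1-y_j)G_0(y')+y_jG_1(y')$ with $G_b:=G(\cdot,b):[0,1]^{j-1}\rgta[0,1]$ multilinear, set $p_b:=G_b(g^1,\dots,g^{j-1})\in[0,1]$, and take $(\gamma_{j-1},s_{j-1})$-upper gadgets $p_0^\uparrow,p_1^\uparrow$ for $p_0,p_1$ from the induction hypothesis. Define
\[u:=(1-g^j_\ell)\,p_0^\uparrow+g^j_u\,p_1^\uparrow.\]
Then $G(g^1,\dots,g^j)=(1-g^j)p_0+g^jp_1\le(1-g^j)p_0^\uparrow+g^jp_1^\uparrow\le u$ (first step: $1-g^j\ge0$, $g^j\ge0$, $p_b\le p_b^\uparrow$; second: $p_b^\uparrow\ge0$, $1-g^j\le 1-g^j_\ell$, $g^j\le g^j_u$), and $u\ge 0$ since all four factors are nonnegative. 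As $g^j$ (hence $g^j_\ell,g^j_u$) is on variables disjoint from $p_b$ (hence $p_b^\uparrow$), expectations factor, so using $\E[1-g^j_\ell]\le\E[1-g^j]+\eps$, $\E[g^j_u]\le\E[g^j]+\eps$, $\E[p_b^\uparrow]\le\E[p_b]+\gamma_{j-1}$ (all of $\E[1-g^j],\E[g^j],\E[p_b]$ lying in $[0,1]$ and $\eps<1$) one gets $\E[u-G(g^1,\dots,g^j)]\le 4\gamma_{j-1}+2\eps=:\gamma_j$, and $\Ll[u]\le\Ll[1-g^j_\ell]s_{j-1}+\Ll[g^j_u]s_{j-1}\le(1+2t)s_{j-1}\le 2(t+1)s_{j-1}=:s_j$; the recurrences with $\gamma_1=2\eps$, $s_1=2(t+1)$ give the stated values.

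To finish, apply the claim with $j=k$, $g^i=f^i$, $G=H$ to get an upper gadget $h_u$ of $h$, and again with $G=1-H$ (still multilinear $[0,1]^k\rgta[0,1]$) to get an upper gadget $\bar u$ of $1-h$; set $h_\ell:=1-\bar u$. Then $h_\ell\le 1-(1-h)=h\le h_u$, and since $\E[h]+\E[1-h]=1$,
\[\E[h_u-h_\ell]=\E[h_u-h]+\E[\bar u-(1-h)]\le\tfrac43(4^k-1)\eps\le 16^k\eps,\]
while $\Ll[h_u]\le(2(t+1))^k\le 4^k(t+1)^k$ and $\Ll[h_\ell]\le 1+(2(t+1))^k\le 2^{k+1}(t+1)^k\le 4^k(t+1)^k$. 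The main difficulty, and the reason for the whole "upper gadget" detour, is precisely the lack of range control on the given approximators; once the nonnegativity of the gadgets is maintained, everything else is bookkeeping through the convex decomposition $G=(1-y_j)G_0+y_jG_1$.
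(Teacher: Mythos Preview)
Your proof is correct, but it takes a genuinely different route from the paper's. The paper expands $H$ in the Lagrange interpolation basis at the corners of $\{0,1\}^k$: writing $h=\sum_{S\subseteq[k]}H(\ind{S})M^S$ with $M^S=\prod_{i\in S}f^i\prod_{j\notin S}(1-f^j)$ and $H(\ind{S})\in[0,1]$, it builds an upper approximator $M^S_u=\prod_{i\in S}f^i_u\prod_{j\notin S}(1-f^j_\ell)$ for each monomial directly (a hybrid argument gives $\E[M^S_u-M^S]\le 2^k\eps$), obtains a lower approximator via the complement identity $M^S_\ell=1-\sum_{T\ne S}M^T_u$, and then combines linearly with the nonnegative coefficients $H(\ind{S})$. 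Your approach instead inducts on $k$, peeling off one coordinate at a time through the convex split $G=(1-y_j)G_0+y_jG_1$, and the key device is to maintain the one-sided invariant ``$u\ge 0$'' (respectively ``$v\le 1$'') for your gadgets rather than a two-sided range bound; the lower sandwich then falls out by applying the same construction to $1-H$.

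Both arguments exploit the same underlying facts (disjoint supports, multilinearity, coefficients in $[0,1]$), but your inductive bookkeeping is tighter: you get error $\tfrac43(4^k-1)\eps$ and $\Ll$ norms $2^k(t+1)^k$ and $1+2^k(t+1)^k$, all comfortably inside the stated $16^k\eps$ and $4^k(t+1)^k$. The paper's approach has the virtue of being fully explicit (no recursion to unwind) and makes the role of the hypothesis $H:[0,1]^k\to[0,1]$ transparent through the Lagrange coefficients; your nonnegativity-invariant trick is a clean way to sidestep the lack of range control on the given approximators, and gives sharper constants with less overhead. One small remark: in your base case you use $|b|\le 2$, but in fact $|b|=|G(1)-G(0)|\le 1$, which only helps.
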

\begin{proof}
For $S \subseteq [k]$ define the monomial 
$$M^S(x) = \prod_{i \in S}f^i(x)\prod_{j \not\in S}(1-f^j(x)).$$
Let $f^i_u$ and $f^i_\ell$ denote the upper and lower sandwiching
approximations to $f^i$. Then we have
\begin{align*}
f^i_u(x) \geq f^i(x),& \  \E_\cU[f^i_u(x) - f^i(x)] \leq \eps.\\ 
1 - f^j_\ell(x) \geq 1 - f^j_\ell(x),& \ \E_\cU[(1 - f^j_\ell(x)) - (1-
  f^j(x))] \leq \eps. 
\end{align*}
Hence, if we define 
\begin{align*}
M_u^S(x) = \prod_{i \in S}f^i_u(x)\prod_{j \not\in S}(1-f^j_\ell(x)), 
\end{align*} 
then we have
\begin{align*}
M_u^S(x) \geq M^S(x) \ \forall \ x \in \pmo^n,\\
\Ll[M_u^S] = \prod_{i \in S}\Ll[f^i_u]\prod_{j \not\in S}\Ll[1-f^j_\ell] \leq (t+1)^k.
\end{align*}
We will show using a hybrid argument, that
\begin{align*}
\E_\cU[M^S_u(x) - M^S(x)] \leq 2^k\eps.
\end{align*}
For simplicity, we only do the case $S = [k]$. We define a sequence of
polynomials $M^S_u = M_0,M_1 \ldots, M_k = M^S$ where 
$$M_i(x) = \prod_{j=1}^if^j(x)\prod_{j=i+1}^kf^j_u(x).$$
We now have
\begin{align*}
\E_\cU[M_i(x) - M_{i+1}(x)] & = \E_\cU \left[\left(f^{i+1}_u(x) -
  f^{i+1}(x)\right)\cdot \prod_{j=1}^if^j(x)\cdot \prod_{j=i+2}^kf^j_u(x)\right]\\ 
& = \E_\cU\left[f^{i+1}_u(x) - f^{i+1}(x)\right]\cdot \prod_{j=1}^i\E_\cU\left[f^j(x)\right]\cdot\prod_{j=i+2}^k\E_\cU\left[f^j_u(x)\right]\\
& \leq \eps\prod_{j=1}^i1 \prod_{j=i+2}^k(1 + \eps) \leq (1+ \eps)^{k-i-1}\eps
\end{align*}
where we use the facts that $\E_\cU[f^j] \leq 1$ and $\E_\cU[f^j_u] \leq
\E_\cU[f^j] + \eps \leq 1 +\eps$. We now have
\begin{align*}
\E_\cU\left[M^S_u(x) - M^S(x)\right] & \leq \sum_{i=0}^{k-1}\E_\cU[M_i(x) - M_{i+1}(x)] \\
& \leq \eps(1 + (1+\eps) \cdots (1+ \eps)^{k-1}) \leq 2^k\eps.
\end{align*}

To construct a lower-sandwiching approximator, we observe that
$$\sum_{S \subseteq [k]}M^S(x) = \prod_{i \in [k]}(f^i(x) + 1 -
f^i(x)) =1.$$
Hence if we define
$$M^S_\ell(x) = 1 - \sum_{T \neq S}M^T_u(x)$$
then 
\begin{align*}
M^S_\ell(x) & \leq 1 - \sum_{T \neq S}M^T(x) = M^S(x),\\
\E_\cU[M^S(x) - M^S_\ell(x)] & = \sum_{T \neq S}M^T_u(x) - M^T(x) \leq
4^k\eps,\\
\Ll[M^S_\ell] & \leq 2^k(t+1)^k.
\end{align*}

Finally, let $\ind{S} \in \zo^k$ denote the indicator vector of the set
$S$. Since $H$ is multilinear, we can write
\begin{align*}
H(y) = \sum_{S \subseteq [k]}H(\ind{S})\prod_{i \in S}y_i \prod_{j
  \not\in S}(1- y_j)
\end{align*}
where $H(\ind{S}) \in [0,1]$. Hence
\begin{align*}
h(x) = \sum_{S \subseteq [k]}H(\ind{S})\prod_{i \in S}f_i(x) \prod_{j
  \not\in S}(1- f_j(x)) = \sum_{S \subseteq [k]}H(\ind{S})M^S(x) 
\end{align*}
We define the polynomials
\begin{align*}
h_u(x) = \sum_{S \subseteq [k]}H(\ind{S})M^S_u(x), \ h_\ell(x) =
\sum_{S \subseteq [k]}H(\ind{S})M^S_\ell(x).
\end{align*}
It follows that 
\begin{align*}
h_u(x) & \geq h(x) \geq h_\ell(x)\\
\E_\cU[h_u(x) - h_\ell(x)] & \leq \sum_{S \subseteq
  [k]}H(\ind{S})\E_\cU[M^S_u(x) - M^S_\ell(x)] \leq 16^k\eps,\\
\Ll[h_u] & \leq 2^k(t+1)^k, \ \Ll[h_\ell] \leq 4^k(t+1)^k.
\end{align*}
\end{proof}


\newcommand{\bfs}{\ensuremath \bar{f}_s}
\newcommand{\brf}{\ensuremath \bar{f}}
\newcommand{\bx}{\ensuremath \bar{x}}
\newcommand{\samp}{\sim}
\newcommand{\crt}{\mathsf{CR}} 

\section{A \prg\  for Combinatorial Rectangles}\label{sec:cr}

We start by defining combinatorial rectangles ($\crt$s).
\begin{Def}
A {\em combinatorial rectangle} is a function $f:\left(\pmo^{w}\right)^m \rgta \zo$ of the form $f(x_1,\ldots,x_m) = \bigwedge_{i=1}^mf_i(x_i)$, where $f_i:\pmo^w \rgta \zo$,  and each $x_i \in \pmo^w$.
We refer to the $f_i$s as the {\em co-ordinate functions} of $f$. We refer to $m$ as the {\em size}\footnote{This is usually referred to as the {\em dimension} in the literature; we use this terminology for the \cnf\ analogy.} of $f$ and $w$ as the {\em width}.
\end{Def}
We construct an explicit \prg\  for $\crt$s with
seed-length $\tilde{O}(\log m + w + \log(1/\delta))$. The previous
best construction due to Lu had a seed-length of $O(\log m + w +
\log^{3/2}(1/\delta))$ \cite{Lu}. 
\begin{Thm}
\label{thm:prg-cr}
There is an explicit pseudorandom generator for the class of
combinatorial rectangles of width $w$ and size $m$ with error at most
$\delta$ and seed-length $O((\log w) (\log(m) + w + \log(1/\delta)) +
\log(1/\delta)\log\log(1/\delta)\log\log\log(1/\delta))$. 
\end{Thm}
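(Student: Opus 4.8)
The plan is to build the generator by $O(\log w)$ rounds of pseudorandom domain restriction, each round halving the width, and then to finish off a logarithmic-width residual instance with a read-once \cnf\ generator. A preprocessing step first reduces to the case $\mu:=\E[f]=\prod_i\E[f_i]\ge\delta$, which also forces $\E[f_i]\ge\mu\ge\delta$ for every $i$: if $\mu<\delta$, take a minimal set $S$ of coordinates with $\prod_{i\in S}\E[f_i]<\delta$, feed coordinate $i$ truly random bits if $S=\{i\}$, and otherwise run the main construction on $\bigwedge_{i\in S}f_i$ (whose acceptance probability is $\ge\delta^2$) with error parameter $\delta^2$; since $f\le\bigwedge_{i\in S}f_i$ and $\mu<\delta$ this makes $\E_G[f]\le O(\delta)$, as required.

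One round of the main construction works as follows. Sample hash functions $h_1,\dots,h_m:[W^{1/2}]\to[W]$ ($W=2^w$) from an $\eps$-biased distribution over the bits describing them, sample $z=(z_1,\dots,z_m)$ from a generator $G'$ that (recursively) $\delta'$-fools width-$(w/2)$, size-$m$ combinatorial rectangles, and output $(h_1(z_1),\dots,h_m(z_m))$. Since $\bigwedge_i(f_i\circ h_i)$ is a width-$(w/2)$ rectangle, the recursion gives $\E_{z\sim G'}\big[\bigwedge_i(f_i\circ h_i)(z_i)\big]=\prod_i\hat p_i(h_i)\pm\delta'$ where $\hat p_i(h_i):=\E_{z_i}[f_i(h_i(z_i))]$, and under uniform $h$ this product has mean $\prod_ip_i=\mu$. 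So the round is correct once we show that the $\eps$-biased $h$ changes $\E[\prod_i\hat p_i(h_i)]$ by $O(\delta')$, and this is exactly an instance of \tref{thm:main}: setting $g_i:=\hat p_i-p_i$ and $\tilde g_i:=g_i/p_i$, one has $\prod_i\hat p_i(h_i)=\mu\prod_i(1+\tilde g_i(h_i))=\mu\sum_{k=0}^m S_k(\tilde g_1,\dots,\tilde g_m)$, a symmetric multilinear function of the $\tilde g_i$ with all coefficients $c_k=\mu\le1$ and computing a function bounded by $B=1$.

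What remains is to verify the hypotheses of \tref{thm:main} for $\tilde g_1,\dots,\tilde g_m$, and this is the main obstacle. They live on disjoint variable sets and have mean $0$. The block structure of $\hat p_i$ (each term $f_i(h_i(z))$ depends only on the block of $h_i$ encoding $h_i(z)$, so has $\Ll$-norm $\le W^{1/2}$, and $\hat p_i$ is their normalized sum) gives $\Ll[\tilde g_i]\le\Ll[\hat p_i]/p_i\le W^{1/2}/\delta=:t$. Since $\hat p_i$ is, over uniform $h_i$, an average of $W^{1/2}$ i.i.d.\ Bernoulli($p_i$) variables, Rosenthal's inequality yields $\E[\tilde g_i^{2k}]\le(2k)^{2k}\sigma_i^{2k}$ with $\sigma_i^2=O\big((1-p_i)/(p_iW^{1/2})\big)$. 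The key smallness requirement $m\sigma^2=\sum_i\sigma_i^2=O\big((\sum_i(1-p_i)/p_i)/W^{1/2}\big)\le1/\log(1/\delta')^{25}$ then holds once $w$ exceeds $c\cdot(\log m+\log(1/\delta))$, using that $\mu\ge\delta$ forces $\sum_i(1-p_i)/p_i=O(m+1/\delta)$; with $k=O(\log(1/\delta')/\log(1/m\sigma^2))$ and $\eps\le\delta'^4/(mt+1)^{2k}$, \tref{thm:main}(1) gives $\big|\E_{h}[\prod_i\hat p_i(h_i)]-\mu\big|=O(\delta')$. Making the rescaling by $\mu$ legitimate across the whole range of $p_i$, and keeping $t$ and $m\sigma^2$ under control, is exactly why the preprocessing to $\mu\ge\delta$ is needed.

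Finally, iterate: with per-round error $\delta'=\delta/\Theta(\log w)$, run $O(\log w)$ rounds until the width drops to $w^\star=\Theta(\log(m/\delta))$; the residual instance is a size-$m$, width-$w^\star$ rectangle, which (writing each coordinate function as a width-$w^\star$ \cnf\ with at most $2^{w^\star}=\poly(m/\delta)$ clauses) is a read-once \cnf\ of width $O(\log(m/\delta))$ with $\poly(m/\delta)$ clauses, and we fool it with error $\delta/2$ using the read-once \cnf\ generator of \sref{sec:prgcnf} (seed length $\tO(\log(m/\delta))$). The errors add to $\delta$. For the seed length, the $j$-th round costs $O(w_j+\log m+\log(1/\eps_j))$, and the point is that for $w_j\gtrsim\log(m/\delta)$ one has $k=O(1)$ and hence $\log(1/\eps_j)=O(w_j+\log m+\log(1/\delta))$, so the $O(\log w)$ rounds contribute $O((\log w)(\log m+w+\log(1/\delta)))$ in total while the base case is absorbed into the $\tO(\log(1/\delta))$ term; choosing $w^\star$ large enough that $k$ stays $O(1)$ throughout — rather than letting the recursion run down to where the hypotheses of \tref{thm:main} barely hold, which would blow up $k$ and hence $\log(1/\eps)$ — is the other point requiring care.
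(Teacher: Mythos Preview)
Your approach is the same as the paper's at a high level — iterate width-reducing pseudorandom restrictions analyzed via \tref{thm:main}, then finish with a base case — but there is a genuine gap in the per-step analysis that prevents you from achieving the stated seed length.

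The paper does \emph{not} apply \tref{thm:main} directly to the functions $\tilde g_i = (\hat p_i - p_i)/p_i$. Instead, in \lref{lem:main-cr} it partitions the coordinates into three buckets by the value of $p_i$: $S_1 = \{i : p_i \le 2^{-v/10}\}$ is handled by bounding $\Ll[F_1]$ directly (since $|S_1| = O(\log(1/\delta)/v)$), and only $S_2, S_3$ go through \tref{thm:main}, where now $1/p_i \le 2^{v/10}$ and hence $t \le 2^{w/2}$ rather than $2^{w/2}/\delta$. The three pieces are recombined with \tref{thm:xor}. This bucketing is exactly what makes the sandwiching approximators for $F$ have $\Ll$-norm $\poly(1/\delta)$ \emph{independent of $w$}, so the paper can drive the recursion all the way down to $v_t = O(\log\log(1/\delta))$ and finish with a plain $\eps_2$-biased space via \cref{cor:gmr}.

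In your version the rescaling by $1/p_i$ (even after your preprocessing to $p_i \ge \delta$) gives $t = W^{1/2}/\delta$ and $\sum_i \sigma_i^2 = O(1/(\delta W^{1/2}))$. To keep $m\sigma^2 \le 1/\log(1/\delta)^{25}$ and $k = O(1)$ you are forced to stop at $w^\star = \Theta(\log(m/\delta))$, much earlier than the paper. Your base case is then a width-$\Theta(\log(m/\delta))$ rectangle, and the read-once \cnf\ generator of \sref{sec:prgcnf} spends $O\big(\log(m/\delta)\cdot(\log\log(m/\delta))^3\big)$ bits on it. This is \emph{not} ``absorbed into the $\tO(\log(1/\delta))$ term'': it carries a $\log m \cdot (\log\log(m/\delta))^3$ contribution that is not bounded by either term in the theorem (the first term gives only $(\log w)\log m$, and the second has no $m$-dependence). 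So your construction proves a weaker seed-length bound than stated.

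A minor additional point: the preprocessing ``feed coordinate $i$ truly random bits if $S=\{i\}$'' is phrased as if the generator adapts to $f$; like the paper's handling of the small-bias case at the end of \lref{lem:main-cr}, this step should live entirely in the analysis (the generator is fixed, and one argues about an upper-sandwiching rectangle).
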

\newcommand{\dpwm}{\left(\dpm^w\right)^m}
Our generator uses a recursive sampling technique and we next describe a single step of this recursive procedure. For this informal
description suppose that $\delta = 1/\poly(m)$, $w = O(\log m)$ and
let $v = 3w/4$. Fix a $\crt$ $f:\dpwm \rgta \zo$. 

Consider the following two-step process for generating a uniformly
element $x$ from $\dpwm$. 
\begin{itemize}
\item Choose a sequence of multi-sets $S_1,\ldots,S_m \subseteq \dpm^w$ each of size $2^v$ by picking $2^v$ elements of $\dpm^w$ independently and uniformly at random.
\item Sample $x_i \sim S_i$ and set $x = (x_1,\ldots,x_m)$. 
\end{itemize}
This results in an $x$ that is uniformly distributed over $\dpwm$. We
will show that the $\E_x[f(x)]$ will not change much, even if  the
sampling in the first step can is done pseudorandomly using a
small-bias space for suitably small $\eps$. 

\eat{
The core of our analysis and construction will be in showing that for
most choices of the sets $S_1,\ldots,S_m$, the randomness from the
second step is enough to {\em preserve the bias} of the $\crt$ $f$. We
will in fact show something far stronger: the same holds true even for
$S_1,\ldots,S_m$ sampled using $(1/\poly(m))$-almost independent
distributions (for the time being, we avoid technically describing
what {\em almost independent distributions} over multi-sets means). 
This sampling step will use roughly $O(\log m)$ random bits and it's
analysis will use \tref{thm:main}.}

Our final generator is obtained by iterating
the one-step procedure for $T = O(\log \log m))$ steps: At step $t$ we choose
multi-sets $S_1^t \subseteq S_1^{t-1},\ldots,S_m^t \subseteq
S_m^{t-1}$ each of cardinality exactly $2^{(3/4)^t w}$ using small-bias. 
After $T$ steps, we are left with a rectangle of width $w =O(\log\log
m)$. Such rectangles  can be fooled by
$\epsilon$-bias spaces where $\epsilon = 1/m^{O(\log\log m)}$. The
total randomness used over all the steps is $O((\log m)\cdot (\log \log m))$.

\subsection{Sandwiching Approximations for Bias Functions}

In the following, let $f$ be a $\crt$ of width $w$ and coordinate functions $f_1,\ldots,f_m:\dpm^w \rgta \zo$. We describe a restriction of $f$ which reduces the width from $w$ to $v = 3w/4$. 
\begin{itemize}
\item For every $a \in \pmo^v$, we sample string $x_a
  =(x_{a,1},\ldots,x_{a,m}) \sim  \{\pmo^w\}^m$.  
\item For $i \in [m]$, we define restricted co-ordinate functions
  $f^v_i$ on inputs $y_i$ by $f^v_i(y_i) = f(x_{y_i,i})$. 
\eat{
\begin{equation}
\label{eq:f-v-i}
f^v_i(y_i) = f(x_{y_i,i}).
\end{equation}}
\item Define  the restricted rectangle $f^v: \left(\pmo^{v}\right)^m \rgta \zo$
  on $y_1,\ldots,y_m$ by 
\begin{align}
\label{eq:f-v}
f^v(y_1,\ldots,y_m) = \bigwedge_{i=1}^mf^v_i(y_i)
\end{align}
\end{itemize}

Let $\bx \in \{\pmo^{w}\}^{2^v \times m}$ denote the matrix whose
rows are indexed by $a \in \pmo^v$,  the columns by $i\in
[m]$ and $(a,i)$'th entry is given by $\bx[a,i] = x_{a,i} \in
\dpm^w$. Every such matrix defines a {\em restriction} of $f$. We will
show that if choosing $\bx$ from an $\eps$-biased space for $\eps =
1/\poly(m)$ suitably small, and from the uniform distribution have
almost the same effect on $f$. For $i \in [m]$, let $\bx[i]$ denote
the $i$'th column of $\bx$. For each coordinate function $f_i$, define
the {\em sample average} function  
\begin{align}
\label{eq:sample-average}
\brf_i(\bx) = \frac{1}{2^v}\sum_{a \in \pmo^v}f_i(x_{a,i}) = \E_{a \sim \pmo^v}\left[f_i^v(a)\right].
\end{align}
\newcommand{\dpvm}{\left(\dpm^v\right)^m}
Note that each $\brf_i$ only depends on column $i$ of $\bx$. 
Define the {\em bias} function of $\bx$ as
\begin{align}
\label{eq:bias}
F(\bx) = \prod_{i=1}^m\brf_i(\bx) = \E_{y \sim \dpvm}\left[f^v(y)\right].
\end{align}

The main lemma of this section shows that this bias function can be fooled by
small-bias spaces. 
\begin{Lem}[Main]
\label{lem:main-cr}
Let $F$ be as defined in \eref{eq:bias}. Assume that $\delta < 1/4$ and $w \leq \log(1/\delta)$, $v = 3w/4 \geq 50 \log\log(1/\delta)$. 
\eat{
\begin{align*}
w \leq \log(1/\delta),\ v = \frac{3w}{4} \geq 50\log\log(1/\delta).
\end{align*}}
Then $F(x)$ has $\delta$-sandwiching approximations of $\Ll$
norm $\poly(1/\delta)$.
\end{Lem}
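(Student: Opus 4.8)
The plan is to deduce the lemma from \tref{thm:main} by writing the bias function as a weighted symmetric polynomial. Put $\mu_i:=\E_{\mc{U}}[f_i]$; I would first discard degenerate coordinates (if some $\mu_i=0$ then $f\equiv 0$ and $F\equiv 0$, and a coordinate with $\mu_i=1$ has $\brf_i\equiv 1$ and may be dropped), so assume $\mu_i\in[2^{-w},1)$. Set $g_i(\bx)=\brf_i(\bx)-\mu_i$ and $h_i(\bx)=g_i(\bx)/\mu_i$: these are functions of pairwise disjoint blocks of columns of $\bx$, with $\E[h_i]=0$, and since $\brf_i\in[0,1]$ we have $F\in[0,1]$ and
\[
F(\bx)=\prod_{i=1}^m\brf_i(\bx)=\Big(\prod_{i=1}^m\mu_i\Big)\prod_{i=1}^m\bigl(1+h_i(\bx)\bigr)=\sum_{k=0}^m c_k\,S_k\bigl(h_1(\bx),\dots,h_m(\bx)\bigr),\qquad c_k=\prod_{i=1}^m\mu_i\in[0,1].
\]
Thus $F$ has exactly the form required by \tref{thm:main}, with $B=C=1$; what remains is to supply admissible parameters $t,\sigma_i$ for the $h_i$ and to verify the aggregate hypothesis on $\sum_i\sigma_i^2$.

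For $t$: a Boolean function of $w$ variables accepting a $\mu_i$-fraction of inputs has all Fourier coefficients of magnitude $\le\min(\mu_i,1-\mu_i)$, so $\Ll[\brf_i]\le\Ll[f_i]\le 2^w\mu_i+1$, hence $\Ll[g_i]\le 2^w\mu_i$ and $\Ll[h_i]\le 2^w+1=:t$, which is $\poly(1/\delta)$ since $w\le\log(1/\delta)$. For the moments: under $\mc{U}$ the value $\brf_i(\bx)$ is $2^{-v}$ times a $\mathrm{Binomial}(2^v,\mu_i)$ (the $2^v$ samples $x_{a,i}$ are i.i.d.\ uniform), so I would apply Rosenthal's inequality \eqref{eq:mean0} to these $2^v$ centered summands to get $\E_{\mc{U}}[g_i^{2k}]\le (2k)^{2k}\max\bigl(\mu_i(1-\mu_i)2^{-v(2k-1)},\,(\mu_i(1-\mu_i)/2^v)^k\bigr)$ for all $k$. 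When $\mu_i\ge 2^{-v}$ the second term dominates and I may take $\sigma_i^2=O(\Var[h_i])=O((1-\mu_i)/(\mu_i 2^v))$; for $\mu_i<2^{-v}$ the sample average $\brf_i$ fails to concentrate around $\mu_i$, so those coordinates cannot be fed to \tref{thm:main} and must be handled directly.

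The bookkeeping is the substance of the argument. If $\E[f]=\prod_i\mu_i<\delta$, then $F\ge 0$ with $\E[F]<\delta$, so I would take the lower approximator identically $0$ and the upper approximator to be the product of $\brf_i$ over a suitable subset of coordinates whose $\mu_i$-product falls just below $\delta$ (chosen so that, using $\Ll[\brf_i]\le 2^w\mu_i+1$, its $\Ll$-norm stays $\poly(1/\delta)$ while its mean stays below $\delta$). If $\E[f]\ge\delta$, set $\theta=\log(1/\delta)^{27}/2^v$, let $S=\{i:\mu_i<\theta\}$ and $R=[m]\setminus S$, and write $F=\bigl(\prod_{i\in S}\brf_i\bigr)\cdot\bigl(\prod_{i\in R}\brf_i\bigr)$ on disjoint columns. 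Since $\prod_{i\in S}\mu_i\ge\prod_i\mu_i\ge\delta$ while each factor is $<\theta$, we get $|S|<\log(1/\delta)/\log(1/\theta)=O(\log(1/\delta)/v)$, so $\prod_{i\in S}\brf_i$ has $\Ll$-norm at most $(2^w)^{|S|}=2^{O(w\log(1/\delta)/v)}=\poly(1/\delta)$ and I retain it verbatim. For $R$, writing $1/\mu_i=1+\phi_i$ and bounding $\phi_i\le(1/\mu_i)\log(1/\mu_i)\le\log(1/\mu_i)/\theta$, together with $\sum_{i\in R}\log(1/\mu_i)=\log\bigl(1/\!\prod_{i\in R}\mu_i\bigr)\le\log(1/\delta)$, yields $\sum_{i\in R}\sigma_i^2=2^{-v}\sum_{i\in R}\phi_i\le\log(1/\delta)/(2^v\theta)=\log(1/\delta)^{-26}$, comfortably within the hypothesis of \tref{thm:main}. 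That theorem, applied to $\{h_i\}_{i\in R}$, then gives $O(\delta)$-sandwiching approximators for $\prod_{i\in R}\brf_i$; keeping their $\Ll$-norm polynomial (equivalently, keeping the degree $k$ produced by \tref{thm:main} small) is the delicate point and may require first partitioning $R$ by the dyadic scale of $\log(1/\mu_i)$ into $O(\log\log(1/\delta))$ groups and recombining via the \Xor\ Lemma \tref{thm:xor}. Multiplying by the disjointly-supported, low-norm factor $\prod_{i\in S}\brf_i$ produces the claimed approximators for $F$.

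The step I expect to be the real obstacle is precisely this control of the aggregate variance $\sum_i\sigma_i^2$ and of the resulting degree: one must simultaneously show that the coordinates whose sample average fails to concentrate (roughly $\mu_i\lesssim 2^{-v}$) are few enough to keep verbatim at only $\poly(1/\delta)$ cost, and that the surviving coordinates contribute little enough variance that \tref{thm:main} (possibly after a by-scale grouping and an application of \tref{thm:xor}) returns a polynomial-norm approximator. This is exactly where the hypotheses $w\le\log(1/\delta)$ and $v=3w/4\ge 50\log\log(1/\delta)$ are consumed.
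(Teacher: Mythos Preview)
Your overall strategy matches the paper's: write $F$ as a symmetric polynomial in the centered, rescaled sample averages $h_i=(\brf_i-\mu_i)/\mu_i$, strip off the coordinates with tiny $\mu_i$ (few enough to keep verbatim), feed the remainder to \tref{thm:main}, and combine via \tref{thm:xor}. Where your proposal has a real gap is the treatment of coordinates with $\mu_i$ close to $1$.

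First, your moment claim ``when $\mu_i\ge 2^{-v}$ the second term dominates'' is only half right: the second Rosenthal term dominates when $2^v\mu_i(1-\mu_i)\ge 1$, which also fails when $\mu_i>1-2^{-v}$. For those coordinates the valid uniform-in-$k$ choice is $\sigma_i^2=O(2^{-2v})$, not $(1-\mu_i)/(\mu_i 2^v)$ (this is exactly the content of the paper's \lref{lem:technical}). More importantly, such coordinates can number up to $2^{w+1}\log(1/\delta)$, so if you keep them in a single group $R$ together with the ``middle'' coordinates, the factor $(|R|\cdot t+1)^{2k}$ from \tref{thm:main} blows up: with $|R|$ possibly $\Theta(2^w\log(1/\delta))$, $t$ of order $2^{w/2}$ or larger, and $k=\Theta(\log(1/\delta)/\log\log(1/\delta))$ coming from your aggregate variance bound $\sum\sigma_i^2\le\log(1/\delta)^{-26}$, you get $(1/\delta)^{\Theta(\log(1/\delta)/\log\log(1/\delta))}$, not $\poly(1/\delta)$.

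Your suggested repair --- a dyadic partition of $R$ by the scale of $\log(1/\mu_i)$ into $O(\log\log(1/\delta))$ groups --- does not work as stated: $\log(1/\mu_i)$ spans $\Theta(w)$ dyadic scales, not $O(\log\log(1/\delta))$, and the \Xor\ Lemma's overhead is exponential in the number of groups. The paper's fix is simpler and sharper: exactly \emph{three} groups, split at $\mu_i=2^{-v/10}$ and at $\mu_i=1-2^{-v}$. The low group $S_1$ has size $\le 10\log(1/\delta)/v$ and is kept verbatim with $\Ll\le\poly(1/\delta)$. The middle group $S_2$ has size $\le 2^v\log(1/\delta)$ (because each $1-\mu_i\ge 2^{-v}$) and $\sum_{S_2}\sigma_i^2\le 2\log(1/\delta)/2^{9v/10}$, so \tref{thm:main} gives degree $k=O(\log(1/\delta)/v)$ and hence polynomial norm. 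The high group $S_3$ may be large, but $\sigma_i^2=2/2^{2v}$ there, so $\sum_{S_3}\sigma_i^2=O(2^{-2v/3})$ and again $k=O(\log(1/\delta)/v)$, which keeps $(|S_3|\cdot t)^{2k}=\poly(1/\delta)$. The three pieces are then combined by a single application of \tref{thm:xor}. Your handling of the case $\E[f]<\delta$ is also not quite what the paper does (it increments $\mu_i$'s one satisfying assignment at a time until $\E\ge\delta$ and then reuses the high-bias analysis), but that difference is minor.
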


We start by stating two simple claims. 

\begin{Claim}\label{clm:l1brf}
For the sample average functions $\brf_i$ defined as in \eref{eq:sample-average}, we have 
\begin{align*}
\Ll(\brf_i) &\leq \Ll(f_i) \leq 2^{w/2}.\\
\E_\xu[\brf_i(\bx)] & = \E_{z \sim \dpm^w}[f_i(z)].
\end{align*}  
\end{Claim}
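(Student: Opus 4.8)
\textbf{Proof plan for \lref{clm:l1brf}.}

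The claim asserts two things about each sample-average function $\brf_i(\bx) = \frac{1}{2^v}\sum_{a \in \pmo^v} f_i(x_{a,i})$: a bound on its $\Ll$-norm, and the value of its expectation under the uniform distribution on $\bx$. The plan is to prove each in turn, both by straightforward direct computation.

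For the $\Ll$-norm bound, first I would observe that $\brf_i$ is a convex combination (with weights $2^{-v}$) of the $2^v$ functions $\bx \mapsto f_i(x_{a,i})$, one for each $a \in \pmo^v$. Since the $\Ll$-norm is a norm and hence subadditive and homogeneous, $\Ll(\brf_i) \leq \frac{1}{2^v}\sum_{a} \Ll\bigl(\bx \mapsto f_i(x_{a,i})\bigr)$. Now each summand is just the function $f_i$ applied to the block of variables $x_{a,i}$, which is a renaming of $f_i$'s input coordinates; renaming variables does not change the multilinear Fourier expansion's coefficients, so $\Ll(\bx \mapsto f_i(x_{a,i})) = \Ll(f_i)$. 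Hence $\Ll(\brf_i) \leq \frac{1}{2^v} \cdot 2^v \cdot \Ll(f_i) = \Ll(f_i)$. For the bound $\Ll(f_i) \leq 2^{w/2}$, I would note that $f_i : \pmo^w \to \zo$, so by Cauchy--Schwarz over the $\le 2^w$ Fourier coefficients, $\Ll(f_i) = \sum_S |\wh{f_i}(S)| \leq \sqrt{2^w} \cdot \bigl(\sum_S \wh{f_i}(S)^2\bigr)^{1/2} = \sqrt{2^w} \cdot \E[f_i^2]^{1/2} \leq \sqrt{2^w} = 2^{w/2}$, using Parseval and $f_i^2 = f_i \leq 1$.

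For the expectation, since $\bx \sim \mathcal{U}$ means every entry $x_{a,i}$ is independent and uniform on $\pmo^w$, linearity of expectation gives $\E_\xu[\brf_i(\bx)] = \frac{1}{2^v}\sum_{a \in \pmo^v} \E_\xu[f_i(x_{a,i})] = \frac{1}{2^v} \cdot 2^v \cdot \E_{z \sim \dpm^w}[f_i(z)] = \E_{z \sim \dpm^w}[f_i(z)]$, since each $x_{a,i}$ is distributed as a uniform $z \in \dpm^w$.

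Neither step presents a real obstacle; the only point requiring a moment's care is the claim that renaming input variables preserves the $\Ll$-norm, which is immediate from the definition of $\Ll$ as the sum of absolute values of multilinear coefficients over the relabeled index sets. I would state this briefly and move on.
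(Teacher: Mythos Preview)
Your proof is correct and follows essentially the same approach as the paper: both use subadditivity of the $\Ll$-norm on the convex combination defining $\brf_i$, the standard Cauchy--Schwarz/Parseval bound $\Ll(f_i)\le 2^{w/2}$ for Boolean functions on $w$ bits, and linearity of expectation for the second part. Your version is simply more explicit about the Cauchy--Schwarz step and the variable-renaming observation, which the paper leaves implicit.
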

\begin{proof}
From \eref{eq:sample-average}, it follows that
\begin{align*}
\Ll[\brf_i] \leq \frac{1}{2^v}\sum_{a \in \pmo^v}\Ll[f_i] = \Ll[f_i]
\leq 2^{w/2}
\end{align*}
where the last inequality holds for any Boolean function on $w$ input
bits. The bound on the expectation follows directly from
\eref{eq:sample-average}. 
\end{proof}

The justification for the name bias function comes from the following
lemma.

\begin{Claim}
\label{clm:bias-of-f}
For $f^v$ and $F$ as defined in \eref{eq:f-v} and \eref{eq:bias}, 
\begin{align*}
\E_\yu[f^v(y)] = F(\bx).
\end{align*}
\end{Claim}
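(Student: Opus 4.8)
The plan is simply to unwind the definitions and exploit the product structure of $f^v$ together with the independence of the blocks of a uniformly random $y$; note that the matrix $\bx$ is fixed throughout, so both $F$ and the $\brf_i$ are constants (functions of $\bx$) and the claimed identity is to be read pointwise in $\bx$.

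First I would use \eref{eq:f-v} and the fact that each $f^v_i$ is $\zo$-valued to write $f^v(y_1,\ldots,y_m) = \bigwedge_{i=1}^m f^v_i(y_i) = \prod_{i=1}^m f^v_i(y_i)$. Next, observe that $f^v_i(y_i)$ depends only on the $i$-th block $y_i$ of the input, while under $y \sim \mathcal{U}$ on $\dpvm$ the blocks $y_1,\ldots,y_m$ are mutually independent, each uniform on $\pmo^v$. Hence expectation distributes over the product:
\[
\E_\yu[f^v(y)] \;=\; \E_{y_1,\ldots,y_m}\!\left[\,\prod_{i=1}^m f^v_i(y_i)\,\right] \;=\; \prod_{i=1}^m \E_{y_i \sim \pmo^v}\!\left[f^v_i(y_i)\right].
\]
Finally, by the second equality in \eref{eq:sample-average}, each factor equals $\E_{y_i \sim \pmo^v}[f^v_i(y_i)] = \brf_i(\bx)$, so the right-hand side is $\prod_{i=1}^m \brf_i(\bx) = F(\bx)$ by \eref{eq:bias}, which is exactly the claimed identity.

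There is no genuine obstacle here: the statement is essentially definitional --- indeed it is already recorded as the last equality in \eref{eq:bias} --- and the only fact being used is that the expectation of a product of functions of independent random variables factors as the product of the expectations. The one point worth stating carefully is that the expectation is over $y$ with $\bx$ held fixed, so that $\brf_i(\bx)$ and $F(\bx)$ on the right-hand side are genuine numbers and not further random quantities.
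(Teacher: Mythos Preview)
Your proof is correct and follows essentially the same approach as the paper: write the conjunction as a product, use independence of the blocks $y_1,\ldots,y_m$ to factor the expectation, and identify each factor with $\brf_i(\bx)$ via \eref{eq:sample-average}. The paper additionally spells out $\E_{y_i}[f^v_i(y_i)]=\brf_i(\bx)$ by expanding $f^v_i$ as $\sum_{a}\ind{y_i=a}f_i(x_{a,i})$, but this is exactly the content of the equality in \eref{eq:sample-average} that you cite.
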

\begin{proof}
Note that 
\begin{align*}
f^v_i(y_i) = & \sum_{a \in \pmo^v}\ind{y =a}f_i(x_{a,i}),\\
\text{hence }  
\E_\yu[f^v_i(y)] = & \frac{1}{2^v} \sum_{a\in \pmo^v}f_i(x_{a,i}) = \brf_i(\bx).
\end{align*}
It follows that
\begin{align*}
\E_\yu[f^v(y)] & = \E_\yu\left[\bigwedge_{i=1}^mf^v_i(y_i)\right]\\
& = \E_\yu\left[\prod_{i=1}^mf^v_i(y)\right]\\
& = \prod_{i=1}^m\E_\yu[f^v_i(y)]\\
& =  \prod_{i=1}^m\brf_i(\bx)\\ 
& = F(\bx). 
\end{align*} 
\end{proof}

We will prove \lref{lem:main-cr} by applying \tref{thm:main} to the functions $g_i:\{\pmo^w\}^{2^v} \rgta \R$ defined as follows: $g_i(\bx) = (\brf_i(\bx) - p_i)/p_i,$, where $p_i = \E_{a \sim \dpm^w}[f_i(x)]$. (We assume $p_i \neq 0$.)
\eat{ Let
\begin{align*}
p_i & = \E_\cU[f_i(x)] \in [2^{-w},1-2^{-w}]\\
p  & = \E_\cU[f_i(x)] = \prod_{i=1}^mp_i
\end{align*}
\begin{align*}
g_i(\bx) = \frac{1}{p_i}(\brf_i(\bx) - p_i)
\end{align*}}

We will need the following technical lemma, which helps us show that
the functions $g_i$ satisfy the moment conditions needed to apply
\tref{thm:main}. For brevity, let $\mathcal{U}$ denote
$\left(\dpm^v\right)^{2^v \times m}$ in the remainder of this section.
\begin{Lem}
\label{lem:technical}
Let $p_i, g_i, \mathcal{U}$ be defined as above. We have 
\eat{
\begin{align*}
\Ll(\brf_i) &\leq \Ll(f_i) \leq 2^{w/2}.\\
\E_\xu[\brf_i(\bx)] & = \E_{a \sim \dpm^w}[f_i(a)] = p_i.
\end{align*}
Further we have }$\E_\xu[g_i(\bx)^{2k}] \leq
(2k)^{2k}\sigma_i^{2k}$ where
\begin{align*}
\sigma_i^2 = \begin{cases}
\frac{(1-p_i)}{2^vp_i} & \text{for } p_i \in    [2^{-v/10},1/2],\\
\frac{2(1-p_i)}{2^v} & \text{for } p_i \in    [1/2, 1 - 2^{-v}],\\
\frac{2}{2^{2v}} & \text{for } p_i \in    [1 - 2^{-v},1].
\end{cases}
\end{align*}
\end{Lem}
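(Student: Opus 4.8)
The plan is to estimate the $2k$-th moment of $g_i(\bx) = (\brf_i(\bx)-p_i)/p_i$ directly. Fix $i$; since $\brf_i$ depends only on the $i$-th column of $\bx$, which consists of $2^v$ independent uniform samples $x_{a,i}\sim\dpm^w$, we may write $\brf_i(\bx) = \frac{1}{2^v}\sum_{a\in\pmo^v} Y_a$ where $Y_a = f_i(x_{a,i})\in\{0,1\}$ are i.i.d.\ Bernoulli with mean $p_i$. Thus $g_i = \frac{1}{2^v p_i}\sum_a (Y_a - p_i)$ is a normalized sum of $2^v$ i.i.d.\ mean-zero variables $Z_a := Y_a - p_i$, each bounded by $1$ in absolute value with $\E[Z_a^2] = p_i(1-p_i)$ and more generally $\E[|Z_a|^j] \le p_i(1-p_i)$ for every $j\ge 2$ (since $|Z_a|\le 1$ and $|Z_a|^2 \le$ a Bernoulli-type bound; more precisely $\E[Z_a^{2j}]\le \E[Z_a^2] = p_i(1-p_i)$ when $p_i\le 1/2$, and symmetrically $\le p_i(1-p_i)$ in general up to a factor). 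The first step is therefore to apply Rosenthal's inequality~\eqref{eq:mean0} to the $Z_a$'s: this gives
\[
\E_\xu\!\left[\Big(\textstyle\sum_a Z_a\Big)^{2k}\right] \le (2k)^{2k}\max\!\left(\sum_a \E[Z_a^{2k}],\ \Big(\sum_a \E[Z_a^2]\Big)^{k}\right) \le (2k)^{2k}\max\!\left(2^v p_i(1-p_i),\ \big(2^v p_i(1-p_i)\big)^k\right).
\]

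Dividing by $(2^v p_i)^{2k}$ yields $\E_\xu[g_i^{2k}] \le (2k)^{2k}\cdot \max\big(2^v p_i(1-p_i)/(2^v p_i)^{2k},\ (2^v p_i(1-p_i))^k/(2^v p_i)^{2k}\big)$. The second step is a case analysis on $p_i$ to check that in each of the three regimes the displayed bound is at most $(2k)^{2k}\sigma_i^{2k}$ for the stated $\sigma_i^2$. In the main regime $p_i\in[2^{-v/10},1/2]$, the dominant term of the two should be the ``$(\cdot)^k$'' term because $2^v p_i(1-p_i)\ge 1$ there (using $p_i\ge 2^{-v/10}$, so $2^v p_i \ge 2^{9v/10}\ge 1$ and $1-p_i\ge 1/2$), giving $\E[g_i^{2k}]\le (2k)^{2k}((1-p_i)/(2^v p_i))^k = (2k)^{2k}\sigma_i^{2k}$ exactly. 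For $p_i\in[1/2,1-2^{-v}]$ one has $1-p_i\le 1/2$ and $2^v p_i\ge 2^{v-1}$, so $2^v p_i(1-p_i)$ can be as small as $\Theta(1)$; here I would bound $\max(2^v p_i(1-p_i), (2^v p_i(1-p_i))^k)/(2^v p_i)^{2k}$ by $2^v p_i(1-p_i)/(2^v p_i)^{2k}\cdot\max(1,(2^v p_i(1-p_i))^{k-1})$ and absorb constants to reach $\sigma_i^2 = 2(1-p_i)/2^v$ (the factor $p_i^{2k}\ge 2^{-2k}$ in the denominator is the mild loss that the factor $2$ and the generous $(2k)^{2k}$ in the statement accommodate — I would check the arithmetic carefully here). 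For $p_i\in[1-2^{-v},1]$, both $1-p_i\le 2^{-v}$ and $p_i\ge 1/2$, so $\E[g_i^{2k}]\le (2k)^{2k}(2^{-v}\cdot 2^{-v+1})^{?}$-type bound collapses to $\sigma_i^2 = 2/2^{2v}$.

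The one genuinely delicate point — and the step I expect to require the most care — is pinning down the exact moment bound $\E[Z_a^{2k}]\le$ (something clean) and then verifying that after dividing by $p_i^{2k}$ the three stated formulas come out with the claimed constants, especially in the middle regime where $p_i$ close to $1$ makes $p_i^{2k}$ a non-negligible but bounded factor and where the ``max'' in Rosenthal can flip between its two branches depending on $k$. Everything else is bookkeeping: independence of the samples within a column reduces the problem to a sum of i.i.d.\ bounded mean-zero variables, Rosenthal does the heavy lifting, and the case analysis is routine once the per-sample moment estimate is fixed. I would also remark (as the paper's overall strategy requires) that these $\sigma_i$ are exactly what is needed so that $m\sigma^2 = \frac{1}{m}\sum_i \sigma_i^2 \cdot m$ is small enough to invoke \tref{thm:main} — but that verification belongs to the proof of \lref{lem:main-cr}, not here.
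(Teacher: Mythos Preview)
Your overall plan matches the paper's proof exactly: write $g_i$ as $\frac{1}{2^v p_i}\sum_a Z_a$ with $Z_a = f_i(x_{a,i}) - p_i$ i.i.d.\ centered Bernoullis, apply Rosenthal's inequality~\eqref{eq:mean0}, divide through by $(2^v p_i)^{2k}$, and then do a case analysis on $p_i$. Your handling of the first regime $p_i\in[2^{-v/10},1/2]$ is correct and in fact cleaner than the paper's: since $2^v p_i(1-p_i)\ge 2^{9v/10-1}\ge 1$, the second branch of the $\max$ wins and gives exactly $((1-p_i)/(2^v p_i))^k = \sigma_i^{2k}$.

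The gap is in the two regimes with $p_i\ge 1/2$. You replace $\E[Z_a^{2k}]$ by the crude upper bound $p_i(1-p_i)$; after dividing by $p_i^{2k}$ this leaves an extra factor $1/p_i^{2k-1}$, and you then assert that this ``mild loss'' (at worst $2^{2k-1}$ via $p_i\ge 1/2$) is accommodated by the $(2k)^{2k}$ in the statement. It is not: the $(2k)^{2k}$ is precisely the constant Rosenthal produces, so after invoking Rosenthal you must show the remaining factor is at most $\sigma_i^{2k}$ with no slack to spare, and a $2^{2k-1}$ blowup cannot be absorbed into the stated $\sigma_i$. The paper sidesteps this by keeping the \emph{exact} per-summand moment
\[
\E[Z_a^{2k}] \;=\; p_i(1-p_i)^{2k} + (1-p_i)p_i^{2k},
\]
so that after dividing by $p_i^{2k}$ the first Rosenthal term becomes $2^v\bigl((\tfrac{1-p_i}{p_i})^{2k}p_i + (1-p_i)\bigr)$; the key observation is that for $p_i\ge 1/2$ one has $(\tfrac{1-p_i}{p_i})^{2k}p_i = (1-p_i)\cdot(\tfrac{1-p_i}{p_i})^{2k-1}\le 1-p_i$, giving the clean bound $2^{v+1}(1-p_i)$ \emph{uniformly in $k$}. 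From that point both of the remaining cases drop out in a line each. (Your cruder route can also be rescued by noting that the first Rosenthal branch only dominates when $2^v p_i(1-p_i)<1$, which forces $p_i>1-2/2^v$ and hence $p_i^{2k-1}\ge e^{-O(k/2^v)}$; but this is more work than simply retaining the exact moment.)
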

\begin{proof}
W start by
bounding the moments of $(\brf_i(\bx)- p_i)$. We have 
\begin{align*}
2^v(\brf_i(\bx) -p_i) = \sum_{a \in \pmo^v}(f(x_{a,i}) - p_i)
\end{align*}
which is the sum of $2^v$ i.i.d $p_i$-biased random variables with
mean $0$. Hence we can apply Rosenthal's inequality (\eref{eq:mean0}) to get
\begin{align*}
(2^v)^{2k}\E_\xu[(\brf_i(\bx) -p_i)^{2k}] \leq
  (2k)^{2k}\max\left(2^v((1-p_i)^{2k}p_i + p_i^{2k}(1 - p_i)), (2^v p_i(1-p_i))^k\right)
\end{align*}
Hence we have
\begin{align*}
\E_\xu[g_i(\bx)^{2k}] & = \frac{1}{p_i^{2k}}\E_\xu[(\brf_i(\bx)
  -p_i)^{2k}] \\
& \leq  \frac{(2k)^{2k}}{(2^v)^{2k}}\max\left(2^v\left(\left(\frac{1-p_i}{p_i}\right)^{2k}p_i + 1-p_i\right), \left(2^v\cdot\frac{1-p_i}{p_i}\right)^k\right)
\end{align*}
We will use the following bounds
\begin{align*}
2^v\left(\left(\frac{1-p_i}{p_i}\right)^{2k}p_i + 1-p_i\right) & \leq
\begin{cases} 
2^{v(1 + k/5)} & \text{for } p_i \in [2^{-v/10},1/2].\\
2^{v+1}(1 - p_i) & \text{for } p_i \in [1/2,1].
\end{cases}\\
\left(2^v\cdot \frac{1-p_i}{p_i}\right)^k & \leq
\left(2^{v+1}(1-p_i)\right)^k \ \ \text{for } p_i \in
     [1/2,1].
\end{align*}
From this it follows that $\E_\xu[g_i(\bx)^{2k}] \leq
(2k)^{2k}\sigma_i^{2k}$ where
\begin{align*}
\sigma_i^2 = \begin{cases}
\frac{(1-p_i)}{2^vp_i} & \text{for } p_i \in    [2^{-v/10},1/2],\\
\frac{2(1-p_i)}{2^v} & \text{for } p_i \in    [1/2, 1 - 2^{-v}],\\
\frac{2}{2^{2v}} & \text{for } p_i \in    [1 - 2^{-v},1].
\end{cases}
\end{align*}
\end{proof}

\begin{proof}[Proof of \lref{lem:main-cr}]
We first show the claim under the assumption that  $\E[f] = p \geq
\delta$ and later show how to get around this assumption.
Define the sets 
$$S_1 = \{i:p_i \in (0,2^{-v/10}]\},\ \ \ S_2 = \{i:p_i \in (2^{-v/10},1 -2^{-v}]\},\ \ \ S_3 = \{i:p_i \in (1-2^{-v},1]\}.$$
\eat{
\begin{align*}
S_1 & = \{i:p_i \in (0,2^{-v/10}]\},\\ 
S_2 &  = \{i:p_i \in (2^{-v/10},1 -2^{-v}],\\
S_3 & = \{i:p_i \in (1-2^{-v},1]\} 
\end{align*}}
For $j \in [3]$, let $F_j(\bx) = \prod_{i \in S_j}\brf_i(\bx)$ so that $F(\bx)
=\prod_{j=1}^3 F_j(\bx)$. 
\eat{
\begin{align*}
F_j(\bx) = \prod_{i \in S_j}\brf_i(\bx) \ \text{so that} \  F(\bx)
=\prod_{j=1}^3 F_j(\bx). 
\end{align*}}
We will construct sandwiching approximations for each $F_j$ and then
combine them via Theorem \ref{thm:xor}. We assume without loss of generality that $p_i \leq 1 - 2^{-w}$. Else, the $i$'th coordinate has bias $1$ and can be ignored without changing the rest of the proof.

\paragraph{Sandwiching $F_1$.}
We show that $\Ll[F_1]$ is itself small. Observe that $\delta \leq p = \prod_{i=1}^mp_i \leq \prod_{i \in S_1}p_i \leq 2^{-v|S_1|/10}$,
which implies that $|S_1| \leq 10 \log(1/\delta)/v$. Thus, by \clref{clm:l1brf},
$$\Ll[F_1] \leq \prod_{i \in S_1}\Ll[\brf_i] \leq 2^{\frac{w}{2}|S_1|}
\leq \left(\frac{1}{\delta}\right)^{5w/v} \leq \frac{1}{\delta^{20/3}}.$$

\paragraph{Sandwiching $F_2$.}
Note that $F_2(\bx) = \prod_{i \in S_2}\brf_i(\bx) = \prod_{i \in S_2}p_i\cdot (1 + g_i(\bx))$.
\eat{
\begin{align*}
F_2(\bx) = \prod_{i \in S_2}\brf_i(\bx) = \prod_{i \in S_2}p_i\cdot (1 + g_i(\bx)).
\end{align*}}
Notice that $F_2$ is a symmetric polynomial in the $g_i$'s, so we will obtain sandwiching polynomials for $F_2$ by applying \tref{thm:main} to $g_i$'s. As before, $\delta \leq p \leq \prod_{i \in S_2}p_i \leq (1 - 2^{-v})^{|S_2|}$, 
so we have $|S_2| \leq 2^v\log(1/\delta)$. Further we can write $\delta \leq p \leq \prod_{i \in S_2}(1 - (1- p_i)) \leq e^{- \sum_{i
    \in S_2} (1 - p_i)}$, so that $\sum_{i \in S_2}(1 - p_i)  \leq 2\log(1/\delta)$.
\eat{\begin{align*}
\delta \leq p \leq \prod_{i \in S_2}(1 - (1- p_i)) \leq e^{- \sum_{i
    \in S_2} (1 - p_i)},\\
\text{Hence} \ \ \sum_{i \in S_2}(1 - p_i)  \leq 2\log(1/\delta).
\end{align*}}
By Lemma \ref{lem:technical}, we have $\E_{\cU}[g_i(\bx)^{2k}] \leq (2k)^{2k} \sigma_i^{2k}$, where $2/2^v \leq \sigma_i^2 = (1-p_i)/2^{9v/10}$ for every $i \in S_2$. Hence,
\begin{align*}
\sum_{i \in S_2} \sigma_i^2 &= \sum_{i \in S_2}\frac{1
  -p_i}{2^{9v/10}} \leq \frac{2\log(1/\delta)}{2^{9v/10}} \leq
\frac{1}{\log(1/\delta)^{25}}.  
\end{align*} 
Hence \tref{thm:main} implies the existence of $O(\delta)$
 ($B = 1$ and $C = \prod_{i \in S_2} p_i \leq 1$) sandwiching approximations with $\Ll$ norm bounded by $(mt+1)^{2k}$ where
\begin{align*}
m = |S_2|  \leq 2^v\log(1/\delta) \leq 2^{5v/4}, \ \ \ t \leq 2^{w/2} \leq 2^v,\ \ \ k  \leq \frac{5\log(1/\delta)}{\log(1/\sum_i \sigma_i^2)} \leq \frac{25\log(1/\delta)}{4v}.  
\end{align*}
which implies the $\Ll$ norm is bounded by $\poly(1/\delta)$.

\paragraph{Sandwiching $F_3$.}
We write   
\begin{align*}
F_3(x) = \prod_{i \in S_3}\brf_i(x) = \prod_{i \in S_3}p_i(1 + g_i(x)).
\end{align*}
Note that each $i \in S_3$ satisfies $1-p_i \geq 2^{-w}$, which
implies that $|S_3| \leq 2^{w+1}\log(1/\delta))$. Let $\sigma_i^2 = 2/2^{2v} \geq \frac{1}{2^w}$. Then, by \lref{lem:technical}, $\E_{\xu}[g_i(x)^k] \leq (2k)^{2k} \sigma_i^{2k}$ and we have
\begin{align*}
\sum_{i \in S_3}\sigma_i^2 \leq \frac{2^{w+1}\log(1/\delta))}{2^{2v}} \leq
\frac{1}{2^{3v/5}} \leq \frac{1}{\log(1/\delta)^{25}}.
\end{align*}
By Theorem \ref{thm:main}, $F_3$ has $O(\delta)$ sandwiching
approximations with $\Ll$ norm bounded by $(mt+1)^{2k}$ where
\begin{align*}
m & \leq 2^w\log(1/p) \leq 2^{3v/2},\\ 
t & \leq 2^{w/2} \leq 2^v,\\ 
k & \leq \frac{5\log(1/\delta)}{\log(1/\sum_i \sigma_i^2)} \leq
\frac{25\log(1/\delta)}{3v}.  
\end{align*}
which implies the $\Ll$ norm is bounded by $\poly(1/\delta)$.

\paragraph{Sandwiching $F$.}
Since each $F_j$ has $O(\delta)$ sandwiching
approximations with $\Ll$ norm $\poly(1/\delta)$, by Theorem
\ref{thm:xor}, $F = F_1F_2F_3$ has $O(\delta)$
sandwiching approximations of $\Ll$ norm $\poly(1/\delta)$.

\paragraph{Handling all values of $\E[f]$.}
Finally, to get rid of the condition $\E[f] \geq \delta$, assume that
$\E[f] \leq \delta$. If $\E[f] = 0$, $f =0$ so there is nothing to
prove. If $\E[f] > 0$, every co-ordinate $f_i$ has at least one satisfying
assignment. We repeat the following procedure until the expectation
exceeds $\delta$: pick a co-ordinate $i$ which is not already the
constant $1$ function and add a new satisfying assignment to $i$. Such
a co-ordinate $i$ exists because $\delta < 1$. We
repeat this until  we get a rectangle $f^t$ such
that $\E[f^t]  \geq \delta$. Denote the resulting sequence 
$$f = f^0 \leq f^1 \cdots  \leq f^t.$$
We claim that for every $j$, 
$$\E_\xu[f^j(x)] \leq \E_\xu[f^{j+1}(x)] \leq 2\E_\xu[f^j(x)].$$ 
The last inequality holds since at each step, we at most double the
acceptance probability of the chosen co-ordinate, and hence of the
overall formula. Hence we have 
$$\E[f^t] \leq 2\E[f^{t-1}] \leq 2\delta.$$ 
We use the upper approximator for $f^t$ as the upper approximator for
$f$ and $0$ as the lower approximator. This gives sandwiching
approximators with error at most $2\delta$ and $\Ll$ norm $\poly(1/\delta)$.

This completes the proof of the lemma.
\end{proof}

\subsection{A Recursive Sampler for Combinatorial Rectangles}

We now use \lref{lem:main-cr} recursively to prove \tref{thm:prg-cr}. 
Our generator is based on a derandomized recursive sampling procedure
which we describe below. The inputs are the width $w$ and the size
$m$ of the rectangles we wish to fool and an error parameter $\delta
\leq 1/2^w$.  

\begin{enumerate}
\item Let $v_0 = w$, $v_j = \left(\frac{3}{4}\right)^jw$.  
\item While $v_j \geq 50\log\log(1/\delta)$ we sample $\bx_j \in \{\pmo^{v_{j-1}}\}^{2^{v_j} \times m}$ according to an $\eps_1$-biased
distribution for  $\eps \leq (1/\delta)^{c_1}$
for some large constant $c_1$. 
\item Assume that at step $t$ (where $t= O(\log w)$),  $v_t \leq
  50\log\log(1/\delta)$. Sample an input $\bx_t \in \left(\pmo^{v_{t-1}}\right)^m$ from
  an $\eps_2$-biased distribution where, for some large constant $c_2$,
$$\eps_2 \leq \left(1/\delta\right)^{c_2(\log\log(1/\delta)\log\log\log(1/\delta))}.$$   
\end{enumerate}
\newcommand{\bbx}{\mathbf{x}}

We next describe how we use $\bbx = (\bx_1,\ldots,\bx_t)$ to output an element of $\dpwm$. For $k \in \{1,\ldots,t-1\}$ we denote by $s_k$ the recursive sampling function which takes strings $\bx_j \in \{\pmo^{v_{j-1}}\}^{2^{v_j} \times m}$ for $j \in \{k+1,\ldots, t-1\}$  and $\bx_t \in \left(\pmo^{v_t}\right)^m$ and produces an
output string $s_k(\bx_{k+1},\ldots,\bx_t) \in \left(\pmo^{v_{k}}\right)^m$. Set $s_{t-1}(\bx_t) \equiv \bx_t$. Fix $k < t-1$ and let $z = s_{k+1}(\bx_{k+2},\ldots,\bx_t)$ be already defined. To define $s_k$, we will use $z$ to look up entries from the matrix $\bx_{k+1}$, so that the $i$'th coordinate of $s_k$ will be the entry of $\bx_{k+1}$ in the $z_i$'th row and $i$'th column:
\[ s_k(\bbx) \equiv s_k(\bx_{k+1},\ldots,\bx_t) =
\left((\bx_{k+1})_{z_1,1},
(\bx_{k+1})_{z_2,2},\ldots,(\bx_{k+1})_{z_m,m}\right) \in
\left(\dpm^{v_k}\right)^m.\] 

The above definition, though intuitive is a bit cumbersome to work
with. It will be far easier for analysis to fix the input
combinatorial rectangle $f:\dpwm \rgta \zo$ and study the effect of
the samplers $s_k$ on $f$. Let $f^0 =f$. Each matrix $\bx_j$ gives a
restriction of $f^{j-1}$: it defines restricted co-ordinate functions
$f_i^j: \pmo^{v_j} \rgta \zo$ and a corresponding restricted rectangle
$f^j:\{\pmo^{v_j}\}^m \rgta \zo$. We only use the following property
of the $s_j$s: 
\begin{align}
\label{eq:recursive}
f(s_0(\bbx)) = f^1(s_1(\bbx)) \cdots f^{t-1}(s_{t-1}(\bbx)).
\end{align}

To analyze the last step, we use the following corollary that follows
from \cite{DeEtTrTu10}. 

\begin{Cor}
\label{cor:gmr}
Every combinatorial rectangle $f:\{\pmo^v\}^m \rgta \pmo$ is
$\delta$-fooled by $\epsilon$-bias spaces for $\epsilon =
\left(m2^v/\delta\right)^{-O(v\log v)}$. 
\end{Cor}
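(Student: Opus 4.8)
The plan is to reduce combinatorial rectangles to bounded-width $\cnf$s and then invoke the sandwiching machinery already quoted. First I would write $f(x_1,\ldots,x_m) = \bigwedge_{i=1}^m f_i(x_i)$ with each $f_i : \pmo^v \rgta \zo$, and view $f$ as a Boolean function of the $n = mv$ variables. Each coordinate function $f_i$ can be written as a $\cnf$ on its own $v$ variables by taking one clause for every assignment $y \in \pmo^v$ with $f_i(y) = 0$ --- the unique clause of width $\le v$ that is falsified only by $y$. Thus $f_i$ is a conjunction of at most $2^v$ clauses of width at most $v$, and hence $f = \bigwedge_i f_i$ is a $\cnf$ on $n$ variables with at most $m\cdot 2^v$ clauses and width at most $v$. (We cannot use \tref{th:smallcnf} here, since this $\cnf$ is generally far from read-once; but \tref{thm:smallwcnf} only asks for bounded width.)

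Next I would apply \tref{thm:smallwcnf} with the parameters ``$m$''$\;:=\;m2^v$ (number of clauses) and ``$w$''$\;:=\;v$ (width): for every $\delta > 0$, $f$ has $\delta$-sandwiching polynomials of $\Ll$-norm at most $(m2^v/\delta)^{O(v\log v)}$. Finally I would invoke the first part of \lref{lem:lp}: choosing $\eps$ so that this $\Ll$-norm is at most $\delta/\eps$ --- which holds for $\eps = \delta\,(m2^v/\delta)^{-O(v\log v)}$, and hence a fortiori for $\eps = (m2^v/\delta)^{-O(v\log v)}$ once the spare factor $\delta \le 1$ is absorbed into the exponent --- gives that every $\eps$-biased distribution $\delta$-fools $f$. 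One also checks $\eps < \delta$, as \lref{lem:lp} requires, which is immediate since $m2^v/\delta \ge 2$ and $v\log v \ge 1$. This yields exactly the stated bound.

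This is really just bookkeeping, so I do not expect a genuine obstacle; the only two points needing a little care are (i) confirming that the truth-table-to-$\cnf$ encoding of each $f_i$ has width at most $v$ and contributes at most $2^v$ clauses --- both are immediate from the construction --- and (ii) the routine absorption of the leading $\delta$ into the $O(v\log v)$ exponent when passing from $\eps = \delta\,(m2^v/\delta)^{-O(v\log v)}$ to the cleaner $\eps = (m2^v/\delta)^{-O(v\log v)}$, which is legitimate precisely because $\delta < 1 < m2^v/\delta$ and the exponent is at least $1$.
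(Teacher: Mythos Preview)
Your proposal is correct and follows essentially the same approach as the paper: express each coordinate function $f_i$ as a width-$v$ \cnf\ with at most $2^v$ clauses, obtaining a \cnf\ with at most $m2^v$ clauses of width $v$, and then apply \tref{thm:smallwcnf}. The paper's proof is a two-line sketch that leaves the final passage through \lref{lem:lp} implicit; your write-up simply makes that step and the routine absorption of the leading $\delta$ factor explicit.
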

\begin{proof}
Each co-ordinate function $f_i$ can be expressed as a $\cnf$ formula
with $2^v$ clauses of width $v$. Hence we can write $f$ as a $\cnf$ formula
with $m2^v$ clauses of width $v$. Now apply \tref{thm:smallwcnf}.
\end{proof}

For brevity, in the following let
\[ \mathcal{U} = \left(\dpm^{v_0}\right)^{2^{v_1} \times m} \, \times \,  \left(\dpm^{v_1}\right)^{2^{v_2} \times m}\,\times\, \cdots \,\times\, \left(\dpm^{v_{t-2}}\right)^{2^{v_{t-1}} \times m} \,\times\,\left(\dpm^{v_{t-1}}\right)^m,\]
be the domain of $\bbx$ as defined in the generator construction.

Let $\D^j$ denote the distribution on $\mathcal{U}$ where $\bx_i$ are
sampled from an $\eps$-biased distribution for $i < j$ and uniformly
for $i \geq j$. Then, $s_0(\D^0)$ is the uniform distribution on
$\{\pmo^w\}^m$ whereas $s_0(\D^{t})$ is the output of our Recursive
Sampler. 

\begin{Lem}
\label{lem:cr-correctness}
Let $f:\{\pmo^w\}^m \rgta \zo$ be a combinatorial rectangle with width
$w$ and size $m$. For distributions $\D^0$ and $\D^t$ defined
above, we have
\begin{align*}
\left|\E_{\bbx \samp \D^0}[f(s_0(\bbx))] - \E_{\bbx \samp \D^t}[f(s_0(\bbx))]\right| \leq \delta.
\end{align*}
\end{Lem}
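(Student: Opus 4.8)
The plan is a hybrid argument that walks from $\D^0$ to $\D^t$ by de-randomizing one matrix $\bx_j$ at a time, charging each replacement to \lref{lem:main-cr} (for the $t-1$ restriction matrices $\bx_1,\ldots,\bx_{t-1}$) and to \cref{cor:gmr} (for the final matrix $\bx_t$). Since $t = O(\log w)$ is polylogarithmic in $1/\delta$, I would set the per-level budget $\delta' = \delta/t$, which still satisfies $\delta' < 1/4$ and $\log(1/\delta') \ge \log(1/\delta) \ge w$; by the triangle inequality it then suffices to prove, for every $j \in [t]$, that $\left|\E_{\bbx \samp \D^{j-1}}[f(s_0(\bbx))] - \E_{\bbx \samp \D^{j}}[f(s_0(\bbx))]\right| \le \delta'$.

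To bound one such step, fix $j$ and condition on $\bx_1,\ldots,\bx_{j-1}$: these have the same distribution under $\D^{j-1}$ and $\D^j$, and they determine the restricted rectangle $f^{j-1}$, a combinatorial rectangle of width $v_{j-1}$ on $\{\pmo^{v_{j-1}}\}^m$ with coordinate functions $f_1^{j-1},\ldots,f_m^{j-1}$. By the defining property \eqref{eq:recursive}, $f(s_0(\bbx)) = f^{j-1}(s_{j-1}(\bbx))$ pointwise, and unwinding the definition of $s_{j-1}$ gives $f^{j-1}(s_{j-1}(\bbx)) = \prod_{i=1}^m f_i^{j-1}\big((\bx_j)_{z_i, i}\big)$ where $z = s_j(\bx_{j+1},\ldots,\bx_t)$. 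The first key point I would establish is that when $\bx_{j+1},\ldots,\bx_t$ are uniform, $z$ is uniform on $\{\pmo^{v_j}\}^m$ --- this is the same fact (one level down) that makes $s_0(\D^0)$ uniform, and it follows by downward induction from $s_{t-1}(\bx_t)\equiv\bx_t$ together with the observation that reading a uniformly random row of a uniform matrix yields a uniform string, independently across columns. Integrating $\bx_{j+1},\ldots,\bx_t$ out therefore collapses the conditional expectation to $\prod_{i=1}^m \E_{a \samp \pmo^{v_j}}[f_i^{j-1}((\bx_j)_{a,i})]$, which is precisely the bias function $F^{j-1}(\bx_j)$ of the width-$v_{j-1}$-to-$v_j$ restriction of $f^{j-1}$ by the matrix $\bx_j$, in the sense of \eqref{eq:bias} and \clref{clm:bias-of-f}. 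Hence, after conditioning, the only difference between $\D^{j-1}$ and $\D^j$ is whether the single function $F^{j-1}$ is evaluated on a uniform $\bx_j$ or on an $\eps_1$-biased one (an $\eps_2$-biased one when $j = t$).

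For $j \le t-1$ I would then apply \lref{lem:main-cr} with the substitution $w \mapsto v_{j-1}$, $v = (3/4)\,v_{j-1} = v_j$, and error $\delta'$: its hypotheses hold because $v_{j-1} \le w \le \log(1/\delta) \le \log(1/\delta')$, and since the recursion runs only while $v_j \ge 50\log\log(1/\delta)$ (with the threshold constant taken slightly above $50$ to absorb the gap between $\delta$ and $\delta'$) also $v_j \ge 50\log\log(1/\delta')$. Thus $F^{j-1}$ has $\delta'$-sandwiching approximations of $\Ll$-norm $\poly(1/\delta')$, and taking $\eps_1$ to be a sufficiently large inverse polynomial in $1/\delta$ makes this norm at most $\delta'/\eps_1$, so by \lref{lem:lp} every $\eps_1$-biased distribution fools $F^{j-1}$ within $\delta'$; averaging back over $\bx_1,\ldots,\bx_{j-1}$ gives the level-$j$ bound. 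For $j = t$, conditioning on $\bx_1,\ldots,\bx_{t-1}$ fixes $f^{t-1}$, a combinatorial rectangle of width $v_{t-1} \le 50\log\log(1/\delta)$, with $f(s_0(\bbx)) = f^{t-1}(\bx_t)$, so by \cref{cor:gmr} it is $\delta'$-fooled by the $\eps_2$-biased space for the stated value of $\eps_2$. Summing the $t$ per-level bounds yields total error $t\delta' = \delta$.

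The step I expect to be the main obstacle is the second paragraph: pushing the lower-level randomness through correctly and identifying the resulting conditional expectation with exactly the bias function $F^{j-1}$ that \lref{lem:main-cr} was designed to fool, together with checking that the width parameters chain as $v_{j-1}\to v_j$ at every level so the hypotheses of \lref{lem:main-cr} hold throughout the recursion. Once that identification is nailed down, the remaining bookkeeping --- parameter choices for $\eps_1$ and $\eps_2$ via \lref{lem:lp} and \cref{cor:gmr}, and summing the hybrid errors --- is routine.
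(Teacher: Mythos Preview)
Your proposal is correct and follows essentially the same hybrid argument as the paper: set $\delta' = \delta/t$, condition on $\bx_1,\ldots,\bx_{j-1}$, use the uniformity of $s_j(\bx_{j+1},\ldots,\bx_t)$ under uniform later coordinates to collapse the conditional expectation to the bias function $F^{j-1}(\bx_j)$, and then invoke \lref{lem:main-cr} for $j\le t-1$ and \cref{cor:gmr} for $j=t$. One small slip: the final rectangle $f^{t-1}$ has width $v_{t-1}$, and the stopping rule gives $v_t \le 50\log\log(1/\delta)$, not $v_{t-1}$; but since $v_{t-1} = (4/3)v_t = O(\log\log(1/\delta))$ this does not affect the application of \cref{cor:gmr}.
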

\begin{proof}
Let $\delta' = \delta/t$. We will show by a hybrid argument that for all $j \in \{1,\ldots,t\}$ 
\begin{align}
\label{eq:hybrid}
\left|\E_{\bbx \samp \D^{j-1}}[f(s_0(\bbx))] - \E_{\bbx \samp \D^{j}}[f(s_0(\bbx))]\right| \leq \delta'.
\end{align}
In both $\D^{j-1}$ and $\D^{j}$, $\bx_i$ is drawn from an $\eps$-biased
distribution for $i< j$, and from the uniform distribution for $i >
j$. The only difference is $\bx_j$ which is sampled uniformly in $\D^{j-1}$ and from an
$\eps$-biased distribution in $\D^j$.

We couple the two distributions by drawing $\bx_i$ for $i < j$
according to an $\eps$-biased distribution. By Equation
\eqref{eq:recursive}, we get
\begin{align*}
\E_{\bbx\samp \D^{j-1}}[f(s_0(\bbx))]  \ = \E_{\bbx\samp \D^{j-1}}[f^{j-1}(s_{j-1}(\bbx))], \ 
\E_{\bbx\samp \D^j}[f(s_0(\bbx))] \ = \E_{\bbx\samp \D^j}[f^{j-1}(s_{j-1}(\bbx))]
\end{align*}
and our goal is now to show that
\begin{align}
\label{eq:hybrid-2}
\left|\E_{\bbx \samp \D^{j-1}}[f^{j-1}(s_{j-1}(\bbx))] - \E_{\bbx \samp
 \D^j}[f^{j-1}(s_{j-1}(\bbx))]\right| \leq \delta'. 
\end{align}

Define the bias function $F^{j-1}$ of the rectangle $f^{j-1}$ as in Equation
\eqref{eq:bias}. The string $\bx_j$ defines a restricted rectangle
$f^j:\{\pmo^{v_j}\}^m \rgta \zo$. Applying \clref{clm:bias-of-f}
we get
\begin{align*}
\E_{z \samp \left(\dpm^{v_j}\right)^m}[f^j(z)] = F^{j-1}(\bx_j).
\end{align*} 
In both distributions $\D^{j-1}$ and $\D^j$, $\bx_{j+1},\ldots,\bx_t$
are distributed uniformly at random, hence $s_j(\D^{j-1}) = s_j(\D^j) \sim \left(\dpm^{v_j}\right)^m$ are uniformly distributed, and this variable is independent of $\bx_j$. So
we have
\begin{align*}
\E_{x \samp \D^{j-1}}[f^{j-1}(s_{j-1}(x))] & = \E_{\bx_j \samp \D^{j-1}}\left[
  \E_{(\bx_{j+1}\ldots,\bx_t) \samp
    \D^{j-1}}[f^j(s_j(\bx_{j+1},\ldots,\bx_t))]\right] = \E_{\bx_j \samp
  \D^{j-1}}[F^{j-1}(\bx_j)],\\
\E_{x \samp \D^j}[f^{j-1}(s_{j-1}(x))] & = \E_{\bx_j \samp \D^j}\left[
  \E_{(\bx_{j+1}\ldots,\bx_t) \samp
    \D^j}[f^j(s_j(\bx_{j+1},\ldots,\bx_t))]\right] = \E_{\bx_j
  \samp \D^j }[F^{j-1}(\bx_j)]
\end{align*}
Thus it suffices to show that
\begin{align*}
\left|\E_{\bx_j \samp \D^{j-1}}[F^{j-1}(\bx_j)] - \E_{\bx_j \samp \D^j
}[F^{j-1}(\bx_j)]\right| \leq \delta'
\end{align*}
By Lemma \ref{lem:main-cr}, this holds true for $j \leq t-1$ provided
that $\eps_1 \leq \poly(1/\delta')$. 

For $j =t$, note that this is equivalent to showing that $\eps_2$-bias
fools the rectangle $f^t$. 
\eat{By Corollary \ref{cor:gmr} $f^t$ has $\delta$-sandwiching
approximations whose $\Ll$ norm is bounded by
\begin{align*}
\left(\frac{m2^{v_t}}{\delta}\right)^{O(v_t\log v_t)} \leq
  \left(\frac{1}{\delta}\right)^{O(\log\log(1/\delta)\log\log\log(1/\delta))}.
\end{align*}}
By \cref{cor:gmr}, $f^t$ is $\delta'$ fooled by $\eps_2$-biased spaces where
$$\eps_2 = \left(\frac{m2^{v_t}}{\delta'}\right)^{-O(v_t\log v_t)} = \left(\frac{1}{\delta'}\right)^{O(\log\log(1/\delta')\log\log\log(1/\delta'))}.$$  

Plugging these back into Equation \eqref{eq:hybrid}, the error is
bounded by $t\cdot \delta' \leq \delta$. 
\end{proof}

To complete the proof of Theorem \ref{thm:prg-cr}, we observe that the
total seed-length is
\begin{align*}
s & = O\left((\log w)(\log(m2^w/\eps_1)  + \log(m2^w/\eps_2)\right) \\
& = O\left(\,\log w \left(\log m + w + \log(1/\delta)\right) + \log(1/\delta)\log\log(1/\delta)\log\log\log(1/\delta)\right). 
\end{align*}  

We next state an application of our \prg\ to hardness amplification in $\mathsf{NP}$. Say that a Boolean function $f:\zo^n \rgta \zo$ is $(\epsilon,s)$-hard if any circuit of size $s$ cannot compute $f$ on more than a $1/2-\epsilon$ fraction of inputs. The hardness amplification problem then asks if we can use a mildly hard function in a black-box manner to construct a much harder function. Following the works of O'Donnell \cite{ODonnell04} and Healy, Vadhan and Viola \cite{HealyVV04}, Lu, Tsai and Wu \cite{LuTW07} showed how to construct $(2^{-\Omega(n^{2/3})}, 2^{-\Omega(n^{2/3})})$-hard functions in $\mathsf{NP}$ from $(1/\poly(n),2^{\Omega(n)})$-hard functions in $\mathsf{NP}$. Their improvement comes from using the \prg\ for combinatorial rectangles of Lu \cite{Lu} to partly derandomize the constructions of Healy, Vadhan and Viola. By using our \prg\ for combinatorial rectangles, \tref{thm:prg-cr}, instead of Lu's generator in the arguments of Lu, Tsai and Wu immediately leads to the following improved hardness amplification within NP.
\begin{Cor}
  If there is a balanced function in $\mathsf{NP}$ that is $(1/\poly(n), 2^{\Omega(n)})$-hard, then there exists a function in $\mathsf{NP}$ that is $(1/2^{n/\poly(\log n)}, 2^{n/\poly(\log n)})$-hard.
\end{Cor}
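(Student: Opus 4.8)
The plan is to run the hardness-amplification construction of O'Donnell~\cite{ODonnell04}, Healy--Vadhan--Viola~\cite{HealyVV04}, and Lu--Tsai--Wu~\cite{LuTW07} without change, except that wherever~\cite{LuTW07} invokes Lu's pseudorandom generator for combinatorial rectangles one instead invokes \tref{thm:prg-cr}. Recall the shape of that argument. From a balanced $(1/\poly(n), 2^{\Omega(n)})$-hard function $f \in \mathsf{NP}$ one forms $g(z) = C(f(y_1), \ldots, f(y_k))$, where $C$ is a monotone combining function, so that $g \in \mathsf{NP}$, and $y_1, \ldots, y_k$ are produced from a short seed by a \emph{derandomized direct product}; the HVV analysis reduces bounding the advantage of a size-$t'$ circuit for $g$ to $\delta$-fooling combinatorial rectangles of size $m = \poly(k)$ and width $w = O(\log n)$ (in the parametrization of~\cite{HealyVV04,LuTW07}), where $\delta = \poly(\epsilon)$ and $\epsilon = 2^{-\Omega(k)}$ is the hardness being aimed at. Hence the new function has input length $N = n + \ell$, with $\ell$ the seed length of whatever rectangle generator is plugged in, and is $(\epsilon, t')$-hard with $t'$ equal to $2^{\Omega(n)}$ up to a sub-exponential loss from the decoding reduction, so still $t' = 2^{\Omega(n)}$.

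First I would record the parameter dependence. With Lu's generator, $\ell = O(\log m + w + \log^{3/2}(1/\delta))$, so the $\log^{3/2}(1/\delta)$ term forces $\log(1/\epsilon) \lesssim N^{2/3}$ once one demands $N = \widetilde O(n)$; this is exactly what caps~\cite{LuTW07} at $(2^{-\Omega(n^{2/3})}, 2^{-\Omega(n^{2/3})})$. Substituting \tref{thm:prg-cr}, whose seed length for size $m$, width $w$, and error $\delta$ is $O\big((\log w)(\log m + w + \log(1/\delta)) + \log(1/\delta)\log\log(1/\delta)\log\log\log(1/\delta)\big)$, and taking $m = \poly(k)$, $w = O(\log n)$, $k = n/(\log n)^{O(1)}$, and $\delta = \poly(\epsilon) = 2^{-\Theta(k)}$, one gets $\ell = \widetilde O(\log n + \log(1/\epsilon)) = n/(\log n)^{\Omega(1)} = o(n)$, hence $N = n + o(n) = \Theta(n)$. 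The resulting function in $\mathsf{NP}$ on $N$ bits is then $(\epsilon, t')$-hard with $\epsilon = 2^{-\Omega(k)} = 2^{-N/\polylog N}$ and $t' = 2^{\Omega(N)} \geq 2^{N/\polylog N}$, which is the statement.

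Everything apart from the choice of generator is already handled in~\cite{LuTW07,HealyVV04} and is unaffected by the substitution: the combining function must be monotone and locally computable so that $g \in \mathsf{NP}$ and the decoding reduction is efficient, and the balancedness of $f$ is used (as in O'Donnell) to position the amplification at the right bias. The single place \tref{thm:prg-cr} is essential --- and the obstacle it removes --- is that we need $\delta$ as small as $2^{-\Theta(k)}$ with $k$ nearly linear in $n$ while keeping $\ell = o(N)$: Lu's $\log^{3/2}(1/\delta)$ dependence makes this impossible past $k \approx n^{2/3}$, whereas \tref{thm:prg-cr} costs only a polylogarithmic overhead over $\log(1/\delta)$. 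The only real work is bookkeeping the $\polylog$ powers so that $\ell = o(N)$ while the advantage and circuit-size bounds $2^{\pm N/\polylog N}$ hold simultaneously, which follows the calculation in~\cite{LuTW07}.
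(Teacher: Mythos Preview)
Your proposal is correct and takes exactly the approach the paper indicates: the paper does not give a standalone proof of this corollary but simply states that plugging \tref{thm:prg-cr} into the Lu--Tsai--Wu argument in place of Lu's generator ``immediately leads to'' the improved bound. Your write-up actually supplies more of the parameter bookkeeping (why $\ell = o(N)$ once the $\log^{3/2}(1/\delta)$ term is replaced by $\tilde O(\log(1/\delta))$, allowing $k = n/\polylog n$) than the paper itself does.
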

 

\section{\hsg s for Read-Once Branching Programs}\label{sec:prgbp}

In thsi section, we reduce the problem of constructing an \hsg\ for
width $3$ branching programs to the problem of \hsg\ construction for
\cnf\ formulas which are allowed to have parity functions as
clauses. We start with some definitions.   

A {\it read-once branching program} (ROBP) $B$ of {\it width} $d$ has
a vertex set $V$ partitioned into $n+1$ layers $V_0 \cup \cdots \cup
V_{n}$ where 
\begin{enumerate}
\item $V_0 = \{(0,0)\}$.
\item $V_t = \{(t,i)\}_{i \in [d]}$ for $t \in \{1,\ldots,n-1\}$.
\item $V_{n} = \{(n,1), (n,d)\}$.
\end{enumerate}
The vertex $(0,0)$ is referred to as the \Start\ state, while
$(n,1)$ and $(n,d)$ are referred to as \Acc\ and \Rej, respectively. 
Each vertex in $v \in V_t$ has two out-edges labeled $0$ and $1$,
which lead to vertices $N_0(v)$ and $N_1(v)$ respectively in $V_{t+1}$. 
We refer to the set of states $\{(t,1)\}_{t=1}^n$ as the {\it top level} and
$\{(t,d)\}_{t=1}^n$ as the {\it bottom level}.

A string $x \in \zo^n$ defines a path in $V_0 \times \cdots \times
V_{n}$ beginning at \Start\ and following the edge labeled $x_i$
from $V_i$. Let $\Path(x) = \Path_0(x),\ldots,\Path_{n}(x)$ denote this sequence of states, i.e., $\Path_1(x) = (0,0)$, and $\Path_{i+1}(x) = N_{x_i}(\Path_i(x))$. The string $x$ is {\it accepted} if $\Path_{n}(x) = \Acc$. Thus the branching program naturally computes a function $f:\zo^n \rightarrow \zo$. Let $\E[f] =\E_{x \sim \zo^n}[f(x)] = Pr_x[f(x) =1]$.  

Let $\bp{d,n}$ denote the set of all $f:\zo^n \rightarrow \zo$ that can
be computed by width $d$ ROBPs. Our hitting set generator for
$\bp{3,n}$ uses a reduction to the problem of hitting $\mathsf{CNF}$
formulas where clauses can be disjunctions of variables or parity
functions.  

\begin{Def}
Let $\cnfx(n)$ denote the class of read once formulas $f:\zo^n \rgta
\zo$ of the form $f  = \wedge_{i=1}^mT_i$ where each $T_i$ is either a
disjunction of literals or a parity function of literals and the
$T_i$s are on disjoint variables.
\end{Def}

\eat{
\begin{Thm}
For every $f \in \bp{3,n}$ where $\E[f] \geq \eps$, 
there exists $g \in \cnfx(n)$ such that $g \leq
f$ and $\E[g] \geq \poly(\eps,1/n)$.
\end{Thm}}

\begin{Thm}\label{th:bpmain}
For every $f \in \bp{3,n}$ there is an integer $k$ and $g\in
\cnfx(n-k)$ such that $0^k \circ g^{-1}(1) \subseteq f^{-1}(1)$ and
$\E[g] \geq (\E[f]/n)^{O(1)}$. 
\end{Thm}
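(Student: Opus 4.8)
The plan is a two-step reduction following Sima and Zak~\cite{SimaZa11}: first reduce an arbitrary width-$3$ ROBP to one with \emph{sudden death} on its bottom row (so that row $3$ behaves as a rejecting sink), and then reduce such a program to a $\cnfx$ formula. Each step costs only a $\poly(n)$ factor in the acceptance probability and may hardwire a zero-prefix of the input, which together gives the stated $(\E[f]/n)^{O(1)}$ bound. If $\E[f]=0$ there is nothing to prove, so assume $\E[f]=\eps>0$.

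First I would reduce to sudden death. For an accepting input $x$, let $L(x)$ be the last layer at which $\Path(x)$ lies in the bottom row, with $L(x)=0$ if $\Path(x)$ never visits it. Averaging over the at most $n$ possible values produces a layer $t^*$ with $\Pr_x[f(x)=1\wedge L(x)=t^*]\ge \eps/n$. If $t^*=0$, all these accepting paths avoid the bottom row, so replacing the bottom row by a rejecting sink yields a sudden-death width-$3$ program $B\le f$ with $\E[B]\ge\eps/n$; take $k=0$ and pass $B$ to the second step. If $t^*\ge 1$, all these accepting paths funnel through the unique bottom-row state $(t^*,3)$ and avoid the bottom row thereafter; writing $p=\Pr_{x_{1:t^*}}[\Path_{t^*}(x)=(t^*,3)]$ and $q$ for the probability that the suffix program started at $(t^*,3)$ reaches $\Acc$ without revisiting the bottom row, we have $pq\ge\eps/n$, hence $p,q\ge\eps/n$. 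The suffix (with its bottom row made absorbing-rejecting) is a sudden-death width-$3$ program on $n-t^*$ bits of acceptance $q$; the second step yields $g_{\mathrm{suf}}\in\cnfx(n-t^*)$ with $g_{\mathrm{suf}}^{-1}(1)$ a subset of its accepting set and $\E[g_{\mathrm{suf}}]\ge\poly(\eps/n)$. The prefix predicate ``$\Path_{t^*}(\cdot)=(t^*,3)$'' is computed by a width-$3$ ROBP on $x_{1:t^*}$ of acceptance probability $p$ with strictly fewer layers; recursing (and, in that recursion, peeling off the final run inside the bottom row, which is just a conjunction of literals, so the recursion bottoms out at a sudden-death program rather than looping) gives $k$ and $g_{\mathrm{pre}}\in\cnfx(t^*-k)$ with $0^k\circ g_{\mathrm{pre}}^{-1}(1)\subseteq\{w:\Path_{t^*}(w)=(t^*,3)\}$ and $\E[g_{\mathrm{pre}}]\ge\poly(\eps/n)$. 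Since $\cnfx$ is closed under conjunction over disjoint blocks of variables, $g:=g_{\mathrm{pre}}\wedge g_{\mathrm{suf}}\in\cnfx(n-k)$ satisfies $0^k\circ g^{-1}(1)\subseteq f^{-1}(1)$ and $\E[g]=\E[g_{\mathrm{pre}}]\cdot\E[g_{\mathrm{suf}}]\ge\poly(\eps/n)$.

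The second step converts a sudden-death width-$3$ program $B$ on $N$ bits into a $\cnfx$ formula. An input is accepted iff its path survives every layer (never enters the bottom row) and ends in the top row. The death edge out of a live configuration depends only on which of the two live rows the path occupies; analysing the width-$2$ transitions among those rows (identity, swap, collapse), the current row is a parity of the bits read since the last collapse, so ``survive layer $t$'' is expressible by a clause that is a disjunction of literals or an XOR of literals, and successive clauses act on disjoint variable blocks delimited by the collapses. This produces a lower sandwich of $B$ lying in $\cnfx$ of density $\ge\poly(\E[B]/N)$ --- the Sima--Zak argument specialised to the sudden-death model --- which we output as $g$.

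The main obstacle is the interaction of two rigid requirements: the hardwired bits must form the \emph{all-zeros} string \emph{and} a genuine prefix of the input. This forces the bookkeeping above to left-align every hardwired bit into a single block $0^k$, so the second step must avoid restricting interior bits and the recursion in the first step must terminate after $O(1)$ levels so that the $\poly(\E[f]/n)$ loss does not compound. Carrying this out requires a careful description of the bottom row's structure --- in particular, handling accepting paths with several excursions into it, and showing that ``reach a prescribed bottom-row state'' is captured, up to a $\poly(n)$ density loss, by a read-once CNF-with-parities --- together with the Sima--Zak analysis of the sudden-death case; the analytic tools of \sref{sec:sym} and \sref{sec:xor} play no role here.
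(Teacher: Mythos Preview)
Your high-level plan matches the paper's --- reduce to sudden death, then to $\cnfx$ --- but both steps, as you describe them, have real gaps that you yourself flag as obstacles without resolving.

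For the first step, the recursion on the prefix predicate ``reach $(t^*,3)$'' is not shown to terminate in $O(1)$ rounds, and in general it does not: an accepting path can make many separate excursions into the bottom row, and ``peeling off the final run'' just replaces one reach-a-bottom-state predicate by another on a shorter prefix, so the depth can be $\Theta(n)$ and the $\poly(\eps/n)$ losses compound. The paper avoids recursion entirely. It uses the monotonicity $p((1,d))\ge p((2,d))\ge\cdots$ to find the \emph{first} layer $i$ where $p((i,d))\le\eps/2$; before $i$ every state has $p(v)\ge\eps/2$, so the walk on input $0^{i-2}$ reaches a high-acceptance vertex $v$. From $v$ one then finds a single layer $j$ at which a random walk first hits the top row with probability $\ge\eps/2n$, and converts to sudden death by making the top-row states before $j$ and bottom-row states after $j$ rejecting (\lref{lem:small-reject} controls the loss). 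This introduces the $0^k$ prefix exactly once and loses only a $\poly(\eps/n)$ factor.

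For the second step, your claim that ``survive layer $t$'' is a disjunction of literals or a parity is false in general: if only one of the two live states has a death edge at layer $t$, survival is (parity of earlier bits) $\vee$ (literal in $x_t$), which is neither an OR clause nor a parity and is not a read-once $\cnfx$ formula. The paper does not attempt a direct translation. Instead it analyses the ``bad'' states (live states with a death edge), shows that conditioned on acceptance their expected number is $O(\log(1/p))$, kills the low-mass ones to break the program into an intersection of width-$2$ programs on disjoint variable blocks (\tref{th:bpsdtobpint}), and only then uses the Saks--Zuckerman decision-list characterisation of width-$2$ programs to extract, from each block, either a single OR clause (\lref{lm:sz1}) or a conjunction of an AND and a parity (\lref{lm:sz2}). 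That intermediate passage through width-$2$ intersections is what makes the clauses read-once and of the right shape.
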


Given this reduction, we get a \hsg\  for $\bp{3,n}$ by using the
\prg\ for $\cnfx$ that we construct in \tref{thm:prgcnfx}:
\ 
\begin{Thm}\label{th:bphsg}
 For every $\epsilon > 0$, there exists an explicit $(\epsilon,(\epsilon/n)^{O(1)})$-\hsg\ $G:\zo^r \rgta \zo^n$ for $\bp{3,n}$ with a seed-length of $O((\log (n/\epsilon)) \cdot (\log\log(n/\epsilon))^3)$.
\end{Thm}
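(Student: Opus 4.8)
The plan is to derive \tref{th:bphsg} as a direct composition of the structural reduction in \tref{th:bpmain} with our pseudorandom generator for $\cnfx$ from \tref{thm:prgcnfx}; the reduction is used only in the analysis, so the generator itself stays simple and explicit. First I would set $\delta = \tfrac12(\epsilon/n)^{c}$, where $c=O(1)$ is the constant hidden in the bound $\E[g]\ge(\E[f]/n)^{O(1)}$ of \tref{th:bpmain}, and invoke \tref{thm:prgcnfx} (instantiated with error $\delta$) to obtain an explicit $\delta$-\prg\ $G':\zo^{r'}\rgta\zo^{n}$ for $\cnfx(n)$; since $\log(1/\delta)=O(\log(n/\epsilon))$, this has $r' = O\big((\log(n/\epsilon))\cdot(\log\log(n/\epsilon))^{3}\big)$. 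Then I would define the candidate hitting set generator $G:\{0,\dots,n\}\times\zo^{r'}\rgta\zo^{n}$ by
\[
 G(k,y)\;=\;0^{k}\circ\big(G'(y)_{k+1},\dots,G'(y)_{n}\big),
\]
i.e.\ $G$ prepends a seed-chosen number $k$ of zeros to a suffix of $G'(y)$. The extra index $k$ costs only $\lceil\log(n{+}1)\rceil$ seed bits, which is absorbed, so the total seed length is $r=O\big((\log(n/\epsilon))\cdot(\log\log(n/\epsilon))^{3}\big)$, and $G$ is explicit because it merely runs $G'$ and prepends zeros — the reduction of \tref{th:bpmain} is never computed by the generator.

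For the analysis I would fix any $f\in\bp{3,n}$ with $\E[f]\ge\epsilon$ (otherwise there is nothing to prove) and apply \tref{th:bpmain} to obtain $k^{\star}\in\{0,\dots,n\}$ and $g\in\cnfx(n-k^{\star})$ with $0^{k^{\star}}\circ g^{-1}(1)\subseteq f^{-1}(1)$ and $\E[g]\ge(\epsilon/n)^{c}=2\delta$. Viewing $g$ as a $\cnfx(n)$ formula $\tilde g$ that ignores its first $k^{\star}$ inputs, the $\delta$-pseudorandomness of $G'$ gives $\pr_{y}\big[(G'(y)_{k^{\star}+1},\dots,G'(y)_{n})\in g^{-1}(1)\big]=\E_{y}\big[\tilde g(G'(y))\big]\ge\E[g]-\delta\ge\delta$. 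For each such $y$ we have $G(k^{\star},y)\in 0^{k^{\star}}\circ g^{-1}(1)\subseteq f^{-1}(1)$, and since $k$ is uniform over $n{+}1$ values and $f\ge 0$,
\[
\E_{(k,y)}\big[f(G(k,y))\big]\;\ge\;\frac{1}{n+1}\,\E_{y}\big[f(G(k^{\star},y))\big]\;\ge\;\frac{\delta}{n+1}\;=\;(\epsilon/n)^{O(1)},
\]
which is exactly the $(\epsilon,(\epsilon/n)^{O(1)})$-\hsg\ guarantee with the claimed seed length.

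I do not expect a genuine obstacle here: the only point that needs care is that the number $k^{\star}$ of leading zeros depends on the unknown $f$, which I handle by the brute-force enumeration over all $n{+}1$ values of $k$ at the cost of an additive $\log(n{+}1)$ in the seed and a $1/(n{+}1)$ factor in the hitting probability (both harmless). All the difficulty is front-loaded into \tref{th:bpmain} (the reduction from width-$3$ branching programs to read-once CNFs with parities) and \tref{thm:prgcnfx} (the generator for $\cnfx$), which are established separately.
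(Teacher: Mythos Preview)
Your proposal is correct and follows essentially the same approach as the paper: both invoke \tref{thm:prgcnfx} with error $(\epsilon/n)^{O(1)}$, enumerate the unknown offset $k$ at a cost of $O(\log n)$ seed bits and a $1/n$ factor in the hitting probability, and appeal to \tref{th:bpmain} purely in the analysis. The only cosmetic differences are that the paper takes the \emph{first} $n-k$ bits of $G'(y)$ whereas you take the last $n-k$ bits (both are fine since $\cnfx$ is closed under relabeling and ignoring variables), and the precise choice of constants in~$\delta$.
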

We remark that using similar techniques, we can also achieve a seed-length of $O((\log n)(\log (1/\epsilon)))$ which is better than the above bound for large values of $\epsilon$. We defer the details of this to the full version.

The reduction in \tref{th:bpmain} is carried out in three steps. 
\begin{itemize} 
\item The first step (for the sake of \hsg s) reduces arbitrary width $3$ programs to ``sudden death'' width $3$ programs, where the last state in every layer is a \Rej\ state. (This step in fact works for all widths.)
\item The second step reduces ``sudden death'' width $3$ programs to intersections of width $2$ programs.
\item The third step reduces intersections of width $2$ programs to $\cnfx$ formulae.
\end{itemize}

\subsection{Reduction to Branching Programs with Sudden Death}
\begin{Def}
  A width $d$ BP with {\em sudden death} is a BP where the bottom level states are all $\Rej$ states. Formally this means $N_0((t,d)) = N_1((t,d)) = (t+1,d)$ for all $t = 1,\ldots,n-1$. Let $\bpr{d,n}$ denote the set of functions computable by such programs. 
\end{Def}
We reduce the problem of constructing hitting sets for width $d$ BPs to for ones with sudden death.

\eat{
\begin{Thm}\label{th:bptosd}
Let $G':\zo^s \rightarrow \zo^n$ be a $(\eps^2/2n,\delta)$-hitting set generator for $\bpr{d,n}$. Define, $G:\zo^{\log n + s} \rightarrow \zo^n$ as follows:
\begin{itemize}
\item Sample $k \sim [n]$ and $y \sim \zo^s$. 
\item Output $k$ $0$s followed by the first $n-k$ bits of $G'(y)$. 
\end{itemize}
Then, $G$ is a $(\eps,\delta/n)$-hitting set generator for $\bp{d,n}$.   
\end{Thm}}

\begin{Thm}\label{th:bptosd}
For every $f \in \bp{d,n}$ there is an integer $k$ and a $g:\zo^{n-k} \rgta \zo$, $g \in \bpr{d,n}$ such that $0^k \circ g^{-1}(1) \subseteq f^{-1}(1)$ and $\E[g] \geq \E[f]^2/2n$.
\end{Thm}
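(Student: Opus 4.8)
Fix a width-$d$ read-once branching program $B$ for $f$, with $\Start=(0,0)$, $\Acc=(n,1)$, $\Rej=(n,d)$, and transition functions $N_0,N_1$. Let $u_0=(0,0),u_1,\ldots,u_n$ be the \emph{all-zeros path} ($u_{t+1}=N_0(u_t)$), and for a state $v$ let $\mathrm{acc}(v)$ be the probability that a uniform random string started at $v$ drives the program to $\Acc$; thus $\mathrm{acc}(u_0)=\E[f]$ and $\mathrm{acc}(u_k)=\Pr[f(x)=1\mid x_1=\cdots=x_k=0]$. First dispose of the trivial cases: if $\E[f]=0$ take $g\equiv 0$; if $f(0^n)=1$, take $k=\max\{0,\,n-\lceil\log(n/\E[f]^2)\rceil\}$ and let $g$ be the indicator of the all-zeros string on $n-k$ bits, which is computed by a width-$2\le d$ sudden-death program with $0^k\circ g^{-1}(1)=\{0^n\}\subseteq f^{-1}(1)$, and a short case check (on whether $k=0$) gives $\E[g]=2^{-(n-k)}\ge \E[f]^2/(2n)$.

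\paragraph{The main construction.}
Assume now $\E[f]>0$ and $f(0^n)=0$, so $u_n=\Rej$. For each $k\in\{0,\ldots,n-1\}$ I define $g_k:\zo^{n-k}\rgta\zo$ by taking the sub-program of $B$ on layers $k,\ldots,n$ with start state $u_k$ (a width-$d$ read-once program on $n-k$ bits), relabelling so that $u_{k+1},\ldots,u_{n-1}$ are its bottom-row states (and $u_n=\Rej$ its reject state), and re-routing both out-edges of every $u_t$, $k<t<n$, to $u_{t+1}$. The first thing to verify is that this is a sudden-death program, i.e. $g_k\in\bpr{d,n-k}$: the re-routed chain $u_{k+1}\to u_{k+2}\to\cdots\to u_n=\Rej$ is a bottom row of dead states, and the remaining $d-1$ states at each layer are untouched. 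The second is that $0^k\circ g_k^{-1}(1)\subseteq f^{-1}(1)$: if $g_k(y)=1$ then the path of $g_k$ from $u_k$ on $y$ reaches $\Acc$, so it never enters a $u_t$ with $k<t<n$ (any such entry is re-routed down to $\Rej$), hence it coincides with the path of $B$ from $u_k$ on $y$, so $B$ accepts $0^ky$. Consequently
\[ \E[g_k]=\Pr_{y\sim\zo^{n-k}}\big[\text{the $B$-path from $u_k$ on $y$ avoids $u_{k+1},\ldots,u_{n-1}$ and reaches $\Acc$}\big]. \]

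\paragraph{Choosing $k$.}
Partition the accepting inputs of $f$ by the length of their initial run of zeros; since $f(0^n)=0$ this gives $\E[f]=\sum_{j=0}^{n-1}2^{-(j+1)}\,\mathrm{acc}(w_j)$, where $w_j:=N_1(u_j)\in V_{j+1}$ is the state reached from $\Start$ by reading $0^j1$. Pick $j^*$ with $2^{-(j^*+1)}\,\mathrm{acc}(w_{j^*})\ge\E[f]/n$ and set $k=j^*$ (the degenerate case $w_{j^*}=u_{j^*+1}$ forces $\mathrm{acc}(u_{j^*})=\mathrm{acc}(u_{j^*+1})$ and is handled by shifting the cut by one). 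From the displayed formula, $\E[g_{j^*}]=\tfrac12\big(\mathrm{acc}(w_{j^*})-R\big)$, where $R$ is the probability that, started at $w_{j^*}$, the path \emph{returns} to some $u_t$ with $t>j^*+1$ and still reaches $\Acc$; bounding $R\le\sum_{t>j^*+1}\Pr[\text{reach }u_t\text{ from }w_{j^*}]\cdot\mathrm{acc}(u_t)$, the plan is to show $R\le\tfrac12\mathrm{acc}(w_{j^*})$, either directly or by replacing $j^*$ with a later, strictly better cut when $R$ is too large (the index strictly increases, so this terminates). Then
\[ \E[g_{j^*}]\ \ge\ \tfrac14\,\mathrm{acc}(w_{j^*})\ \ge\ \tfrac14\cdot 2^{\,j^*+1}\cdot\frac{\E[f]}{n}\ \ge\ \frac{\E[f]}{2n}\ \ge\ \frac{\E[f]^2}{2n}, \]
so $g=g_{j^*}$, $k=j^*$ works.

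\paragraph{Main obstacle.}
The delicate point is the last step: controlling the ``return'' probability $R$ --- the chance that an accepting input leaves the all-zeros path at layer $j^*+1$ only to rejoin it later. This is exactly where the restriction to \emph{zero}-prefixes interacts with the possibility that most of the accepting mass routes through the all-zeros path, and it is why the bound is $\E[f]^2/(2n)$ rather than the $\E[f]/n$ coming from a single averaging step: the extra factor of $\E[f]$ (equivalently the factor $\tfrac14$ above) is the slack that absorbs $R$. I expect the cleanest route is to take $j^*$ to be the \emph{largest} index with $2^{-(j+1)}\mathrm{acc}(w_j)\ge\E[f]/n$, so that for every later $t$ the quantity $\mathrm{acc}(w_t)$, and hence (after a short calculation using $\mathrm{acc}(u_t)=\tfrac12(\mathrm{acc}(u_{t+1})+\mathrm{acc}(w_t))$) $\mathrm{acc}(u_t)$ itself, is small, forcing $R$ small.
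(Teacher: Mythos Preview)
Your approach is quite different from the paper's, and the point you flag as the ``main obstacle'' is a genuine gap: the return probability $R$ can equal $\mathrm{acc}(w_{j^*})$, giving $\E[g_{j^*}]=0$, and neither of your proposed fixes (take the \emph{largest} qualifying $j^*$; iteratively shift the cut) rescues the argument.

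Here is a concrete counterexample. Take $d=3$, let the all-zeros path be $u_t=(t,2)$ for $1\le t\le n-1$ with $u_n=\Rej$, and set the transitions so that $(t,2)\xrightarrow{0}(t{+}1,2)$, $(t,2)\xrightarrow{1}(t{+}1,1)$; from $(t,1)$ both edges go to $(t{+}1,2)$ for $t\le n-2$, while $(n{-}1,1)$ goes to $\Acc$ on both bits; and $(n{-}1,2)\xrightarrow{0}\Rej$, $(n{-}1,2)\xrightarrow{1}\Acc$. One checks $f(0^n)=0$ and $\E[f]\to 2/3$. For every $k\le n-3$, the $1$-edge from $u_k$ lands at $w_k=(k{+}1,1)$, which then \emph{deterministically} enters $u_{k+2}$; the $0$-edge goes to $u_{k+1}$. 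Hence $\E[g_k]=0$ for all $k\le n-3$. Your rule selects $j^*\approx\log_2 n\le n-3$, so $\E[g_{j^*}]=0$; the only good cuts are $k\in\{n-2,n-1\}$, which fail your averaging condition for large $n$, and your ``shift later'' idea has no invariant that survives the shift. (A red flag: your chain of inequalities, had it worked, would yield $\E[g]\ge \E[f]/(2n)$, strictly better than the $\E[f]^2/(2n)$ being claimed.)

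The structural issue is that you designate the \emph{all-zeros path} as the reject row of the sudden-death program, but those states can have large acceptance probability (here, about $2/3$), so killing them can annihilate all accepting mass. The paper avoids this entirely: it first \emph{sorts} the states in each layer by acceptance probability $p(\cdot)$, so the bottom state $(t,d)$ is the least accepting; it picks $k$ so that at layer $k{+}1$ every state still has $p\ge\eps/2$ (this is where the $0^k$-prefix is used---any state there is fine); then, in a second phase, it finds a layer $j$ where the walk first hits the \emph{top} row with probability $\ge\eps/(2n)$, converts the top-row states before $j$ into rejects (this does not affect ``first-hit at $j$'' mass), and converts the bottom-row states from $j$ onward into rejects (losing at most $\eps/2$ by a small-rejection lemma, since those states have $p\le\eps/2$ by the choice of $k$). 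Both factors of $\E[f]$ in the final bound are genuinely used, one from the first-hit averaging and one from $p((j,1))\ge\eps$ surviving the bottom-row removal; the reject row is chosen by acceptance probability, not by the input string.
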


We first setup some notation. For a vertex $v \in V$ let $p(v)$ denote the probability of reaching \Acc\ starting from $v$ over a uniformly random choice of $x_{i+1},\ldots,x_n$. We call a state $v \in V$ such that $p(v) = 0$ a \Rej\ state. 
We order states in $V_t$ so that 
$$p((t,1)) \geq p((t,2)) \cdots \geq p((t,d)).$$ 
By definition, 
$$p(v) = \frac{1}{2}(p(N_0(v)) + p(N_1(v))).$$
It follows that
$$\E[f] = p((0,0)) \leq p((1,1)) \leq \cdots \leq p((n,1)) = 1,$$
$$\E[f] \geq p((1,d)) \geq p((2,d)) \geq \cdots \geq p((n,d)) = 0$$

Observe that, if $v \in V_j$ is such that $p(v) \leq \mu$, then $p((i,d)) \leq \mu$ for all $i \geq j$. 
\eat{
\begin{Def}
A width $d-1$ branching program with sudden reject 
is a width $d$ branching program such that $p((t,d)) = 0$ for every $t \in \{2,\ldots,n\}$. 
Let $\bpr{d-1}$ denote the set of functions computable by
such programs. 
\end{Def}}

\begin{Lem}
\label{lem:small-reject}
Let $B \in \bp{d,n}$. Let $R$ be a set of states such that $p(v) \leq
\mu \ \forall v\in R$ and let $j$ be the first layer such that $R \cap V_j \neq \emptyset$. Let $B'$ be obtained from $B$ by converting all states in $R$ into \Rej\ states by redirecting the edges out of $v \in R \cap V_i$, $i \geq j$, to $((i+1,d))$. Let $p'(v)$ denote the accepting probabilities of vertices in
$B'$. Then for all $v \in V$, we have $p'(v) \geq p(v) - \mu$.
\end{Lem}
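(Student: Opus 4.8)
The plan is to prove the inequality $p'(v) \ge p(v) - \mu$ by backwards induction on the layer containing $v$: first establish it for all $v \in V_n$, then for $V_{n-1}$, and so on down to $V_0$. The base case $V_n = \{\Acc,\Rej\}$ is immediate, since these states have no outgoing edges and are therefore left untouched when we pass from $B$ to $B'$, so $p'(v) = p(v)$.

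For the inductive step, fix $v \in V_t$ with $t < n$ and assume the bound holds for every state of $V_{t+1}$. Split on whether $v \in R$. If $v \notin R$, then the two out-edges of $v$ are unchanged in $B'$, so using $p'(v) = \frac12\bigl(p'(N_0(v)) + p'(N_1(v))\bigr)$ together with the inductive hypothesis applied to $N_0(v), N_1(v) \in V_{t+1}$ gives
\[
p'(v) \ \ge\ \frac12\bigl((p(N_0(v)) - \mu) + (p(N_1(v)) - \mu)\bigr) \ =\ p(v) - \mu.
\]
If instead $v \in R$, then $t \ge j$ (since $R$ meets no layer before $j$), and by construction both out-edges of $v$ in $B'$ lead to $(t+1,d)$, so $p'(v) = p'((t+1,d)) \ge 0$; and because $v \in R$ we have $p(v) \le \mu$, hence $p(v) - \mu \le 0 \le p'(v)$. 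This closes the induction.

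There is no genuinely hard step here — it is a short induction — so the only thing to be careful about is the case split: the key observation is that when $v \in R$ the desired inequality is vacuous because its right-hand side is nonpositive, which is exactly what lets us ignore the detailed behaviour of $B'$ below layer $j$ (in particular we never need the redirected target $(t+1,d)$ to itself be a reject state of $B'$). An equivalent organization is via coupling: run the walk from $v$ in $B$ and in $B'$ on the same random bits $x_{t+1},\ldots,x_n$; the two walks coincide until the $B$-walk first enters $R$, so $p(v) - p'(v)$ is at most the probability that the $B$-walk visits $R$ and still reaches $\Acc$, which is at most $\mu$ because from any state of $R$ the remaining walk (on fresh bits, since the program is read-once) accepts with probability at most $\mu$. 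I would write up the induction, as it is the cleaner of the two.
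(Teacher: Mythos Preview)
Your proof is correct. The backwards induction on layers is clean: the case $v\notin R$ uses the unchanged transitions and the inductive hypothesis, and the case $v\in R$ is vacuous since $p(v)\le\mu$. There is no gap.

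The paper, however, takes exactly the route you describe as your ``equivalent organization'': it couples the two programs on the same random string, lets $R(x)$ be the event that the walk from $v$ in $B$ ever enters $R$, and bounds $\Pr[R(x)\wedge\Acc(x)]$ by conditioning on the first $R$-vertex visited (each such vertex contributes at most $\mu$ of acceptance probability by the read-once property). Since any accepting run that avoids $R$ is also accepting in $B'$, one gets $p'(v)\ge \Pr[\Acc(x)\wedge\overline{R(x)}]\ge p(v)-\mu$. So your alternative is the paper's main argument, and your main argument is the alternative. The induction is arguably tidier here because it never needs to name the first entry into $R$ or invoke the read-once structure explicitly; the coupling argument, on the other hand, makes the quantitative meaning of the loss $\mu$ transparent (it is precisely an upper bound on the mass of accepting paths that pass through $R$), which is a useful picture to carry into the rest of the reduction.
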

\begin{proof}
If $p(v) \leq \mu$ the claim is trivial, so fix $v$ such that $p(v) >
\mu$. Let $R(x)$ denote the event that we visit a vertex in $R$ if we
follow $x$ from $v$ in $B$ and let $u(x)$ denote the first vertex in $R$ that is visited by this path. Let $\Acc(x)$ denote the event that $B$ accepts. 
We have
\begin{align*}
\Pr_x[R(x) \wedge \Acc(x)] & = \sum_{r \in R}\Pr_x[u(x) = r \wedge \Acc(x)]\\
&= \sum_{r \in R}\Pr_x[u(x) = r] \cdot \Pr_x[\Acc(x)|u(x) =r] \leq \sum_{r \in R}\Pr_x[u(x) = r] \cdot \mu \leq \mu,
\end{align*}
where we  use $\Pr_x[\Acc(x)|u(x) = r] = p(r) \leq \mu$ for all $r \in R$. 
But then 
\begin{align*}
\Pr_x[\Acc(x) \wedge \overline{R(x)}] = \Pr_x[\Acc(x)] -  \Pr_x[\Acc(x)\wedge
  R(x)] \geq p(v) - \mu.
\end{align*}
Finally, note that if we accept $x$ without ever reaching $R$ in $B$, then $x$ is also accepted by
$B'$. Hence $p'(v) \geq p(v) - \mu$.
\end{proof}

\begin{proof}[Proof of \tref{th:bptosd}]
Let $B$ be a branching program computing a function $f$ so that
$\E[f] \geq \eps$. Let $i$ denote the first layer where $p((i,d)) \leq \eps/2$. Note that $i \leq n$ since $p((n,d)) = 0$.
Every state $v$ up to layer $i-1$ satisfies
$p(v) \geq \eps/2$. Further, for every $j \geq i$, $p((i+1,d))\leq \eps/2$. Fix $k = i-2$ and let $v$ be the
state in level $i-1$ reached from \Start\ on the string $0^k$. 
Consider the branching program $B'$ of length $n' = n-k$ where we make
$v$ the new start state and keep the rest of the program
unchanged. The vertex set of $B'$ is $V' = \{v\}\cup_{j=i}^{n+1}V_j$ and 
it computes $f':\zo^{n'}\rightarrow \zo$ such that 
$$\E_{y \in \zo^{n'}}[f'(y)] = p(v) \geq \eps/2.$$

Thus, a random walk starting at $v$ reaches the top level
with probability at least $\eps/2$ (since this is a necessary
condition for $B'$ to accept). For $j \in \{i,\ldots,n-1\}$, let $q(j)$
denote the probability that we reach the top level for the first time
at layer $j$. So
$$\sum_{j=i}^{n+1}q(j) \geq \eps/2.$$
Hence there exists $j$ so that $q(j) \geq \eps/2n$.

We now make the following modifications to $B'$ to get a program $B''$
which is a width $d$ program with sudden death:
\begin{itemize}
\item For $t \in \{i,\ldots,j-1\}$ we convert the states $(t,1)$ into
  \Rej\ states. 
\item For $t \in \{j,\ldots,n+1\}$ we convert the states $(t,d)$ into
  \Rej\ states. 
\end{itemize}
We don't need to add an additional layer for making these modifications since we are turning one state in each layer to a \Rej\ state. 

It is clear that $B''$ computes a function $f''\leq f'$. 
Our goal is to show that $B''$ accepts a large
subset of inputs accepted by $B'$. Indeed, we claim that 
$$\E_{y \in \zo^{n'}}[f''(y)] \geq \frac{\eps^2}{4n}.$$

We observe that the probability that a random walk starting at $v$
reaches the top level for the first time in layer $j$ is the same in
$B''$ as in $B'$, hence it equals $q(j) \geq \eps/2n$. Further, using Lemma \ref{lem:small-reject} (to the sub-program of $B'$ starting at $(j,1)$) we claim
that 
$$p''(j,1) \geq p'(j,1) - \eps/2 \geq \eps/2$$
where we use the fact that $p'(j,1) = p(j,1) \geq p(1,1) \geq \eps$.
Note that the probability that $B''$ accepts is at least $q(j)
p''(j,1) \geq \eps^2/4n$, which comes from strings which
reach state $(j,1)$ and then reach \Acc.

The theorem now follows by setting $g \equiv f''$. By definition, $f'' \in \bpr{d,n-k}$ and 
$$\mathsf{0}^k \circ (f'')^{-1}(1) \subseteq \mathsf{0}^k \circ (f')^{-1}(1) \subseteq f^{-1}(1).$$ 

\eat{Assume that we guess $k$ correctly, which happens with probability
$1/n$. $G$ then simulates $B'$ on the string $G'(y)$.
Since $G'$ is an $(\eps^2/2n,\delta)$ generator for
$\bpr{d,n}$ and $f'' \in \bpr{d,n}$, we have 
$$\Pr_{y \in \zo^s}[f'(G'(y)) =1] \geq \Pr_{y \in \zo^s}[f''(G'(y)) =1]
\geq \delta.$$
Thus
$$\Pr_{k,y \in \zo^s}[f(G(y)) =1] \geq \frac{\delta}{n}.$$}
\end{proof}

\subsection{From $\bpr{3}$ to Intersections of $\bp{2}$}
We now reduce width $3$ programs with sudden death to intersections of width $2$ programs. 

\begin{theorem}\label{th:bpsdtobpint}
 Let $f:\zo^n \rgta \zo$ be in $\bpr{3,n}$. Then, there exists a function $g:\zo^n \rgta \zo$ that is an intersection of functions in $\bp{2,n}$ such that $g \leq f$ and if $p = \E[f]$, then $\E[g] \geq (p/2)^{13}$. 
\end{theorem}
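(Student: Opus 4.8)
The plan is to realise $f$ as an intersection of $O(1)$ width-$2$ ROBPs after passing to a subfunction, losing only a $\mathrm{poly}(p)$ factor. Since $f\in\bpr{3,n}$, the bottom-level states $(t,3)$ form an absorbing reject class, so along any accepting computation the state stays in the ``live'' pair $\{(t,1),(t,2)\}$ for every $t$ and reaches $(n,1)$ at the end. (One can in fact write $f$ exactly as an intersection of $O(n)$ width-$2$ programs --- track the live state with a width-$2$ ROBP and add, for each step, a width-$2$ program checking that no killing transition was taken --- but for the third step of the reduction one needs a bounded intersection, and that is what forces the loss.)

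\emph{Width-$2$ shadows.} I would first build two width-$2$ ROBPs that ``shadow'' the live states. Let $B_1\in\bp{2,n}$ have states $\{\top,\bot\}$ per layer: from $\top$ on bit $b$ move to $\top$ if $N_b((t-1,1))=(t,1)$ and to $\bot$ otherwise; from $\bot$ on bit $b$ move to $\top$ if $N_b((t-1,2))=(t,1)$ and to $\bot$ otherwise; accept iff the final state is $\top$ (so $B_1$ pretends the live state is $(t,2)$ whenever it is not $(t,1)$). Define $B_2$ symmetrically, tracking $(t,2)$ and accepting iff the final state is its ``not-$(t,2)$'' state $\bar M$. A short induction shows that while the width-$3$ path is alive the joint state $(B_1,B_2)$ equals $(\top,\bar M)$ when the path is at $(t,1)$ and $(\bot,M)$ when it is at $(t,2)$, and that the very first transition killing the path moves $(B_1,B_2)$ into the joint state $(\bot,\bar M)$, which is unreachable while the path is alive. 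Consequently $f(x)=1$ forces the path to stay alive, $(B_1,B_2)$ to stay in $\{(\top,\bar M),(\bot,M)\}$, and to end at $(\top,\bar M)$; hence $f\le h_1\wedge h_2$ for $h_1,h_2\in\bp{2,n}$.

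\emph{Removing the false accepts.} The only inputs in $(h_1\wedge h_2)\setminus f^{-1}(1)$ are those where the path dies --- so the shadow visits $(\bot,\bar M)$ --- and then walks back to $(\top,\bar M)$ to be accepted by both $h_1$ and $h_2$. Forbidding ``the shadow ever visits $(\bot,\bar M)$'' is a width-$4$ event, which is the source of the loss; instead I would condition on an $O(1)$-bit signature of this post-death excursion (for example the type, among the constantly many transition types refined by which live state is involved, of the first transition after the death step at which $B_1$ returns to $\top$). Once the signature is fixed, the residual bad event should become an intersection of $O(1)$ further width-$2$ programs built from $B_1$ and $B_2$; passing to the heaviest value of the signature keeps a $\ge p/2$ fraction conditionally, and iterating the conditioning a constant number of rounds yields $g=h_1\wedge h_2\wedge(\text{$O(1)$ more width-$2$ programs})$ with $g\le f$ and $\E[g]\ge (p/2)^{13}$.

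\emph{Main obstacle.} The hard part is the second step: identifying the smallest $O(1)$-size signature of a post-death shadow excursion whose conditioning makes the residual bad event width-$2$-expressible while costing only a bounded power of $p$ and no factor of $n$. I expect the accepting-probability ordering $p((t,1))\ge p((t,2))\ge 0$ together with the martingale identity $p(v)=\tfrac12(p(N_0(v))+p(N_1(v)))$ to be exactly what controls this --- in particular, a transition into a dead state can leave $v$ only when $p(v)\le 1/2$, which should bound how many rounds of conditioning are needed and thereby pin the exponent to a small constant such as $13$.
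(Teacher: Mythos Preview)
Your proposal has a genuine gap, and it also rests on a misreading of what the next step in the reduction needs.

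First the misreading. You write that the third step requires a \emph{bounded} intersection, and that this is why one cannot just use the $O(n)$-program decomposition. That is not the constraint. Theorem~\ref{thm:bp2inttocnfx} assumes the width-$2$ programs are on \emph{disjoint} input variables; the number $m$ of factors never enters the bound $\E[g]\ge \E[f]^{O(1)}$. With overlapping inputs allowed, the statement of Theorem~\ref{th:bpsdtobpint} as written is essentially trivial: your own parenthetical observation gives $f$ exactly as an intersection of $n+1$ width-$2$ programs (the live-state shadow together with, for each layer, a width-$2$ program that re-simulates the shadow on a prefix and checks that the current transition is not a kill), so $g=f$ works with no loss. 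The content of the theorem is producing width-$2$ factors on disjoint inputs, and your shadows $h_1,h_2$ (each reading all $n$ bits) do not do that, no matter how you prune false accepts.

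Second, the gap. Even setting aside the disjointness issue, your ``condition on an $O(1)$-bit signature of the post-death excursion'' step is not an argument; you yourself flag it as the main obstacle and support it only with ``should'' and ``I expect.'' There is no reason given that a constant-size signature pins down the bad event, and the martingale identity you cite does not by itself bound the number of rounds of conditioning independently of $n$.

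The paper's route is entirely different and directly yields disjoint-input factors. Let $\Bad$ be the set of non-reject states having an out-edge to a reject state. A simple stopping-time argument gives $\Pr_x[\Bad(x)\ge t]\le 2^{-t+1}$, hence $\E_{x\sim f^{-1}(1)}[\Bad(x)]\le 2\log(2/p)$. Partition $\Bad$ by the conditional hitting probability $q(v)=\Pr_{x\sim f^{-1}(1)}[v\in\Path(x)]$ into $\Bad^s=\{q(v)<1/4\}$ and $\Bad^l=\{q(v)\ge 1/4\}$. Since only the second state of a layer can lie in $\Bad^s$, a telescoping product bound gives $\Pr_{x\sim f^{-1}(1)}[\Path(x)\cap\Bad^s=\emptyset]\ge (p/2)^4$, so converting $\Bad^s$ to reject loses a factor $(p/2)^5$. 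And $|\Bad^l|\le 8\log(2/p)$, so hardwiring the input bits read at $\Bad^l$ vertices (to the values avoiding reject) loses another factor $(p/2)^8$. After these two modifications there are no live-to-reject transitions left except at the layers of $\Bad^s$; the program therefore splits at those layers into width-$2$ segments, each reading a disjoint block of the input, and their conjunction (together with the hardwired bits) is the desired $g$ with $\E[g]\ge (p/2)^{13}$.
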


Throughout this section, we are given $B \in \bpr{d,n}$ computing $f:
\zo^n \rightarrow \zo$. Let $\Bad$ denote the set of non-reject states that have
an out-edge leading to a $\Rej$ state (which are all states such that $p(v)
=0$). Further for each $x \in \zo^n$, let $\Bad(x)$ denote the number
of $\Bad$ states visited by $x$.

\begin{Lem}
We have
$$\Pr_{x \sim \zo^n}[\Bad(x) \geq t] \leq 2^{-t+1}.$$
\end{Lem}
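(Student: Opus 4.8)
The plan is to show that the path of a uniformly random $x$ is absorbed into a $\Rej$ state essentially as soon as it starts touching $\Bad$ states: every visit to a $\Bad$ state kills the walk with probability exactly $1/2$, independently of the past, so $\Bad(x)$ is stochastically dominated by a geometric random variable and the bound is just $\sum_{j\ge t}2^{-j}$-type tail.

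First I would isolate two structural facts about sudden-death programs that make this work. Since $p(v) = \tfrac12\bigl(p(N_0(v)) + p(N_1(v))\bigr)$ with all accepting probabilities non-negative, the set of $\Rej$ states (those $v$ with $p(v)=0$) is closed under both out-edges: $p(v)=0$ forces $p(N_0(v)) = p(N_1(v)) = 0$. Hence once $\Path(x)$ enters a $\Rej$ state it stays among $\Rej$ states forever, and in particular visits no further $\Bad$ states, since a $\Bad$ state is by definition non-reject. Moreover each $\Bad$ state $v$ has \emph{exactly} one out-edge into $\Rej$: at least one by definition, and not both, since two reject successors would force $p(v)=0$, contradicting $v$ being non-reject.

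Next I would make the ``each visit halves the survival probability'' statement precise by conditioning along the revealed-bit filtration. Reveal $x_1, x_2, \dots$ in the order the path reads them. For $j \ge 1$, on the event $\{\Bad(x)\ge j\}$ let the path reach its $j$-th $\Bad$ state, a state $v$ that is determined by the bits read so far; the very next transition out of $v$ is taken using a not-yet-revealed bit, which (by read-onceness) is uniform and independent of that prefix. With probability $1/2$ that bit sends the path into a $\Rej$ state, after which, by the first paragraph, no more $\Bad$ states are ever visited, so $\Bad(x)=j$. Therefore $\Pr[\Bad(x)\ge j+1 \mid \Bad(x)\ge j]\le \tfrac12$ for every $j\ge 1$. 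Telescoping, $\Pr[\Bad(x)\ge t] = \Pr[\Bad(x)\ge 1]\cdot\prod_{j=1}^{t-1}\Pr[\Bad(x)\ge j+1\mid \Bad(x)\ge j]\le 1\cdot(\tfrac12)^{t-1}=2^{-t+1}$.

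The one genuinely delicate point — the ``hard part,'' such as it is — is the conditioning in the third step: one must condition on the entire history up to the \emph{random} step at which the $j$-th $\Bad$ state is reached (a stopping time for the revealed-bit filtration), not on a fixed coordinate, so that the bit leaving that state is genuinely fresh; this is exactly where read-onceness is used. A clean way to phrase it is to let $K$ be the (possibly infinite) index of the step on which the path leaves its $j$-th $\Bad$ state and observe that $K$ is a stopping time, that $\{\Bad(x)\ge j+1\}\subseteq\{K<\infty\}\cap\{x_K \text{ does not lead to }\Rej\}$, and that $x_K$ conditioned on the pre-$K$ history is uniform, giving conditional probability $\le\tfrac12$. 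Everything else is routine bookkeeping.
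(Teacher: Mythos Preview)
Your proof is correct and follows essentially the same approach as the paper: both arguments pivot on the observation that every visit to a $\Bad$ state sends the walk into an absorbing $\Rej$ region with probability $1/2$ on a fresh bit, and then telescope $\Pr[\Bad(x)\ge t]\le\tfrac12\Pr[\Bad(x)\ge t-1]$. Your stopping-time framing is arguably a bit more careful about the conditioning on the random index than the paper's version, and your observation that a $\Bad$ state has \emph{exactly} one reject-edge (not just at least one) is a small sharpening the paper does not state, but neither difference changes the substance.
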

\begin{proof}
Suppose that $t \geq 1$. For $i \in [n]$, let $Y_i$ denote the number of vertices in $\Bad$ visited by $\Path(x)$ in the first $i$ layers. Then, $Y_n = \Bad(x)$. We claim that,
\begin{equation}
  \label{eq:badprob}
 \pr[ Y_i = Y_{i+1} = Y_{i+2} = \cdots = Y_n \,|\,Y_i, \Path_i(x) \in \Bad] \geq 1/2.  
\end{equation}

This is because, if $\Path_i(x) \in \Bad$, then with probability at least $1/2$, $\Path_{i+1}(x)$ is a \Rej\ state, in which case $\Path_j(x)$ is a \Rej\ state for every $j \geq i+1$.

Further, if $Y_n \geq t$, then there must be an index $i < n$, where $Y_i \geq t-1$, $\Path_i(x) \in \Bad$ and $Y_n > Y_i$ (for instance $i$ can be the least $j$ such that $Y_j = t-1$). Therefore,
\begin{align*}
  \pr[Y_n \geq t] &= \pr[\,(\exists i < n,\, Y_i \geq t - 1,\, \Path_i(x) \in \Bad)\, \wedge \, (Y_n > Y_i)]\\
&= \pr[\, (\exists i < n, \, Y_i \geq t-1,\, \Path_i(x) \in \Bad)\,] \cdot \pr[ Y_n > Y_i\, | \,Y_i, \Path_i(x) \in \Bad]\\
&\leq \frac{1}{2} \cdot \pr[\, (\exists i < n, \, Y_i \geq t-1,\, \Path_i(x) \in \Bad)\,] \leq \frac{1}{2} \cdot \pr[Y_n \geq t-1],
\end{align*}
where the last two inequalities follow from \eref{eq:badprob} and the fact that $Y_i$'s are non-decreasing.
The claim now follows by induction.
\eat{
Let $v_1,\ldots,v_t$ denote the first $t$ $\Bad$ states visited by
$x$, occurring in layers $n_1 < \cdots < n_t$. Note that $v_i$ and
$n_i$ are themselves random variables which depend on
$x_1,\ldots,x_{n_i -1}$ but not on $x_{n_i}$ which is yet to be read.
We have
$$\Pr_{x \in \zo^n}[\Bad(x) \geq t] = \Pr_{x \in \zo^n}[x \text{
    visits }v_1,\ldots,v_t ] = \prod_{i=1}^{t}\Pr_{x \in
  \zo^n}[x \text{ visits }v_i| x \text{ visits } v_1,\ldots,v_{i-1}].$$

Note that $x_{n_{i-1}}$ is independent of $v_1,\ldots,v_{i-1}$. There
is (at least) one setting of $x_{n_{i-1}}$ which causes $x$ to reach
$\Rej$ from $v_{i-1}$ without visiting $v_i$. Hence for $i \geq 2$,
$$\Pr_{x \in \zo^n}[x \text{ visits }v_i| x \text{ visits }
  v_1,\ldots,v_{i-1}] \leq \frac{1}{2}$$
which gives the desired claim.}
\end{proof}

\begin{Cor}
\label{cor:exp-bad}
Let $\Pr_{x \sim \zo^n}[f(x) = 1] = p$. Then
$$\E_{x \sim f^{-1}(1)}[\Bad(x)] = \E_{x \sim \zo^n}[\Bad(x)|f(x) =1] \leq 2\log(2/p).$$
\end{Cor}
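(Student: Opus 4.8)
The plan is to express the conditional expectation as a sum of tail probabilities and combine the bound $\Pr[\Bad(x) \geq t] \leq 2^{-t+1}$ from the preceding lemma with the trivial bound that every probability is at most $1$. Writing $p = \Pr_{x\sim\zo^n}[f(x)=1]$, since $\Bad(x)$ is a non-negative integer I would start from
\[ \E_{x\sim\zo^n}[\Bad(x)\mid f(x)=1] = \sum_{t\geq 1}\Pr_{x\sim\zo^n}[\Bad(x)\geq t \mid f(x)=1]. \]
For each $t$ I would bound $\Pr[\Bad(x)\geq t \wedge f(x)=1]$ by $p$ on the one hand (trivially) and by $2^{-t+1}$ on the other (the preceding lemma, since $\{\Bad(x)\geq t \wedge f(x)=1\}\subseteq\{\Bad(x)\geq t\}$), so that dividing by $p$ gives $\Pr[\Bad(x)\geq t \mid f(x)=1] \leq \min(1, 2^{-t+1}/p)$.

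Next I would split the sum at the threshold $t_0 = \lceil \log_2(2/p)\rceil$, which is chosen so that $2^{-t_0+1}\leq p$: the first $t_0$ terms contribute at most $t_0 \leq \log_2(2/p) + 1$, and the tail $\sum_{t > t_0} 2^{-t+1}/p$ is a geometric series that sums to $2^{-t_0+1}/p \leq 1$. This already yields $\E_{x\sim\zo^n}[\Bad(x)\mid f(x)=1] \leq \log_2(2/p) + 2$.

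Finally I would convert this into the stated form $2\log(2/p)$. When $p \leq 1/2$ we have $\log_2(2/p)\geq 2$, hence $\log_2(2/p) + 2 \leq 2\log_2(2/p)$ and we are done. When $p > 1/2$ I would instead use the cruder estimate $\E_{x\sim\zo^n}[\Bad(x)] = \sum_{t\geq 1}\Pr[\Bad(x)\geq t] \leq \sum_{t\geq 1}2^{-t+1} = 2$, so $\E_{x\sim\zo^n}[\Bad(x)\mid f(x)=1] \leq \E_{x\sim\zo^n}[\Bad(x)]/p \leq 2/p$, and check $2/p \leq 2\log_2(2/p)$ on $p\in(1/2,1]$; writing $q = 1/p\in[1,2]$ this is the inequality $q-1\leq\log_2 q$, which holds by concavity of $\log_2$ (with equality at $q=1$ and $q=2$). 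Either way the claimed bound follows.

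The proof is essentially routine; the only point requiring a bit of care is that the clean geometric-series estimate produces an additive slack (a bound of $\log_2(2/p)+2$) rather than the multiplicative factor $2$ appearing in the statement, so one must observe that the regime $p\leq 1/2$ already gives what is needed, and separately patch the range $p>1/2$ with the trivial moment bound $\E[\Bad]\leq 2$. I do not anticipate any genuine obstacle beyond this bookkeeping.
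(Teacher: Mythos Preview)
Your proof is correct and follows essentially the same approach as the paper: bound the conditional tail probability by $\min(1,2^{-t+1}/p)$ and split the expectation at a threshold around $\log_2(2/p)$. The only cosmetic difference is that you use the tail-sum identity $\E[X]=\sum_{t\geq 1}\Pr[X\geq t]$, whereas the paper writes $\E[X]=\sum_t t\,\Pr[X=t]$ and bounds $\Pr[X=t]\leq\Pr[X\geq t]$; your version is slightly cleaner and avoids a minor arithmetic slip present in the paper's computation. Your separate handling of the regime $p>1/2$ via $\E[\Bad]\leq 2$ and the concavity check $q-1\leq\log_2 q$ on $[1,2]$ is also more careful than the paper, which glosses over this case.
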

\begin{proof}
We have
$$\Pr_{x}[\Bad(x) \geq t|f(x) = 1] = \frac{\Pr_{x}[(\Bad(x) \geq t)
    \text{ and }(f(x) =1)]}{\Pr_x[f(x) =1 ]} \leq \frac{1}{2^{t-1}p}.$$
Let $t^* = \log(2/p)$. We then bound
\begin{align*}
\E_{x}[\Bad(x)|f(x) =1] &= \sum_{t \geq 0} t \cdot \Pr_{x}[\Bad(x) = t|f(x)=1] \\
&\leq t^* + \sum_{t > t^*} t \cdot \pr_x[\Bad(x) = t | f(x) = 1] \\
&\leq t^* + \sum_{t > t^* } \frac{t}{2^{t-1}p}\\
&= t^* + \frac{2}{p} \cdot \left(\frac{(t^* + 1)}{2^{t^*}} + \frac{1}{2^{t^*}}\right) = 2\log(1/p) + 2 \leq 2\log(2/p).
\end{align*}
\end{proof}

The rest of our argument is specific to $d=3$.
We restrict our attention to the accepting strings $x \in f^{-1}(1)$. For each
vertex $v \in V$ let $q(v) = \Pr_{x \sim f^{-1}(1)}[v \in
  \Path(x)]$. Each layer $t$ has three states $(t,1), (t,2)$ and
$(t,3) \in \Rej$. We assume that $q((t,1)) \geq q((t,2)) \geq q(t,3)
=0$ (since accepting strings never visit a $\Rej$ state).
We first bound the probability mass on states in the set $\Bad$.

\begin{Lem}
\label{lem:double-count}
We have
$$\sum_{v \in \Bad}q(v) = \E_{x \sim f^{-1}(1)}[\Bad(x)].$$
\end{Lem}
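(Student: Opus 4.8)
The plan is to prove this by a straightforward double-counting (linearity of expectation) argument, exploiting the fact that the branching program is layered and read-once, so that every input $x$ induces a path $\Path(x)$ that visits at most one vertex in each layer. The key observation is simply that $\Bad(x)$, the number of $\Bad$ states on the path of $x$, can be rewritten as a sum of indicator variables over the vertices of $B$.

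Concretely, I would first write
\[
\Bad(x) \;=\; \sum_{v \in \Bad} \ind{v \in \Path(x)},
\]
which holds because $\Path(x)$ is a set of vertices (one per layer) and $\Bad(x)$ counts how many of them lie in $\Bad$; there is no issue of multiplicity since a path visits each vertex at most once. Then I would take the expectation of both sides over $x \sim f^{-1}(1)$ (the uniform distribution on accepting inputs) and apply linearity of expectation:
\[
\E_{x \sim f^{-1}(1)}[\Bad(x)] \;=\; \sum_{v \in \Bad} \E_{x \sim f^{-1}(1)}\!\left[\ind{v \in \Path(x)}\right] \;=\; \sum_{v \in \Bad} \Pr_{x \sim f^{-1}(1)}[v \in \Path(x)] \;=\; \sum_{v \in \Bad} q(v),
\]
where the last equality is just the definition of $q(v)$. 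This completes the proof.

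There is essentially no obstacle here: the lemma is a pure identity and the only thing to be careful about is that the decomposition of $\Bad(x)$ into indicators is exact (no overcounting), which is guaranteed by the layered read-once structure. I would keep the write-up to these two displays plus a sentence of justification for the indicator decomposition.
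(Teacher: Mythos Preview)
Your proposal is correct and is essentially identical to the paper's own proof, which also just expands $q(v)$ as $\Pr_{x\sim f^{-1}(1)}[x\text{ visits }v]$ and invokes linearity of expectation. Your extra sentence about no overcounting (because the path visits each layer once) is a fine clarification but not strictly needed.
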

\begin{proof}
We have
\begin{align*}
\sum_{v \in \Bad}q(v) = \sum_{v \in \Bad}\Pr_{x \sim f^{-1}(1)}[x
  \text{ visits } v] 
= \E_{x \sim f^{-1}(1)}[\Bad(x)],
\end{align*}
by linearity of expectations.
\end{proof}

We partition the set $\Bad$ based on the value of $q(v)$:
$$\Bad^s = \left\{v \in \Bad: q(v) < \frac{1}{4}\right\}, 
\Bad^l = \left\{v \in \Bad: q(v) \geq \frac{1}{4}\right\}.$$
By Lemma \ref{lem:double-count} and Corollary \ref{cor:exp-bad} it
follows that $|\Bad^\ell| \leq 8\log(2/p)$. 

\begin{Lem}
We have
$$\Pr_{x \sim f^{-1}(1)}[\Path(x) \cap \Bad^s = \emptyset] \geq (p/2)^4.$$
\end{Lem}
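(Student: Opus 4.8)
The plan is to recast the statement in terms of a modified branching program and then to bound the loss of acceptance probability caused by ``killing'' the states in $\Bad^s$. First note that since each layer $t$ has only the three states $(t,1),(t,2),(t,3)$, with $(t,3)$ a $\Rej$ state never visited by accepting strings, we have $q((t,1))+q((t,2))=1$, hence $q((t,1))\ge 1/2$. Therefore $\Bad^s$ contains no top-level state: it has at most one state per layer, always a level-$2$ state, and whenever $(t,2)\in\Bad^s$ the corresponding top-level state satisfies $q((t,1))>3/4$. Let $B'$ be obtained from $B$ by turning every state of $\Bad^s$ into a $\Rej$ state (redirecting both of its out-edges into the bottom-level reject chain). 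Then $B'$ is again a width-$3$ program with sudden death, and for every $x$ we have $f'(x)=f(x)$ if $\Path_B(x)\cap\Bad^s=\emptyset$ and $f'(x)=0$ otherwise, so that
\[
\Pr_{x\sim f^{-1}(1)}[\Path(x)\cap\Bad^s=\emptyset]=\frac{\E[f']}{\E[f]}=\frac{\E[f']}{p},
\]
and it suffices to prove $\E[f']\ge (p/2)^4\,p$.

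To bound $\E[f']$ from below I would remove the states of $\Bad^s$ one at a time, from the highest layer to the lowest, obtaining $B=B^{(0)},B^{(1)},\dots,B^{(r)}=B'$ where $w_i=(t_i,2)$ is the state removed at step $i$ and $t_1>t_2>\cdots>t_r$. Since each removal only alters edges emanating from its own layer (and lower states are removed later), the set of inputs first reaching $w_i$ is the same in $B^{(i-1)}$ as in $B$, and $w_i$ is still a $\Bad$ state of $B^{(i-1)}$; one then gets $\E[f^{(i)}]=\E[f^{(i-1)}]\,(1-\rho_i)$ with $\rho_i=\Pr_{x\sim (f^{(i-1)})^{-1}(1)}[\Path(x)\ni w_i]$. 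The goal is to show $\prod_{i=1}^{r}(1-\rho_i)\ge (p/2)^4$. Using the elementary inequality $1-\rho\ge 4^{-\rho}$ valid for $\rho\in[0,1/4]$, this reduces to establishing that $\rho_i\le 1/4$ for every $i$ together with $\sum_{i}\rho_i\le 2\log_2(2/p)$.

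The technical heart — and the step I expect to be the main obstacle — is precisely this control of the $\rho_i$. The bound $q_B(w_i)<1/4$ supplied by the definition of $\Bad^s$ is inherited by $B^{(i-1)}$ only up to the factor $p/\E[f^{(i-1)}]$, so a naive peeling degrades once $|\Bad^s|$ is large. I expect the right way around this to use the width-$3$ structure more directly: at a level-$2$ $\Bad$ state the accepting continuation is forced (the other out-edge rejects), so the accepting-conditioned walk's excursions below the top level are short and each must be ``charged'' to a distinct reject edge; since an accepting string visits only $O(\log(1/p))$ $\Bad$ states in expectation (Corollary~\ref{cor:exp-bad}), only $O(\log(1/p))$ of the removals can carry non-negligible weight, which is what pins down $\sum_i\rho_i$. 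Making this charging argument precise — most likely by reasoning directly about the \emph{first} state of $\Bad^s$ met by the accepting path and by splitting the program at the (few) states of $\Bad^\ell$ — is the part that requires real work; the surrounding reduction above is routine.
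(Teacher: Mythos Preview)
Your reduction to bounding $\prod_i(1-\rho_i)$ is fine, but the proposal stops exactly where the argument begins: you never establish $\rho_i\le 1/4$ or $\sum_i\rho_i=O(\log(1/p))$, and the ``charging argument'' you sketch is not carried out. Worse, your choice to peel from the \emph{highest} layer down is what creates the obstacle you flag. After removing $w_1,\dots,w_{i-1}$ (all in layers above $t_i$), the set $(f^{(i-1)})^{-1}(1)$ is $f^{-1}(1)$ conditioned on visiting $(t_1,1),\dots,(t_{i-1},1)$, and these are \emph{future} events relative to $w_i$; the Markov structure gives you no handle on $\rho_i$ in that direction, which is why you are forced to contemplate an ad hoc charging scheme.

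The paper avoids all of this by processing the layers in \emph{increasing} order. Write $\Bad^s=\{(t_1,2),\dots,(t_w,2)\}$ with $t_1<\cdots<t_w$ and factor directly:
\[
\Pr_{x\sim f^{-1}(1)}[\Path(x)\cap\Bad^s=\emptyset]=\prod_{i=1}^{w}\Pr\bigl[(t_i,2)\notin\Path\ \big|\ (t_1,2),\dots,(t_{i-1},2)\notin\Path\bigr].
\]
On accepting paths, avoiding $(t_j,2)$ is the same as visiting $(t_j,1)$, and since the accepting-conditioned walk is still Markov, conditioning on $(t_1,1),\dots,(t_{i-1},1)$ collapses to conditioning on the single \emph{past} state $(t_{i-1},1)$. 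Then the trivial bound $\Pr[A\mid B]\le \Pr[A]/\Pr[B]$ gives
\[
\Pr\bigl[(t_i,2)\in\Path\ \big|\ (t_{i-1},1)\in\Path\bigr]\le \frac{q((t_i,2))}{q((t_{i-1},1))}\le \tfrac{4}{3}\,q((t_i,2))<\tfrac13,
\]
so each factor is at least $1-\tfrac{4}{3}q((t_i,2))\ge e^{-2q((t_i,2))}$, and the product is at least $e^{-2\sum_{v\in\Bad}q(v)}\ge (p/2)^4$ by Corollary~\ref{cor:exp-bad} and Lemma~\ref{lem:double-count}. No separate control of $|\Bad^s|$ or any charging is needed. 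If you simply reverse your peeling order, your $\rho_i$ become exactly these conditional probabilities and your argument goes through; as written, it does not.
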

\begin{proof}
Since for all $t$, $q((t,1)) \geq 1/2$ we have $(t,1) \not\in \Bad^s$.
Sort the vertices in $\Bad^s$ according to layer, so that $\Bad^s
=\{(t_1,2),\ldots,(t_w,2)\}$.  We have
$$\Pr_{x \sim f^{-1}(1)}[\Path(x) \cap \Bad^s = \emptyset] =
\prod_{i=1}^w\Pr_{x \sim f^{-1}(1)}[(t_i,2) \not\in
  \Path(x)|(t_1,2),\ldots,(t_{i-1},2) \not\in \Path(x)].$$  
Note that if  $(t_{i-1},2) \not\in \Path(x)$ then $(t_{i-1},1) \in
\Path(x)$. Hence conditioning on not visiting $(t_1,2),
\ldots,(t_{i-1},2)$ is the same as conditioning on visiting
$(t_1,1),\ldots,(t_{i-1},1)$. Further, conditioning on visiting $(t_1,1),\ldots,(t_{i-1},1)$ is the same as conditioning on $(t_{i-1},1)$. Therefore, 
\begin{align*}
\Pr_{x \sim f^{-1}(1)}[(t_i,2) \in \Path(x) |(t_1,1),\ldots,(t_{i-1},1) \in
  \Path(x)] & = \Pr_{x \sim f^{-1}(1)}[(t_i,2) \in \Path(x)
  |(t_{i-1},1) \in \Path(x)]\\
& \leq \frac{\Pr_{x \sim f^{-1}(1)}[(t_i,2) \in \Path(x)]}{\Pr_{x \sim f^{-1}(1)}[(t_{i-1},1) \in
    \Path(x)]}\\
&\leq \frac{q(t_i,2)}{q(t_{i-1},1)} \leq \frac{4}{3} \cdot q((t_i,2)),
\end{align*}
because $q(t_{i-1},1) = 1 - q(t_{i-1},2) \geq 3/4$. 
Hence we have
\begin{align*}
\Pr_{x \sim f^{-1}(1)}[\Path(x) \cap \Bad^s = \emptyset] 
& = \prod_{i=1}^w\Pr[(t_i,2) \not\in \Path(x)|(t_{i-1},1) \in
  \Path(x)]\\
& = \prod_{i=1}^w(1 - \frac{4q((t_i,2))}{3})  \geq e^{-2(\sum_{i=1}^w q((t_i,2)))} \geq (p/2)^{4}
\end{align*}
where we used the fact that for $z \leq 1/4$, $(1-4z/3) \geq e^{-2z}$ and $\sum_{v \in \Bad^s} q(v) \leq 2\log(2/p)$. 
\end{proof}

We are now ready to prove \tref{th:bpsdtobpint}.
\begin{proof}[Proof of \tref{th:bpsdtobpint}]
Observe that by the above claim, we can replace vertices in $\Bad^s$ by $\Rej$ vertices, and get a
new program $B'$ such that $B' \leq B$ and $\E[B'] \geq p \cdot (p/2)^4 \geq (p/2)^5$. Lastly, we handle the vertices in $\Bad^l$, which currently have transitions to \Rej. Assume that these vertices are $v_1,\ldots,v_j$ and that they read variables $x_{i_1},\ldots,x_{i_j}$. There exists a fixing $a_{i_1},\ldots,a_{i_j}$ of these variables such that the probability of acceptance of $B'$ over the remaining variables is at least $(p/2)^5$. Let $B'(a)$ denote the program obtained by hardwiring these values in $B'$. Now consider the program $B'' = B'(a) \wedge (x_{i_1} = a_{i_1})\wedge \cdots\wedge (x_{i_j} =a_{i_j})$, then $B'' \leq B'$ and 
$$\E[B''] \geq (p/2)^5 \cdot \frac{1}{2^{|\Bad^l|}} \geq (p/2)^{13},$$
since $|\Bad^l| \leq 8\log(2/p)$.

We only need to argue that $B'(a)$ and hence $B''$ is an intersection of width $2$ branching programs. Note that $B'(a)$ is a width $2$ program but with \Rej\ states for every vertex in $\Bad^s=\{(t_1,2),\ldots,(t_w,2)\}$. But we can view $B'(a)$ as an intersection of branching programs $B'_i$  for $i \in \{1,\ldots, w-1\}$, where $B'_i$ has start state $(t_i,1)$ and accept state $(t_{i+1},1)$. This completes the proof of the claim.
\end{proof}

\subsection{Reducing intersections of $\bp{2}$ to $\cnfx$}
We now perform the final step in our sequence of reductions to prove \tref{th:bpmain}.

\begin{Thm}\label{thm:bp2inttocnfx}
Let $f:\zo^n \rgta \zo$ be an intersection of width $2$ BPs on disjoint sets of inputs, i.e., $f = f_1 \wedge f_2 \wedge \cdots \wedge f_m$, where each $f_i \in \bp{2,n}$. Then, there exists a $\cnfx$ $g:\zo^n \rgta \zo$ such that, $g \leq f$ and $\E[g] \geq \E[f]^{O(1)}$. 
\end{Thm}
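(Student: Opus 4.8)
Since the $f_i$ depend on disjoint blocks of variables, it suffices to solve the problem for one $f\in\bp{2,n}$: if for each $i$ I find $g_i\in\cnfx$ with $g_i\le f_i$ and $\E[g_i]\ge \E[f_i]^{C}$ for an absolute constant $C$, then $g:=\bigwedge_i g_i$ is again a $\cnfx$ (each $g_i$ is a conjunction of clauses living on $f_i$'s variables, and these blocks are disjoint), $g\le f$, and by independence $\E[g]=\prod_i\E[g_i]\ge\prod_i\E[f_i]^{C}=\E[f]^{C}$. So I would fix $f\in\bp{2,n}$, write $p=\E[f]$, and aim for $g\le f$, $g\in\cnfx$, $\E[g]\ge p^{C}$.

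\textbf{The structural backbone.} The plan rests on describing a width-$2$ ROBP as a ``read-once decision list over parities''. Labelling the two states of each layer by $\{0,1\}$ and computing over $\mathbb{F}_2$, each layer is either a \emph{permutation layer}, updating the state by $s\mapsto s\oplus c_t(x_t)$ for an affine $c_t$, or a \emph{reset layer}, where one value $v_t$ of $x_t$ forces the state to a fixed bit and the other value acts as a permutation. (A ``hard-reset'' layer, where the new state depends on neither the old state nor, up to relabelling, on $x_t$ as a permutation, either makes $f$ constant or lets one discard everything before it; after absorbing such layers I may assume none occur.) Let $t_1<\cdots<t_k$ be the reset layers with resetting values $v_1,\dots,v_k$, and for $x\in\zo^n$ let $j^*(x)$ be the largest $j$ with $x_{t_j}=v_j$ (and $j^*=0$ if there is none). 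Tracing the computation from the last reset that fires gives $\mathbb{F}_2$-affine functions $P_0,\dots,P_k$, with $P_j$ depending only on permutation-layer variables of index $>t_j$ (hence on variables disjoint from $\{x_{t_1},\dots,x_{t_k}\}$), such that $f(x)=P_{j^*(x)}(x)$ for all $x$; in particular each $P_j$ is $\equiv 0$, $\equiv 1$, or a genuine balanced parity. I would also record that $\Pr[j^*=j]=2^{-(k-j+1)}$ for $j\ge 1$, $\Pr[j^*=0]=2^{-k}$, and $\Pr[j^*\le j]=2^{-(k-j)}$.

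\textbf{Two candidate subfunctions.} Assume $f\not\equiv 0,1$ (those cases are immediate) and set $j_0=\max\{j:P_j\not\equiv 0\}$ and $\bar j_0=\max\{j:P_j\not\equiv 1\}$. \textbf{Candidate A} (for $p$ bounded away from $1$): $g_A:=\mathbf{1}[j^*(x)=j_0]\wedge P_{j_0}(x)$, i.e.\ the single literals ``$x_{t_{j_0}}=v_{j_0}$'' and ``$x_{t_j}\neq v_j$'' ($j_0<j\le k$) conjoined with the parity clause $P_{j_0}$ — a $\cnfx$ on disjoint variables. Then $g_A\le f$, since $g_A(x)=1$ forces $j^*(x)=j_0$ and $f(x)=P_{j_0}(x)=1$. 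As $P_j\equiv 0$ for $j>j_0$, we get $p\le\Pr[j^*\le j_0]=2^{-(k-j_0)}$, whereas $\E[g_A]=\Pr[j^*=j_0]\cdot\Pr[P_{j_0}=1]\ge\tfrac14\,2^{-(k-j_0)}$ (using $\Pr[P_{j_0}=1]\ge\tfrac12$), so $\E[g_A]\ge p/4$. \textbf{Candidate B} (for $p$ near $1$): $g_B:=\bigvee_{j>\bar j_0}(x_{t_j}=v_j)$, a single disjunction clause. Then $g_B\le f$, since $g_B(x)=1$ forces $j^*(x)>\bar j_0$, whence $f(x)=P_{j^*(x)}(x)\equiv 1$. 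Dually, $1-p\ge\Pr[j^*=\bar j_0]\cdot\Pr[P_{\bar j_0}=0]\ge\tfrac14\,2^{-(k-\bar j_0)}$, so $\E[g_B]=1-2^{-(k-\bar j_0)}\ge 1-4(1-p)$; and $g_B$ is a nonempty clause whenever $p>3/4$, because if $P_k\not\equiv 1$ then conditioning on $x_{t_k}=v_k$ makes $f=P_k$ with $\Pr[P_k=1]\le\tfrac12$, forcing $p\le\tfrac12\cdot\tfrac12+\tfrac12=\tfrac34$.

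\textbf{Combining.} I would fix a threshold $p_0\in(3/4,1)$ (e.g.\ $p_0=2^{-2/99}$) and take $g=g_A$ when $p\le p_0$ and $g=g_B$ (nonempty, since $p_0>3/4$) when $p>p_0$; a short calculation then shows $C=100$ works: for $p\le p_0$, $\E[g_A]\ge p/4\ge p^{100}$ since $p^{99}\le 1/4$; for $p>p_0$, writing $q=1-p$, one has $(1-q)^{100}\le 1-4q$ (as $q\le 1-p_0$ is small enough), so $\E[g_B]\ge 1-4q\ge p^{100}$. This gives the claim for a single width-$2$ ROBP, and hence the theorem. The main obstacle is the structural lemma — extracting the decision-list-over-parities form and disposing of the hard-reset/constant degeneracies cleanly; once that is available, both subfunctions and their analyses are routine. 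The only other point that needs care is that the two regimes for Candidates A and B genuinely overlap, which is precisely what lets a single large constant $C$ suffice with no recursion.
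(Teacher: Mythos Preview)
Your proof is correct and rests on the same two structural ingredients as the paper's. Both arguments use the Saks--Zuckerman decision-list characterization of width-$2$ ROBPs (which you essentially re-derive via the reset/permutation layer analysis), and for each $f_i$ both produce the same two candidate subfunctions: an $\mathrm{AND}$-of-literals conjoined with one $\mathrm{XOR}$ clause (your Candidate~A; the paper's lemma giving $\E[h]\ge \E[f_i]/3$), and a single $\mathrm{OR}$ clause capturing the ``all-$1$'' tail of the decision list (your Candidate~B; the paper's lemma giving $\E[g]\ge\E[f_i]^9$ when $\E[f_i]\ge 5/6$). Where you diverge is in how these are combined across $i$. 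The paper keeps the threshold at $5/6$, so from its Candidate~A it only has the multiplicative bound $\E[h_i]\ge\E[f_i]/3$ and cannot fully decouple the $f_i$'s; instead it observes that at most $\log_{6/5}(1/p)$ of the $f_i$ satisfy $\E[f_i]<5/6$, pays the factor $1/3$ only that many times, and arrives at $\E[g]\ge p^{14}$. You instead push the threshold up to $p_0=2^{-2/99}\approx 0.986$, which converts the multiplicative bound $\E[g_A]\ge p_i/4$ into a uniform power bound $\E[g_A]\ge p_i^{100}$ on $(0,p_0]$, matched by Candidate~B on $(p_0,1]$; the product across $i$ is then immediate. Your route is conceptually a bit cleaner (no global counting of low-acceptance factors); the paper's gives a smaller exponent.
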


We use the following characterization of width $2$ branching programs
as decision lists due to Saks and Zuckerman \cite{SaksZ} and Bshouty,
Tamon and Wilson \cite{BshoutyTW98}.  For a set $S
\subseteq [n]$, let $\Ands{S}$ denote all functions of the form
$\wedge_jy_j$ where $j \in S$ and $y_j \in \{x_j, \bar{x_j}\}$. We
define $\Ors{S}$ and $\Xors{S}$ similarly. Note that all these classes
contain the constant functions.

\begin{Thm}[\cite{SaksZ,BshoutyTW98}]
\label{thm:sz}
Let $f \in \bp{2}$ be computed by a read-once, width $2$ branching program that reads
variables $x_S$ for $S \subseteq [n]$.  Then $f$ is computable by a
decision list $\mc{L}_f$ of the following form.
\begin{itemize}
\item $\mc{L}_f$ reads variables $x_V$ for some $V \subset S$ of size $k$.
\item There are $k+1$ leaves denoted $L_1,\ldots L_{k+1}$, where $L_j$ is labeled by a function $\ell_j \in   \Xors{S\setminus V}$\footnote{A decision list is a decision tree where the left child of every node is a leaf labeled by one of the functions $\ell_j$. On an input $x$, the output is computed by traversing the tree until a leaf is reached and outputting the value computed by the function at the leaf.}
\end{itemize}
\end{Thm}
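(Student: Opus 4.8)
\noindent\emph{Proof proposal.}

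The plan is to prove the theorem by directly tracing the computation of the width‑$2$ program and reading off the decision list, rather than by induction on layers (a naive induction produces a decision \emph{tree}, not a decision list). Throughout, assume the natural model in which all vertices of a layer $V_t$ read the same variable $x_{i_t}$ and, since $B$ is read‑once, the $i_t$ are distinct (if different vertices of a layer read different variables, a short preprocessing splits the layer). Identify the two vertices of each layer with $\{0,1\}$, freely relabeling within each layer; the label of the top \Acc\ state pins down a consistent labeling, and propagating it back only flips constants, so it does not affect the argument. For a vertex labeled $b$, let $g_t(x)\in\{0,1\}$ be the state reached in layer $t$ on input $x$, so $g_0$ is the constant given by \Start\ and $g_{t+1}(x)=\phi_t(g_t(x),x_{i_t})$ for a transition function $\phi_t:\{0,1\}^2\to\{0,1\}$.

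First I would classify the $16$ possibilities for $\phi_t$ into two kinds. Call layer $t$ \emph{affine} if $\phi_t$ is $\mathrm{GF}(2)$‑affine in $(g_t,x_{i_t})$: these are the $8$ functions $0,1,g_t,\overline{g_t},x_{i_t},\overline{x_{i_t}},g_t\oplus x_{i_t},g_t\oplus\overline{x_{i_t}}$, covering the cases where $x_{i_t}$ is ignored, where the state is reset to a literal, and where $x_{i_t}$ is XORed in. Call layer $t$ \emph{decisive} otherwise; the remaining $8$ functions are $g_t\wedge x_{i_t}$ and its relabelings, and a direct case check shows each has the shape: for one value $a_t$ of $x_{i_t}$, $g_{t+1}$ is a (possibly negated) copy of $g_t$, and for the other value $\bar a_t$, $g_{t+1}$ is a constant $c_t$. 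Let $t_1<\dots<t_k$ be the decisive layers, $y_r:=x_{i_{t_r}}$ the variables they read, and $V:=\{i_{t_1},\dots,i_{t_k}\}\subseteq S$; since $B$ is read‑once, $|V|=k$ and every non‑decisive read variable lies in $S\setminus V$.

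The structural point is that composing transitions over a maximal run of consecutive affine layers is again affine: writing $g_{s+1}=\alpha_s g_s\oplus\beta_s$ with $\alpha_s\in\{0,1\}$ and $\beta_s$ affine in the single variable $x_{i_s}$, the state $g_{t'}$ is an affine function of $g_t$ and the distinct non‑decisive variables read in layers $t,\dots,t'-1$. Using this, for $r\in\{0,1,\dots,k\}$ define the event $E_r=$ ``$y_{r+1}=a_{r+1},\dots,y_k=a_k$, and (if $r\ge1$) $y_r=\bar a_r$''. Conditioned on $E_r$, the program output $f(x)=g_n(x)$ equals a fixed function $\ell_r$ of the variables in $S\setminus V$: for $r\ge1$ the decisive layer $t_r$ sets $g_{t_r+1}$ to the constant $c_r$, erasing the dependence on $y_1,\dots,y_{r-1}$ (for $r=0$ we start from the constant $g_0$); every later decisive layer $t_{r+1},\dots,t_k$ merely relabels, since $y_{r+1}=a_{r+1},\dots$; and all other layers are affine. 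Composing, $g_n$ is affine over $\mathrm{GF}(2)$ in the non‑decisive variables read along the way, i.e. $\ell_r\in\Xors{S\setminus V}$ (recall this class contains the constants and all $\mathrm{GF}(2)$‑affine functions on $x_{S\setminus V}$). I would then let $\mc{L}_f$ read $y_k,y_{k-1},\dots,y_1$ in this order: the node reading $y_r$ sends $x$ to the leaf labeled $\ell_r$ when $y_r=\bar a_r$ and to the node reading $y_{r-1}$ otherwise (the node reading $y_1$ sends to the leaf $\ell_1$ or to the leaf $\ell_0$). On input $x$ this outputs $\ell_{r^\ast}(x)$, where $r^\ast$ is the largest index with $y_{r^\ast}\neq a_{r^\ast}$, or $r^\ast=0$ if there is none; but this is precisely the event $E_{r^\ast}$, so $\mc{L}_f\equiv f$. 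It reads the $k$ variables $x_V$, has $k+1$ leaves $\ell_0,\dots,\ell_k$ (one per node plus an extra at the last node), each labeled by a function in $\Xors{S\setminus V}$, exactly as claimed; and at each node one child is a leaf, matching the decision‑list format.

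I expect the main obstacle to be purely bookkeeping: verifying the case analysis of the eight decisive transition types (so that the ``copy on $a_t$, constant on $\bar a_t$'' description holds uniformly) and tracking relabelings and constant flips consistently from the \Acc\ state backwards. There is no substantive difficulty beyond this, and the argument is self‑contained given only elementary facts about $\mathrm{GF}(2)$‑affine maps.
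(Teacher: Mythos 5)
Your proposal is correct. Note, though, that the paper does not prove \tref{thm:sz} at all --- it imports the statement from Saks--Zuckerman and Bshouty--Tamon--Wilson --- so there is no in-paper argument to compare against; what you have written is a self-contained derivation of the cited result. Your route is the natural structural one: identify each layer's two states with $\{0,1\}$, classify the $16$ possible transition maps $\phi_t(g_t,x_{i_t})$ into the $8$ $\mathrm{GF}(2)$-affine ones and the $8$ AND/OR-type ``decisive'' ones (which copy or negate the state on one value of the read literal and reset it to a constant on the other), take $V$ to be the decisive variables, and observe that conditioned on the suffix pattern $E_r$ the program computes an affine function of the non-decisive variables, since affine layers and copy-mode decisive layers compose affinely and the reset at layer $t_r$ erases all earlier dependence. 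The case analysis is complete (the $2$-bit functions split exactly into $8$ affine and $8$ AND/OR-type maps), the resulting list has $k$ internal nodes and $k+1$ leaves with the correct one-leaf-per-node shape, and each leaf function lies in $\Xors{S\setminus V}$ because that class, as defined in the paper, contains constants and XORs of possibly negated literals, hence all $\mathrm{GF}(2)$-affine functions. One small point worth recording if this were written up: your list reads $V$ in the reverse of the order in which the branching program reads those variables; this is harmless, since the paper's subsequent lemmas only use that the event of reaching the $j$'th leaf is an AND of literals over the first $j$ variables read by $\mc{L}_f$, which your construction satisfies.
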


We order $V$ according to how variables are read by $\mc{L}_f$ and use $V^j$ to 
denote the indices of the first $j$ variables. The condition that $x$ reaches
$L_j$ is given by a function in $g_j \in \Ands{V^j}$. We say that
$L_j$ accepts $x$ if $g_j(x) =1$ and $\ell_j(x) = 1$

We derive two consequences of Theorem \ref{thm:sz}.

\begin{Lem}\label{lm:sz1}
Let $f$ be as in \tref{thm:sz}. If $\E[f] \geq 5/6$, then there exists $g \in \Ors{V}$ such that $g \leq f$
and $\E[g] \geq \E[f]^9$.
\end{Lem}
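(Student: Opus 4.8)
The plan is to use the decision-list form of $f$ provided by \tref{thm:sz}. Write $V=\{v_1,\dots,v_k\}$ in the order the list $\mc{L}_f$ reads its variables, and for $i\le k$ let $z_i$ be the literal on $x_{v_i}$ that is satisfied exactly when reading $x_{v_i}$ sends the computation to leaf $L_i$. Then $x$ reaches $L_i$ iff $z_1(x)=\dots=z_{i-1}(x)=0$ and $z_i(x)=1$ (and $x$ reaches $L_{k+1}$ iff $z_1(x)=\dots=z_k(x)=0$), so $\Pr_x[\Path(x)\text{ reaches }L_i]=2^{-i}$ for $i\le k$ and $2^{-k}$ for $i=k+1$. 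Since the $z_i$'s depend only on $x_V$ while each leaf label $\ell_j\in\Xors{S\setminus V}$ depends only on $x_{S\setminus V}$, these events are independent, so
\begin{align*}
\E[f]\;=\;\sum_{j=1}^{k+1}\Pr_x[\,x\text{ reaches }L_j\,]\cdot\E[\ell_j],
\end{align*}
where $\E[\ell_j]=1$ if $\ell_j\equiv1$ and $\E[\ell_j]\le\tfrac12$ otherwise (a parity on a nonempty set, or the constant $0$).

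The key step is to let $j^*$ be the least index $j$ with $\ell_j\not\equiv1$ --- if there is no such index then $f\equiv1$ and we may take $g\equiv1\in\Ors{V}$ --- and set $g:=z_1\vee z_2\vee\cdots\vee z_{j^*-1}\in\Ors{V}$. If $g(x)=1$, let $i_0<j^*$ be least with $z_{i_0}(x)=1$; then $x$ reaches $L_{i_0}$, and since $\ell_{i_0}\equiv1$ we get $f(x)=1$. Hence $g\le f$. Moreover $g(x)=1$ precisely when $x$ does \emph{not} reach any of $L_{j^*},\dots,L_{k+1}$, i.e.\ when some $z_i$ with $i<j^*$ is satisfied, so $\E[g]=1-2^{-(j^*-1)}$.

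For the quantitative part I would bound $\E[f]$ from above in terms of $j^*$. From the displayed formula, using $\E[\ell_j]\le 1$ for every $j$ and $\E[\ell_{j^*}]\le\tfrac12$, we get $\E[f]\le 1-\tfrac12\Pr_x[x\text{ reaches }L_{j^*}]\le 1-\tfrac12\cdot 2^{-j^*}=1-2^{-(j^*+1)}$ (the estimate $\Pr_x[x\text{ reaches }L_{j^*}]\ge 2^{-j^*}$ holds whether $j^*\le k$ or $j^*=k+1$). Combined with $\E[f]\ge 5/6$ this forces $2^{-(j^*+1)}\le 1/6$, hence $j^*\ge2$ and $g$ is a genuine nonempty disjunction; and rearranging gives $2^{-(j^*-1)}=4\cdot 2^{-(j^*+1)}\le 4(1-\E[f])$, so
\begin{align*}
\E[g]\;=\;1-2^{-(j^*-1)}\;\ge\;4\,\E[f]-3.
\end{align*}
It then remains only to observe the elementary inequality $4t-3\ge t^9$ on $t\in[5/6,1]$: the function $\psi(t)=4t-3-t^9$ satisfies $\psi(1)=0$, $\psi(5/6)=\tfrac13-(5/6)^9>0$, and $\psi'(t)=4-9t^8$ has a single zero in $(5/6,1)$, so $\psi$ first increases then decreases on the interval and hence is nonnegative throughout. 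Therefore $\E[g]\ge 4\E[f]-3\ge \E[f]^9$.

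I expect the only nontrivial point to be the structural observation that the maximal prefix of ``always-accepting'' leaves is exactly the acceptance set of the single disjunction $z_1\vee\cdots\vee z_{j^*-1}$, together with the fact that bias $\ge 5/6$ forces this prefix to be long (a short prefix would leave one of the top, high-probability leaves non-accepting, capping $\E[f]$ well below $5/6$); everything afterwards is bookkeeping, and the power $9$ is merely a slack constant chosen so the final scalar inequality survives down to $t=5/6$. Minor care is needed only for the degenerate cases ($V=\varnothing$, or all $\ell_j\equiv1$), handled by taking $g$ to be the constant $1$, which \tref{thm:sz}'s conventions place in $\Ors{V}$.
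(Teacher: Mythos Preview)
Your proof is correct and follows essentially the same approach as the paper: both take $j^*$ to be the first leaf whose label is not the constant $1$, set $g$ to be the disjunction of the branching literals $z_1\vee\cdots\vee z_{j^*-1}$, and derive $\E[g]\ge 1-4(1-\E[f])$ from the fact that $\ell_{j^*}$ rejects with probability at least $1/2$. Your write-up is somewhat more explicit than the paper's, in particular in handling the boundary cases ($j^*=k+1$, all $\ell_j\equiv 1$, $V=\varnothing$) and in justifying the scalar inequality $4t-3\ge t^9$ on $[5/6,1]$, which the paper simply asserts.
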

\begin{proof}

Let $\E[f] = 1 -\eps$ for $\eps \leq \frac{1}{6}$. Note that
\begin{align*}
\eps = \sum_{j=1}^{k+1}2^{-j}\Pr[\ell_j(x) = 0]
\end{align*}

Consider the smallest $j$ such that $\ell_j$ is not the constant $1$ function. Since $\ell_j \in
\Xors{S\setminus V}$ and $\ell_j \neq 1$, $\ell_j$ rejects with probability
at least $1/2$, hence $\eps \geq 2^{-j -1}$.

The condition that $x$ reaches one of $L_1,\ldots,L_{j-1}$ is given by $g \in
\Ors{V^{j-1}}$. Since $\ell_1 \equiv \ell_2 \equiv \cdots \equiv \ell_{j-1}$, we have that $g \leq f$ and $\E[g] = 1- 2^{-j+1} \geq 1 - 4\eps$. Since $\eps \leq 1/6$, the inequality $(1- 4\eps) \geq
(1 - \eps)^9$ holds.
\end{proof}

\begin{Lem}\label{lm:sz2}
Let $f$ be as in \tref{thm:sz}. There exist $h_1 \in \Ands{V}$ and $h_2 \in \Xors{S\setminus V}$ such
that if we define $h = h_1\wedge h_2$ then $h \leq f$ and $\E[h] \geq \E[f]/3$.
\end{Lem}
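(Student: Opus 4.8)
The plan is to put $f$ into the decision–list normal form supplied by Theorem~\ref{thm:sz} and then replace it by a single leaf of that list. Writing $f$ as a decision list over the $k$ variables $x_V$ with leaves $L_1,\dots,L_{k+1}$ labelled $\ell_1,\dots,\ell_{k+1}\in\Xors{S\setminus V}$, and with $g_j\in\Ands{V^j}$ the conjunction that encodes ``$x$ reaches $L_j$'', I would first record two structural facts. Since every variable of $V$ is read once, a fixed literal holds with probability exactly $1/2$, so $\Pr[g_j=1]=2^{-j}$ for $j\le k$ and $\Pr[g_{k+1}=1]=2^{-k}$; in particular the reach probabilities form a geometric tail, so $\sum_{j>a}\Pr[g_j=1]=\Pr[g_a=1]$ for every $a\le k$. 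Second, $g_j$ depends only on $V^j\subseteq V$ while $\ell_j$ depends only on $S\setminus V$, so $g_j$ and $\ell_j$ are independent and $\E[f]=\sum_{j=1}^{k+1}\Pr[g_j=1]\,\Pr[\ell_j=1]$.

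Next I would dispose of the degenerate case: if every $\ell_j$ is identically $0$ then $f\equiv 0$, and we may take $h_1\equiv 0\in\Ands{V}$, $h_2\equiv 1\in\Xors{S\setminus V}$. Otherwise let $a$ be the least index with $\ell_a\not\equiv 0$, and set $h_1:=g_a\in\Ands{V^a}\subseteq\Ands{V}$, $h_2:=\ell_a\in\Xors{S\setminus V}$, $h:=h_1\wedge h_2$. Then $h\le f$ is immediate: if $h(x)=1$ then $x$ reaches $L_a$ and $\ell_a(x)=1$, so $f(x)=1$. By independence, $\E[h]=\Pr[g_a=1]\cdot p_a$ where $p_a:=\Pr[\ell_a=1]$, and since $\ell_a\in\Xors{S\setminus V}$ is not identically $0$ it is either identically $1$ (so $p_a=1$) or a non-constant parity (so $p_a=1/2$).

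The only remaining step is the arithmetic comparing $\E[h]$ with $\E[f]$. Minimality of $a$ gives $\Pr[\ell_j=1]=0$ for $j<a$, so, using $\Pr[\ell_j=1]\le 1$ and the geometric tail identity,
\[ \E[f]\ =\ \Pr[g_a=1]\,p_a\ +\ \sum_{j>a}\Pr[g_j=1]\,\Pr[\ell_j=1]\ \le\ \Pr[g_a=1]\,(p_a+1). \]
If $p_a=1$ this gives $\E[h]=\Pr[g_a=1]\ge\E[f]/2$; if $p_a=1/2$ it gives $\Pr[g_a=1]\ge\tfrac{2}{3}\E[f]$, hence $\E[h]=\tfrac12\Pr[g_a=1]\ge\E[f]/3$. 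Either way $\E[h]\ge\E[f]/3$, as wanted (and when $a=k+1$ the tail sum is empty and in fact $\E[h]=\E[f]$).

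I do not expect a genuine obstacle here: the only care needed is deriving the exact leaf‑reach probabilities from read‑onceness and keeping track of the geometric tail. The mildly delicate point is that the binding case is $p_a=\tfrac12$, where the discarded leaves $L_{a+1},\dots,L_{k+1}$ can carry up to half of $\E[f]$; this is exactly what forces the constant $1/3$ rather than $1/2$. One could fold $L_{a+1}$ into the approximator to improve the constant, but $1/3$ already suffices for the reduction.
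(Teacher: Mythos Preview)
Your proof is correct and is exactly the paper's approach: both pick the first (``highest'') leaf $L_a$ with $\ell_a\not\equiv 0$ and set $h=g_a\wedge\ell_a$. The paper simply asserts $\Pr[L_a\text{ accepts }x]\ge\tfrac13\Pr[f=1]$ without details; your geometric-tail computation and the $p_a\in\{1,\tfrac12\}$ case split are precisely the omitted verification.
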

\begin{proof}
Let $L_j$ be the highest leaf in $\mc{L}_f$ which is not labeled
$0$. Set $h_1 = g_j$ and $h_2 = \ell_j$. It is easy to see that 
\begin{align*}
\Pr_x[h(x) = 1]  = \Pr_x[L_j \ \text{accepts} \ x] \geq
  \frac{1}{3}\Pr_x[f(x) = 1].
\end{align*}
\end{proof}

We now prove \tref{thm:bp2inttocnfx}.
\begin{proof}[Proof of \tref{thm:bp2inttocnfx}]
Let $f = f_1 \wedge f_2 \wedge \cdots \wedge f_m$, where $f_i \in \bp{2,n}$. Let $p = \E[f]$. Then, for $I = \{i: \E[f_i] < 5/6\}$, $|I| < \log_{6/5}(1/p)$. For $i \notin I$, let $g_i$ be the function obtained from \lref{lm:sz1} and for $i \in I$, let $h_i$ be the function obtained from \lref{lm:sz2}. Let $g = \left(\wedge_{i \notin I} g_i\right) \wedge \left(\wedge_{i \in I} h_i \right)$. Then, clearly $g \in \cnfx$, $g \leq f$ and 
\[ \E[g] = \prod_{i \notin I} \E[g_i] \cdot \prod_{i \in I} \E[h_i] \geq \prod_{i \notin I} \E[f_i]^9 \cdot \prod_{i \in I} \left(\E[f_i]/3\right) \geq p^9 \cdot p \cdot \frac{1}{3^{|I|}} \geq p^{14}.\]  
\end{proof}

\subsection{\hsg\  for $\bp{3,n}$}
We now combine the previous sections to prove Theorems \ref{th:bpmain}, \ref{th:bphsg}.

\begin{proof}[Proof of \tref{th:bpmain}]
Follows immediately from combining Theorem \ref{th:bptosd}, \ref{th:bpsdtobpint}, \ref{thm:bp2inttocnfx}. 
\end{proof}

\begin{proof}[Proof of \tref{th:bphsg}]
Let $f \in \bp{3,n}$ with $\E[f] \geq \epsilon$. Let $g,k$ be as given by \tref{th:bpmain} applied to $f$ so that $\E[g] \geq \delta = (\epsilon/n)^c$. Let $G':\zo^s \rightarrow \zo^n$ be a \prg\ for $\cnfx$ with error at most $\delta/2$. By \tref{thm:prgcnfx}, there exists an explicit $G'$ with seed-length $s = O(\log (n/\epsilon) \cdot (\log \log (n/\epsilon))^3)$. 

Define, $G:\zo^{\log n + s} \rightarrow \zo^n$ as follows:
\begin{itemize}
\item Sample $r \sim [n]$ and $y \sim \zo^s$. 
\item Output $r$ $0$s followed by the first $n-r$ bits of $G'(y)$. 
\end{itemize}

We claim that $G$ is a $(\epsilon, (\epsilon/n)^{c+1})$-\hsg\  for $\bp{3,n}$. 

Assume that we guess $r = k$ correctly, which happens with probability
$1/n$. $G$ then simulates $g$ on the string $G'(y)$. Since, $\E[g] \geq \delta$, 
\[ \pr_{y \in \zo^s}\left[g(G'(y)) = 1\right] \geq \E[g] - \delta/2 \geq \delta/2.\]
Therefore, 
$$\Pr_{k,y \in \zo^s}\left[f(G(y)) =1\right] \geq \delta/2n.$$
The theorem now follows.
\end{proof}

\section{\prg s for Read-Once \cnf s}
\label{sec:prgcnf}

We construct a \prg\  for read-once \cnf s ($\rcnf$s) with a seed-length of $O((\log n)\cdot (\log \log n)^2)$ and error $1/\poly(n)$. As mentioned in the introduction, previously, only generators with seed-length $O(\log^2 n)$ were known for error $1/\poly(n)$. Besides being of interest on its own, this construction will play an important role in our \hsg\  for width 3 branching programs. Our main construction and its analysis are similar in spirit to what we saw for combinatorial rectangles and will be based on \tref{thm:main}.

\begin{Thm}\label{thm:prgcnf}
  For every $\epsilon > 0$, there exists an explicit \prg\  $G:\zo^r \rgta \dpm^n$ that fools all $\rcnf$s on $n$-variables with error at most $\epsilon$ and seed-length $r = O((\log (n/\epsilon)) \cdot (\log \log (n/\epsilon))^3)$. 
\end{Thm}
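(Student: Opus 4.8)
The plan is to instantiate the iterated mild-restriction template of \sref{sec:tech}, with \tref{thm:main} and \tref{thm:xor} as the two engines and \tref{thm:smallwcnf} as the base case. Write $\delta$ for the target error and put $N=n/\delta$. First I would reduce to bounded width: split $f=f_{\mathrm n}\wedge f_{\mathrm w}$, where $f_{\mathrm w}$ collects the clauses of width $\ge\lceil\log(2n/\delta)\rceil$. Each such clause $C_i$ is an \textsc{And} of literals, so $\ind{\neg C_i}$ is a single monomial over $\pmo^n$ with $\Ll$-norm $1$; hence $1-\sum_i\ind{\neg C_i}\le f_{\mathrm w}$ is a lower sandwiching approximator of $\Ll$-norm at most $n+1$ and mean at least $1-\delta/2$, which together with the constant $1$ gives $f_{\mathrm w}$ a $(\delta/2)$-sandwiching approximator of $\Ll$-norm $O(n)$; by \lref{lem:lp}, $f_{\mathrm w}$ is then $\delta$-fooled by $\eta$-biased spaces with $\log(1/\eta)=O(\log N)$. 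Combining with a generator for $f_{\mathrm n}$ via \tref{thm:xor} (taking $H$ the two-input \textsc{And}), it suffices to fool $\rcnf$s of width $w_0=O(\log N)$.

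Next I would set up the recursion. One \emph{phase} applied to an $\rcnf$ $g$ of width $w$ draws a set $I\lfta\calD(1/2,\delta_0)$ of variables to keep free and assigns the variables in $[n]\setminus I$ from an $\eta$-biased distribution; each clause either becomes satisfied (some frozen literal is true) or shrinks to the \textsc{Or} of its surviving literals, so the residual $g|_\rho$ is again an $\rcnf$, of width concentrated around $w/2$. The generator fixes variables by iterating $T=O(\log w_0)=O(\log\log N)$ phases $\rho_1,\dots,\rho_T$ — after which, except on an event of probability $\le\delta$ (a union bound over the $\le n$ clauses, using $\delta_0=1/\poly(N)$, controls the clauses that fail to shrink), every clause has width $O(1)$ — followed by one $\eta_2$-biased block with $\log(1/\eta_2)=\tilde O(\log N)$ that fills the remaining free variables, using \tref{thm:smallwcnf} on the residual $\rcnf$. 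A hybrid argument over the $T+1$ steps then reduces correctness to the \textbf{main lemma}: for an $\rcnf$ $g$ of width $\le w_0$ with $\E[g]\ge\delta/(2T)$, the bias function $F(\rho)=\E[g|_\rho]$ of one phase, regarded as a function of the $\calD(1/2,\delta_0)$- and $\eta$-biased randomness, has $(\delta/(2T))$-sandwiching approximators of $\Ll$-norm $N^{O(\log\log N)}$, hence is $(\delta/(2T))$-fooled once $\log(1/\eta)=\tilde O(\log N)$. The case $\E[g]<\delta/(2T)$ is dispatched exactly as in \lref{lem:main-cr}: delete clauses one at a time (each deletion at most doubles $\E[g]$) until the bias lands in $[\delta/(2T),\delta/T)$, then sandwich the original $g$ between $0$ and the upper approximator of the enlarged formula.

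To prove the main lemma I would follow the proof of \lref{lem:main-cr} closely. Bucket the clauses of $g$ by width into the $L=O(\log\log N)$ classes $\{\,i:w_i\in[2^{j-1},2^j)\,\}$ (widths are $\le w_0=O(\log N)$); the sub-formulas $g^{(1)},\dots,g^{(L)}$ are on disjoint variable sets since $g$ is read-once, and $F=\prod_j F^{(j)}$ where $F^{(j)}$ is the product over bucket $j$ of the per-clause bias functions $\bar f_i(\rho)=\Pr[C_i\text{ satisfied}\mid\rho]$. Setting $p_i=\E[\bar f_i]=1-2^{-w_i}$ and $g_i=(\bar f_i-p_i)/p_i$, each $F^{(j)}=(\prod p_i)\sum_k S_k(g_1,\dots,g_m)$ is a symmetric multilinear function of the $g_i$'s with coefficients in $[0,1]$ and range $[0,1]$, so \tref{thm:main} applies once its hypotheses are verified. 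Since $\E[g]\ge\delta/(2T)$ forces $\sum_i 2^{-w_i}=O(\log N)$, within bucket $j$ one gets $\sum_i\sigma_i^2\le 2^{-\Omega(2^{j})}\cdot O(\log N)\le 1/\log(1/\delta)^{25}$ as soon as $2^j=\Omega(\log\log N)$, which is precisely the variance budget of \eqref{eq:assumption}; the even-moment bounds $\E[g_i^{2k}]\le(2k)^{2k}\sigma_i^{2k}$ with a single $\sigma_i$ come from Rosenthal's inequality as in \lref{lem:technical}, which is why the phase must be arranged so that $\bar f_i-p_i$ is an average of many i.i.d.\ mean-zero terms rather than a two-valued variable; and $t=\Ll[g_i]\le 2^{O(2^j)}$, $m\le 2^{O(2^j)}$, $k=O(\log(1/\delta)/2^j)$ together force $(mt+1)^{2k}=\poly(N)$, so \tref{thm:main} gives $F^{(j)}$ an $O(\delta/(TL))$-sandwiching approximator of $\Ll$-norm $\poly(N)$. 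Buckets with $2^j=O(\log\log N)$ need no symmetric-function machinery: there $g^{(j)}$ has width $O(\log\log N)$ and at most $n$ clauses, so \tref{thm:smallwcnf} directly gives $\Ll$-norm $t_\star:=N^{O(\log\log N\cdot\log\log\log N)}$. Then \tref{thm:xor} with $H$ the $L$-input \textsc{And} combines the $F^{(j)}$ into sandwiching approximators for $F$ of error $O(\delta/T)$ and $\Ll$-norm $4^L(t_\star+1)^L$, and \lref{lem:lp} converts this into the fooling statement for a suitable $\eta$ with $\log(1/\eta)=\tilde O(\log N)$. Summing, over the $T=O(\log\log N)$ re-bucketed phases, the $O(\log N)$ bits of $\calD(1/2,\delta_0)$ and the $\tilde O(\log N)$ bits of the $\eta$-biased block, plus the final block, and tracking the $\polylog(N)$ and $\poly(\log\log N)$ overheads, yields the claimed seed length $O((\log N)(\log\log N)^3)$; taking $\delta=\epsilon$ finishes the proof.

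The step I expect to be the real work is the main lemma, and within it the simultaneous satisfaction of the hypotheses of \tref{thm:main}: the variance budget $m\sigma^2\le 1/\log(1/\delta)^{25}$ wants the surviving clause widths large and the $\sigma_i$'s geometrically dominated by the $O(\log N)$ bound on $\sum_i 2^{-w_i}$ (hence the constant-fraction restriction and the width-bucketing), whereas the higher-moment hypothesis with a \emph{single} $\sigma_i$ valid for all $k$ forces the restriction to make $\bar f_i-p_i$ a sum of many i.i.d.\ terms so that Rosenthal's inequality delivers the right moment growth — reconciling these two pulls is the crux, exactly as for combinatorial rectangles (\lref{lem:technical}, \lref{lem:main-cr}). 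Everything else — the reduction to bounded width, the progress/union-bound analysis of the iterated phases, the base case via \tref{thm:smallwcnf}, the small-$\E[g]$ case, and the composition of the $O(\log\log N)$ sandwiching arguments via \tref{thm:xor} and \lref{lem:lp} — should be routine once the main lemma is in hand.
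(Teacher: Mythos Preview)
Your high-level template is right, but the main lemma as you have set it up has a genuine gap at exactly the point you flag as the crux. If you view the per-clause bias $\bar f_i$ as a function of the \emph{full} restriction $(I,x)$ and set $g_i=(\bar f_i-p_i)/p_i$, then $g_i$ has a heavy tail: with probability $\Theta(4^{-w_i})$ every variable of $C_i$ gets fixed and every literal is false, and then $g_i\approx -1$. Concretely, over a uniform restriction with fixing probability $1/2$ one computes $\E[(\bar f_i-p_i)^{2\ell}]\approx 4^{-w_i}(1+2^{1-2\ell})^{w_i}$, which for $\ell\ge 2$ is of order $4^{-w_i}$ rather than the $((3/8)^{w_i})^{\ell}$ that the sub-Gaussian-type hypothesis $\E[g_i^{2\ell}]\le(2\ell)^{2\ell}\sigma_i^{2\ell}$ of \tref{thm:main} demands with $\sigma_i^2=\Var[g_i]\approx(3/8)^{w_i}$. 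No single $\sigma_i$ makes both $m\sigma^2$ small and the higher moments hold. Your proposed fix---``arrange the phase so that $\bar f_i-p_i$ is an average of many i.i.d.\ mean-zero terms''---is precisely what makes \lref{lem:technical} work for rectangles, but there is no analogous decomposition for a \cnf\ restriction: $\bar f_i$ is not a sample average of $\zo$-valued evaluations of $C_i$.

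The paper's resolution is to \emph{decouple} the two randomness sources rather than to reshape $g_i$. First draw only $I\lfta\calD(\alpha,\delta)$ (with $\alpha$ a small constant), and study $f_I(x)$ as a function of $x\in\pmo^I$ alone. With high probability over $I$, all but $O(\log(n/\eps))$ clauses in a width bucket $[w,\beta w)$ are \emph{good} in the sense that $|C_j\cap I|\le w/3$; for a good clause the resulting $g_j(x)$ is two-valued and \emph{bounded} by $2\cdot 2^{-2w/3}$, so the moment hypothesis of \tref{thm:main} (indeed of \tref{thm:corsym}) is trivial. The few bad clauses form a tiny sub-$\rcnf$ dispatched by \tref{th:smallcnf}, and the pieces are glued by \tref{thm:xor}. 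Two further corrections to your sketch: the bucket ratio must be strictly below $4/3$ (the paper takes $\beta=7/6$), since with doubling buckets even the variance budget fails---$m_w\sigma^2\approx 2^{2w}\cdot 2^{-4w/3}=2^{2w/3}\to\infty$; and the paper's recursion tracks the \emph{number of clauses}, not the width: a separate shrinkage lemma (\lref{lm:shrink}) shows each phase drives the size from $m$ to $\tilde O(m^{1-\gamma})$, and after $O(\log\log n)$ phases one finishes with \tref{th:smallcnf} on a polylog-size residual rather than \tref{thm:smallwcnf} on a constant-width one.
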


The core of our construction will be a structural lemma that can be summarized as follows: The bias function of a random restriction of $f$ where each variable has a small constant probability of being set has small $\Ll$-norm sandwiching approximators. 

Along with the structural lemma we shall also exploit the fact that for any $\rcnf$, randomly restricting a constant fraction of the inputs simplifies the formula significantly: with high probability a size $m$ $\rcnf$ upon a random restriction has size at most $\poly(\log n) \cdot m^\gamma$, where $\gamma < 1$ is a fixed constant. \tref{thm:prgcnf} is then proved using a recursive construction, where we use the above arguments for $O(\log \log n)$ steps.

\subsection{Sandwiching Approximators for Bias Functions}
For a function $f:\dpm^n \rgta [0,1]$, a subset $I \subseteq [n]$ and $x \in \dpm^I$, define $f_I(x):\dpm^I \rgta [0,1]$ by
\[ f_I(x) = \E_{y \in_u \dpm^{[n]\setminus I}}[ f(x \circ y)],\]
where $x \circ y$ denotes the appropriate concatenation: $(x \circ y)_i = x_i$ if $i \in I$ and $(x \circ y)_i = y_i$ if $i \notin I$. We call $f_I$ the ``bias function'' of the {\it restriction} $(x,I)$. 

We will show that for a $\rcnf$ $f$, and $I$ chosen in an almost $k$-wise independent manner, the bias function $f_I$ has small $\Ll$-norm sandwiching approximators with very high probability (over the choice of $I$).\ignore{We first define the notion of almost independent distributions.
\begin{Def}
  Let $0 < \alpha, \delta < 1/2$. We say a distribution on $\calD$ on $2^{[n]}$ is $\delta$-almost independent with bias $\alpha$ if $I \sim \calD$ satisfies the following conditions:
\begin{itemize}
\item For every $i \in [n]$, $\pr[i \in I] = \alpha$.
\item For any distinct indices $i_1,\ldots,i_k \in [n]$ and $b_1,\ldots,b_k \in \zo^k$, 
\[ \pr\left[\,\wedge_{j=1}^k (\mathsf{1}(i_j \in I) = b_j) \,\right] = \prod_{j=1}^k \pr[\mathsf{1}(i_j \in I) = b_j] \pm \delta.\]
\end{itemize}
\end{Def}
There exist explicit constructions of distributions in $\calD$ as above which only need $O(\log n + \log(1/\alpha \delta))$ random bits. We will write $I \sim \calD(\alpha,\delta)$ for short whenever $I$ is sampled from a $\delta$-almost independent distribution with bias $\alpha$ as above.

We are now ready to state our main structural lemma for $\rcnf$s. For a $\rcnf$ $f$, let $bias(f) = 1- \pr[f(x) = 0]$.}

\begin{Lem}[Main]\label{lm:maincnf}
There exists a constant $\alpha$ and $c > 0$ such that the following holds for every $\epsilon > 0$ and $\delta < (\epsilon/n)^c$. Let $f:\dpm^n \rgta \zo$ be a $\rcnf$ and $I \sim \calD(\alpha, \delta)$. Then, with probability at least $1 - \epsilon$, $f_I$ has $\epsilon$-sandwiching approximators with $\Ll$-norm at most 
$$L(n,\epsilon) = \left(n/\epsilon\right)^{c(\log \log (n/\epsilon))^2}.$$
\end{Lem}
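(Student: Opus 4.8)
The plan is to follow the template of the combinatorial‑rectangle argument (the proof of \lref{lem:main-cr}), adding two ingredients that deal with our not knowing which variable belongs to which clause: a bucketing of the clauses by width handled via the \Xor\ lemma \tref{thm:xor}, and a high‑probability argument over the choice of $I\sim\calD(\alpha,\delta)$. First, preprocess as at the end of the proof of \lref{lem:main-cr}: if $\E[f]<\delta$ then either $f\equiv 0$ (nothing to prove) or we delete clauses one at a time until the acceptance probability first exceeds $\delta$, getting $f\le f^\star$ with $\E[f^\star]\le 2\delta$; since $\delta<(\epsilon/n)^c$, the upper sandwiching approximator of $f^\star_I$ together with the constant $0$ sandwiches $f_I$ with error $\le\epsilon$ and $\Ll$‑norm no larger than that of $f^\star_I$. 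So we may assume $\E[f]=\prod_i(1-2^{-w_i})\ge\delta$, hence $\sum_i 2^{-w_i}\le\log(1/\delta)$.

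Next record the shape of the bias function. Writing $f=\bigwedge_{i=1}^m C_i$ with $C_i$ on a variable set $V_i$, $|V_i|=w_i$, and setting $r_i=|V_i\setminus I|$, $k_i=w_i-r_i$, one checks directly that the per‑clause bias function is $F_i(x)=\E_y[C_i(x\circ y)]=1-2^{-r_i}A_i(x)$, where $A_i$ is the AND of the $k_i$ negated free literals of $C_i$; thus $\E[F_i]=p_i:=1-2^{-w_i}$ and $\Ll[F_i-p_i]\le 2^{-w_i}+2^{-r_i}\le 2\cdot 2^{-r_i}$ (using $\Ll[A_i]=1$). Put $g_i=(F_i-p_i)/p_i$, so $\E[g_i]=0$, $\Ll[g_i]\le 4\cdot 2^{-r_i}$, and, since $F_i-p_i$ is a scaled indicator, $\E_{\mathcal U}[g_i^{2k}]\le (2k)^{2k}\sigma_i^{2k}$ with $\sigma_i^2\le C_0\,2^{-2r_i}$ for an absolute constant $C_0$. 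For any sub‑collection $B$ of clauses, $f^B_I:=\prod_{i\in B}F_i=\bigl(\prod_{i\in B}p_i\bigr)\prod_{i\in B}(1+g_i)$ is a $[0,1]$‑valued symmetric multilinear function of the $g_i$, $i\in B$, with coefficients in $[0,1]$ — exactly the form required by \tref{thm:main} — and $f_I=\prod_{\text{groups }B}f^B_I$. Now fix thresholds $W_{\mathrm{lo}}=C_1\log\log(n/\epsilon)$ and $W_{\mathrm{hi}}=C_2\log(n/\epsilon)$, call $C_i$ \emph{narrow} if $w_i\le W_{\mathrm{lo}}$, \emph{wide} if $w_i>W_{\mathrm{hi}}$, \emph{medium} otherwise, and split the medium clauses further by the pair $(\lfloor\log_2 w_i\rfloor,\lfloor\log_2 r_i\rfloor)$; this gives $N=O((\log\log(n/\epsilon))^2)$ groups. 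The idea is to produce, with probability $\ge 1-\epsilon$ over $I$, $\epsilon'$‑sandwiching approximators of $\Ll$‑norm $\Lambda_B$ for each group $B$, where $\Lambda_B\le (n/\epsilon')^{O((\log\log(n/\epsilon))^2)}$ for the single narrow group and $\Lambda_B\le\poly(n/\epsilon')$ otherwise, and then combine them by \tref{thm:xor} with $H=\mathrm{product}$; the proof of that theorem gives $f_I$ an $(16^N\epsilon')$‑sandwiching approximator of $\Ll$‑norm $4^N\prod_B(\Lambda_B+1)\le 4^N\cdot (n/\epsilon')^{O((\log\log(n/\epsilon))^2)}\cdot\poly(n/\epsilon')^{\,N}=(n/\epsilon)^{O((\log\log(n/\epsilon))^2)}$, so choosing $\epsilon'=\epsilon/16^N=\epsilon/(n/\epsilon)^{o(1)}$ makes the error $\le\epsilon$ and keeps the $\Ll$‑norm at $L(n,\epsilon)$ once $c$ (and hence the exponent in $L$) is large enough; note $\delta<(\epsilon/n)^c\le\epsilon'$, so \tref{thm:main} and \tref{thm:smallwcnf} may be invoked below with error $\epsilon'$ or $\delta$.

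Each type of group is handled as follows. \emph{Narrow:} $f^{\mathrm{narrow}}=\bigwedge_{w_i\le W_{\mathrm{lo}}}C_i$ is a \cnf\ with $\le n$ clauses of width $\le W_{\mathrm{lo}}$, so by \tref{thm:smallwcnf} it has $\epsilon'$‑sandwiching polynomials $P_\ell\le f^{\mathrm{narrow}}\le P_u$ of $\Ll$‑norm $(n/\epsilon')^{O(W_{\mathrm{lo}}\log W_{\mathrm{lo}})}=(n/\epsilon)^{O((\log\log(n/\epsilon))^2)}$; taking conditional expectations over the averaged coordinates, $\E_y[P_\ell(x\circ y)]\le(f^{\mathrm{narrow}})_I(x)\le\E_y[P_u(x\circ y)]$, and these have the same error and no larger $\Ll$‑norm, for every $I$. \emph{Wide:} on the event that every wide clause has $r_i\ge w_i/2$ we get $F_i\ge 1-2^{-r_i}\ge 1-\epsilon/(2n)$ pointwise, hence $(f^{\mathrm{wide}})_I\in[1-\epsilon/2,1]$ pointwise, with the constant approximators $1-\epsilon/2$ and $1$ of $\Ll$‑norm $\le 1$. \emph{Medium:} for a medium group $B$, on the event that $\sum_{i\in B}\sigma_i^2\le 1/\log(1/\delta)^{25}$ and $|B|\cdot 2^{-\min_{i\in B}r_i}\le\polylog(n/\epsilon)$, apply \tref{thm:main} with $B=1$, $C=\prod_{i\in B}p_i\le 1$, $t=4\cdot 2^{-\min_{i\in B}r_i}$, so that $mt+1\le\polylog(n/\epsilon)$ and $k=O(\log(1/\delta)/\log\log(1/\delta))$; this yields $\delta$‑sandwiching polynomials of $\Ll$‑norm $O((mt+1)^{2k}\delta^{-3})=(\polylog(n/\epsilon))^{O(\log(1/\delta)/\log\log(1/\delta))}\cdot\delta^{-3}=\poly(n/\epsilon)$.

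It remains to establish the two high‑probability events over $I\sim\calD(\alpha,\delta)$, and this is the crux of the proof. Each $r_i=\sum_{j\in V_i}\mathbf 1[j\notin I]$ is a sum of indicators of mean $(1-\alpha)w_i$ which are $\delta$‑almost $k$‑wise independent for every $k$; expanding a $k$‑th moment of $r_i-\E r_i$ on a width‑$w_i$ clause contributes an error $O(2^{w_i})\delta$, and for every non‑narrow, non‑wide clause $w_i\le W_{\mathrm{hi}}=O(\log(n/\epsilon))$, so this error is $\le(n/\epsilon)^{O(1)}\delta\le(\epsilon/n)^{c'}$ once $c$ is large — this is exactly what the hypothesis $\delta<(\epsilon/n)^c$ is for. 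Hence standard Chernoff/moment bounds hold for each $r_i$ up to a negligible slack: $\Pr_I[r_i<w_i/2]\le e^{-\Omega(w_i)}+(\epsilon/n)^{c'}$, which union‑bounds over the $\le n$ wide clauses (each with $w_i>W_{\mathrm{hi}}$, so $e^{-\Omega(w_i)}\le(n/\epsilon)^{-\Omega(C_2)}$) to $\le\epsilon/2$. For a medium group $B$ in the cell $(w_i\approx W, r_i\approx R)$ we use $\E_I[\sum_{i\in B}2^{-2r_i}]\le 2^{-2R}\E_I|B|$ with $\E_I|B|=\sum_{w_i\approx W}\Pr_I[r_i\approx R]$: when $R$ is far below $(1-\alpha)W$ this probability is $e^{-\Omega(W)}$ (plus the slack), while when $R$ is comparable to $(1-\alpha)W$ the factor $2^{-2R}$ together with the crude bound $|B|\le 2^{2W}\log(1/\delta)$ (from $\sum_i 2^{-w_i}\le\log(1/\delta)$) already drives the expectation below $1/\log(1/\delta)^{26}$ for $\alpha<1/2$ and $W\ge W_{\mathrm{lo}}$; Markov plus a union bound over the $O((\log\log(n/\epsilon))^2)$ cells then gives the medium event (the companion bound on $|B|\cdot 2^{-\min r_i}$ is obtained the same way). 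Finally one verifies that $\alpha\in(0,1/2)$ and the constants $c,C_1,C_2$ can be chosen so that all the inequalities above hold simultaneously. The hardest point is precisely this last balancing act: one must keep the almost‑independence error $O(2^{w_i})\delta$ negligible (forcing $w_i=O(\log(n/\epsilon))$ and $c$ large), beat the worst‑case group size $|B|=2^{\Theta(w_i)}\log(1/\delta)$ using only that each clause receives $\approx(1-\alpha)w_i$ averaged variables, and simultaneously choose $W_{\mathrm{lo}}$ small enough that \tref{thm:smallwcnf} contributes only a $(\log\log)^2$ factor to the exponent yet large enough that the medium‑group variances are small — the \Xor\ lemma is what legitimizes the bucketing, but it is also what forces the number of buckets, and hence the width/$r_i$ partition, to be only doubly logarithmic.
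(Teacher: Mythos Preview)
Your overall architecture—bucket the clauses by width, glue the buckets with the \Xor\ lemma, and inside each bucket feed the centered per–clause bias functions $g_i$ into \tref{thm:main}—is exactly the right one and matches the paper. The gap is quantitative, and it is fatal as written.

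The error is in the variance bound for the ``typical'' medium cell. You bucket by $(\lfloor\log_2 w_i\rfloor,\lfloor\log_2 r_i\rfloor)$ and then assert that for $R$ comparable to $(1-\alpha)W$ the crude bound $|B|\le 2^{2W}\log(1/\delta)$ combined with $2^{-2R}$ ``drives the expectation below $1/\log(1/\delta)^{26}$''. It does not: $|B|\cdot 2^{-2R}\le 2^{2W-2(1-\alpha)W}\log(1/\delta)=2^{2\alpha W}\log(1/\delta)$, which for any $\alpha>0$ is at least $\log(1/\delta)$ and grows with $W$. Thus the hypothesis $m\sigma^2\le 1/\log(1/\delta)^{25}$ of \tref{thm:main} fails for the very cell into which almost all width-$W$ clauses fall, and bucketing additionally by $r_i$ cannot help, since the problematic cell is precisely the high–probability one. (Your companion claim $mt+1\le\polylog(n/\epsilon)$ fails for the same reason: $mt\approx 2^{\alpha W}\log(1/\delta)$.) The culprit is the factor-of-$2$ width bucketing: the exponent you need negative is $\beta-2(1-\alpha)$, which for $\beta=2$ forces $\alpha<0$.

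The paper fixes this with two coupled changes. First, it buckets widths geometrically with ratio $\beta=7/6$ (so only $O(\log\log(n/\epsilon))$ buckets, not $(\log\log)^2$). Second, inside each bucket it does \emph{not} bucket by $r_i$; instead it splits clauses into ``good'' ($|C_j\cap I|\le w/3$, hence $r_j\ge 2w/3$ uniformly) and ``bad''. For the good clauses one gets $m'\sigma^2\le 2^{\beta w}\log(1/\epsilon)\cdot 4\cdot 2^{-4w/3}=4\cdot 2^{-w/6}\log(1/\epsilon)$, which \emph{is} below $1/\log^{25}$ once $w\ge c_1\log\log(n/\epsilon)$, so \tref{thm:corsym} applies and yields $\poly(1/\epsilon)$ sandwiching norm. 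The bad clauses are shown, via the same $S_k$-moment trick you sketch, to number at most $O(\log(n/\epsilon))$ with probability $1-\epsilon/n$, and are then handled by \tref{th:smallcnf}. Combining the good and bad pieces by \tref{thm:xor} within each bucket, and then the $O(\log\log)$ buckets again by \tref{thm:xor}, gives the stated $L(n,\epsilon)$. Your proof can be repaired along these lines; the essential missing idea is that a constant-ratio bucketing finer than $2(1-\alpha)$ is required, and that within a bucket one should threshold $r_i$ (good/bad) rather than bucket it.
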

\begin{proof}
Let $f = C_1 \wedge C_2 \wedge \cdots \wedge C_m$. By abuse of notation, we will let $C_i$ denote the set of variables appearing in $C_i$ as well. In our analysis we shall group the clauses based on their widths. Let $\beta = 1+1/6$.

We first handle the case where $f$ has bias at least $\epsilon$, i.e., $\pr[f(x) = 1] \geq \epsilon$. Let $W_\ell = c_1 \log \log(n/\epsilon)$ and $W_u = \log_2 m$ for $c_1$ a constant to be chosen later. Let $f_\ell$ be the $\rcnf$ containing all clauses of width less than $W_\ell$ and $f_u$ the $\rcnf$ containing all clauses of width at least $W_u$. Let $T = \log_\beta (W_u/2W_\ell)$. For $w \in W_B \equiv \{\lfloor W_\ell \beta^r \rfloor: 0 \leq r \leq T\}$, let $f_w$ be the $\rcnf$ containing all clauses with width in $[w,\beta w)$. Then,
\begin{equation}
  \label{eq:widthbucket}
  f \equiv f_\ell \,\wedge\, \left(\wedge_{w \in W_B} f_w \right) \,\wedge f_u. 
\end{equation}

We will show that each of the functions $f_\ell, f_w, f_u$ have good sandwiching approximators. We then use \tref{thm:xor} to conclude that $f$ has good sandwiching approximators. The claim for $f_\ell$ follows immediately from \tref{th:smallcnf}. The main challenge will be in analyzing $f_w$ (the analysis for $f_u$ is similar). To show that $f_w$ has good sandwiching approximators, we shall appeal to \tref{thm:corsym}.

Observe that as $f$ has bias at least $\epsilon$, the number of clauses of width at most $w$ in $f$ is at most $2^w \log(1/\epsilon)$. We will repeatedly use this fact. Let $\epsilon_1 = \epsilon/\poly(n,1/\epsilon)$ to be chosen later.  

\paragraph{Sandwiching $f_\ell$.} As each clause in $f_\ell$ has width at most $W_\ell$, the number of clauses in $f_\ell$ is at most $m_\ell \leq 2^{W_\ell} \log(1/\epsilon)\leq (\log(n/\epsilon))^{c_1+1}$. Thus, by \tref{th:smallcnf}, $f_\ell$ has $\epsilon_1$-sandwiching approximators of $\Ll$-norm at most $m_\ell^{O(\log(1/\epsilon_1))} = (\log(n/\epsilon))^{O(\log(1/\epsilon_1))}$. As $\Ll$-norm does not increase under averaging over a subset of the variables, it follows that $(f_\ell)_I$ has $\epsilon_1$-sandwiching polynomials with the same $\Ll$ norm bound:
\begin{equation}
  \label{eq:cnfsmall}
  (f_\ell)_I\text{ has $\epsilon_1$-sandwiching approximators with $\Ll$-norm at most $(\log(n/\epsilon))^{O(\log(1/\epsilon_1))}$.}
\end{equation}

\paragraph{Sandwiching $f_w$.} Fix a $w \in W_B$.  Note that $f_w$ has $m_w < 2^{\beta w} \log(1/\epsilon)$ clauses. Without loss of generality, suppose that $f_w = C_1 \wedge C_2 \wedge \cdots \wedge C_{m_w}$. Let $I \sim \calD(\alpha,\delta)$.

Let $J \subseteq [m_w]$ be the set of all {\it good} clauses, $J = \{j: |C_j \cap I| \leq w/3\}$. Decompose $f_w = f_w' \wedge f_w''$, where $f_w' = \wedge_{j \in J} C_j$. We first show that $(f_w')_I$ has good sandwiching approximators. We then show that $f_w''$ has a small number, $\poly(\log(n/\epsilon))$, of clauses with high probability over $I$. The intuition for the first step is that if each $|C_j \cap I|$ is small, then the randomness in the remaining variables damps the variance of the bias function $f_I$ enough to guarantee existence of good sandwiching approximators via \tref{thm:corsym}. For the second step, intuitively, as $I$ picks each element with probability at most $\alpha$, we expect $|C_j \cap I|$ to be about $\alpha |C_j| < \alpha (\beta w) \ll w/3$. Thus, the probability that $|C_j \cap I|$ is more than $w/3$ should be small so that the total number of bad clauses is small with high probability.

For brevity, suppose that $f'_w = C_1 \wedge \cdots \wedge C_{m'}$ and let $w_j = |C_j| \in [w,\beta w)$. For $x \in \dpm^I$, and $j \in [m']$, define $g_j':\dpm^I \rgta [-1,1]$ by 
\[ g_j'(x) =
\begin{cases}
  -1/2^{w_j} &\text{if $x$ satisfies $C_j \cap I$}\\
1/2^{|C_j \setminus I|}-1/2^{w_j} &\text{otherwise}
\end{cases}.
\]
Then, for $p_j = 1 - 1/2^{w_j}$,
\begin{align*}
  (f_w')_I(x) &= \prod_{j \in J} \left((1-1/2^{w_j}) - g_j'(x)\right) = \prod_{j \in J}\left(p_j - g_j'(x)\right) = \left(\prod_{j \in J} p_j\right)\cdot \prod_{j \in J}\left(1 - \frac{g_j'(x)}{p_j}\right).
\end{align*}
Let $g_j(x) = g_j'(x)/p_j$. Then, 
\[ (f_w')_I(x) = \prod_{j \in J} p_j \cdot \prod_{j \in J}(1 - g_j(x)).\]

By expanding the above expression we can write $(f_w')_I(x) = \sum_{k=1}^{m'} c_k S_k(g_1,\ldots,g_{m'})$ where the coefficients $c_k$ are at most $1$ in absolute value. We will show that $g_1,\ldots,g_{m'}$ satisfy the conditions of \tref{thm:corsym}.

Clearly, $g_1,\ldots,g_{m'}$ are on disjoint subsets of $x$. Note that $g_j'(x) \in [-1/2^{2w/3}, 1/2^{2w/3}]$. Hence, as $p_j \geq 1/2$, $g_j(x) \in [-\sigma, \sigma]$ for $\sigma = 2/2^{2w/3}$. Now, as $w \geq c_1\log\log(1/\epsilon)$, $m' \leq 2^{\beta w}\log(1/\epsilon) \leq 2^{(\beta + 1/c_1)w}$. Thus, for $c_1 > 12$,
\[ \sigma = \frac{2}{2^{2w/3}} \leq \frac{1}{(m')^{1/2+1/12}}.\]
Finally, note that each $g_j$ has $\Ll$-norm at most $2$. This is because any clause, and hence $g_j'$, has $\Ll$-norm at most $1$. Therefore, the functions $g_j$ satisfy the conditions of \tref{thm:corsym}. Thus, 
\begin{equation}
  \label{eq:cnfw1}
(f_w')_I\text{ has $\epsilon_1$-sandwiching approximators with $\Ll$-norm at most $\poly(1/\epsilon_1)$.}
\end{equation}

\eat{Thus, $\E[g_j^k] \leq \sigma^k$. Further, $\E[g_j] = 0$ and  each $g_j$ is a bounded function in $[-1,1]$ depending on at most $\beta w$ variables, $\Ll(g_j) \leq 2^{\beta w}$.  Finally, as $m' \leq 2^{\beta w} \log(1/\epsilon)$,
\[ \frac{1}{(m')^{1.1} \log(1/\epsilon)^{100}} \geq \frac{1}{2^{1.1 \beta w} \log(1/\epsilon)^{102}} \geq \frac{1}{2^{(1.1 \beta + 102/c_1) m}} \geq \frac{4}{2^{4w/3}} = \sigma^2,\]
where we used the fact that $w > c_1 \log \log (1/\epsilon)$ and $c_1$ is a big enough constant.

Thus, the functions $g_j$, $j \leq m'$ satisfy the assumptions of \tref{thm:main}. Hence, by \tref{thm:main},
\begin{equation}
  \label{eq:cnfw1}
(f_w')_I\text{ has $\epsilon_1$-sandwiching approximators with $\Ll$-norm at most $(\log(1/\epsilon))^{O(\log(1/\epsilon_1))}$.}  
\end{equation}}

We are almost done, but for $(f_w'')$. We will show that with high probability over $I$, $(f_w'')$ has $O(\log(n/\epsilon))$ clauses. To do so we will follow a standard argument for showing large deviation bounds using bounded independence.

For $i \in [w]$ and $j \in [m_w]$, let $X_{ij}$ be the indicator variable that is $1$ if the variable corresponding to the $i$'th literal in the $j$'th clause of $f_w$ is included in $I$ and $0$ otherwise. Let 
$$X = S_k\left(\,S_{w/3}(X_{11}, X_{21},\ldots, X_{w1}), \ldots, S_{w/3}(X_{1m_w}, X_{2m_w}, \ldots, X_{wm_w})\,\right).$$
Then, for any $k$, 
$$\pr[size(f_w'') \geq k] \leq \E[X].$$
 To see this observe that whenever $size(f_w'') \geq k$, $X$ is at least $1$. Let us first calculate this expectation when the variables $X_{ij}$ are truly independent. In this case, as $m_w \leq 2^w \log(1/\epsilon)$, and each clause has at most $\beta w$ variables, 
\begin{align*}
 \E[X] &= \binom{m_w}{k} \cdot \binom{\beta w}{w/3} \cdot \alpha^{wk/3} \\
&\leq 2^w \cdot \left(\frac{\beta e \log(1/\epsilon)}{k}\right)^k \cdot (8 \alpha)^{wk/3}.
\end{align*}

Therefore, for $\alpha = 1/32$ and $k = c_2 \max(\log(n/\epsilon)/w,1)$ for $c_2$ sufficiently large, $\E[X] \leq \epsilon/2n$. Now, as the actual variables $X_{ij}$ are $\delta$-almost independent, and the polynomial defining $X$ has at most $\binom{m_w}{k} \cdot \binom{\beta w}{w/3}$ terms, the expectation for $I \sim \calD$ can be bounded by
\[ \E[X] \leq \frac{\epsilon}{2\log n} + \delta \cdot \binom{m_w}{k} \cdot \binom{\beta w}{w/3} = \frac{\epsilon}{2n} + \delta \cdot 2^{O(wk)} \leq \frac{\epsilon}{n},\]
for $\delta \leq (\epsilon/n)^c$ for $c$ a sufficiently large constant. Combining the above equations and applying \tref{th:smallcnf}, we get that with probability at least $1-\epsilon/ n$, $(f_w'')$ has $\epsilon_1$-sandwiching polynomials with $\Ll$-norm at most 
$$k^{O(\log(1/\epsilon_1))} = (\log(n/\epsilon))^{O(\log(1/\epsilon_1))}.$$
Therefore, from \eref{eq:cnfw1} and \tref{thm:xor}, with probability at least $1 - \epsilon/n$,  
\begin{equation}
  \label{eq:cnfw}
(f_w)_I\text{ has $O(\epsilon_1)$-sandwiching approximators with $\Ll$-norm at most $(\log(n/\epsilon))^{O(\log(1/\epsilon_1))}$.}  
\end{equation}

\paragraph{Sandwiching $f_u$:} A careful examination of the argument for $f_w$ reveals that we used two main properties: there are at most $2^{\beta w} \log(1/\epsilon)$ clauses in $f_w$ and every clause has length at least $w$. Both of these are trivially true for $f_u$ with $w = W_u$. Thus, the same argument applies. In particular, with probability at least $1 - \epsilon/n$, 
\begin{equation}
  \label{eq:cnflarge}  
(f_u)_I\text{ has $O(\epsilon_1)$-sandwiching approximators with $\Ll$-norm at most $(\log(n/\epsilon))^{O(\log(1/\epsilon_1))}$.}  
\end{equation}

Now, observe that 
$$f_I(x) = (f_\ell)_I(x) \cdot  (f_u)_I(x) \cdot \prod_{w \in W_B}(f_w)_I(x).$$
Therefore, we can apply \tref{thm:xor}. In particular, by Equations \ref{eq:cnfsmall}, \ref{eq:cnfw}, \ref{eq:cnflarge}, and a union bound, for $b = |W_B| + 2 = O(\log \log n)$ we have: with probability at least $1 - \epsilon$, $f_I$ has $(16^b \epsilon_1)$-sandwiching polynomials with $\Ll$-norm at most 
$$4^{b}\cdot \left(\left(\log(n/\epsilon)\right)^{O(b\,\log(1/\epsilon_1))}\right) = \left(1/\epsilon_1\right)^{O((\log \log (n/\epsilon))^2)}.$$
The lemma now follows by setting $\epsilon_1 = \epsilon/n^{O(1)}$.

\paragraph{Handling  Small Bias Case.} We now remove the assumption that $\E[f] \geq \epsilon$. Suppose $\E[f] \leq \epsilon$. Consider the formula $f'$ obtained from $f$ by removing clauses in $f$ until the first time $\E[f']$ exceeds $\epsilon$. Then, $f \leq f'$ and $\epsilon \leq \E[f'] \leq 2\epsilon$ (as each clause has probability at most $1/2$ of being false). We can use the upper approximator for $f'$ as an upper approximator for $f$ and constant zero as a lower approximator. This completes the proof of lemma.
\end{proof}

\subsection{Restrictions Simplify $\rcnf$s}
We next argue that for restrictions $(x,I)$ where $(x,I)$ are chosen from almost-independent distributions as in the previous section, $\rcnf$s simplify significantly and in particular have few surviving clauses with very high probability. 

Let $I \sim \calD(\alpha,\delta)$ be as in \lref{lm:maincnf} and $x \sim \calD$ be chosen from a $\delta_1$-biased distribution with $\delta_1 = 1/\poly(n)$. We will show that fixing the variables in $I$ according to $x$ will make the number of clauses drop polynomially. Let $\alpha,\beta$ be the constants from \lref{lm:maincnf}. 
\begin{Lem}\label{lm:shrink}
There exists constants $c_2, \gamma > 0$ such that the following holds for $\delta, \delta_1 < (\epsilon/n)^{c_2}$. Let $I \sim \calD(\alpha,\delta)$ and $x \sim \calD$ where $\calD$ is a $\delta_1$-biased distribution on $\dpm^n$. Let $f:\dpm^n \rgta \zo$ be a $\rcnf$ with $\E[f] \geq \epsilon$. Let $g:\dpm^{[n] \setminus I}\rgta \zo$ be the $\rcnf$ obtained from $f$ by fixing the variables in $I$ to $x$. Then, with probability at least $1- \epsilon$ over the choice of $(x, I)$, $g$ is a $\rcnf$ with at most $(\log (n/\epsilon))^{c_2} \cdot m^{1-\gamma}$ clauses.
\end{Lem}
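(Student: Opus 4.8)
The plan is to split the clauses of $f$ at width $w^\star:=\lceil(1-\gamma)\log_2 m\rceil$, invoking the hypothesis $\E[f]\ge\epsilon$ only for the short clauses. Since the clauses are read-once, hence on disjoint variables, $\epsilon\le\E[f]=\prod_j(1-2^{-w_j})\le(1-2^{-w})^{|\{j:\,w_j\le w\}|}$, so $f$ has at most $2^w\log(1/\epsilon)$ clauses of width $\le w$; in particular at most $2^{w^\star}\log(1/\epsilon)\le 2m^{1-\gamma}\log(1/\epsilon)$ short clauses, and in the worst case all of them survive in $g$ --- a deterministic bound. One may also assume $m>(\log(n/\epsilon))^{c_2/\gamma}$, for otherwise $g$ has at most $m\le(\log(n/\epsilon))^{c_2}m^{1-\gamma}$ clauses outright. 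So it remains to bound the number of surviving \emph{long} clauses (width $>w^\star$) with probability $\ge 1-\epsilon$.

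For a long clause $C_j$ let $\mathrm{Surv}_j'$ be the indicator that none of its \emph{first $w^\star$} literals is ``killed,'' where a literal $\ell$ is killed iff $\mathrm{var}(\ell)\in I$ and $x$ satisfies $\ell$. Since $C_j$ stays in $g$ only if no literal of $C_j$ is killed, $\mathrm{Surv}_j'\ge\mathbf 1[C_j\text{ survives}]$, so it suffices to bound $N':=\sum_{j\text{ long}}\mathrm{Surv}_j'$. The key is that truncating to a \emph{fixed-length} prefix makes $\mathrm{Surv}_j'=\prod_{i=1}^{w^\star}(1-M_{ij}L_{ij})$ --- with $M_{ij}=\mathbf 1[\mathrm{var}\in I]$ depending only on $I$ and $L_{ij}=\mathbf 1[x\text{ satisfies the }i\text{th literal}]$ only on $x$ --- a polynomial of degree $O(\log m)$ in the source bits no matter how wide $C_j$ is, and by read-onceness distinct $\mathrm{Surv}_j'$ lie on disjoint variable blocks. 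I would prove concentration by a factorial-moment bound of order $k:=\lceil c_4\log_2(n/\epsilon)/w^\star\rceil=\Theta(\log(n/\epsilon)/\log m)$, $c_4$ a large constant: then $\binom{N'}{k}=S_k\big((\mathrm{Surv}_j')_{j\text{ long}}\big)$ has degree $O(kw^\star)=O(\log(n/\epsilon))$ and, expanded in the source bits, at most $\binom mk\cdot 3^{O(kw^\star)}=(n/\epsilon)^{O(1)}$ monomials (using $k\log m=O(\log(n/\epsilon))$), each a product of distinct $M$'s and distinct $x$-variables. Since $I\sim\calD(\alpha,\delta)$, $x$ is $\delta_1$-biased, and $I,x$ are independent, each such monomial's expectation is within $2(\delta+\delta_1)$ of its fully-independent value, so $\big|\E_{\calD}[\binom{N'}{k}]-\E_{\mathrm{ind}}[\binom{N'}{k}]\big|\le(n/\epsilon)^{O(1)}(\delta+\delta_1)\le\epsilon/(2n)$ once $c_2$ is large enough.

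Under independence $\E_{\mathrm{ind}}[\binom{N'}{k}]=\binom{\#\text{long}}{k}(1-\alpha/2)^{kw^\star}\le(e\mu/k)^k$, where $\mu:=m(1-\alpha/2)^{w^\star}\le m^{1-\theta(1-\gamma)}\le m^{1-2\gamma}$ with $\theta:=-\log_2(1-\alpha/2)>0$ and $\gamma\le\theta/(2+\theta)$ fixed up front. For $K:=\tfrac12(\log(n/\epsilon))^{c_2}m^{1-\gamma}$ the factorial-moment inequality gives $\pr[N'\ge K]\le\E_{\calD}[\binom{N'}{k}]/\binom Kk\le(e\mu/K)^k+\epsilon/(2n)$; and since $e\mu/K\le 2e\,m^{-\gamma}/(\log(n/\epsilon))^{c_2}$ and $\gamma k\log_2 m\ge\tfrac{\gamma c_4}{2}\log_2(n/\epsilon)$, one gets $(e\mu/K)^k\le\big(2e/(\log(n/\epsilon))^{c_2}\big)^k m^{-\gamma k}\le m^{-\gamma k}\le(n/\epsilon)^{-\gamma c_4/2}\le\epsilon/(2n)$ once $c_4\ge 4/\gamma$. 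Adding the deterministic short-clause count, with probability $\ge 1-\epsilon$ the formula $g$ has at most $2m^{1-\gamma}\log(1/\epsilon)+K\le(\log(n/\epsilon))^{c_2}m^{1-\gamma}$ clauses.

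The real obstacle --- and everything above is arranged around it --- is the concentration step, where three requirements conflict: the moment order $k$ must grow with $\log(1/\epsilon)$ so that the Chernoff-type tail $(e\mu/K)^k$ falls below $\epsilon$; the degree of $\binom{N'}{k}$ must stay $O(\log(n/\epsilon))$ so that $(\epsilon/n)^{O(1)}$-biased/almost-independent sources suffice to evaluate its expectation; and its monomial count must stay $(n/\epsilon)^{O(1)}$. Looking at only a fixed prefix of $w^\star=\Theta(\log m)$ literals of each long clause (rather than all of them) is precisely what reconciles these: it caps the per-factor degree at $\Theta(\log m)$, forces $k=\Theta(\log(n/\epsilon)/\log m)$, and makes $\gamma k\log m=\Theta(\gamma\log(n/\epsilon))$ large enough for the tail while the prefix length cancels out of the final estimate.
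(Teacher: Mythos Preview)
Your proof is correct, and the core mechanism (a factorial-moment / elementary-symmetric-polynomial tail bound, then transfer from true independence to the $\delta$-almost independent $I$ and $\delta_1$-biased $x$ via a monomial count) is the same as the paper's.  Where you differ is in the decomposition.  The paper groups the clauses into $O(\log\log(n/\epsilon))$ geometric width buckets $[w,\beta w)$ with $\beta=7/6$, applies the moment bound with $k\approx\log(n/\epsilon)/w$ separately in each bucket using the full survival indicator $Y_j$, and then combines the per-bucket counts via the power-mean inequality to get the $m^{1-\gamma}$ bound.  You instead split once at $w^\star\approx(1-\gamma)\log_2 m$: short clauses are bounded deterministically from $\E[f]\ge\epsilon$, and for long clauses you look only at the first $w^\star$ literals, so every surrogate survival indicator $\mathrm{Surv}_j'$ has the \emph{same} degree $O(w^\star)$ regardless of the true clause width.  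This prefix-truncation trick lets a single choice of $k=\Theta(\log(n/\epsilon)/\log m)$ handle all long clauses at once, eliminating the bucketing and the power-mean combination step.  Your route is a bit more streamlined; the paper's bucketing is more ``natural'' (it tracks actual survival rather than a surrogate) but needs more bookkeeping.  Both yield the same $(\log(n/\epsilon))^{O(1)}m^{1-\gamma}$ bound with $\delta,\delta_1=(\epsilon/n)^{O(1)}$.
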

\begin{proof}
As in the proof of \lref{lm:maincnf}, we shall do a case analysis based on the width of the clauses. Let $f_\ell, f_w, f_u$ and $W_B$ be as in \eref{eq:widthbucket}. Note that the number of clauses in $f_\ell$ is at most $2^{W_\ell} \log(1/\epsilon) = \poly(\log(n/\epsilon))$. We will now reason about each of the $f_w$'s for $w \in W_B$. The argument for $f_u$ is similar and is omitted. 

Let $f_w$ have $m_w$ clauses, where $m_w > 8 \log(1/\epsilon)$, otherwise there is nothing to prove. Without loss of generality, suppose that $f_w = C_1 \wedge C_2 \wedge \cdots \wedge C_{m_w}$ and $w_j = |C_j|$. Let $Y_j$ be the indicator variable that is $1$ if $C_j$ survives in $g$ (i.e., is not fixed to be {\it true}) and $0$ otherwise. We first do the calculations assuming that the variables in $x$ and $I$ are truly independent and later transfer these bounds to the almost independent case. 

Observe that $\pr[Y_j = 1] = (1 - \alpha/2)^{w_j} \leq (1 - \alpha/2)^{w}$.  Let $M, k < M$ be parameters to be chosen later. Then, as in the proof of \lref{lm:maincnf}, 
\begin{align*}
 \pr\left[\,\sum_j Y_j > M\,\right] \cdot \binom{M}{k} \leq \E\left[ S_k(Y_1,\ldots,Y_{m_w})\right]  &\leq \binom{m_w}{k} \cdot \left(1 - \frac{\alpha}{2}\right)^{wk}.
\end{align*}
Here, the first inequality follows from observing that if $\sum_j Y_j > M$, then $S_k(Y_1,\ldots,Y_{m_2})$ is at least $\binom{M}{k}$. Therefore,
\[ \pr\left[\, \sum_j Y_j > M\,\right] \leq \left(\frac{m_w e}{M}\right)^k \cdot \left(1 - \frac{\alpha}{2}\right)^{wk}.\]
Now, setting $M = m_w^{1 - \gamma} (e \log(1/\epsilon))$ for a sufficiently small constant $\gamma$ and using the fact that $m_w < 2^{\beta w} \log(1/\epsilon)$, it follows that 
\[ \pr[\, \sum_j Y_j > M\,] \leq  \left(\frac{(2-\alpha) 2^{\beta \gamma}}{2}\right)^{wk} < 2^{-\Omega(wk)},\]
for $\gamma$ a sufficiently small constant. 
Thus, for $k = c_3 \max(\log(n/\epsilon)/w, 1)$ and $c_3$ a sufficiently large constant, $\pr[\, \sum_j Y_j > M\,] < \epsilon/2n$. Now, as in the proof of \lref{lm:maincnf}, transferring the above calculations to the case of almost independent distributions only incurs an additional error of
\[ err = (\delta + \delta_1) \cdot \binom{m_w}{k} \cdot 2^{\beta w} = (\delta + \delta_1) \cdot \poly(n,1/\epsilon).\]
Therefore, for $\delta, \delta_1 < (\epsilon/n)^{c'}$ for a sufficiently large constant $c'$, we get $\pr[\, \sum_j Y_j > M\,] < \epsilon/n$. 

Hence, by a union bound over $w \geq W_\ell$, with probability at least $1 - \epsilon$, the number of surviving clauses in $g$ is at most
\begin{multline*}
 size(f_\ell) + (e \log(1/\epsilon)) \cdot \sum_{w \in W_B} m_w^{1-\gamma} \leq \poly(\log n) + (e \log(1/\epsilon)) \cdot |B|^\gamma \cdot (\sum_w m_w)^{1-\gamma} < \\ 
\poly(\log n) + (e\log(1/\epsilon)) \cdot |B|^\gamma \cdot m^{1-\gamma},  
\end{multline*}

where the first inequality follows from the power-mean inequality. The claim now follows.
\end{proof}

In our recursive analysis we will also have to handle $\rcnf$s that need not have high acceptance probabilities. The following corollary will help us do this.
\begin{Cor}\label{cor:shrinkgen}
Let constants $c_2,\gamma$ and $\delta,\delta_1, I \sim \calD(\alpha,\delta), x \sim \calD$ be as in \lref{lm:shrink}. Let $f:\dpm^n \rgta \zo$ be a $\rcnf$, and let $g:\dpm^{[n] \setminus I}\rgta \zo$ be the $\rcnf$ obtained from $f$ by fixing the variables in $I$ to $x$. Then, with probability at least $1- \epsilon$ over the choice of $(x, I)$, there exist two $\rcnf$s $g_\ell,g_u$ of size at most $(\log (n/\epsilon))^{c_2} \cdot m^{1-\gamma}$ such that $g_\ell \leq g \leq g_u$ and $\E[g_u] - \E[g_\ell] \leq \epsilon$.
\end{Cor}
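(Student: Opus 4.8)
The plan is to reduce to \lref{lm:shrink}, combined with the ``remove clauses to boost the bias'' device already used in \lref{lm:shrink} and \lref{lm:maincnf}. Fix an auxiliary threshold $\epsilon' = \epsilon^3$ (the precise small power is immaterial). Since the corollary only requires $\delta,\delta_1 < (\epsilon/n)^{c_2}$ and we are free to take $c_2$ a large enough absolute constant, we may assume throughout that $\delta$ and $\delta_1$ are small enough to meet the hypotheses of \lref{lm:shrink} at error parameter $\epsilon'$, and that $(\log(n/\epsilon'))^{c}\,m^{1-\gamma} \le (\log(n/\epsilon))^{c_2}m^{1-\gamma}$, where $c$ is the constant supplied by \lref{lm:shrink}. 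We split on the value of $\E[f]$.

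If $\E[f] \ge \epsilon'$, apply \lref{lm:shrink} to $f$ with error $\epsilon'$: with probability at least $1-\epsilon'\ge 1-\epsilon$ over $(x,I)$ the restriction $g = f|_{(x,I)}$ is an $\rcnf$ with at most $(\log(n/\epsilon'))^{c}m^{1-\gamma}\le(\log(n/\epsilon))^{c_2}m^{1-\gamma}$ clauses, and we take $g_\ell = g_u = g$, with gap $0$. If $\E[f] = 0$, then $g \equiv 0$ and $g_\ell = g_u = 0$ works. So assume $0 < \E[f] < \epsilon'$. Exactly as in the small-bias steps of \lref{lm:maincnf} and \lref{lm:shrink}, remove clauses from $f$ one at a time until the acceptance probability first exceeds $\epsilon'$; this produces an $\rcnf$ $f'$ with $f \le f'$ and $\epsilon' \le \E[f'] \le 2\epsilon'$ (each removed clause is a disjunction, so at most doubles the acceptance probability). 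Now apply \lref{lm:shrink} to $f'$ at error $\epsilon'$: with probability at least $1-\epsilon'$ over $(x,I)$, the $\rcnf$ $g' := f'|_{(x,I)}$ has at most $(\log(n/\epsilon'))^{c}m^{1-\gamma}$ clauses (using that $f'$ has at most $m$ clauses). Set $g_u := g'$ and $g_\ell := 0$, viewed as an $\rcnf$ with a single empty clause, just as constant zero is used as the lower approximator in the small-bias cases elsewhere. Then $g_\ell \le g \le g_u$, the upper inequality because $f \le f'$ on $\dpm^n$ forces $f(x\circ y) \le f'(x\circ y)$ for every completion $y$, i.e.\ $g \le g'$ pointwise.

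It remains to show that $\E[g_u] - \E[g_\ell] = \E[g'] \le \epsilon$ with probability at least $1-\epsilon$ over $(x,I)$, and this is where the real work lies. Write $\E[g'] = \prod_j \E_{y}[C_j'(x\circ y)]$ over the clauses $C_j'$ of $f'$; the $j$-th factor is $1$ unless $x$ fails to satisfy $C_j'$ on its already-assigned variables, in which case it is $1 - 2^{-r_j}$, with $r_j$ the number of as-yet-unassigned variables of $C_j'$ (if $r_j = 0$ and the clause is violated then $\E[g']=0$ and we are done). Hence $\log(1/\E[g'])$ is, up to a factor of two, a sum of nonnegative terms $a_j 2^{-r_j}$ that are almost-independent (the clauses being on disjoint variables), and whose expectation conditioned on $I$ is $\Theta(\log(1/\E[f'])) = \Theta(\log(1/\epsilon'))$. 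A lower-tail concentration for this sum --- obtained by the same width-bucketing plus symmetric-polynomial moment method used in \lref{lm:shrink}, so that polynomially small $\delta,\delta_1$ suffice --- shows that $\log(1/\E[g']) \ge \tfrac12\log(1/\epsilon')$ except with probability at most $\epsilon$ over $(x,I)$, giving $\E[g'] \le (\epsilon')^{1/2} = \epsilon^{3/2} \le \epsilon$ on that event. A union bound with the size event from the previous paragraph finishes the case, and hence the corollary.

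The only genuinely new ingredient beyond invoking \lref{lm:shrink} is this last concentration estimate: a \emph{mild} pseudorandom restriction of a very-low-bias read-once CNF need not stay equally low-bias, so one has to quantify how far the acceptance probability can climb, and the point of choosing $\epsilon'$ to be a true power of $\epsilon$ is precisely that even after a constant-fraction restriction the bias remains comfortably below $\epsilon$. Everything else --- the clause-removal reduction, verifying that $\delta,\delta_1 < (\epsilon/n)^{c_2}$ are small enough for \lref{lm:shrink} at error $\epsilon'$, absorbing the constant-factor loss in the base of the $\log(n/\epsilon)$ bound into $c_2$, and the final union bound --- is routine.
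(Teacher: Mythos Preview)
Your reduction mirrors the paper's exactly: split on whether $\E[f]$ lies above a threshold, apply \lref{lm:shrink} directly in the high-bias case with $g_\ell=g_u=g$, and in the low-bias case strip clauses to obtain $f'$ with bias just above the threshold, apply \lref{lm:shrink} to $f'$, and set $g_\ell\equiv 0$, $g_u:=g'=f'|_{(x,I)}$. The paper uses threshold $\epsilon/2$ (so $\epsilon/2\le\E[f']\le\epsilon$) and after that its proof is a single sentence: it simply declares the claim follows.

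The difference is your third paragraph. You correctly observe that $\E[g_u]-\E[g_\ell]=\E_y[g'(y)]$ is the acceptance probability of the \emph{restriction} of $f'$, which for particular $(x,I)$ can far exceed $\E[f']$; so ``$\E[g']\le\epsilon$ with probability $1-\epsilon$'' genuinely requires an argument, and the paper's proof supplies none. On a literal reading you are filling a step the paper left implicit. Your instinct to lower the threshold to a true power $\epsilon'=\epsilon^3$ is exactly the right move, but the lower-tail concentration you sketch for $\log(1/\E[g'])$ via width-bucketing and moments is both more than is needed and not actually carried out. A much cheaper route: for any fixed $I$ and uniform $x$ one has $\E_x[f'_I(x)]=\E[f']$, so $\E_{(x,I)}[\E_y[g'(y)]]=\E[f']\le 2\epsilon'$ under uniform $x$ (for \emph{any} distribution on $I$), and this remains within $O(\epsilon')$ for $\delta_1$-biased $x$ by \lref{lm:maincnf} applied to $f'$. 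Markov's inequality then gives $\Pr_{(x,I)}[\E[g']>\epsilon]\le O(\epsilon')/\epsilon=O(\epsilon^2)<\epsilon$, turning the whole issue into a two-line addendum rather than a new concentration lemma.
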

\begin{proof}
  If $\E[f] \geq \epsilon/2$, the claim follows from \lref{lm:shrink}. Suppose $\E[f] \leq \epsilon/2$. Let $f'$ be the formula obtained from $f$ by throwing away clauses until $\E[f']$ exceeds $\epsilon/2$. Then, $\epsilon/2 \leq \E[f'] \leq \epsilon$. Let $g'$ be the $\rcnf$ obtained from $f'$ by restricting the variables in $I$ to $x$. The claim now follow by applying \lref{lm:shrink} to $f'$ and setting $g_\ell \equiv 0$ and $g_u \equiv g'$.
\end{proof}

\subsection{A Recursive \prg\ Construction for $\rcnf$s}
We now use Lemmas \ref{lm:maincnf} and \ref{lm:shrink} recursively to prove \tref{thm:prgcnf}. The main intuition is as follows. 

Let $\epsilon = 1/\poly(n)$. \lref{lm:maincnf} ensures that with high probability over the choice of $I$, $f_I$ is fooled by small-bias spaces with bias $n^{-O((\log \log n)^2)}$ which can be sampled from using $O((\log n)(\log \log n)^2)$ random bits. Note that $I$ can be sampled using $O(\log n)$ random bits. 

Consider any fixed $I \subseteq [n]$ and $x \in \dpm^I$. We wish to apply the same argument to $f_{(x,I)}:\dpm^{[n]\setminus I}\rgta \zo$ to pick another set $I_1 \subseteq [n]$ and $x_1 \in \dpm^{I_1}$ and so on. The saving factor will be that most of the clauses in $f$ will be determined by the assignment to $x$. In particular, by \lref{lm:shrink}, with probability $1 - 1/\poly(n)$, $f_{(x,I)}$ has at most $\tilde{O}(n^{1 - \gamma})$ clauses. By repeating this argument for $t = O_\gamma(\log \log n)$ steps we will get a $\rcnf$ with at most $\poly(\log n)$ clauses, which can be fooled directly. The total number of random bits used in this process will be $O((\log n)(\log \log n)^3)$.

Fix $\epsilon > 0$ and let constants $\alpha, c$ be as in \lref{lm:maincnf}. Let $\calD(\alpha,\delta)$ be a $\delta$-almost independent distribution on $2^{[n]}$ with bias $\alpha$. Finally, let $\calD(\delta_1), \calD(\delta_2)$ denote $\delta_1$-biased and $\delta_2$-biased distributions on $\dpm^n$ respectively for $\delta_1,\delta_2$ to be chosen later. Let $T = C \log \log n$ for $C$ to be chosen later. Consider the following randomized algorithm for generating a string $z \in \dpm^n$.
\begin{itemize}
\item For $t = 1,\ldots,T$, generate independent samples $z^1,\ldots,z^T \sim \calD(\delta_1)$ and $J_1,\ldots,J_T \sim \calD(\alpha,\delta)$. 
\item Let $I_1 = J_1$ and $I_t = J_t\setminus \left(\cup_{r=1}^{t-1} I_r\right)$ for $2 \leq t \leq T$. This is equivalent to sampling $I_t$ from a $\delta$-almost independent distribution with bias $\alpha$ from the set of subsets of as yet ``uncovered'' elements $[n] \setminus \cup_{r=1}^{t-1} I_r$. 
\item Let $x^t = (z^t)_{I_t}$. This is equivalent to sampling $x^t$ using a $\delta_1$-biased distribution over $\dpm^{I_t}$. 
\item Let $I = \cup_{t = 1}^T I_t$ and $x = x^1 \circ x^2 \circ \cdots \circ x^T \in \dpm^{I}$ be the appropriate concatenation: for $i \in I$, $(x_i) = (x^t)_i$ if $i \in I_t$.
\item Let $y \sim \calD(\delta_2)$. The final generator output is defined by
  \begin{equation}
    \label{eq:gencnf}
G(z^1,\ldots,z^T,J_1,\ldots,J_T,y) = z, \text{ where $z_i = x_i$ if $i \in I$ and $z_i = y_i$ otherwise.}    
  \end{equation}
\end{itemize}

To analyze our generator we first show that the restriction $(x,I)$ preserves the bias of $\rcnf$s. Let $L(n,\epsilon)$ be the bound from \lref{lm:maincnf}.
\begin{Lem}\label{lm:preservebias}
For $x,I$ defined as above, with probability at least $1 - \epsilon\, T$ over the choice of $I$, for every $\rcnf$ $f:\dpm^n \rgta \zo$, 
$$\left|\E_x[f_I(x)] - \E_{y \in_u \dpm^I}[f_I(y)]\right| < \delta_1\cdot L(n,\epsilon) \cdot T + 2\,\epsilon\,T.$$
\end{Lem}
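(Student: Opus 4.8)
The plan is to run a hybrid argument over the $T$ phases of the generator and, at each phase, to combine \lref{lm:maincnf} with the sandwiching–to–fooling implication (the remark preceding \lref{lem:lp}). Fix an $\rcnf$ $f:\dpm^n\rgta\zo$ and a choice of $I=(I_1,\ldots,I_T)$ (hence of $I=\bigcup_t I_t$). For $0\le t\le T$ let $D_t$ be the distribution on $\dpm^I$ that draws $x^1,\ldots,x^t$ from independent $\delta_1$-biased distributions over the blocks $\dpm^{I_1},\ldots,\dpm^{I_t}$ and $x^{t+1},\ldots,x^T$ uniformly, and set $E_t=E_t(I)=\E_{x\sim D_t}[f_I(x)]$, so that $E_0=\E_{y\in_u\dpm^I}[f_I(y)]$ and $E_T=\E_x[f_I(x)]$; it then suffices to bound $\sum_{t=1}^T|E_{t-1}-E_t|$. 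For each term I would use the tower property of bias functions: averaging $f_I$ over a uniformly chosen suffix of blocks recovers the bias function of $f$ with only the corresponding prefix of coordinates fixed. Concretely, freezing $x^1,\ldots,x^{t-1}$ turns $f$ into a restricted $\rcnf$ $f^{(t-1)}$ on the still-free variables, and
\[ E_{t-1}-E_t \;=\; \E_{x^{1:t-1}}\Big[\,\E_{x^t\ \text{uniform}}\big[(f^{(t-1)})_{I_t}(x^t)\big]\;-\;\E_{x^t\ \delta_1\text{-biased}}\big[(f^{(t-1)})_{I_t}(x^t)\big]\,\Big], \]
where the outer expectation is over the $\delta_1$-biased prefix blocks. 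This step also uses the easy fact that a product of $\delta_1$-biased distributions over disjoint coordinate blocks is again $\delta_1$-biased, so $\E_{x^t\ \delta_1\text{-biased}}$ is legitimate.

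Next I would apply \lref{lm:maincnf} at phase $t$. Conditioned on $I_1,\ldots,I_{t-1}$ and on the prefix $x^{1:t-1}$, the set $I_t=J_t\setminus\bigcup_{r<t}I_r$ is distributed as $\calD(\alpha,\delta)$ on the uncovered coordinates, since $J_t\sim\calD(\alpha,\delta)$ is drawn independently of everything so far and the marginal of a $\delta$-almost independent distribution on a subset of coordinates is again $\delta$-almost independent with the same bias. Hence \lref{lm:maincnf}, applied to the $\rcnf$ $f^{(t-1)}$ (which has at most $n$ variables, so the bound $L(n,\epsilon)$ still applies), gives that with probability at least $1-\epsilon$ over $I_t$ the bias function $(f^{(t-1)})_{I_t}$ has $\epsilon$-sandwiching approximators of $\Ll$-norm at most $L(n,\epsilon)$; when this holds a $\delta_1$-biased distribution fools $(f^{(t-1)})_{I_t}$ to within $\epsilon+\delta_1L(n,\epsilon)$, and when it fails the inner difference is trivially at most $1$. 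To make the failure event a property of $I$ alone, as the statement demands, I would, after a harmless rescaling of the parameter in \lref{lm:maincnf} (which only changes the constant in the exponent of $L$), insert a Markov step in each phase applied to $\Pr_{x^{1:t-1}}[(f^{(t-1)})_{I_t}\text{ is bad}]$, whose expectation over $I_t$ is at most the failure probability in \lref{lm:maincnf}. This yields, per phase, a good event over $I_t$ of probability at least $1-\epsilon$ on which $|E_{t-1}-E_t|\le 2\epsilon+\delta_1L(n,\epsilon)$; a union bound over $t=1,\ldots,T$ then gives a good event over $I$ of probability at least $1-\epsilon T$ on which $|E_0-E_T|\le T(2\epsilon+\delta_1L(n,\epsilon))=\delta_1L(n,\epsilon)T+2\epsilon T$. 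Since $f$ was arbitrary, the statement follows.

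The main obstacle is exactly the bookkeeping in the second paragraph: the natural per-phase good events are events over the joint choice of $I_t$ and the $\delta_1$-biased prefix, and converting them into a single good event over $I$ on which the bound holds uniformly over all restrictions $x^{1:t-1}$ (and, in the downstream application, over all $\rcnf$s) requires the averaging step and some care with parameters. One must also check that after the already-chosen blocks are removed, $I_t$ is still of the exact form $\calD(\alpha,\delta)$ so that \lref{lm:maincnf} applies verbatim at every level of the recursion. The telescoping, the tower property for bias functions, the product-of-small-bias fact, and the sandwiching-implies-fooling step are all routine.
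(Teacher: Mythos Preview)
Your proposal is correct and follows the same hybrid-over-phases argument as the paper: couple $\D^{j-1}$ and $\D^j$, use the tower property to reduce to the bias function $(f^{(j-1)})_{I_j}$ of the restricted formula, invoke \lref{lm:maincnf} to get sandwiching approximators, and union bound over the $T$ phases. The one place where you are actually more careful than the paper is the bookkeeping you flag: in the paper's writeup the good event from \lref{lm:maincnf} at phase $j$ is stated for each fixed prefix $x^{1:j-1}$ and then silently pulled outside the expectation over $x^{1:j-1}$; your Markov-plus-rescaling step is a clean way to turn this into an event depending only on $I$, and the $2\epsilon T$ slack in the lemma statement is precisely the budget that absorbs both the sandwiching gap and the residual bad-prefix mass. (Your side remark about products of $\delta_1$-biased distributions is true but not actually needed here, since each $x^t=(z^t)_{I_t}$ is already $\delta_1$-biased on $\dpm^{I_t}$ by restriction.)
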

\begin{proof}
\newcommand{\djo}{\D^{j-1}}
\newcommand{\djj}{\D^j}
We will prove the claim by a hybrid argument. For $j \leq T$, let $y^j \sim \dpm^{I_j}$ and let $\D^j$ denote the distribution of $x^1 \circ x^2 \circ \cdots \circ x^{j} \circ y^{j+1} \circ \cdots \circ y^T$(the concatenation is done as in the definition of $x$). Note that $\djo$ and $\djj$ differ only in the $j$'th concatenation element, $x^j, y^j$. Further, $\D^0$ is uniformly distributed on $\dpm^I$ and $\D^T$ is the distribution of $x$. We will show that with probability at least $1-\epsilon$, over the choice of $I$,
\[ \left|\E_{a \sim \djo}\left[f_I(a)\right] - \E_{a \sim \djj}\left[f_I(a)\right]\right| < \delta_1 \cdot L(n,\epsilon),\]

We couple the distributions $\djo$ and $\djj$ by drawing $x^i$ for $i < j$ and let $I^j = \cup_{r \leq j} I_r$. Now, as $y^{j+1},\ldots,y^T$ are chosen uniformly at random,
\begin{align*}
 \E_{a \sim \djo}[f_I(a)] &= \E\left[f_{I^j}(x^1 \circ \cdots \circ x^{j-1} \circ x^j)\right],\\
\E_{a \sim \djj}[f_I(a)] &= \E\left[f_{I^j}(x^1 \circ \cdots \circ x^{j-1} \circ y^j)\right].
\end{align*}

Consider any fixing of the variables $x^1,\ldots,x^{j-1}$ and $I_1,\ldots,I_{j-1}$ and let $g:\dpm^{[n] \setminus \cup_{r < j} I_r} \rgta \zo$ be the $\rcnf$ obtained from $f$ under this fixing. Then, by \lref{lm:maincnf}, $g_{I_j}$ is fooled by small-bias spaces: with probability $1-\epsilon$ over the choice of $I_j$,
\[ \left|\E_{x^j}\left[g_{I_j}(x^j)\right] - \E_{y^j}\left[g_{I_j}(y^j)\right] \right| \leq \delta_1 \cdot L(n,\epsilon).\]

Combining the above three equations, we have with probability at least $1-\epsilon$ over $I_j$, 
\begin{align*}
  \left|\E_{a \sim \djo}\left[f_I(a)\right] - \E_{a \sim \djj}\left[f_I(a)\right]\right| &= \left|\E\left[f_{I^j}(x^1 \circ \cdots \circ x^{j-1} \circ x^j)\right] - \E\left[f_{I^j}(x^1 \circ \cdots \circ x^{j-1} \circ y^j)\right] \right|\\
&= \left| \E_{x^1,\ldots,x^{j-1}} \E_{x^j}\left [g_{I_j}(x^j)\right] - \E_{x^1,\ldots,x^{j-1}} \E_{y^j}\left[g_{I_j}(y^j)\right] \right|\\
&\leq \E_{x^1,\ldots,x^{j-1}} \left[\, \left|\E_{x^j}\left [g_{I_j}(x^j)\right] - \E_{y^j}\left[g_{I_j}(y^j)\right] \right|\,\right] \leq \delta_1 \cdot L(n,\epsilon).
\end{align*}
The claim now follows by taking a union bound for $j = 1,\ldots, T$.
\end{proof}

We are now ready to prove our main \prg\  construction. The idea is to combine Lemmas \ref{lm:shrink}, \ref{lm:preservebias}. For $(x,I)$ chosen as in \lref{lm:preservebias} we do not change the bias of the restricted function, on the other hand by iteratively applying \ref{lm:shrink} we can show that the resulting restricted $\rcnf$ has $(\log n)^{O(\log \log n)}$ clauses and hence is fooled by $n^{-O((\log \log n)^2)}$-biased distributions. 

\begin{proof}[Proof of \tref{thm:prgcnf}]
Let $I,x,y,z$ be as defined in \eref{eq:gencnf}. Fix a $\rcnf$ $f:\dpm^n \rgta \zo$. Let $g:\dpm^{[n]\setminus I} \rgta \zo$ be the $\rcnf$ obtained by from $f$ by fixing the variables in $I$ to $x$. Let $I' = [n] \setminus I$. Note that
\begin{equation}
  \label{eq:prgcnf1}
 f_I(x) = \E_{y' \sim \dpm^{I'}}\left[ g(y')\right].  
\end{equation}

We next argue that $g$ is fooled by small-bias spaces with high probability over the choice of $x,I$. Observe that $g$ can be viewed as obtained from $f$ by iteratively restricting $f$ according to $(x^1,I_1), (x^2,I_2),\ldots,(x^T,I_T)$ and all of these are independent of one another. Therefore, by \cref{cor:shrinkgen} and a union bound, with probability at least $1 - \epsilon\cdot T$, $g$ has $O(\epsilon T)$-sandwiching $\rcnf$s $g_\ell,g_u$ of size at most 
$$M = (\log (n/\epsilon))^{c_2 T} \cdot m^{(1-\gamma)^T} = (\log (n/\epsilon))^{O(\log \log n)},$$ 
for $T = C\log \log n$ and $C$ a large constant. Hence, by \tref{th:smallcnf}, $g_\ell,g_u$ are $\epsilon$-fooled by $\delta_2$-biased distributions for $\delta_2 = M^{-O(\log(1/\epsilon))}$. As $g_\ell,g_u$ sandwich $g$, it follows that $g$ is $O(\epsilon T)$-fooled by $\delta_2$-biased distributions. As the above is true with probability at least $ 1- \epsilon \, T$ over the choice of $(x,I)$, by taking expectation over $(x,I)$ we get ($y$ is $\delta_2$-biased)
\begin{equation}
  \label{eq:prgcnf2}
 \E_{x,I}\left[\, \left|\E_y\left[g(y)\right] - \E_{y'\sim \dpm^{I'}}\left[ g(y')\right]\right |\,\right] = O(\epsilon T).  
\end{equation}

Combining Equations \ref{eq:prgcnf1}, \ref{eq:prgcnf2}, we get
\begin{align*}
  \label{eq:cnfp1}
  \pr[f(z) = 1] = \E_{x,I}\left[ \E_y[g(y)]\right] &= \E_{x,I}\left[ \E_{y'\sim \dpm^{I'}}\left[ g(y')\right]\right] \pm O(\epsilon T)\nonumber\\
&= \E_{x,I}\left[f_I(x)\right] \pm O(\epsilon T).
\end{align*}

Finally, note that for any $I \subseteq [n]$, 
\[ \pr_{z' \sim \dpm^n}[f(z') = 1] = \E_{x' \sim \dpm^I}[f_I(x')].\]

Combining the above two equations with \lref{lm:preservebias}, we get 
\begin{align*}
 \left| \pr[f(z) = 1] - \pr_{z' \sim \dpm^n}[f(z') = 1] \right| &\leq \left|\E_{x,I}\left[f_I(x)\right] - \E_{x' \sim \dpm^I}[f_I(x')]\right| + O(\epsilon T) \\
&\leq \delta_1 \cdot L(n,\epsilon) + O(\epsilon T). 
\end{align*} 

Therefore, by setting $\delta_1 = \epsilon/L$ the above error is at most $O(\epsilon  T)$. The number of bits used by the generator is 
\begin{align*}
  T \text{ (bits needed for $x^1, I_1$)} + \text{ (bits needed for y)} &= T \cdot O(\log n + \log(1/\delta_1)) + O(\log n + \log(1/\delta_2)) \\
&= O((\log (n/\epsilon)) \cdot (\log \log (n/\epsilon))^3).
\end{align*}
The theorem now follows by rescaling $\epsilon = \epsilon'/c'(\log \log n)$ for a large constant $c'$.
\end{proof}


\section{A \prg\ for $\cnfx$}
\label{sec:cnfx}
 
We construct a \prg\  for the class of $\cnfx$. The generator will be the same as in \tref{thm:prgcnf}. The analysis will also be similar and in fact follow easily from \tref{thm:prgcnf}. To do this, we shall use the following simple claim.

\begin{Lem}\label{lm:andparity}
  Let $f:\dpm^n \rgta \zo$ be a conjunction of parity constraints on $n$ variables. Then, $f$ has $\Ll$-norm at most $1$.
\end{Lem}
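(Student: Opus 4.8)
The plan is to expand $f$ as a product of the $\{0,1\}$-valued indicators of its individual parity clauses and then invoke submultiplicativity of the $\Ll$-norm. Write $f = T_1 \wedge \cdots \wedge T_m$, where each $T_i$ is a parity of literals, and for $i \in [m]$ let $f^{(i)} : \dpm^n \rgta \zo$ be the indicator that the constraint $T_i$ is satisfied. Since the $T_i$ are constraints on the same Boolean point, $f = \prod_{i=1}^m f^{(i)}$ as real-valued functions on $\dpm^n$, and it suffices to bound $\Ll[f^{(i)}]$ for each $i$ and multiply.

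The key (and essentially only) computation is $\Ll[f^{(i)}]$. Since $T_i$ is an XOR of literals on the variables $x_j$, $j \in S_i$, and a negated literal $\bar x_j$ equals $-x_j$ in the $\dpm$ representation, the XOR of the literals of $T_i$ equals $\varepsilon_i \prod_{j \in S_i} x_j$ for a sign $\varepsilon_i \in \dpm$ determined by the parity of the number of negated literals. Hence $f^{(i)}(x) = \tfrac12\bigl(1 + \varepsilon_i \prod_{j \in S_i} x_j\bigr)$, a polynomial with (at most) two monomials each of coefficient $\pm\tfrac12$, so $\Ll[f^{(i)}] \le 1$ (with equality unless $S_i = \emptyset$, i.e. $f^{(i)} \equiv 1$, in which case $\Ll[f^{(i)}] = 1$ as well).

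Finally I would use the fact, already used in \sref{sec:sym} (e.g. $\Ll[h^k] \le \Ll[h]^k$), that the $\Ll$-norm is submultiplicative, $\Ll[g h] \le \Ll[g]\,\Ll[h]$ for all $g, h : \dpm^n \rgta \R$, which follows by expanding the product of the Fourier expansions and applying the triangle inequality to the resulting coefficients. Applying this $m-1$ times gives $\Ll[f] \le \prod_{i=1}^m \Ll[f^{(i)}] = 1$, as claimed.

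I do not expect any real obstacle here; the only point requiring care is the sign bookkeeping when rewriting a parity of possibly negated literals over $\dpm$ as a single signed character. As an alternative route that avoids products, one can observe that the accepting set of $f$ is a coset $v + V$ of an $\mathbb{F}_2$-linear subspace $V \le \mathbb{F}_2^n$ (or is empty, in which case $f \equiv 0$ and $\Ll[f] = 0$), and that the Fourier expansion of the indicator of such a coset is $\mathbf{1}_{v+V}(x) = 2^{-(n - \dim V)} \sum_{\gamma \in V^\perp} (-1)^{\langle \gamma, v \rangle} \prod_{j : \gamma_j = 1} x_j$, which has exactly $2^{\,n - \dim V}$ monomials, each of absolute value $2^{-(n - \dim V)}$; hence $\Ll[f] = 1$ in this case, and in all cases $\Ll[f] \le 1$.
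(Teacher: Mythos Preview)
Your proof is correct and takes essentially the same approach as the paper: both write $f$ as a product of the $\zo$-indicators of the individual parity constraints, observe that each such indicator has $\Ll$-norm at most $1$ (being of the form $\tfrac12(1\pm\chi_S)$), and conclude via submultiplicativity of the $\Ll$-norm. You spell out the sign bookkeeping and the submultiplicativity step more explicitly than the paper (which just writes the product and says ``the lemma now follows''), and your alternative coset argument is a nice bonus, but the core argument is the same.
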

\begin{proof}
  Let $S_1,S_2,\ldots,S_m$ be the subsets defining the parity constraints in $f$. Then,
\[ f(x) = \prod_{j=1}^m \,\left(\frac{1 - \prod_{i \in S_j} x_i}{2}\right).\]
The lemma now follows.
\end{proof}

\begin{Thm}\label{thm:prgcnfx}
  For every $\epsilon > 0$, there exists an explicit \prg\ $G:\zo^r \rgta \dpm^n$ that fools all $\cnfx$ formulas on $n$-variables with error at most $\epsilon$ and seed-length $r = O((\log (n/\epsilon)) \cdot (\log \log (n/\epsilon))^3)$. 
\end{Thm}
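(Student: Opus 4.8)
The plan is to run the proof of \tref{thm:prgcnf} essentially unchanged, with the same generator $G$, and to observe that the parity clauses contribute almost nothing. Write a $\cnfx$ formula as $f = f^{\vee} \wedge f^{\oplus}$, where $f^{\vee}$ is the conjunction of the disjunction-clauses of $f$ and $f^{\oplus}$ the conjunction of the parity-clauses. Since in a $\cnfx$ all clauses are on pairwise disjoint variable sets, $f^{\vee}$ and $f^{\oplus}$ are functions of disjoint inputs, $f^{\vee}$ is an ordinary $\rcnf$, and this disjointness is preserved under any restriction, so the \Xor\ Lemma \tref{thm:xor} will be applicable wherever we need it.

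The first step is the analogue of the main structural lemma \lref{lm:maincnf} for $\cnfx$: for $I \sim \calD(\alpha, \delta)$ with $\delta < (\epsilon/n)^{c}$, with probability at least $1 - \epsilon$ the bias function $f_I$ has $\epsilon$-sandwiching approximators of $\Ll$-norm at most $L(n,\epsilon) = (n/\epsilon)^{c(\log\log(n/\epsilon))^2}$. To see this, factor $f_I(x) = (f^{\vee})_I(x)\cdot (f^{\oplus})_I(x)$. The disjunction part is handled directly by \lref{lm:maincnf}, since the marginal of a $\delta$-almost $\alpha$-biased distribution on the variables of $f^{\vee}$ is again $\delta$-almost $\alpha$-biased. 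For the parity part, observe that for a parity clause $T_i$ on variables $C_i$ one has $(T_i)_I(x) = 1/2$ (a constant, not depending on $x$) whenever $C_i \not\subseteq I$, and $(T_i)_I(x) = T_i(x)$ whenever $C_i \subseteq I$; hence $(f^{\oplus})_I(x) = 2^{-r}\prod_{i : C_i \subseteq I} T_i(x)$ for some integer $r \geq 0$, which is a scalar of magnitude at most $1$ times a read-once conjunction of parities, so by \lref{lm:andparity} it has $\Ll$-norm at most $1$ and is trivially its own exact sandwiching approximator. Combining the two factors via \tref{thm:xor} with $k = 2$ and $H(a,b) = ab$ yields $O(\epsilon)$-sandwiching approximators for $f_I$ whose $\Ll$-norm is $\poly(L(n,\epsilon))$, i.e.\ again of the form $(n/\epsilon)^{O((\log\log(n/\epsilon))^2)}$; the only source of failure probability is the disjunction part.

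The second step is the analogue of the shrinkage statements \lref{lm:shrink} and \cref{cor:shrinkgen}. A restriction of a $\cnfx$ is again a $\cnfx$; its disjunction part drops to at most $(\log(n/\epsilon))^{c_2} m^{1-\gamma}$ clauses exactly as in \lref{lm:shrink} applied to $f^{\vee}$, while its parity part need not shrink — but it is always a read-once conjunction of parities of $\Ll$-norm at most $1$, so this costs nothing. Thus after the $T = O(\log\log n)$ iterated restrictions of the generator the surviving formula $g = g^{\vee}\wedge g^{\oplus}$ has $g^{\vee}$ with only $M = (\log(n/\epsilon))^{O(\log\log n)}$ clauses (hence, by \tref{th:smallcnf}, $\epsilon$-sandwiching approximators of $\Ll$-norm $M^{O(\log(1/\epsilon))}$) and $g^{\oplus}$ of $\Ll$-norm at most $1$; one more use of \tref{thm:xor} shows $g$ is $O(\epsilon)$-fooled by $\delta_2$-biased distributions for $\delta_2 = M^{-O(\log(1/\epsilon))}$, precisely as in the proof of \tref{thm:prgcnf}. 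The ``small bias'' reductions inside those lemmas also carry over, because deleting any clause of a $\cnfx$ (a disjunction, or a parity of width $\geq 1$) at most doubles the acceptance probability.

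With these two ingredients, the hybrid argument of \lref{lm:preservebias} and the final calculation in the proof of \tref{thm:prgcnf} go through verbatim, now applied to $\cnfx$ formulas, and the seed length is literally unchanged since the generator is the same. I do not expect a genuine obstacle here; the only thing to watch is that every invocation of the \Xor\ Lemma requires the relevant functions to be on disjoint input variables, which holds throughout precisely because all clauses of a $\cnfx$ — disjunctions and parities alike — are on disjoint variable sets, a property preserved under restriction.
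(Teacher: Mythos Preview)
Your proposal is correct and follows essentially the same route as the paper: use the identical generator from \tref{thm:prgcnf}, split $f = f^{\vee}\wedge f^{\oplus}$, and verify the analogues of \tref{th:smallcnf}, \lref{lm:maincnf}, and \lref{lm:shrink} by observing that the parity factor always has $\Ll$-norm at most $1$ (\lref{lm:andparity}) and sits on variables disjoint from those of $f^{\vee}$.

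One small point of comparison: the paper handles the bias function by asserting that $g_I$ (your $(f^{\oplus})_I$) is a \emph{constant} $c_I$, so that $f_I = c_I\cdot h_I$. Your analysis is more accurate here: $(f^{\oplus})_I$ equals $2^{-r}$ times the conjunction of those parity clauses entirely contained in $I$, which need not be constant but still has $\Ll$-norm at most $1$ and is its own exact sandwiching approximator. Either way the conclusion is the same, since multiplying the sandwiching approximators of $(f^{\vee})_I$ by a nonnegative, disjoint-variable function of $\Ll$-norm $\le 1$ preserves the sandwich and the norm bound (or, as you do, one can invoke \tref{thm:xor} with $k=2$). For shrinkage the paper additionally notes that $f^{\oplus}$ has at most $\log_2(1/\epsilon)$ clauses when $\E[f]\ge\epsilon$; your observation that the parity part need not shrink at all because its $\Ll$-norm is already $\le 1$ is an equally valid shortcut.
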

\begin{proof}
Let $G$ be the generator from \tref{thm:prgcnf}. We will show that $G$ fools $\cnfx$ as well. This does not follow in a black-box manner from \tref{thm:prgcnf}, but we will show analogues of \tref{th:smallcnf}, \lref{lm:maincnf} and \lref{lm:shrink} hold so that the rest of the proof of \tref{thm:prgcnf} can be used as is.
  
Let $f:\dpm^n \rgta \zo$ be a $\cnfx$ of size $m$. Let $f = g \wedge h$, where $g$ has all the parity constraints of $f$ and $h$ the clauses. 

First observe that by \lref{lm:andparity} and \tref{th:smallcnf}, a similar statement holds for $f$. Let $P_\ell, P_u$ be the $\epsilon$-sandwiching approximators for $h$ as guaranteed by \tref{th:smallcnf}. Then, $P_\ell' := g \cdot P_\ell$, $P_u' = g \cdot P_u$ are $\epsilon$-sandwiching approximators for $f$ and the $\Ll$-norm of $P_\ell'$ ($P_u'$) is bounded by the $\Ll$-norm of $P_\ell$ ($P_u$) by \lref{lm:andparity}.

 Note that for any subset $I \subseteq [n]$, $g_I:\dpm^I \rgta \zo$ is a constant function. Therefore, $f_I(x) = c_I \cdot h_I(x)$, where $c_I \leq 1$. Thus, by applying \lref{lm:maincnf} to $h$, we get an analogous statement for $f$. 

Finally, we show an analogue of \lref{lm:shrink}. Suppose that $f$ has acceptance probability at least $\epsilon$. Then, $g$ has at most $\log_2(1/\epsilon)$ clauses. Therefore, by \lref{lm:shrink} applied to $h$, we also get a similar statement for $f$ with a slightly worse constant of $c_2' = c_2 + 1$. By arguing as in the proof of \cref{cor:shrinkgen}, we get a similar statement for $f$.

Examining the proof of \lref{thm:prgcnf} shows that given the above
analogues of \tref{th:smallcnf}, \lref{lm:maincnf} and
\lref{lm:shrink}, the rest of the proof goes through. The theorem
follows. 
\end{proof}

\bibliographystyle{amsalpha}
\bibliography{references}

\end{document}